\let\footnote=\endnote
\newcommand{\Poisson}{\text{Poisson}}
\newcommand{\ex}[1]{\mathbb{E}\left[ #1 \right]}
\newcommand{\pr}[1]{\Pr \left( #1 \right)}
\newcommand{\maxpar}[1]{\max \left \{ #1 \right \}}
\newcommand{\dd}{\textrm{d}}
\newcommand{\setuptime}{\beta}
\newcommand{\indc}[1]{\mathbf{1}_{#1}}
\newcommand{\indbig}[1]{\mathbf{1}\left\{ #1 \right\}}
\newcommand{\minpar}[1]{\min \left \{ #1\right \} }
\newcommand{\tup}{\tau_\text{up}}
\newcommand{\s}{\mathcal{S}}
\newcommand{\cycle}{X}
\newcommand{\mintwo}[2]{\min\left( #1, #2 \right)}
\newcommand{\prise}[1]{p_{\text{rise}}^{(#1)}}
\newcommand{\threshone}{C_3 \shs}
\newcommand{\dcouple}{\Tilde{d}}
\newcommand{\upind}{n_u}
\newcommand{\rem}[1]{\dets_{#1}}
\newcommand{\ntar}{\left[ N \left(T_A \right) - R \right]}
\newcommand{\filt}[1]{\mathcal{F}_{#1}}
\newcommand{\dgen}{d_{\textup{gen}}}
\newcommand{\taul}{\tau + \ell}
\newcommand{\abs}[1]{\left | #1 \right |}
\newcommand{\ceti}{\Tilde{\eta}_{i}}
\newcommand{\tdown}{\tau_{\textup{down}}}
\newcommand{\dnb}[1]{v^{(\textup{down})}_{#1}}
\newcommand{\unb}[1]{v^{(\textup{up})}_{#1}}
\newcommand{\torp}[2]{\texorpdfstring{#1}{#2}}
\newcommand{\on}{\text{On}}
\newcommand{\cyclem}{X}
\newcommand{\lext}{L_{\text{ex}}}
\newcommand{\poss}{\tau_{\text{poss}}}
\newcommand{\chmade}[1]{{#1}}
\newcommand{\chmades}[1]{{\color{red}#1}}
\newcommand{\capx}{C_{\textup{apx}}}
\renewcommand{\chmades}[1]{{#1}}
\newcommand{\removeforarxiv}[1]{}
\begin{document}


\RUNAUTHOR{Williams, Wang, Harchol-Balter}

\RUNTITLE{{A}n {U}pper {B}ound on the {M}/{M}/k {Q}ueue {W}ith {D}eterministic {S}etup {T}imes}

\TITLE{{A}n {U}pper {B}ound on the {M}/{M}/k {Q}ueue {W}ith {D}eterministic {S}etup {T}imes}

\ARTICLEAUTHORS{%
\AUTHOR{Jalani Williams}
\AFF{Electrical Engineering and Computer Science Department, University of Michigan, Ann Arbor, MI 48109, \EMAIL{jalaniw@umich.edu}}
\AUTHOR{Weina Wang}
\AFF{Computer Science Department, Carnegie Mellon University, Pittsburgh, PA 15213, \EMAIL{weinaw@cs.cmu.edu}}
\AUTHOR{Mor Harchol-Balter}
\AFF{Computer Science Department, Carnegie Mellon University, Pittsburgh, PA 15213, \EMAIL{harchol@cs.cmu.edu}}
} 

\ABSTRACT{%
In many systems, servers do not turn on instantly; instead, a \textit{setup time} must pass before a server can begin work.
These ``setup times’’ can wreak havoc on a system's queueing; this is especially true in modern systems, where servers are regularly turned on and off as a way to reduce operating costs (energy, labor, $CO_2$, etc.).
To design modern systems which are both efficient \emph{and} performant, we need to understand how setup times affect queues.

Unfortunately, despite successes in understanding setup in a single-server system, setup in a multiserver system remains poorly understood.
To circumvent the main difficulty in analyzing multiserver setup, all existing results assume that setup times are memoryless, i.e. distributed {E}xponentially.
However, in most practical settings, setup times are close to Deterministic, and the widely used Exponential-setup assumption leads to unrealistic model behavior and a dramatic underestimation of the true harm caused by setup times.

This paper provides a comprehensive characterization of the average waiting time in a multiserver system with \emph{Deterministic} setup times, the {M}/{M}/k/{S}etup-{D}eterministic.
In particular, we derive upper and lower bounds on the average waiting time in this system, and show these bounds are within a multiplicative constant of each other.
These bounds are the first closed-form characterization of waiting time in any finite-server system with setup times.
Further, we demonstrate how to combine our upper and lower bounds to derive a simple and accurate approximation for the average waiting time. 
These results are all made possible via a new technique for analyzing random time integrals that we named the {M}ethod of {I}ntervening {S}topping {T}imes, or MIST.

}%


\KEYWORDS{queueing; multiserver systems; setup times; Deterministic setup times; exceptional first service
}

\maketitle

%




\section{Introduction}\label{sec:intro}

\newcommand{\ttarget}{T^{\textup{target}}}
\newcommand{\ssres}{T^{\textup{single-server}}}

\subsection{What Are Setup Times?}

In many systems, servers do not turn on instantly; instead, a \textit{setup time} must pass before a server can begin work \citep{allahverdi2008significance}.
For example, for applications hosted in the cloud, application replicas must take time to boot up before they can begin fulfilling requests \citep{rzadca2020autopilot};
for overwhelmed hospitals, traveling nurses must wait to have their credentials confirmed before they can begin helping patients \citep{tuttas2013travel};
for many businesses, workers must go through a long and expensive recruitment/onboarding process before they can begin serving customers \citep{behroozi2020debugging}.
By thinking about this ``initial delay before service'' as an abstract \textit{setup time}, we can learn how setup time affects all of these systems simultaneously \citep{allahverdi2008significance}.
        
\subsection{Why Do Setup Times Matter?}
Setup times can have a significant impact on a system's queueing behavior, especially in modern systems.
For systems which keep their servers \textit{on} all the time, clearly setup times do not affect their performance.
However, in many modern systems, servers are regularly turned on and off.
Because servers don't turn on instantly, jobs in a system with setup times end up delayed compared to their no-setup counterparts.
If setup times are long enough, this additional delay can be significant.

Nevertheless, many systems still regularly turn their servers on and off.
Why? Because, by doing so, they can save a considerable amount on operating costs (e.g. energy, money, $CO_2$, etc.) \citep{rzadca2020autopilot}.
However, this cost-saving measure is only a viable option if the additional delay caused by setup times is not too large.
Therefore, if we want to design systems which are simultaneously efficient \emph{and} performant, we need a good understanding of how setup times affect queueing performance.



\subsection{Prior Art on Understanding the Effect of Setup Times}

\paragraph{State of the art.} Unfortunately, despite continued academic interest, we still struggle to understand the impact of setup times on customer wait times outside of a few very simple settings.
In the single server setting, \citep{welchfirstservice} completely characterized the behavior of the waiting time under extremely general conditions, i.e. in the M/G/$1$ queue with generally-distributed setup times.
In the more complex multiserver setting, little progress was made until the publication of \citep{RRR} almost half a century later.
Outside of the fact that the model of \citep{RRR} has multiple servers, their model is much simpler than that of \citep{welchfirstservice}. 
In particular, the main results of \citep{RRR} rely on the assumption that both job service times and server setup times are distributed Exponentially, i.e. the authors study the M/M/k/Setup-Exponential model.

\paragraph{Limitations of the Exponential model.} Despite the fact that \citep{RRR} represented the first breakthrough in our understanding of the setup effect in 50 years, their results are limited in two significant ways. 
First, instead of a closed-form formula for the average waiting time, the authors only derive an algorithm for computing the average waiting time. 
This algorithm is useful in the sense that it bypasses the need to simulate the system, but unfortunately fails to give intuition about how wait times scale with system parameters.
Second, one of their simplifying assumptions, that setup times are distributed Exponentially, turns out to severely limit the utility of their results.
We discuss this point in depth within Section~\ref{sec:limitExp}, but, put briefly, this lack of utility stems from the fact that setup times have relatively low variance \citep{mao2012performance}, i.e. they are closer to \textit{Deterministic} than Exponential.
Accordingly, we analyze the Deterministic version of the \citep{RRR} model.

\paragraph{Previous work on the Deterministic model.}
At the moment, only one other result characterizing the average waiting time in the M/M/k/Setup-Deterministic exists: \citep{WHW}. 
The paper \citep{WHW} proves the first-ever lower bound on the average waiting time in the M/M/k/Setup-Deterministic queue.
\citep{WHW} differs from this paper in a few ways. 
First, \citep{WHW} proves \emph{only} a {lower bound} on the average waiting time, whereas we prove the first-ever upper bound as well as a nearly matching lower bound.
Furthermore, our lower bound is significantly stronger.
To be more precise, \citep{WHW} incorporates additional assumptions to prove their result, and the final result there becomes trivial in certain scaling regimes. 
In contrast, our proofs only assume 1) that setup times are significantly larger than service times and 2) that enough servers are being regularly utilized, and we prove explicitly that our lower bound and upper bound are separated by at most a multiplicative constant.

\begin{figure}
    \centering
    \includegraphics[scale=0.6]{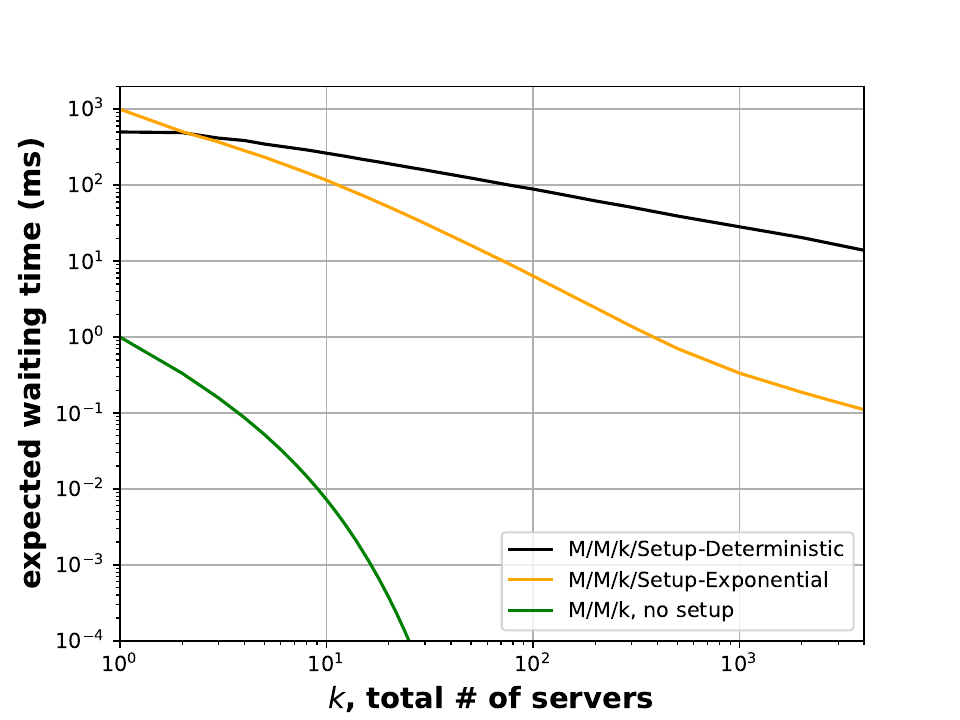}
    \caption{Simulation results for the M/M/k/Setup-Deterministic, M/M/k/Setup-Exponential, M/M/k (no setup), with mean service time $\frac{1}{\mu}=1$ ms, mean setup time $\beta = 1000$ ms, and load kept at a constant $\rho = 0.5$. Note the high separation between the Exponential and Deterministic models at large scales.}
    \label{fig:expVdet}
\end{figure}

\subsection{Limitations of the Exponential model}\label{sec:limitExp}

\paragraph{The Exponential model is unrealistic.} 
As previously mentioned, the Exponential assumption of \citep{RRR} turns out to be extremely problematic; we highlight two particular issues.
First, this assumption leads to unrealistic model behavior.
To illustrate where the breakdown in realism happens, consider a scenario where only a single server is setting up and compare it to a scenario where 100 servers begin setup at the same time.
In the Deterministic model, all 100 servers perform no work while they are setting up, then all of them turn on simultaneously.
By contrast, in the Exponential model, the 100-server system receives its first server on average 100 times faster than the single-server system receives its first server.
In other words, in the Exponential model, the \textit{longer} the system's queue is, the \textit{more rapidly} the system's servers turn on to help \textit{drain} that queue. 
In a sense, the Exponential system can rapidly ``react'' to increases in queue length.
\chmade{ Note that this ``blessing of variability'' effect only comes into play when there are multiple servers; in the single server case, as seen in Figure~\ref{fig:expVdet}, the variability in setup time hurts the waiting time.}

\paragraph{The Exponential model underestimates waiting.} This unrealistic ``reactivity'' phenomenon causes a further, more concerning, problem: in real systems, the Exponential model dramatically underestimates how much waiting actually occurs.
To be more precise, in modern systems: 1) average setup times are often larger than average job sizes by two or three orders of magnitude \citep{gandhi2012autoscale, mao2012performance,mogul2015inferring, maheshwari2018scalability, hao2021empirical};
and 2), as noted in the state-of-the-art paper \citep{RRR}, setup times are actually closer to \emph{Deterministic}. 
When these two criteria are satisfied, as observed in Figure~\ref{fig:expVdet}, the true waiting time is often orders of magnitude larger than what the Exponential model predicts.
Accordingly, in many practical studies of the setup effect \citep{gandhi2012autoscale,kara2017energy,hyytia2018dynamic}, setup times are assumed to be Deterministic, e.g. servers take a fixed time of 2 minutes to set up. 
However, despite its apparent practical limitations, the Exponential setup model remains the \textit{de facto} choice for theoretical analysis, since it allows for the application of a number of existing theoretical techniques.
\chmades{By contrast, we craft our techniques to specifically address the case where setup times are both Deterministic and long.}

\subsection{Our Results}

\chmades{
\paragraph{Model Summary.} In this paper, for the first time, we provide a non-asymptotic analysis of the average waiting time in a dynamically-provisioned multiserver system with Deterministic setup times. A detailed description of this model is given in Section~\ref{sec:model}, but we give a brief description here.
We study an M/M/k queueing system with load $\rho$ and mean service time $\frac{1}{\mu}$.
Unlike the classical M/M/k model, in our setting, a server turns off when it completes serving a job and the queue is empty.
When a new job arrives into an empty queue and there are off servers, one of them is turned on, experiencing a deterministic setup time of duration $\beta$.
This model can be viewed as a Deterministic analogue of the one studied in \citep{RRR}, and we refer to it as the M/M/k/Setup-Deterministic model.
}
\paragraph{\chmades{Main Results.}}
We derive three main results on the average waiting time (queueing time) of jobs, $\ex{T_Q}$; \chmades{we plot these results in Figure~\ref{fig:boundComparison}.}
Our first result is Approximation~\ref{apx:only}, a prediction of the average waiting time \chmades{$\ex{T_Q}$}, which appears extremely accurate across a wide variety of parameter settings (see Figure~\ref{fig:approxEval} for a detailed visualization).
When simplified, this closed-form approximation roughly states that
\begin{equation*}
    \ex{T_Q} \approx \frac{1}{2} \sqrt{\frac{\pi}{2}} \frac{\beta}{\sqrt{k \rho}} + \frac{1}{\mu k(1-\rho)},
\end{equation*}
\chmades{where $\rho k$ represents the offered load and $\frac{\beta}{1/\mu}$ represents the ``relative setup time,'' namely the setup as a proportion of the job size.}
In support of this approximation, in our second and third results, we prove upper and lower bounds on the average waiting time \chmades{which hold when both the offered load $k\rho$ and the relative setup time $\frac{\beta}{1/\mu}$ are large ($\geq 100$)}; these make up Theorems~\ref{thm:main} and~\ref{thm:improvement}, respectively.
Furthermore, we later show, in Theorem~\ref{thm:tightness}, that our results capture the correct order-wise scaling of the average waiting time---they agree, up to multiplication by an absolute constant.
Note that, at the moment, no other such characterizations exist for any similar multiserver setup system; not even for the extensively-studied M/M/k/Setup-Exponential model.

\begin{figure*}[t!]
    \centering
\begin{subfigure}[t]{0.495 \linewidth}
    \centering
    \includegraphics[width =\textwidth,clip]{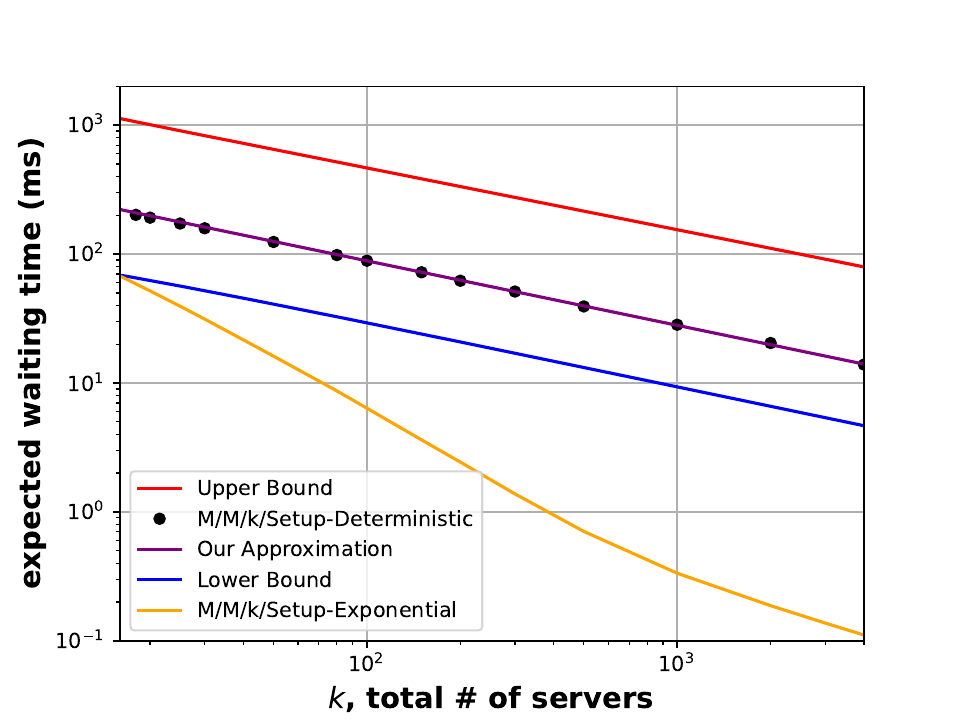}
    \caption{\footnotesize An illustration of our results.}
    \label{fig:boundComparison}
\end{subfigure}
    \hfill
    \begin{subfigure}[t]{0.495 \linewidth}
    \centering
    \includegraphics[width=\textwidth,clip]{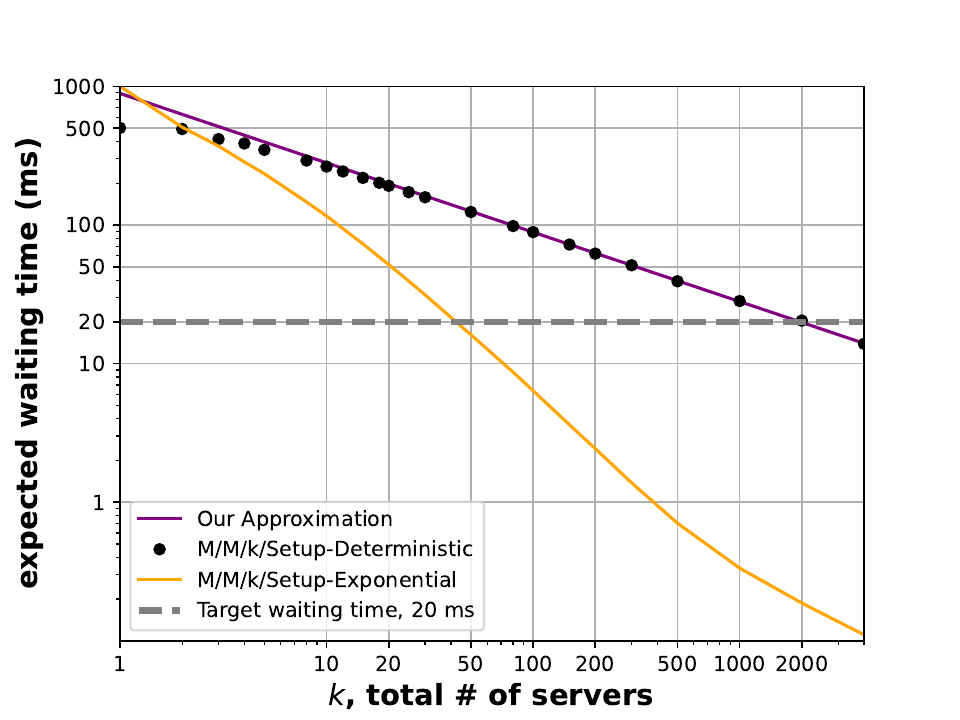}
    \caption{\footnotesize A provisioning example.}
    \label{fig:prov_ex}
    \end{subfigure} 
    \vspace{3pt}
    \caption{Our theoretical results along with simulation data for the M/M/k/Setup-Deterministic and M/M/k/Setup-Exponential, varying the number of servers $k$ while keeping the mean service time $\frac{1}{\mu} =1$ ms, the mean setup time $\beta = 1000$ ms, and the load $\rho = 0.5$ fixed. 
    (a) A comparison of our results to the true average waiting time in the M/M/k/Setup-Deterministic. Our results behave like the true average waiting time, while the Exponential model behaves differently.
    (b) A provisioning example highlighting the differences between the Deterministic and Exponential models. To achieve a target waiting time of $20$ ms, our approximation correctly predicts it will take $k\approx 2000$ servers, while the Exponential model predicts that only $k\approx50$ servers should suffice. See Figure~\ref{fig:approxEval} for a more comprehensive evaluation.}
\end{figure*}


\subsection{Impact:  How Understanding Setup Times Helps with Capacity Provisioning}\label{sec:impact}

A common but complex problem which arises in many areas is that of designing a system such that the average waiting time of a customer is below some target waiting time.
Historically, we understand this problem well for systems without setup times, e.g. there’s a straightforward formula for the average waiting time in the M/M/k without setup.
Unfortunately, our understanding of this problem is quite poor for more modern systems, since their average waiting times are affected by setup times. 
In particular, modern systems dynamically control the number of servers that they keep on, periodically turning servers off in order to save energy.
As we mentioned before, previous results on understanding the relationship between setup times and the average waiting time leave much to be desired.
Our new results expand on the state-of-the-art Exponential model in two important ways: 1) obtaining the predicted average waiting time is much easier computationally, and 2) the quality of the prediction is much better.

\paragraph{Easier predictions.}
Compared to the Exponential model, our new Deterministic approximation greatly simplifies the design process.
In particular, when predicting the average waiting time in the Exponential model using the state-of-the-art method from \citep{RRR}, one must solve a system of $O(k^2)$ quadratic equations to find the average waiting time $\ex{T_Q}$.
Two practical issues arise from this fact.
First, the equations change depending on the number of servers $k$, meaning that the computation must be repeated every time one wishes to test a new number of servers.
Second, the opacity of the process makes it difficult to get intuition about how the average waiting time changes as one alters the system parameters.
In contrast, Approximation~\ref{apx:only} is a relatively simple function of the relevant parameters.
The simplicity of our approximation has, likewise, two benefits: 1) computing the waiting time becomes easy, and 2) our approximation's form makes it clear how and why the waiting time behaves the way it does.


\paragraph{Higher quality predictions.}
Moreover, when compared to the predictions of the Exponential model, the predictions we obtain using our Deterministic approximation are of a much higher quality.
This difference in quality is perhaps best illustrated by looking at a simple example.
In Figure~\ref{fig:prov_ex}, we compare the prediction from the Exponential model to the prediction from our approximation, plotting how the predicted average waiting time changes as one increases the number of servers $k$ while fixing the load $\rho=0.5$, the average setup time $\beta = 1000$ ms, and the average service time $\frac{1}{\mu} = 1$ ms (note that these are typical relative values in many applications \citep{mao2012performance}).
Our goal is to determine how large the number of servers $k$ needs to be before we reach our target waiting time $\ttarget = 20$ ms.
In both models, the average waiting time decreases as the system gets larger.
However, the Exponential model predicts that the average waiting time will be small enough once $k=50$. 
On the other hand, as captured by our approximation, the Deterministic setup system will only reach the target waiting time once the number of servers $k \approx 2000$ ---a full $40$ times larger than what the Exponential system predicts!
At even a modest number of servers, the Exponential system underestimates the waiting time by orders of magnitude.

\subsection{Challenges and Our Approach}
\paragraph{What makes the multiserver setting difficult.}
Although the harm caused by setup times is non-trivial to understand even in single server systems, the setup effect can be especially difficult to understand when multiple servers can set up at the same time.
In particular, when servers can set up \emph{simultaneously}, their server states begin to interact; via the speed of their processing, the system's \textit{busy} servers indirectly control the setup behavior of the system's \textit{not-busy} servers.
For example, if server A is \textit{on} while server B is \textit{setting up}, then server A might finish all the work in the queue before server B even has a chance to turn \textit{on}.
As such, in the multiserver setting, it can sometimes make sense to \textit{cancel} a server's setup process; a situation which would \emph{never} occur in the single server setting.
Of course, the opposite can also happen: if the busy servers are working much more slowly than expected, then the queue might grow large enough that we begin \textit{setting up} a server that would otherwise be left \textit{off}.
This interaction between departure behavior and setup behavior is exactly what makes the setup effect so much harder to understand in the multiserver setting.

\paragraph{Why Deterministic setup is especially difficult.} As such, while using Deterministic setup times might be more realistic, it also comes with a set of unique theoretical challenges.
In the Deterministic case, there is no avoiding the complexity of setup: even in simulation, one must track the individual remaining setup time of \emph{every} server that is currently setting up.
By contrast, because the Exponential distribution is \emph{memoryless}, in the Exponential case it suffices to track only the \textit{total number} of servers setting up, greatly simplifying the system state.
Moreover, the Exponential model's simple state forms a Continuous-Time Markov Chain, a well-studied class of stochastic processes for which a number of techniques have been developed.
For the Deterministic setup model, no such techniques exist.

\paragraph{Our approach.}
To derive our first-of-their-kind results, we develop a new technique called the Method of Intervening Stopping Times, or MIST, which allows us to more naturally investigate queueing systems with high-dimensional, continuous-valued state spaces.
Generally, MIST is a method for bounding the expectation of a random time integral, 
and it works by dividing up the original time interval into smaller, more manageable pieces using stopping times.
By focusing on these smaller pieces, we avoid the issue of tracking the entire high-dimensional system state at every point in time. 
Instead, MIST allows us to narrow our focus to only the ``most important’’ aspects of the state for a particular integral, and to shift our definition of ``most important’’ based on the particular integral we are analyzing. 
Since, for a particular integral, we know the most important aspects of the system state, we can derive strong bounds on that integral that hold \textit{regardless} of the remaining state information.
Moreover, we can apply this approach recursively, chopping a still-large time interval into a sequence of even smaller time intervals.
By chopping time in the right way, eventually we can make the system so well-behaved during an interval that we can adapt powerful techniques like martingale theory and Wald’s equation to the analysis of a phase.
\chmades{To place our approach in a broader context, MIST is reminiscent of a suggestion made in \cite{kingman2009first} that
    ``it may be possible to sew these small martingales ... together to form a martingale on the whole line,''
    at which point one can make use of ``all the apparatus of martingale theory---stopping identities, inequalities, central limit theorems, and the like.''
MIST does exactly this: it sews together martingales in useful ways. For more details, see Section~\ref{sec:keyideas}.
}

    
\chmade{\subsection{Potential Mitigation}
The delay in serving jobs due to setup times comes from turning servers off and then needing to wait before those servers can turn on.
One way to potentially mitigate the impact of setup times is then to turn off servers less aggressively.
We analyze such an approach in Appendix~\ref{sec:extension}, and our analysis reveals that a natural class of policies have minimal impact on the waiting time.
This preliminary analysis highlights the need for further studies on designing policies for turning servers on and off in queueing systems with setup times to achieve a better tradeoff between energy efficiency and performance.
}

\subsection{Outline}
We proceed in the following order:
In Section~\ref{sec:model}, we give a detailed description of our model.
In Section~\ref{sec:relatedworks}, we discuss the previous work on analyzing setup times in greater detail.
In Section~\ref{sec:results}, we fully state our main results.
In Section~\ref{sec:keyideas}, we discuss the key ideas which enabled us to prove our main results, including MIST.
In Section~\ref{sec:proof}, we give the proof of Theorem~\ref{thm:main}; we relegate the proof of Theorem~\ref{thm:improvement} to Appendix~\ref{sec:improvement}.
\chmades{In Section~\ref{sec:tightness}, we give a proof of the multiplicative tightness of our bounds.}
Finally, in Section~\ref{sec:conclusion} we make some closing remarks and propose follow-up work based on our findings.

\section{Model: The M/M/k/Setup-Deterministic Queue}\label{sec:model}
\newcommand{\brc}[1]{\left\{ #1 \right \}}
\newcommand{\state}{S}
\newcommand{\dets}{Y}
\newcommand{\detsvec}{\bm{\dets}}
\newcommand{\dept}{D}
\newcommand{\interv}{\mathcal{I}}
\newcommand{\depOp}[2]{\mathcal{D}\left[#1 \right]\left( #2 \right)}
\newcommand{\bp}[2]{T^{\textup{busy}}\left(#1, #2 \right)}
\newcommand{\ibp}[2]{I^{\textup{busy}}\left(#1, #2 \right)}

\paragraph{The system behavior, excluding setup.} We now describe our model of interest, the M/M/k/Setup-Deterministic.
As in the typical M/M/k queue, jobs arrive in a Poisson process of rate $k\lambda$ into a FCFS queue where jobs wait to be served at one of $k$ servers. 
The job at the head of queue enters service whenever a server frees up, either from a job completing service or from a server finishing set up. 
Once a job enters service, it remains in service for Exp$(\mu)$ time before departing.
We assume all the servers have identical service and setup \chmade{distributions}.
As such, we can assign each server an index from $1$ to $k$, and without loss of generality assume that departures always occur at the busy server with the highest index;
i.e., we re-index the servers when a job departs so the server with the newly departed job has the highest index among the busy servers.
From here, we define the quantity $Z(t)$ to be the number of busy servers (or jobs in service) at time $t$, the quantity $Q(t)$ to be the number of jobs waiting in the queue at time $t$, and the quantity $N(t)=Q(t) + Z(t)$ to be the total number of jobs in our system.
Excluding the setup dynamics, one sees that, as promised, the behavior of our model is identical to the M/M/k queue.

\begin{figure}
    \centering
    \includegraphics[scale=0.25]{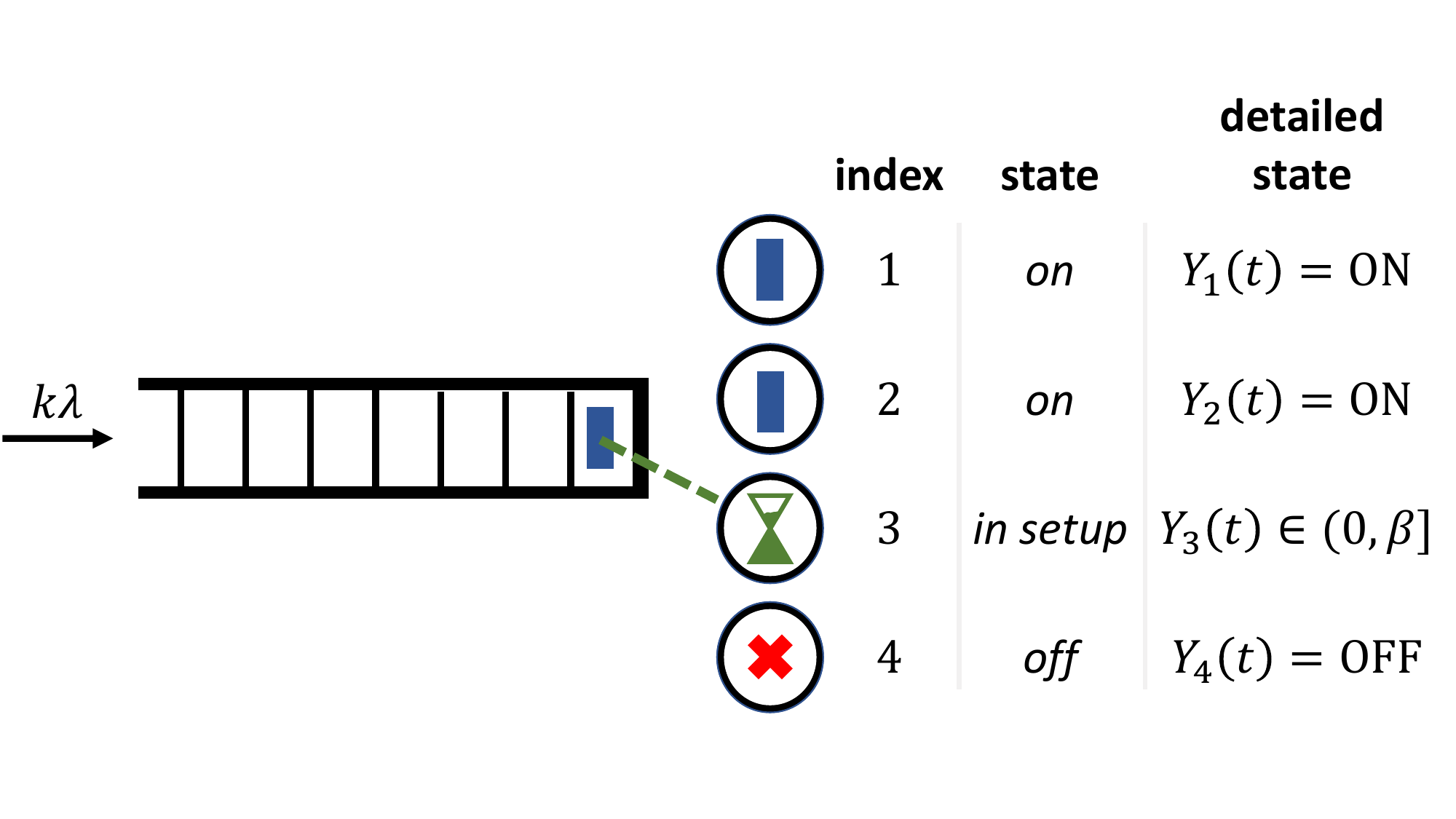}
    \vspace{-3pt}
    \caption{An example of M/M/k/Setup-Deterministic with $k=4$. The state pictured has $Z(t)=2$ busy servers, which means there are $2$ jobs in service. There is $Q(t)=1$ job in queue, and thus $N(t)=Z(t)+Q(t)=4$ jobs in system.}
    \label{fig:model}
\end{figure}

\paragraph{The setup dynamics.} From here, it suffices to describe precisely how servers will be turned \textit{on} and \textit{off}.
We assume that each server is always in one of three states: \textit{on}, \textit{off}, or \textit{in setup}. 
A given server remains \textit{on} only as long as that server remains busy. In other words, a server turns \textit{off} when it finishes its current job and the queue is empty.
Recall that we re-index the servers to ensure that the job departure is from the highest indexed server among the busy servers.
On the other hand, server $i$ begins \textit{setup} when a job arrives to the system and there are only $i-1$ jobs in the system. 
Server $i$ remains in setup until one of two events occurs: either 1) some fixed quantity $\setuptime$ time has passed, or 2) there are fewer than $i$ jobs in the system; accordingly, we refer to $\beta$ as the \textit{setup time} of a server.
In the first case, if $\beta$ time has passed without $N(t)$ dipping below $i$, then server $i$ has completed its setup and begins working on the job at the head of the queue. 
In the second case, if the number of jobs $N(t)$ dips below $i$ before server $i$ completes setup, then the setup is canceled and server $i$ turns \textit{off}.
We use $\dets_i(t)$ to denote the detailed state of server $i$ at time $t$. 
If server $i$ is \textit{off}, we set $\dets_i(t) = \text{OFF}$; if server $i$ is \textit{on}, we set $\dets_i(t) = \text{ON}$; if server $i$ is \textit{in setup}, we let $\dets_i(t)$ denote the remaining amount of time until server $i$ would finish setup, if left uninterrupted. 
To be precise, $\dets_i(t)$ is set to $\beta$ when server $i$ first initiates \textit{setup}, and this value decreases at rate $1$ until either setup completes or setup is canceled.
For convenience, we assume, without loss of generality, that $\text{ON}< s < \text{OFF}$ for every possible remaining setup time $s \in (0, \beta]$; this ensures that the detailed state $\dets_i(t)$ is non-decreasing in $i$.
As a shorthand, we use $\detsvec = \left(\dets_1(t), \dets_2(t), \dots, \dets_k(t) \right)$ to denote the vector of detailed server states.

\paragraph{ A state descriptor.} Accordingly, a Markovian state descriptor for our system at time $t$ is $\state(t) \triangleq \left(N(t), \detsvec(t)\right)$. Note that, since one can recover the number of jobs in service $Z(t)$ from the detailed server states $\detsvec(t)$, one could also choose the state to be $\left(Q(t), \detsvec(t) \right)$. Either suffices in providing a complete description of the forward dynamics of the system.  Furthermore, when discussing the steady-state distribution of, say, the number of jobs $N(t)$, we use the notation $N(\infty)$.

\paragraph{Some important constants.} We define some system parameters which are critical to system behavior. We use $\rho \triangleq  \frac{\lambda}{\mu}$ to refer to the load of our system, i.e., the time-average fraction of servers working on a job. We call the offered load $R \triangleq k\rho$; this is the time-average \emph{number} of busy servers. To enforce stability, we require that $\rho < 1$. As discussed, the symbol $\setuptime$ refers to the fixed (Deterministic) setup time of a server.

\paragraph{Busy period notation.} 
Our results can be stated more concisely with two quantities related to a busy period of an M/M/1 queue.  We give the notation below.
We use $\bp{n}{j}$ to denote the expectation of the random \emph{length} of an $M/M/1$ busy period with arrival rate $k\lambda$, service rate $k\lambda + \mu j$, and which starts with $n$ jobs in the system. Likewise, we use $\ibp{n}{j}$ to denote expectation of the random \emph{time integral of the number of jobs} within the $M/M/1$ over the same period. Explicitly, we have
\begin{equation}
    \bp{n}{j} \triangleq \frac{n}{\mu j}
\end{equation}
and
\begin{equation}
    \ibp{n}{j} \triangleq \frac{n}{\mu j}\left[\frac{n+1}{2} + \frac{1}{1 - \frac{k\lambda}{k \lambda + \mu j}} \right] = \frac{n}{\mu j}\left[\frac{n+1}{2} + \frac{R}{j} + 1 \right].
\end{equation}

\section{Related Work}\label{sec:relatedworks}


In Section~\ref{sec:intro}, we discussed two works most directly related to ours, \citep{RRR} and \citep{welchfirstservice}. 
Here, we briefly describe a few other works studying setup times in queueing systems.

\subsection{Single Server Setup}

\paragraph{The single server setting.} Since the seminal paper of \citep{welchfirstservice} on the M/G/1/Setup, various followup work has been devoted to extending this work to service disciplines beyond First-Come-First-Served \citep{bischof2001analysis} and arrival processes beyond Poisson \citep{he1995flow,choudhury1998batch}.

\paragraph{Staggered setup.} The first investigations into multiserver setup study the M/M/k/Setup and the M/G/k/Setup with the additional assumption that \chmade{only one server can be \textit{in setup} at a time} \citep{artalejo2005analysis,gandhiharchol13}.
However, in real systems, \chmade{servers can undergo the setup process \textit{simultaneously} (e.g. ten virtual machines can be booting up at the same time),} meaning that the strong results obtained for the staggered setup model are of limited practical use.

\subsection{Multiserver Setup}

\paragraph{M/M/k/Setup-Exponential: Approximations.} In \citep{gandhi2010server}, a precursor to \citep{RRR}, the authors derive a number of intuitive approximations for the average waiting time.
Further, \citep{pender2016law} analyzes an extended version of their model which includes customer abandonment and time-varying arrival rates, providing an approximation for the average queue length at each point in time.
Unfortunately, unlike our results, none of these results are in closed-form, nor do they provide a bound on the approximation error.

\paragraph{M/M/k/Setup-Exponential: Exact analysis.} In \citep{RRR}, the authors provide a method for computing the expectation of any function of the steady-state queue length; their method was later shown to be analogous to the matrix analytic method \citep{phung2017exact}.
As usual, though, these results are not in closed-form; even now, it remains unknown how the average waiting time in the M/M/k/Setup-Exponential model varies as one varies the system parameters.

\paragraph{Dispatching/Load-balancing.} In \citep{mukherjee2017optimal}, the authors consider a dispatched, finite-buffer version of the M/M/k/Setup-Exponential, for which they design TABS, an asymptotically-optimal (in energy waste, as $k \to \infty$) load-balancing/scaling scheme.
In a followup paper, \citep{mukherjee2019join}, the authors extend their work to the infinite buffer case, confirming its asymptotic optimality.

\paragraph{Deterministic setup.} Besides the lower bound of \citep{WHW}, a few others have investigated multiserver systems with Deterministic setup times. In the control setting, \citep{hyytia2018dynamic} considers a dispatching version of the M/G/2/Setup-Deterministic model, building near-optimal policies for the joint control of setup and dispatching. In \citep{kara2017energy}, the author provides a simulation-based analysis of the M/M/k/Setup-Deterministic queue, corroborating our findings on the large gap between Exponential and Deterministic setup systems.

\paragraph{Multiserver scheduling.} In  \citep{hong2023performance}, the authors study the performance of the Gittins policy in the G/G/k/Setup queue and, instead of explicitly bounding the average wait as we do, they bound the difference between Gittins' average waiting time and the optimal average waiting time.

\paragraph{Policies for mitigating the harm caused by setup.}

Many works have considered the problem of mitigating the energy waste and additional delay caused by setup times, often formalized in terms of the Energy-Response-time-Product (ERP) \citep{gandhiguptaharcholkuzuch10}, the Normalized-Performance-Per-Watt (NPPW) \citep{gandhiharchol11Allterton}, or the energy expended given a fixed tail cutoff for response time \citep{gandhi2012autoscale}.
Many works study the \emph{DelayedOff} policy, where a server must idle for a period before it may turn off \citep{gandhiguptaharcholkuzuch10, gandhi2010server, gandhi2012autoscale, pender2016law}. 
When using DelayedOff, the choice of which \textit{idle-but-on} server receives a job matters:
\citep{gandhiguptaharcholkuzuch10} considers routing the job to the \emph{Most Recently Busy} server; \citep{gandhi2012autoscale} creates a ranking of all servers, always routing to the lowest-ranked idle server.
Other works \citep{gandhiguptaharcholkuzuch10, gandhiharcholkozuch11} consider utilizing sleep states, useful intermediates between the ``no power, long setup'' \textit{off} state and the ``full power, no setup'' \textit{on} state.
\chmades{We note that the policy considered in this paper has been studied before, and is a natural generalization of a single-server policy known to be optimal in some settings. However, determining whether that optimality continues to hold in the multiserver setting remains a challenging open problem.}


\section{Results}\label{sec:results} 
\paragraph{Summary.} We now state our three main results concerning the average queue length $\ex{Q(\infty)}$: an approximation, an upper bound, and a lower bound. 
Note that, by Little's Law \citep{Kle_75}, which states that $\ex{Q(\infty)} = k\lambda \ex{T_Q}$, our results immediately translate to results on the average waiting time $\ex{T_Q}$; see Figure~\ref{fig:boundComparison} for an illustration. 
Our first main result, Approximation~\ref{apx:only}, is a simple and accurate approximation to the average queue length $\ex{Q(\infty)}$; we demonstrate its accuracy in Figure~\ref{fig:approxEval}.
Our second main result, the upper bound of Theorem~\ref{thm:main}, is the first-ever upper bound on the average queue length $\ex{Q(\infty)}$ in a multiserver system with Deterministic setup times. 
By contrast, our third main result, the lower bound of Theorem~\ref{thm:improvement}, is only the second lower bound on the queue length $\ex{Q(\infty)}$; however, improving on the lower bound of \citep{WHW}, we show that our bounds differ by at most a multiplicative constant. 
Notably, all three results are in explicit closed-form, unlikely nearly all other results concerning multiserver systems with setup times.

\chmade{
We begin by discussing Approximation~\ref{apx:only} in Section~\ref{sec:apx_statement}.
We then present and discuss our upper and lower bounds on the average queue length as Theorems~\ref{thm:main} and~\ref{thm:improvement}, respectively, in Section~\ref{sec:upper-lower-bounds-statements}.
Finally, we provide some justification of Approximation~\ref{apx:only} using the established upper and lower bounds in Section~\ref{sec:approx}.
}

\subsection{Approximation: Statement and Evaluation}\label{sec:apx_statement}

\begin{restatable}[Approximation of the Average Queue Length]{apprx}{ourapx}\label{apx:only}
    Let $\capx \triangleq \sqrt{\frac{\pi}{2}}$. Then, if the offered load $R\triangleq k \rho > 1$, one can approximate the steady state queue length as
    \begin{equation}\label{eq:full}
        \ex{Q(\infty)} \approx Q_{\textup{apx}} \triangleq 
        \frac{
            \frac{1}{2} \mu \beta^2 \capx \sqrt{R} + 
            \frac{\mu \beta \capx \sqrt{R}} {\mu k(1-\rho)}
            \left[
                \frac{\mu \beta \capx \sqrt{R} + 1}{2} + \frac{1}{1-\rho}
            \right]
        }
        {
            \beta + \frac{\mu \beta \capx \sqrt{R} }{\mu k (1-\rho)}
        }.
    \end{equation}

\end{restatable}

\begin{figure}
    \centering
    \includegraphics[ trim={0cm 0cm 0cm 0cm},clip, width=0.9\textwidth]{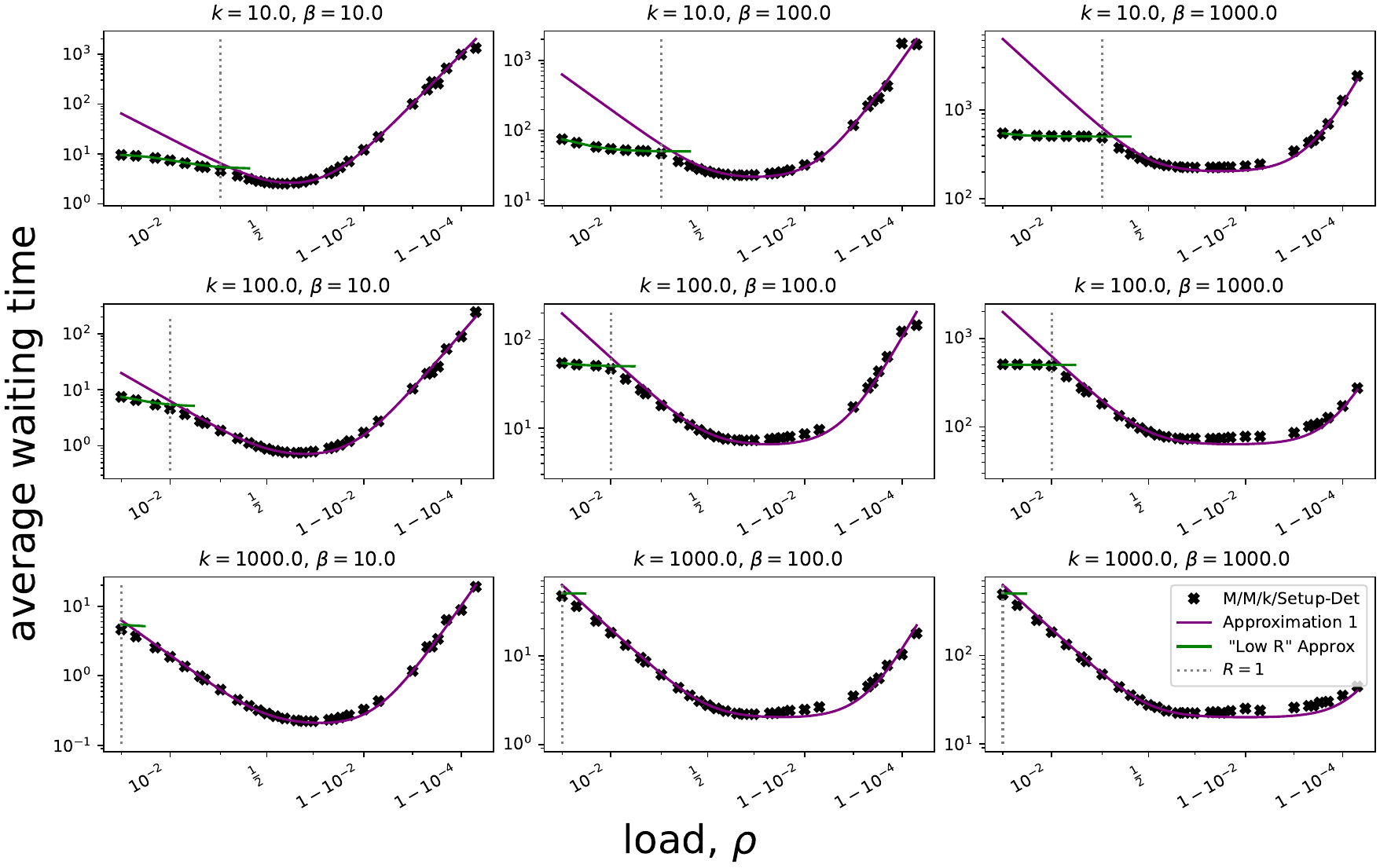}
    \caption{Simulation results demonstrating the high accuracy of Approximation~\ref{apx:only}. For each of these 9 plots, we plot the behavior of the average waiting time \chmades{(in ms)} as one varies the load $\rho$ from $0$ to $1$, holding fixed the total number of servers $k$, \chmades{the mean service time $\frac{1}{\mu} = 1$ ms,} as well as the setup time $\beta$. In each row, we hold the number of servers $k$ constant while testing increasing values of the setup times $\beta$. In each column, we hold the setup time $\beta$ constant while increasing the number of servers. We plot three quantities: 1) in black, the simulated average waiting time for the M/M/k/Setup-Deterministic; 2)in purple, the predicted average waiting time of Approximation~\ref{apx:only}; and \chmade{3) the predicted average waiting time of as given by the ``low R'' approximation of \eqref{eq:ss_approx}, a variation on the single-server setup result of \cite{welchfirstservice}.  We also include, as a reference, a dotted line illustrating the point at which the offered load $R\triangleq k\rho = 1$. Our approximation works well when the average number of busy servers $R > 1$, and the ``low R'' approximation works well when $R < 1$.}}
\label{fig:approxEval}
\end{figure}

\paragraph{Evaluation.} 
\chmade{
In Figure~\ref{fig:approxEval}, we observe that our approximation is accurate across a wide range of system parameters, so long as the offered load $R > 1$, a considerable weakening of the assumptions used in our proofs.
In that sense, it serves as a testament to the strength of our approach that our resulting approximation remains accurate all the way down to offered loads $R \approx 1$.}

\chmade{
When $R < 1$, we find that the waiting time exhibits a ``single-server bottleneck'' effect, mirroring the behavior in \citep{welchfirstservice}. 
In particular, comparing the M/M/1/Setup-Deterministic to the M/M/1 without setup times, one finds \citep{welchfirstservice} that the additional wait due to setup times is
\begin{equation*}
    \chmades{\ex{T_Q}^{\text{M/M/1/Setup-Deterministic}} - \ex{T_Q}^{\text{M/M/1}} =  \frac{\beta}{2}\left[\frac{2 + \lambda \beta}{1 + \lambda \beta}\right] =  \frac{\beta}{2}\left[\frac{2 + \mu R \beta}{1 + \mu R \beta}\right],}
\end{equation*}
\chmades{where we have noted that, in the single server setting, the arrival rate $\lambda = \mu (1 \cdot \rho) = \mu R$.
We can thus use this term to  obtain a ``low $R$'' approximation for the additional wait due to setup in the \textit{multiserver} setting:}
\begin{equation}\label{eq:ss_approx}
    \chmades{\ex{T_Q}^{\text{M/M/k/Setup-Deterministic}} - \ex{T_Q}^{\text{M/M/k}} \approx \ex{T_Q}^{\text{M/M/k/Setup-Deterministic}} \approx} \frac{\beta}{2}\left[\frac{2 + \mu R \beta}{1 + \mu R \beta}\right].
\end{equation}
We include this ``low R'' approximation in Figure~\ref{fig:approxEval} and note that, when $R < 1$, this ``low R'' approximation seems to capture the behavior of the waiting time quite well.
}

\subsection{Upper and Lower Bounds}\label{sec:upper-lower-bounds-statements}
\paragraph{Assumptions.}\label{sec:assumptions}
In our theoretical analysis, in order to more simply characterize the system's behavior, we make two assumptions.
First, we assume that setup times are large compared to service time, i.e. the average setup time $\beta \geq 100\frac{1}{\mu}$; this is often satisfied in practice \citep{gandhi2012autoscale, mao2012performance,mogul2015inferring, maheshwari2018scalability, hao2021empirical}.
Second, we assume that the system utilizes, on average, at least 100 servers, i.e. the offered load $R\triangleq k\rho\geq 100$.

\chmade{
Note that the specific values within these assumptions are not strictly necessary for our analysis to go through; these assumptions are made predominantly to simplify the expressions which arise in our analysis.
For example, in our final bound on the quantity $\ex{L}$ at the end of Appendix G.6, we are left with the expression
\begin{equation*}
    \ex{L} \geq \left(1 - \frac{b_1}{\sqrt{\mu \beta R}}\right)\left[\left(1 - \frac{2}{\sqrt{R}}\right) \left(\sqrt{\frac{\pi}{2}} - \frac{1.15}{\sqrt{\mu \beta}} - \chmades{e^{-4}}\right) - \frac{1}{2 \sqrt{R}}\right] \sqrt{R}.
\end{equation*}
\chmades{By noting that the multiplicative terms in front of the dominant $\sqrt{R}$ term are increasing in the offered load $R$ and relative setup time $\mu \beta$, substituting in $R=\mu\beta=100$, then rounding,} we can reduce the above expression to
\begin{equation*}
    \ex{L} \geq \frac{2}{3} \sqrt{\frac{\pi}{2}}\sqrt{R},
\end{equation*}
which still captures the correct order-wise scaling (and even points towards the correct asymptotic behavior) while being much easier to read.
}

\newcommand{\UBone}{C_2 \left(\setuptime\right) \frac{}}
\newcommand{\shs}{\sqrt{\mu \beta R}}
\newcommand{\virt}{M}

\paragraph{\chmades{Statement of Bounds.}} We are now ready to state the upper and lower bounds on the average queue length.
\begin{theorem}[Upper Bound on Average Queue Length]\label{thm:main}
For an M/M/k/Setup-Deterministic with an offered load $R \triangleq k \rho \geq 100$ and a setup time $\setuptime \geq 100 \frac{1}{\mu}$,
the expected number of jobs in queue in steady state is upper-bounded as
\begin{align*}
    \ex{Q(\infty)} 
    &\leq 3.6\sqrt{\mu \beta R} + 2.04 \frac{\rho}{1-\rho} + \frac{4.05\mu \beta^2 \sqrt{R} + g\left(9 (\mu  \beta)^2 R, 3 \mu \beta \sqrt{R}, k(1-\rho)\right)}{\beta + \frac{L_1 \mu \beta \sqrt{R}}{\mu k(1-\rho)}},
\end{align*}
where the function $g(x,y,z) \triangleq  x \frac{1}{2\mu z} + y\left[\frac{R}{\mu z^2} + \frac{3}{2 \mu z} \right]$ and the constant $L_1 = \frac{2}{3}\sqrt{\frac{\pi}{2}}$.
\end{theorem}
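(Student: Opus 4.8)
The plan is to derive the bound by a renewal--reward (regenerative) decomposition of the time-average queue length, with MIST controlling the per-cycle quantities. Take as regeneration epochs the successive returns of $\state(t)$ to the empty state $N=0$; writing $\rew$ for $\int_{\text{cycle}} Q(t)\,\dd t$ and $\len$ for the cycle length, the renewal--reward theorem gives $\ex{Q(\infty)} = \ex{\rew}/\ex{\len}$. The right-hand side of the statement has exactly the shape $\text{(slack)} + (\text{bound on }\ex{\rew})/(\text{bound on }\ex{\len})$, so the task splits into: (i) an upper bound on $\ex{\rew}$ matching the numerator $4.05\,\mu\beta^2\sqrt R + g(9(\mu\beta)^2R,\,3\mu\beta\sqrt R,\,k(1-\rho))$; (ii) a lower bound on $\ex{\len}$ matching the denominator $\beta + \frac{L_1\mu\beta\sqrt R}{\mu k(1-\rho)}$; and (iii) bookkeeping for the additive terms $3.6\,\shs + 2.04\frac{\rho}{1-\rho}$, which collect lower-order pieces (fluctuations around the fluid build-up path, jobs present at cycle boundaries, and the ``baseline'' M/M/$k$-type queue crudely bounded by the M/M/$1$ formula $\frac{\rho}{1-\rho}$). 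A point to be verified along the way is that the rare ``cold-start'' portion of a cycle (from $N=0$ until enough servers are warm) contributes only to the slack: its weight per cycle is $O(1)$ while the cycle contains very many ``warm'' setup sub-excursions, so its amortized contribution vanishes into the additive terms.

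Next, use MIST to chop a cycle --- more precisely each warm setup sub-excursion inside it --- into two phases delimited by stopping times: a \emph{build-up phase} of duration $\beta$ (begun when a fluctuation leaves the system with fewer busy servers than the offered load $R$ while the missing servers' replacements have not yet finished their deterministic setup) and a \emph{drain phase}, during which essentially all $k$ servers are on and the backlog clears. For the drain phase, construct a coupled process $\qu$ that stochastically dominates $Q$ and evolves as the queue of an M/M/$1$ with arrival rate $k\lambda$ and service rate $k\mu = k\lambda + \mu\cdot k(1-\rho)$, started from the backlog $Q(\beta)$ present at the end of build-up. The expected length and expected time-integral of this busy period are precisely $\bp{Q(\beta)}{k(1-\rho)}$ and $\ibp{Q(\beta)}{k(1-\rho)}$ from Section~\ref{sec:model}; substituting the upper bound $Q(\beta)\le 3\mu\beta\sqrt R$ yields $g(9(\mu\beta)^2R,3\mu\beta\sqrt R,k(1-\rho)) = \ibp{3\mu\beta\sqrt R}{k(1-\rho)}$ in the numerator, while the matching lower bound $\ex{Q(\beta)} \ge L_1\mu\beta\sqrt R$ (read off from $\ex{L}\ge L_1\sqrt R$, the estimate ending Appendix~G.6, with $L_1 = \frac23\capx$) yields $\bp{L_1\mu\beta\sqrt R}{k(1-\rho)}$ in the denominator; the remaining $\beta$ in the denominator is the fixed length of the build-up phase.

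The crux --- and the step I expect to be the main obstacle --- is the build-up phase: one must show (a) $Q(\beta) \le 3\mu\beta\sqrt R$ (in expectation, with the overflow folded into the slack) and (b) $\ex{\int_0^{\beta} Q(t)\,\dd t} \le 4.05\,\mu\beta^2\sqrt R$. The difficulty is exactly the coupled departure/setup behavior flagged in the introduction: during build-up the drift of $N$ equals $\mu$ times the number of \emph{missing} servers, but that number is itself driven by how high the queue has climbed, since a larger queue triggers more server setups --- each of which comes online one full deterministic setup time $\beta$ later. MIST tames this feedback by recursively sub-dividing $[0,\beta]$ at stopping times (successive instants at which a new server comes online, or at which $N$ first reaches a new threshold), so that on each sub-interval the count of on-servers is frozen and $N$ is a simple biased birth--death walk; on each piece one applies optional stopping / Wald's identity to a martingale such as $N(t) - (\text{drift})t$, and then ``sews'' the pieces together in the spirit of the Kingman passage quoted in the introduction. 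The scale $\sqrt R$ enters because the number of missing servers is of the order of a Gaussian-scale fluctuation of the offered load $R$; a maximal inequality applied across the sewn pieces converts this $\Theta(\sqrt R)$ into the explicit constant $3$, and the factors $\mu$ and $\beta$ are the per-server drift and the setup lag. Integrating the resulting envelope for $Q$ over $[0,\beta]$ gives the $\Theta(\mu\beta^2\sqrt R)$ build-up term, with the explicit $4.05$ after using $R\ge 100$ and $\mu\beta\ge 100$ to absorb lower-order pieces and round.

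Finally, assemble: combine the build-up bound $4.05\,\mu\beta^2\sqrt R + \ex{\ibp{3\mu\beta\sqrt R}{k(1-\rho)}}$ on $\ex{\rew}$ with the lower bound $\beta + \ex{\bp{L_1\mu\beta\sqrt R}{k(1-\rho)}}$ on $\ex{\len}$, form the ratio, and account for the residual $3.6\,\shs$ (fluctuation slack from the build-up) and $2.04\frac{\rho}{1-\rho}$ (baseline queue). The assumptions $R\ge 100$ and $\beta \ge 100/\mu$ are used only to simplify and round the constants, exactly as the excerpt illustrates for the $\ex{L}$ estimate. The whole argument is one instance of the recipe: regenerate, chop with intervening stopping times, reduce each piece to a one-dimensional random walk, bound it with martingale tools, and re-aggregate.
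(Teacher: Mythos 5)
Your high-level recipe (renewal--reward, an accumulation/drain split, coupling the drain to an M/M/1 busy period so that $\ibp{\cdot}{k(1-\rho)}$ appears in the numerator and $\beta$ plus a busy-period length in the denominator) matches the paper's architecture, but two of your choices create genuine gaps. First, the regeneration structure: you regenerate at returns to the empty state $N=0$, and then try to read the theorem's fraction as a per-``warm sub-excursion'' ratio inside one long cycle. Renewal--reward only equates $\ex{Q(\infty)}$ with the ratio of expectations over a \emph{full} regeneration cycle; a ratio of per-sub-excursion expectations is not the same object, and with empty-state regeneration a cycle is exponentially long in $k$ and contains a random, heavily correlated number of sub-excursions, so ``amortizing the cold start into the slack'' does not by itself justify the exchange. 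The paper's key move is precisely to make your sub-excursion \emph{be} the renewal cycle: renewals are the instants when the $(R+1)$-th server turns off. At such an instant no server is in setup (setups are canceled before idle servers are shut off), so the continuous part of the state is trivial, and the cycle length is a.s.\ at least $\beta$ and has controlled expectation --- this is what makes the denominator $\beta+\frac{L_1\mu\beta\sqrt R}{\mu k(1-\rho)}$ a legitimate lower bound on $\ex{X}$ and lets the terms $3.6\shs$ and $2.04\frac{\rho}{1-\rho}$ come out additively (they appear in the numerator multiplied by $\ex{T_A}$ and $\beta$ respectively, and are divided out using $\beta<\ex{T_A}<\ex{X}$), rather than as vaguely specified ``slack.''

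Second, the build-up phase is not an interval of fixed length $\beta$, and this is where the real work lies. The accumulation period ends at the random time $T_A$ at which the $(R+1)$-th server turns on; it can last far longer than $\beta$ because whenever $N(t)$ dips below a threshold the corresponding setup is canceled and the clock restarts, and meanwhile $Z(t)$ itself fluctuates. Your plan --- freeze the on-server count between stopping times, apply a maximal inequality, and conclude $Q\le 3\mu\beta\sqrt R$ at time $\beta$ --- does not address this cancel-and-restart feedback, which is exactly what the paper's epoch decomposition (new lows $\tau_j=\min\{t: N(t)\le R-j\}$, plus a nested up-crossing/down-crossing application of the Intervening Stopping Time lemma) is built to control. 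That machinery is what produces the term $B_1\shs\cdot\ex{T_A}$ in the accumulation integral, the bound $\ex{N(T_A)-R}\le 2.9\mu\beta\sqrt R$ (the quantity feeding $g(9(\mu\beta)^2R,3\mu\beta\sqrt R,k(1-\rho))$ is $N(T_A)$, not $Q(\beta)$ at a deterministic time), and, via the first ``long'' epoch index $L$ with $\ex{L}\ge L_1\sqrt R$, the $L_1\mu\beta\sqrt R$ in the denominator. Without replacing your fixed-$\beta$ build-up analysis by an argument that handles the random $T_A$ and canceled setups, the constants $3$, $4.05$, $3.6$ and the placement of $\ex{T_A}$ in the bound cannot be recovered.
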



\begin{restatable}[Improved Lower Bound on Average Queue Length]{theorem}{improvement}\label{thm:improvement}
    For an M/M/k/Setup-Deterministic with an offered load $R \triangleq k \rho \geq 100$ and a setup time $\setuptime \geq 100 \frac{1}{\mu}$, the expected number of jobs in queue in steady state is lower-bounded as
    \begin{equation*}
        \ex{Q(\infty)} \geq 
        \frac{ 
            L_1 \beta^2 \sqrt{R} 
            + 
            \ibp{\left[L_1 \beta \sqrt{R} - (k-R) \right]^+}{k - R}
        }
        {
           2.08 \beta +
             \frac{1}{\mu}\frac{F_1 \beta \sqrt{R}}{k - R} + 
             \frac{1}{\mu}\frac{3}{2}\ln(\beta) + \frac{1}{\mu}\ln(F_1 D_1)
        +
        \frac{2}{\mu} 
        + \frac{1}{\mu}\left[D_2 + \frac{D_3}{\sqrt{R}} \right]\max\left(\frac{1}{D_1 \sqrt{\mu \beta}}, \frac{1}{\sqrt{R}}\right)
        },
    \end{equation*}
     where $L_1$, $F_1$, $D_1$, $D_2$, and $D_3$ are constants independent of system parameters. 
\end{restatable}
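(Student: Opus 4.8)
The plan is to prove the bound by a renewal--reward (cycle) argument, using MIST to control the time integral $\int Q(t)\,\dd t$ accumulated over one cycle. The first observation is that whenever $Q(t)=0$ the entire Markovian state $S(t)=(N(t),\detsvec(t))$ is determined by the number of busy servers alone: if the queue is empty then $N(t)=Z(t)$, so every server of index above $Z(t)$ has had its setup cancelled and is off. Hence ``$Q=0$ with $z$ busy servers'' is a single state, and the successive returns to a suitably chosen state of this form are regeneration times; renewal--reward then gives
\begin{equation*}
  \ex{Q(\infty)} = \frac{\ex{\int_0^{\mathcal C} Q(t)\,\dd t}}{\ex{\mathcal C}},
\end{equation*}
with $\mathcal C$ the cycle length, which by Little's law is equivalent to the stated bound on $\ex{T_Q}$. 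To get a lower bound it then suffices to lower-bound the numerator and upper-bound $\ex{\mathcal C}$.

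I would then cut a cycle into three MIST phases, bounding each in the direction favorable to a lower bound. (1) A \emph{build-up phase} of duration roughly $\beta$: from the near-critically-loaded boundary configuration an arrival triggers a fresh setup, and for the next $\beta$ units no newly-initiated server can come online, so the queue is driven by arrivals against a frozen pool of about $R$ busy servers and drifts up. Chopping this window finely enough that $Z(t)$ barely moves on each piece, approximating the resulting near-critical walk by Brownian motion, and applying Doob/optional-stopping inequalities together with concentration of the Poisson arrival count shows the queue reaches, in expectation, level $\gtrsim L_1\beta\sqrt R$ and accumulates area $\gtrsim L_1\beta^2\sqrt R$; this is the step that produces the constant $\capx=\sqrt{\pi/2}$ (with the reported $L_1=\tfrac23\capx$ coming from the lossy simplification that substitutes $R=\mu\beta=100$ and rounds). (2) A \emph{drain phase}: once the servers set up during the build-up come online the system behaves like an M/M/1 busy period with arrival rate $k\lambda$ and service rate $k\lambda+\mu(k-R)$ started from the queue's peak; lower-bounding that peak by $[L_1\beta\sqrt R-(k-R)]^+$ and invoking the busy-period identities for $\ibp{\cdot}{\cdot}$ and $\bp{\cdot}{\cdot}$ recorded in Section~\ref{sec:model} (themselves Wald-type identities for the busy-period martingale) produces the $\ibp{[L_1\beta\sqrt R-(k-R)]^+}{k-R}$ term in the numerator and the $\bp{F_1\beta\sqrt R}{k-R}=\tfrac1\mu\tfrac{F_1\beta\sqrt R}{k-R}$ term in the denominator. (3) A \emph{cleanup phase} from the end of the drain to the close of the cycle: the queue hovers near zero while the extra servers shut down one by one, and since the relevant chains relax geometrically near empty, its expected length is only logarithmic in $\beta$, yielding the remaining $\tfrac1\mu\tfrac32\ln\beta$, $\tfrac1\mu\ln(F_1D_1)$, $\tfrac2\mu$, and $\tfrac1\mu[D_2+D_3/\sqrt R]\max(\cdots)$ terms; it contributes only nonnegative mass to the numerator, which we discard. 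Assembling, the numerator is at least the build-up area plus the drain area, $\ex{\mathcal C}$ is at most $2.08\beta$ (build-up plus slack for atypical excursions and restarts) plus the drain-length and cleanup bounds, and the claimed ratio falls out. Low-probability bad events---an in-progress setup cancelled because $N$ dips below the server's index, or $Z(t)$ straying far from $R$ during the window---are absorbed into the constants, which is where the hypotheses $R\ge 100$ and $\beta\ge 100/\mu$ are spent and where a crude version of the upper-bound machinery of Theorem~\ref{thm:main} is invoked to cap the expected drain time.

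The step I expect to be the main obstacle is the build-up phase. Lower-bounding the queue's growth over the $\beta$-window requires controlling the joint evolution of $Q(t)$ and the busy-server count $Z(t)$---exactly the departure/setup coupling the introduction identifies as the crux of the multiserver setting---because the queue's drift is $\mu(R-Z(t))$ while $Z(t)$ is itself history-dependent. MIST is what makes this tractable: on a short enough sub-interval $Z$ is essentially constant, so the queue is a clean near-critical walk amenable to Brownian/martingale estimates, and the sub-intervals are then sewn back together into a bound on the whole window. Extracting the \emph{sharp} constant $\capx$ here, rather than only the correct order as in \citep{WHW}, is the delicate part, and it is what makes this lower bound agree with the upper bound of Theorem~\ref{thm:main} up to an absolute constant.
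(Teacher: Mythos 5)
Your overall frame (renewal--reward at a ``$Q=0$, $Z=R$''-type regeneration point, lower-bound the cycle integral and upper-bound the cycle length, couple the post-peak behavior to an M/M/1 busy period to get the $\ibp{\cdot}{k-R}$ and $\tfrac{1}{\mu}\tfrac{F_1\beta\sqrt R}{k-R}$ terms) matches the paper. But your build-up analysis --- the step producing the dominant terms $L_1\beta^2\sqrt R$ in the numerator --- has a genuine gap. You model the $\beta$-window as the queue ``driven by arrivals against a frozen pool of about $R$ busy servers,'' i.e.\ an (essentially) driftless, near-critical walk, and then claim Brownian/optional-stopping estimates give an expected peak of order $\beta\sqrt R$ and area $\beta^2\sqrt R$. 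A zero-drift walk with variance rate $2\mu R$ over a window of length $\beta$ has typical displacement only of order $\sqrt{\mu\beta R}$, which is a factor $\sqrt{\mu\beta}\ (\ge 10$ under the hypotheses$)$ smaller than the height $\mu\beta\sqrt R$ you need; the resulting area would be of order $\beta\sqrt{\mu\beta R}$, not $\mu\beta^2\sqrt R$, so your mechanism cannot reach the stated bound. The true source of the $\sqrt R$ (and of the constant $\sqrt{\pi/2}$) is exactly the setup/departure feedback your ``frozen pool'' suppresses: each time $N(t)$ makes a new low another server shuts off, so in the $j$-th such ``epoch'' the walk has upward drift $\mu j$ and the chance the epoch lasts a full setup time is roughly $j/R$. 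The paper defines $L$ as the index of the first epoch of length $\ge\beta$, proves $\Pr(L>j\mid L\ge j)\gtrsim(1-j/R)$, and sums the tails (the Gaussian-type sum $\sum_j\prod_i(1-i/R)\approx\int e^{-j^2/2R}$) to get $\ex{L}\ge L_1\sqrt R$ with $L_1\to\sqrt{\pi/2}$; during that long epoch the queue grows \emph{linearly} at rate $\mu L$ for time $\beta$, giving area $\tfrac12\mu\beta^2\ex{L}$ and peak $\ex{N(\tau_L+\beta)}\ge R+\mu\beta\ex{L}$ via martingale/optional-stopping arguments. Without this epoch-and-$\ex{L}$ structure (or an equivalent), the numerator of the theorem is out of reach, and your attribution of $\sqrt{\pi/2}$ to a diffusive estimate of queue growth is incorrect.

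Two secondary points. First, your denominator accounting treats the $2.08\beta$ as nearly automatic, but bounding the accumulation phase by $O(\beta)$ is not: the queue can repeatedly dip and cancel setups, and the paper needs a separate MIST argument (coupling to an M/M/$R$ and analyzing up-crossings of $R+1$ and the probability an excursion survives a full setup time) to prove $\ex{T_A}\le 1.08\beta$; similarly, the bound $\ex{N(T_A)-R}\le F_1\mu\beta\sqrt R$ feeding the drain-length term requires its own up/down-crossing analysis rather than just ``a crude version of the upper-bound machinery.'' Second, the $\tfrac{1}{\mu}\tfrac32\ln\beta$ and companion terms do not come from geometric relaxation of a near-empty ``cleanup phase'' (the cycle ends the instant $N$ first returns to $R$ after $T_A$); they arise as a harmonic sum $\sum_i 1/\min(i,k-R)$ over the descent thresholds $R+i$ in the wait-busy analysis of the drain, with the threshold $\min\bigl(k-R,\max(\sqrt R/(D_1\sqrt\beta),1)\bigr)$ producing the $\beta^{3/2}$ inside the logarithm. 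These are fixable, but the build-up mechanism is the substantive flaw.
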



\paragraph{Discussion of bounds.}\label{sec:discussion_of_bnds}
After simplification, our bounds become
\begin{equation*}
    \ex{Q(\infty)} =_c  \mu \beta \sqrt{R} + \frac{1}{1-\rho};
\end{equation*}
where the operator $=_c$ denotes equality up to multiplicative constants; we show this explicitly in Section~\ref{sec:tightness}. 

This simplified characterization of the queue length gives us insight into how the system parameters govern the system's queueing behavior.
The first term, $\mu\beta \sqrt{R}$, scales linearly with the setup time $\beta$ and captures the effect of turning servers \emph{off} and \emph{on}.
The second term, $\frac{1}{1-\rho}$, dominates the first term only when the load $\rho$ is high enough, say in the renowned super-Halfin-Whitt regime (see, e.g., \citep{HalWhi_81,LiuYin_22,HonWan_24}) where $\rho = 1-\gamma k^{-\alpha}$ with $0<\gamma<1$ and $\alpha >0.5$.
In this case, it recovers the $\frac{1}{1-\rho}$ scaling seen in the M/M/k without setup times.

\chmades{
\subsection{Justification for the Approximation based on the Bounds}\label{sec:approx}
\newcommand{\qapx}{Q_{\textup{apx}}}


\paragraph{Initial Steps.} We arrive at this bound via a straightforward combination of the bounds from Theorems~\ref{thm:main} and~\ref{thm:improvement}, along with a few modifications.
In particular, we follow our previous use of the renewal-reward theorem, then separately approximate the expected time integral over our renewal cycle and the expected length of that renewal cycle, which represent the numerator and denominator of Approximation~\ref{apx:only}, respectively.

\paragraph{Justification of Numerator.}

We first approximate the 
expected time integral over our chosen renewal cycle.
We begin by recalling the lower bound on the time integral, Lemma~\ref{lem:LBint}, which states
\begin{align*}
    \ex{\int_0^{\cycle}{Q(t) \dd t}} \geq    L_1 \beta^2 \sqrt{R} 
            + 
            \ibp{\left[L_1 \beta \sqrt{R} - (k-R) \right]^+}{k - R},
\end{align*}
where $\ibp{x}{z} \triangleq \frac{x}{\mu z} \left[\frac{x+1}{2} + \frac{1}{1-\frac{k\lambda}{k\lambda + \mu z}}\right]$ represents the time integral of the queue length a certain M/M/1 queue over a busy period started by $x$ jobs. 

To obtain the appropriate constant $\capx$, we next note that, although our theorem states $L_1$ as an absolute constant, as the setup time $\beta$ and the offered load $R$ grow, the best possible constant will become $\capx = \sqrt{\frac{\pi}{2}}$. Under the hood, this convergence stems from the fact that
\begin{equation*}
    \sum_{j=1}^{R}{\prod_{i=1}^{j}{\left(1-\frac{j}{R}\right)}} \approx \int_0^{\infty}{e^{\frac{-j^2}{2R}}\dd j} = \frac{1}{2} \sqrt{2 \pi R};
\end{equation*}
see the proof of Lemma~\ref{lem:LBint} for more details. 

To complete the approximation, it suffices to remove the subtraction of $(k-R)$ in the busy period term (which we anticipate is an artifact of our analysis). Removing it, we obtain the desired approximation 
\begin{equation}\label{eq:approxNum}
    \ex{\int_0^{X}{Q(t)\dd t}} \approx  \frac{1}{2} \beta^2 \capx \sqrt{R} + \frac{\beta \capx \sqrt{R}}{\mu k(1-\rho)}\left[\frac{\beta \capx \sqrt{R} + 1}{2} + \frac{1}{1-\rho} \right].
\end{equation}

\paragraph{Justification of Denominator.}

We next approximate the denominator of our expression, the expected length of our chosen renewal cycle.
To do so, we again make use of the lower bound on the expected cycle length $\ex{X}$ from Lemma~\ref{lem:LBX}, which states
\begin{equation*}
    \ex{X} \geq \beta + \frac{L_1 \beta \sqrt{R}}{\mu k(1-\rho)}.
\end{equation*}
By making the same convergence argument for $L_1$, i.e. that $L_1 \to \capx$ for large setup times $\beta$ and large offered loads $R$, we obtain the denominator, completing both parts of our bound.

}

\section{Key Ideas and Techniques}\label{sec:keyideas}

\newcommand{\renewpts}{\mathcal{X}}
\newcommand{\ncouple}{\Tilde{N}}
\newcommand{\starttime}{T_0}
\newcommand{\finalpt}{P}
\newcommand{\zplus}{\mathbb{Z}^+}
\newcommand{\mint}[2]{\min\left(#1, #2\right)}
\newcommand{\epochind}{n_e}

We now describe our approach to analyzing the average waiting time in the M/M/k/Setup-Deterministic, using the upper bound, Theorem~\ref{thm:main}, as a case study.
To begin, we go through the first few steps in our proof, arriving at three main lemmas. 
We then give a detailed explanation of the technique we have developed for proving those lemmas, which we call the \emph{Method of Intervening Stopping Times} (MIST). 
Throughout, we highlight the technical challenges that arise in our approach and how we address those challenges.

\subsection{Initial Steps: Applying the Renewal-Reward Theorem}\label{sub:init}
\subsubsection{\texorpdfstring{Reduction to Analyzing $\ex{N(\infty) - R}$.}{Reduction to analyzing E[N(infty) - R].}} We begin by applying the Renewal Reward theorem. Although it is tempting to apply the theorem directly to the queue length $\ex{Q(\infty)}$, it simplifies the analysis considerably if one analyzes the number of jobs $\ex{N(\infty) - R}$ instead. To justify this, note that, in steady-state, the number of busy servers $\ex{Z(\infty)} = R$, and that the total number of jobs in system $N(t)$ satisfies $N(t) = Q(t) + Z(t)$.
It follows that the average queue length
\begin{equation*}
	\ex{Q(\infty)} = \ex{N(\infty) - Z(\infty)} = \ex{N(\infty)} - R = \ex{N(\infty) - R}.
\end{equation*}

\subsubsection{Applying Renewal-Reward.}
Applying the Renewal Reward Theorem, for any renewal cycle,
\begin{equation}\label{eq:RRT}
	\ex{N(\infty) - R} = \frac{
		\ex{\int_{\text{cycle}}{\left[N(t) - R\right] \dd t}}
	}
	{
		\ex{\text{cycle length}}
	}
	=
	\frac{
		\ex{\int_{0}^{X}{[N(t) - R] \dd t}}
	}
	{
		\ex{\int_0^X{1 \dd t}}
	},
\end{equation}
where we have set time $0$ to be an arbitrary renewal point and time $X$ to be the renewal point which immediately follows it. 
To upper bound \eqref{eq:RRT}, it suffices to upper-bound the right side's numerator and lower-bound the right side's denominator. 
These bounds constitute our three main lemmas; two lemmas for the numerator (which is harder to bound) and one for the denominator. 
But before we can state these main lemmas, we must first address our first technical challenge: defining the collection of renewal points $\renewpts$.

\subsubsection{Technical Challenge: Defining the Renewal Points.}

To define a ``good'' collection of renewal points $\renewpts$ is a non-trivial task. There are two important criteria one must consider when selecting the collection of renewal points. First, we need the length of a renewal cycle to have finite expectation. While this is trivially satisfied in many queueing models, in our model, the presence of remaining setup times in our system state makes this criterion non-trivial.  
\chmade{Second, we want the renewal period to be defined in such way that one can straightforwardly reason about the system's behavior in between renewals.}
For example, we could define our renewals as occurring when the system completely empties; this would avoid the continuous-state-space concerns, but then one would need to reason about what happens to the system until the next time it empties ---by analogy to the M/M/k without setup, a typical renewal cycle would last exponentially long (in $k$). Thus, while the ``system empties'' renewal definition does indeed give a simple analysis for $k=1$, for even $k=2$ the analysis becomes much more difficult. We must find another way.


\subsubsection{\texorpdfstring{Key Idea: Renewals When the $(R+1)$-th Server Turns Off.}{Key Idea: Renewals when the (R+1)-th server turns off.}}
By defining our collection of renewal points $\renewpts = \left\{t: Z(t^-)=R+1, Z(t) = R \right\}$ as those moments when the $(R+1)$-th server is turned off, we satisfy both criteria.
By defining our renewals to occur at moments when a server is turned \textit{off}, we avoid the continuous state-space issue: any time a server shuts \textit{off}, we know that there must be no servers in \textit{setup} (since we would cancel any servers in \textit{setup} before canceling an idling server).
Furthermore, by specifically analyzing around the moment the $(R+1)$-th server turns off, the length of time between renewals becomes well-controlled, in the sense that it is lower-bounded almost surely by a setup time and upper-bounded in expectation by a quantity which we can show is ``not too large.''

\begin{figure}
    \centering
    \includegraphics[trim={0 3cm 0 0 }, scale=0.25,clip]{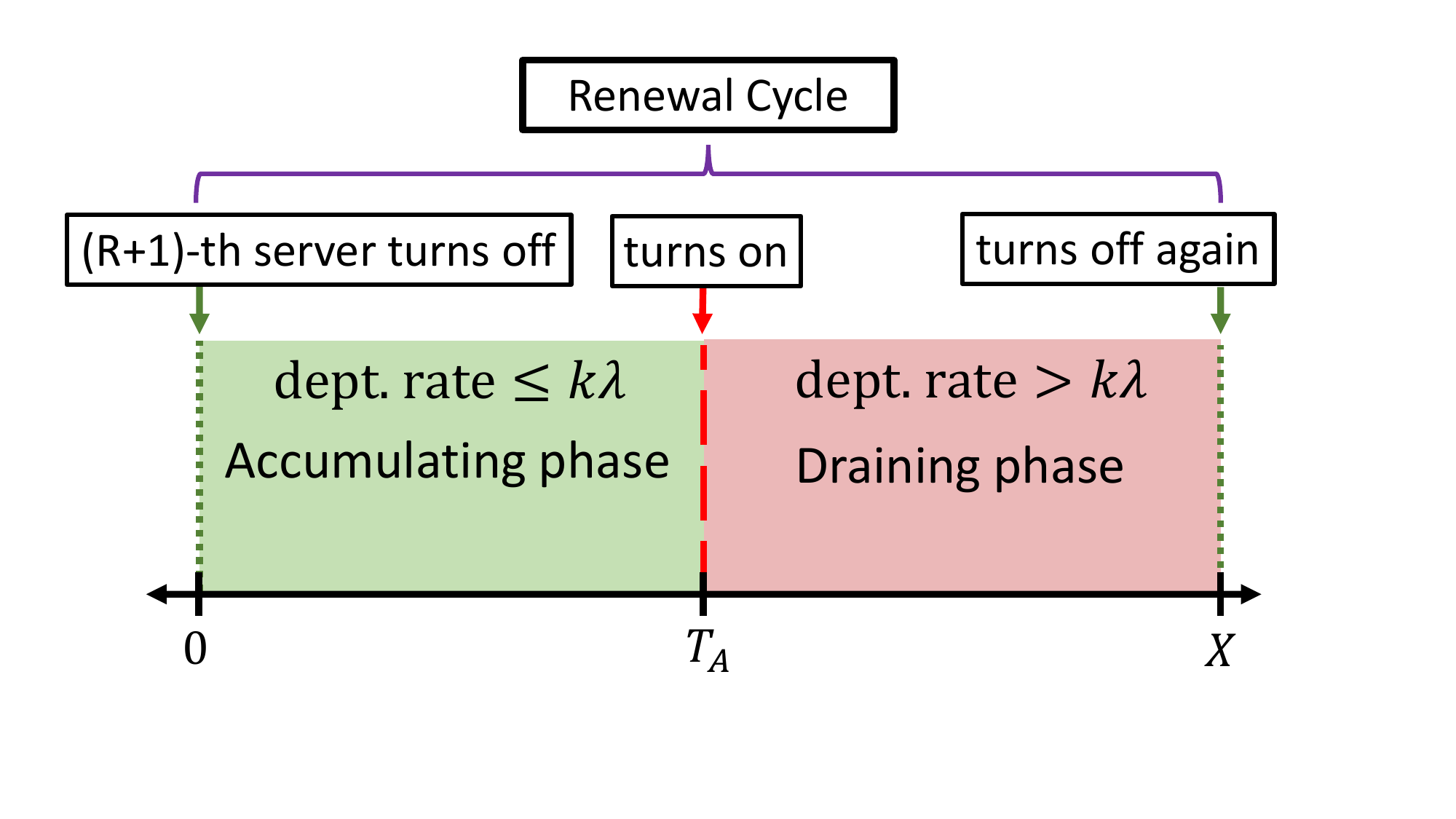}
    \caption{A depiction of our decomposition of a renewal cycle into an accumulating phase and draining phase, described in Section~\ref{sub:init}. During the accumulating phase, the departure rate $\mu Z(t) \leq \mu R = k \lambda$, so that the system is transiently unstable and a queue \emph{accumulates}. During the draining phase, the departure rate $\mu Z(t) > k \lambda$, so that the queue \emph{drains}.}
    \label{fig:accdec}
\end{figure}

\subsubsection{Further Benefit: A Natural Decomposition.}Another benefit of defining a renewal cycle in this way is that it naturally splits a renewal cycle into two parts: the first part of the cycle \emph{before} the $(R+1)$-th server turns on, and the second part of the cycle \textit{after} it turns on; we depict this decomposition in Figure~\ref{fig:accdec}.
During the beginning of a cycle (when $Z(t) \leq R$), the departure rate $\mu Z(t)$ is smaller than the arrival rate, making the system behave, in a transient sense, like a critically- or over-loaded queue.
On the other hand, after the $(R+1)$-th server turns on and until the renewal cycle is over, the departure rate $\mu Z(t)$ is guaranteed to be strictly greater than the arrival rate, making the queue, on the whole, drain over time.
This observation turns out to be hugely useful in our analysis. As such, we have special names for each of these special times: we call the time before the $(R+1)$-th server turns on the \emph{accumulation period}, we call the moment when the $(R+1)$-th server turns on the \emph{accumulation time} $T_A$, and we call the period from time $T_A$ until the cycle ends the \emph{draining period}.
At times, we will also refer to these periods as \textit{phases}.

\subsection{Statement of Three Main Lemmas}
We may now state our three main lemmas. Note that the preconditions of Theorem~\ref{thm:main} are implicit here.

\begin{restatable}[Upper Bound on Integral Over Accumulation Period]{lemma}{mainone}\label{lem:UBintTA}
Suppose the system begins at time $0$ with $R$ jobs in service and no jobs in the queue (and thus no servers in setup), and define the accumulation time $T_A \triangleq \minpar{t \geq 0: Z(t) = R+1}$ to be the moment the $(R+1)$-th server turns on.
Then, taking $B_1 = 3.6$ and $B_2 = 1.04$, one can bound the integral of the reduced number of jobs $\left[N(t) - R\right]$ by 
	\begin{equation*}
		\ex{\int_0^{T_A}{[N(t) - R] \dd t}} \leq B_1 \shs \cdot \ex{T_A} + B_2 \beta^2 \mu \sqrt{R}.
	\end{equation*}
\end{restatable}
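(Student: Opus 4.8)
\textbf{Proof plan for Lemma~\ref{lem:UBintTA}.}

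The plan is to analyze the accumulation period by first observing that, during $[0, T_A)$, there are at most $R$ busy servers, so the departure rate never exceeds $\mu R = k\lambda$. This means the queue process $N(t)$, during accumulation, is stochastically dominated by (or can be coupled to) an $M/M/1$-type process with arrival rate $k\lambda$ and departure rate at most $k\lambda$; in other words, $N(t) - R$ behaves like a critically-loaded or subcritical-drift random walk reflected at $0$. The key quantity to control is $\ex{\int_0^{T_A}[N(t)-R]\,\dd t}$, and the natural route is to relate this time-integral to $\ex{T_A}$ itself together with a ``variance-like'' correction term coming from the fluctuations of $N(t)$ around its typical level during accumulation. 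Concretely, I would write $N(t) - R = (\text{number waiting or in excess of } R)$ and split the accumulation period at the sequence of stopping times marking each successive server turn-on (the $2$nd, $3$rd, \dots, $(R+1)$-th server), since between consecutive turn-ons the dynamics are those of a fixed-rate birth-death chain.

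The heart of the argument — and where I expect MIST to enter — is to decompose $[0, T_A)$ into the intervening sub-phases delimited by the moments $\tau_j$ at which the $j$-th server begins (or completes) setup, for $j = 1, \dots, R+1$. On each such sub-interval, the relevant ``most important'' piece of state is just $N(t)$ (or $N(t) - j$), and the number of busy servers is essentially frozen, so the process is a genuine $M/M/1$ birth-death chain with known drift. For each sub-phase I would apply a Wald-type / optional-stopping identity to the martingale associated with the quadratic function of $N(t)$ (i.e. using that $N(t)^2 - (\text{drift terms})\cdot t$ is a martingale for the frozen-rate chain), which converts $\ex{\int [N(t)-R]\dd t}$ over the sub-phase into boundary terms involving $\ex{N(\tau_{j+1})} - \ex{N(\tau_j)}$ and $\ex{\tau_{j+1} - \tau_j}$. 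Summing over $j$ telescopes the boundary terms, leaving a main contribution proportional to $\shs \cdot \ex{T_A}$ (this is where the $\sqrt{\mu\beta R}$ scale enters: the typical queue length built up when a deterministic setup of length $\beta$ elapses against a near-critical drift is of order $\sqrt{R}\cdot$(setup scale), and one setup's worth of time integral contributes $\beta^2\mu\sqrt{R}$ order) plus a residual term bounded by $B_2\beta^2\mu\sqrt{R}$, capturing the ``startup cost'' of the first setup when the queue is still small.

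The main obstacle, I expect, is handling the coupling between the departure dynamics and the setup dynamics cleanly enough to keep the frozen-rate approximation valid: during accumulation, a server's setup can be \emph{cancelled} if $N(t)$ dips below the relevant threshold, so the sub-phases are not simply ``$j$ servers busy for a deterministic $\beta$'' — the turn-on times $\tau_j$ are themselves stopping times that depend on the path. I would address this by using the almost-sure monotonicity of the detailed state vector $\detsvec(t)$ in $i$ (guaranteed by the model's $\mathrm{ON} < s < \mathrm{OFF}$ convention) to argue that, conditioned on the event that server $j$ does eventually turn on, the process up to its turn-on time is well-controlled, and that the contribution of ``aborted'' setup attempts can be absorbed into the $B_2\beta^2\mu\sqrt{R}$ term. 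A secondary difficulty is getting the explicit constants $B_1 = 3.6$ and $B_2 = 1.04$: these will come out of summing a geometric-type series $\sum_{j=1}^{R}\prod_{i=1}^{j}(1 - j/R)$ (which converges to $\tfrac12\sqrt{2\pi R}$, hence the $\capx = \sqrt{\pi/2}$ that appears in the approximation) and then bounding the tail and the low-$j$ terms crudely under the standing assumptions $R \geq 100$, $\mu\beta \geq 100$; I would defer those numerical estimates to the detailed proof rather than grind them here.
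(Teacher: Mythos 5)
Your scaling intuition is right (the $\sqrt{\mu\beta R}$ fluctuation scale, a geometric-type sum producing the extra $\sqrt{R}$, the role of the assumptions $R\geq 100$, $\mu\beta\geq 100$), but the proposed decomposition has two concrete problems. First, splitting $[0,T_A)$ at successive server \emph{turn-on} times does not freeze the departure rate: between two turn-ons, servers can still turn \emph{off} (a job completes while the queue is empty), so your sub-phases are not fixed-rate birth--death chains and the quadratic-martingale / Wald identities you want to apply do not hold as stated. The paper instead splits at the successive \emph{new minima} of $N(t)$, the times $\tau_j=\min\{t\geq 0: N(t)\leq R-j\}$, i.e.\ the moments a server turns off at a new low; on the $j$-th such epoch the departure rate is genuinely sandwiched between $\mu(R-j)$ and $\mu R$, and, crucially, no server is in setup at the start of an epoch, which is what makes the subsequent coupling arguments clean.

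Second --- and this is the real gap --- you defer exactly the quantity that carries the whole difficulty: you say the contribution of ``aborted'' setup attempts and the residual term ``can be absorbed into $B_2\beta^2\mu\sqrt{R}$,'' but you give no mechanism for bounding how many such excursions occur or how much integral each contributes; without such control the residual scales like $\beta^2\mu R$ or worse, not $\beta^2\mu\sqrt{R}$. The paper supplies this mechanism with a two-level application of the Intervening Stopping Time Lemma (Lemma~\ref{lem:waldstop}): within an epoch it tracks up/down-crossings of the level $R+\threshone$ and shows that after each up-crossing the accumulation phase ends within one remaining setup time with probability at least $p_2=0.98$ (Claims~\ref{clm:a2} and~\ref{clm:updownprob}, via coupling to a frozen-rate walk and a down-crossing probability estimate), so each epoch contributes at most $B_1\shs\cdot\ex{\text{epoch length}}+C_2\beta^2\mu j\,\prise{j}$ with $B_1=C_3/p_2$ coming from Markov's inequality applied to the remaining setup time of server $R+1$ during a fall (Claims~\ref{clm:updowninit} and~\ref{clm:updowncont}); and at the epoch level it shows that once $j\geq\sqrt{R}$ an up-crossing (hence termination) occurs within a setup time with probability at least $0.99\,j/R$ (Claim~\ref{clm:prisebnd}), so the expected number of epochs is $O(\sqrt{R})$ and the sum $\sum_j \beta^2\mu\, j\,\prise{j}\prod_{i<j}\left(1-C_4\prise{i}\right)$ collapses to $B_2\beta^2\mu\sqrt{R}$. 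Note also that the series you quote, $\sum_{j}\prod_{i\leq j}(1-i/R)\approx\tfrac12\sqrt{2\pi R}$, belongs to the \emph{lower}-bound analysis of $\ex{L}$ (and to the constant $\capx$ in the approximation); the constants $B_1=3.6$ and $B_2=1.04$ in this lemma come from the threshold/up-crossing bookkeeping just described, not from that Gaussian sum.
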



\begin{restatable}[Upper Bound on Integral Over Draining Period]{lemma}{maintwo}\label{lem:UBintTB}
Recall that the accumulation time $T_A$ is the first (and only) time the $(R+1)$-th server turns on during a renewal cycle, and that the next renewal point $X = \minpar{t > T_A: Z(t) = R}$ is simply the next time the $(R+1)$-th server turns off. 
Then, 
	\begin{align*}
 \ex{\int_{T_A}^{X}{\left[N(t) - R\right] \dd t}} \leq 3.01\mu\beta^2 \sqrt{R} + 2.04 \beta \frac{\rho}{1-\rho} + g\left(9(\mu\beta)^2 R, 3 \mu \beta \sqrt{R}, k(1-\rho)\right)
	\end{align*}
 where $g(x,y,z) \triangleq  x \frac{1}{2\mu z} + y\left[\frac{R}{\mu z^2} + \frac{3}{2 \mu z} \right].$
\end{restatable}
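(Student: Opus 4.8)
The plan is to run the renewal-reward/busy-period machinery \emph{inside} the draining period. The structural fact I would lean on is that on $[T_A,\cycle)$ we have $Z(t)\ge R+1$, so the departure rate $\mu Z(t)\ge \mu(R+1)>\mu R=k\lambda$ strictly exceeds the arrival rate; hence $N(t)-R$ has strictly negative drift throughout, the draining period ends almost surely, $\cycle$ is exactly the first time after $T_A$ that $N$ returns to $R$, and $N(t)-R\ge 1$ on $[T_A,\cycle)$. Given this, I would bound $\ex{\int_{T_A}^{\cycle}[N(t)-R]\dd t}$ by controlling two things: the height $\virt_0\triangleq N(T_A)-R$ at which the draining excursion starts, and the time integral of a downward-drifting excursion issued from height $\virt_0$.

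\textbf{Starting height.} Since the $(R+1)$-th server turns on at $T_A$, its (successful, uninterrupted) setup began at $T_A-\beta$ at the instant a job arrived into a system holding exactly $R$ jobs with $Z=R$; on $[T_A-\beta,T_A)$ no further server can finish setup (a setup started in that window finishes after $T_A$) and no busy server can turn off (the queue is nonempty), so $Z(t)\equiv R$ and $N(t)-R$ is a critically loaded birth--death walk started at $1$ and conditioned to stay positive for duration $\beta$. A Bessel-type/martingale estimate for such a walk gives $N(T_A)-R\le 3\mu\beta\sqrt R$ except on an event whose probability is astronomically small (the slack $\mu\beta\ge 100$ buys us many standard deviations), and the rare event contributes only terms of the orders already present. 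Propagating this across the first $\beta$ units of the draining period keeps the height $O(\shs)$, so $3\mu\beta\sqrt R$ remains a valid high-probability bound on the excursion height entering the next step.

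\textbf{The excursion.} I would split $[T_A,\cycle]$ into a warm-up phase $[T_A,T_A+\beta]$, during which deterministic setup prevents us from assuming the servers indexed above $R+1$ are on, and a main phase $[T_A+\beta,\cycle]$, by which time every server that ``should'' be on has finished its $\beta$-length setup. On the warm-up phase I bound $N(t)-R$ pointwise by $\virt_0$ plus a martingale-controlled fluctuation (crude arrival counting would lose a factor and is too weak here) and integrate over a window of length at most $\beta$: this yields the $3.01\mu\beta^2\sqrt R$ term. The low-occupancy tail of the draining period, where $N(t)\le k$ and the resolved setups make the system behave like an ordinary M/M/k without setup, contributes a term of order $\beta\frac{\rho}{1-\rho}$ (its time-average excess is of $\frac{\rho}{1-\rho}$ scale and its duration is $O(\beta)$). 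On the main phase with $N(t)>k$ the net downward drift of $N(t)-R$ is $\mu k-k\lambda=\mu k(1-\rho)=\mu(k-R)$, so $N(t)-R$ is controlled by the number-in-system of an M/M/1 queue with net drift $\mu k(1-\rho)$ issued from height $N(T_A+\beta)-R\le 3\mu\beta\sqrt R$; invoking the closed form $\ibp{n}{j}$ with $n=3\mu\beta\sqrt R$ and $j=k(1-\rho)$ gives precisely $\ibp{3\mu\beta\sqrt R}{k(1-\rho)}=g\!\left(9(\mu\beta)^2R,\,3\mu\beta\sqrt R,\,k(1-\rho)\right)$, the third term. Summing the three pieces and absorbing the rare events yields the stated bound.

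\textbf{Main obstacle.} The crux is justifying the M/M/1-with-drain-rate-$\mu(k-R)$ control on the main phase, i.e.\ that when $N(t)>k$ there really are $k$ busy servers: because a dip in $N$ can cancel a nearly-finished setup and then force a fresh, full-length $\beta$ setup, $Z(t)$ can lag $\min(N(t),k)$ by an amount that depends on the recent history of $N$ --- exactly the high-dimensional, continuous-state coupling MIST is built for. I would decompose the main phase at the stopping times where $N(t)-R$ up- or down-crosses a small number of carefully chosen thresholds, argue that within each resulting sub-interval the server states relevant to the drift are pinned down (so the drift bound holds regardless of the rest of $\detsvec(t)$), and reassemble the sub-intervals via Wald's equation / optional stopping, recursively where a sub-interval is still too large. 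Everything after that --- counting the (in-expectation finitely many) sub-intervals, tracking their start heights, and verifying that the accumulated constants collapse to $3.01$, $2.04$, and the stated $g$ --- is routine but tedious bookkeeping, largely parallel to the corresponding work for the accumulation period in Lemma~\ref{lem:UBintTA}.
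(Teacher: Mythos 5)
Your high-level shape (a warm-up of length $\approx\beta$ after $T_A$, then a coupling to a drained busy period, all glued together by stopping times) is in the same spirit as the paper, but two of your key steps have genuine gaps, and they sit exactly where the real work is. First, the starting-height argument fails. Your claim that $Z(t)\equiv R$ on $[T_A-\beta,T_A)$ is false: since $Z(t)=\min\bigl(k,\min_{s\in[t-\beta,t]}N(s)\bigr)$, the number of busy servers on that window is pinned by the history of $N$ \emph{before} $T_A-\beta$, and typically $T_A$ occurs in an epoch with only $R-j$ busy servers, $j$ of order $\sqrt R$. So $N(t)-R$ on that window is not a critical walk conditioned to survive; it carries an upward drift of roughly $\mu j$, and that drift is precisely what makes $\ex{N(T_A)-R}=\Theta(\mu\beta\sqrt R)$ — under your assumption the Bessel-type estimate would give only $O(\shs)$, a different scale, so the match with $3\mu\beta\sqrt R$ is coincidental rather than a consequence of your reasoning. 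Moreover the exceedance is not astronomically rare: $N(T_A)-R$ behaves like $\mu\beta$ times a geometric-type variable with mean $\Theta(\sqrt R)$, so $\pr{N(T_A)-R>3\mu\beta\sqrt R}$ is a constant; and because the busy-period integral $\ibp{n}{k(1-\rho)}$ is convex (quadratic) in $n$, plugging a high-probability height into it does not give an upper bound (Jensen runs the wrong way), and the tail contributes at the same order. The paper instead controls both $\ex{N(T_A)-R}$ and $\ex{(N(T_A)-R)^2}$ via Claim~\ref{clm:taexpect} — a nontrivial epoch/up-crossing analysis — which is why the first argument of $g$ is the second-moment bound $9(\mu\beta)^2R$ rather than the square of a typical height.

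Second, the main-phase coupling is named but not established. You correctly identify the crux (a dip can cancel a nearly-finished setup, so $Z(t)$ can lag $\min(N(t),k)$; in particular, ``by $T_A+\beta$ every server that should be on is on'' is false unless $N$ stayed high throughout), but your resolution is only a gesture at MIST. The device that actually closes this in the paper is the wait-busy argument (Claim~\ref{clm:waitbusy2}): at a level, wait one setup time using only the guarantee $Z\ge R$ (a critical M/M/1 coupling via Claim~\ref{clm:dcbnd}); either the level is down-crossed first, ending the piece, or the needed servers have finished setup and — crucially — remain on until the next down-crossing, so a genuinely drained busy-period bound applies. This is then iterated over the levels $\eta_i$ and over excursions above the threshold $\mb=\min(k-R,\sqrt R)$, whose return probability $1/\mb$ (Claim~\ref{clm:UBintTBprob}) controls how many excursions occur; the $\beta\,\rho/(1-\rho)$ term emerges from this bookkeeping rather than from a separate ``low-occupancy M/M/k'' heuristic. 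Without this construction (or an equivalent one), and without the second-moment control of $N(T_A)-R$, your proposal is a plausible plan but not a proof, and it certainly cannot deliver the stated constants.
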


\begin{restatable}[Lower Bound on Cycle Length]{lemma}{mainthree}\label{lem:LBX}
Taking $L_1 = \frac{2}{3}\sqrt{\frac{\pi}{2}}$, the renewal cycle length $\ex{X}$ is
	\begin{equation*}
		\ex{X} \geq \setuptime + \frac{L_1 \mu \beta \sqrt{R}}{\mu k(1-\rho)}.
	\end{equation*}

\end{restatable}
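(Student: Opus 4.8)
\textbf{Proof proposal for Lemma~\ref{lem:LBX} (Lower Bound on Cycle Length).}

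The plan is to lower-bound $\ex{X}$ by decomposing the renewal cycle at the accumulation time $T_A$, writing $\ex{X} = \ex{T_A} + \ex{X - T_A}$. The first piece is easy: by definition of the renewal points, at time $0$ there are $R$ busy servers and an empty queue, so the next time the $(R{+}1)$-th server turns on, it must first have spent a full setup duration $\beta$ in setup without the job count dipping below $R+1$. Hence $T_A \geq \beta$ almost surely, giving $\ex{T_A} \geq \beta$. The substance of the lemma is therefore the bound $\ex{X - T_A} \geq \frac{L_1 \mu \beta \sqrt{R}}{\mu k (1-\rho)}$, i.e., the expected length of the draining phase.

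For the draining phase, I would argue as follows. At the accumulation time $T_A$, the $(R{+}1)$-th server has just turned on, which means the job count $N(T_A)$ must have been at least $R+1$ throughout the preceding setup interval of length $\beta$; during that interval the departure rate never exceeded $\mu R = k\lambda$ while the arrival rate was $k\lambda$, so the queue behaves like a critically loaded (or transiently overloaded) walk and tends to build up. The key quantity is $N(T_A) - R$, the number of jobs in queue when draining begins; I expect to establish $\ex{N(T_A) - R} \gtrsim \capx \mu\beta\sqrt{R}$ (this is exactly the kind of estimate referenced in the approximation discussion, coming from $\sum_{j=1}^R \prod_{i=1}^j (1 - j/R) \approx \frac{1}{2}\sqrt{2\pi R}$ — a random-walk / diffusion-scale computation on how many servers are in setup or recently turned on during the accumulation period). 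Once draining starts, the departure rate is at least $\mu(R+1)$, so the queue drains no faster than an M/M/1 busy period with arrival rate $k\lambda$ and service rate $k\lambda + \mu$, started from $N(T_A) - R$ jobs; but the renewal cycle only ends when $Z$ drops back to $R$, which in particular requires the queue to fully empty out the excess. Using $\bp{n}{j} = \frac{n}{\mu j}$ with $n = N(T_A) - R$ and $j = k - R = k(1-\rho)$ as a \emph{lower} bound on the remaining time (the true system drains more slowly than this idealized fast M/M/1 because of the cancellation dynamics and because $Z$ can drop below $R+1$ only transiently), together with the tower property and the estimate on $\ex{N(T_A) - R}$, yields $\ex{X - T_A} \geq \frac{\ex{N(T_A) - R}}{\mu k(1-\rho)} \geq \frac{L_1 \mu\beta\sqrt{R}}{\mu k(1-\rho)}$, where the degradation from $\capx$ to $L_1 = \frac{2}{3}\sqrt{\pi/2}$ absorbs the error terms coming from finite $R$ and $\mu\beta$ (both $\geq 100$).

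The main obstacle I anticipate is making the lower bound on $\ex{N(T_A) - R}$ rigorous. Unlike an upper bound, a lower bound on the queue built up during accumulation cannot simply ignore the cancellation dynamics; one must carefully track how many servers actually complete setup (versus get canceled) during $[0, T_A)$, which is where MIST presumably enters — chopping $[0,T_A)$ into sub-phases indexed by how many servers are currently in setup, and on each sub-phase comparing the job-count process to a suitable biased random walk or martingale to guarantee enough accumulation. A secondary subtlety is justifying that the draining phase genuinely requires clearing roughly $N(T_A) - R$ jobs before $Z$ returns to $R$, rather than ending early; this should follow from the structural fact that $Z(t) = R$ with a nonempty queue is impossible once servers have turned on (a job at the head of the queue would immediately occupy a freed server), so the cycle cannot close until the excess jobs are served. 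I would handle the arithmetic reduction from the raw bound to the clean statement exactly as the paper does for $\ex{L}$ in Appendix~G.6: observe monotonicity of the prefactors in $R$ and $\mu\beta$, substitute the boundary values $R = \mu\beta = 100$, and round down to $L_1$.
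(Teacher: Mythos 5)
Your overall architecture matches the paper's: get the $\beta$ term from the fact that a full setup must elapse before the $(R+1)$-th server can turn on, lower-bound the number of excess jobs present when draining begins, and convert that into time via a coupling with the fastest possible drain (all $k$ servers busy, i.e.\ an M/M/1 busy period with arrival rate $k\lambda$ and departure rate $\mu k$, so remaining time at least $\frac{[N(\cdot)-R]^+}{\mu k(1-\rho)}$). That coupling step is sound as you state it (your earlier aside about ``service rate $k\lambda+\mu$'' is not the right comparison for a lower bound on time, but you discard it in favor of $j=k-R$, which is correct), and your worry about the cycle ``ending early'' is indeed taken care of automatically, since the coupled process sits below $N(t)$ and therefore hits level $R$ no later than the true system.

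The genuine gap is exactly the step you flag and do not close: the lower bound $\ex{N(T_A)-R}\gtrsim \mu\beta\sqrt{R}$. The paper never proves such a bound at $T_A$ (its Claim on $N(T_A)$ is an \emph{upper} bound); instead it anchors the argument at a different stopping time, $\tau_L+\beta$, where $L$ is the index of the \emph{first long epoch} (an epoch lasting at least $\beta$, which must exist because the $(R+1)$-th server can only complete setup inside such an epoch, and $\tau_L+\beta\le T_A$). The point of this construction is that it sidesteps precisely the cancellation bookkeeping you anticipate: at the start of any epoch a server has just turned off, so no servers are in setup, hence during $[\tau_L,\tau_L+\beta)$ the number of busy servers is \emph{exactly} $R-L$ and an elementary optional-stopping argument gives $\ex{N(\tau_L+\beta)-R}\ge \mu\beta\,\ex{L}$; a separate tail computation (the Gaussian-sum estimate you allude to) gives $\ex{L}\ge L_1\sqrt{R}$ with $L_1=\frac23\sqrt{\pi/2}$ under $R,\mu\beta\ge 100$. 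Your proposed alternative—chopping $[0,T_A)$ into sub-phases indexed by how many servers are currently in setup—would force you to track setup completions and cancellations jointly with the job count, which is the hard part you admit you cannot yet handle; without the first-long-epoch device (or some equivalent rigorous accumulation bound), the proposal does not yield the stated constant, nor even the $\sqrt{R}$ scaling, so as written the proof is incomplete at its central quantitative step.
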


\subsection{Proof of Theorem~\ref{thm:main}, Assuming the Three Main Lemmas}\label{sec:mainComp}
After proving these lemmas, the result easily follows. 
First note that, by summing Lemmas~\ref{lem:UBintTA} and~\ref{lem:UBintTB},
\begin{align*}
    \ex{\int_0^{X}{[N(t) - R] \dd t}} \leq B_1 \sqrt{\mu \beta R}\cdot \ex{T_A} + (B_2 + 3.01)\mu \beta^2 \sqrt{R} + 2.04 \beta \frac{\rho}{1-\rho} + g\left(9 (\mu  \beta)^2 R, 3 \mu \beta \sqrt{R}, k(1-\rho)\right),
\end{align*}
where $g(x,y,z) \triangleq  x \frac{1}{2\mu z} + y\left[\frac{R}{\mu z^2} + \frac{3}{2 \mu z} \right].$

Upon dividing by Lemma~\ref{lem:LBX} and noting that a setup time $\beta <\ex{T_A}< \ex{X}$, we obtain 
\begin{align*}
    \ex{Q(\infty)} 
    &\leq B_1 \sqrt{\mu \beta R} + 2.04 \frac{\rho}{1-\rho} + \frac{(B_2 + 3
    .01)\mu \beta^2 \sqrt{R} + g\left(9 (\mu  \beta)^2 R, 3 \mu \beta \sqrt{R}, k(1-\rho)\right)}{\beta + \frac{L_1 \mu \beta \sqrt{R}}{\mu k(1-\rho)}}. \tag*{\Halmos}
\end{align*}

\subsection{Key Technique: The Method of Intervening Stopping Times (MIST)}

\subsubsection{Technical Challenge: Applying Simple Bounds to a Complex System.}

We begin by describing the technical challenge which led to the development of the MIST method: attempting to apply simple coupling bounds to a complex, dynamic system. 
In general, one can easily obtain simple coupling bounds on the behavior of, e.g., the number of jobs in the system $N(t)$ or the number of busy servers $Z(t)$; unfortunately, these simple bounds may only be accurate and/or applicable for a \emph{random} amount of time. 
For example: Suppose that at time $\tau$ the number of busy servers $Z(\tau) = 6$ and the smallest remaining setup time in the system is $10$ seconds. 
Then, until either a server is turned off or $10$ seconds have passed, we understand the behavior of the number of jobs $N(t)$ very well  --- in a sample path sense, it is \emph{precisely} the behavior of an M/M/1 queue, with arrival rate $k\lambda$ and departure rate $6 \mu$.
Suppose, though, that by time $\tau + 2$, the system is empty. 
If we continued to estimate the system behavior via the same M/M/1 queue as before, then we would be wildly over-estimating the true system's departure rate.
If we want to keep our couplings accurate, we need to track when the system's behavior changes significantly, and alter our coupling accordingly.

\begin{figure}
    \centering
    \includegraphics[scale=0.25,trim={6cm 4cm 7.5cm 4.5cm},clip]{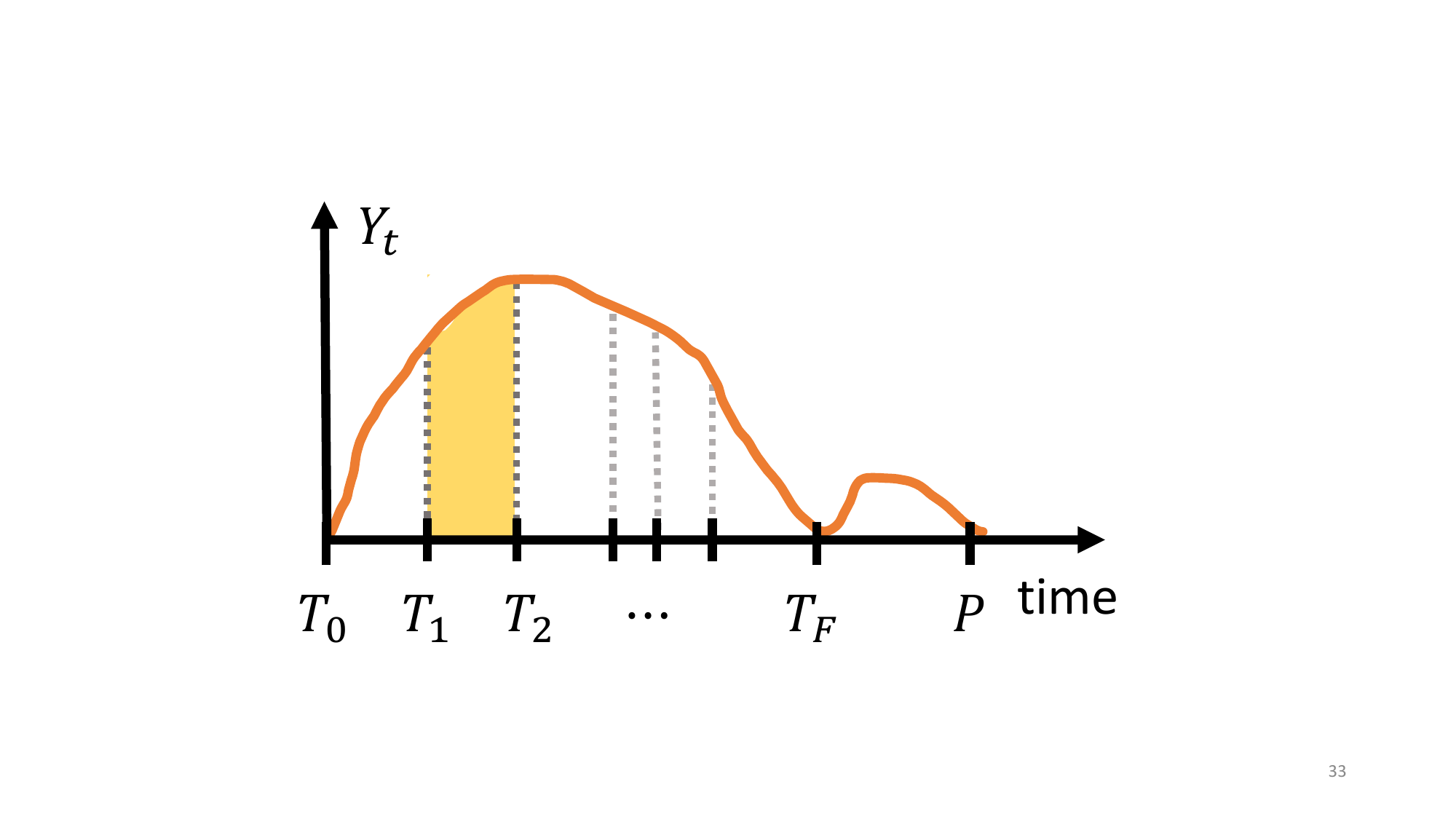}
    \caption{A depiction of the stopping time decomposition of Lemma~\ref{lem:waldstop}. In order to bound the integral $\ex{\int_{T_0}^{P}{Y_t \dd t}}$, we bound the expected contribution of each random interval $[T_i, T_{i+1})$, where every $T_i$ is a cleverly defined stopping time. This divide-and-conquer approach features heavily in our proofs.}
    \label{fig:waldstop}
\end{figure}

\subsubsection{Key Idea: Decompose Using State-Based Stopping Times.}
The Method of Intervening Stopping Times (MIST) allows us to easily and dynamically apply these coupling arguments.
Our key idea is to break up our original long time interval into a \textit{random number} of smaller, more manageable pieces.
We do this by defining \emph{intervening events}, moments where the system state changes in a way that allows to easily characterize the system's behavior.
From there, we can define a ``small piece'' of time as the time in between our intervening events.
For example, in this work, it can often be useful to analyze the system around time points where the number of jobs $N(t)$ gets large.
This is useful because, in a system with setup times, if we have \textit{enough} jobs for \textit{long enough}, we can afterwards guarantee that many servers are turned on.

\subsubsection{Reduction to Three Claims.} By performing this decomposition of the integral, we reduce our initial bounding problem to showing two or three bounds. 
First, we must bound the integral of each of the chunks. 
Often, it is useful to analyze the ``initial'' chunk separately from the ``successive'' chunks afterward.
Typically, we use martingale arguments combined with worst-case coupling arguments to prove these two integral bounds. 
Next, we must show that not too many of these ``successive chunks'' actually occur.
For this ``not too many'' condition, it's often helpful to show some kind of regularity condition, e.g. if the $i$-th intervening event has occurred, then the $(i+1)$-th event occurs with at most constant probability.
By formalizing our notion of \textit{intervening events} using stopping times and applying some ideas from Wald's equation, we obtain the Intervening Stopping Time Lemma, Lemma~\ref{lem:waldstop}.



\subsection{The Intervening Stopping Time Lemma: Statement}

We now state the Intervening Stopping Time Lemma, Lemma~\ref{lem:waldstop}, deferring its proof to Section~\ref{sec:waldstop}. Afterwards, we sketch how we use MIST to prove Lemma~\ref{lem:UBintTA}, the bound on the time integral $\ex{\int_0^{T_A}{[N(t) - R]}}$.

\begin{restatable}[Intervening Stopping Time Lemma]{lemma}{waldstop}\label{lem:waldstop}
Given a starting stopping time $T_0$, an ending stopping time $\finalpt$, and a collection of intervening stopping times $\left(T_i: i\in \zplus \right)$, define the random variable $F$ to be such that $T_F \leq \finalpt < T_{F+1}$. 
Now, given some time-varying random variable $Y_t \geq 0$ which is a function of the underlying Markov state of the system $\s(t)$, suppose that:
\begin{enumerate}
	\item $\ex{\int_{\starttime}^{\min(T_1, \finalpt)}{Y_t \dd t}\middle | \filt{\starttime}} \leq G_0(\s(T_0)),$
	\item $\ex{\int_{T_{i}}^{\min(T_{i+1}, \finalpt)}{Y_t \dd t} \middle | \filt{T_i}, F \geq i} 
	\leq  G_i + B \cdot \ex{\min(T_{i+1}, \finalpt) - T_i\middle | \filt{T_i}, F \geq i} + C\left[T_i - T_{i-1}\right]$,
	\item and $\pr{F \geq i \middle | \filt{T_i}, F \geq i-1 } \leq 1 - p_i,$
\end{enumerate}
where $G_0$ is also some function of the system state, and the $G_i$'s, the $p_i$'s, $C$, and $B$ are all constants (possibly depending on system parameters).
Then one can bound the time integral of $Y_t$ from $T_0$ to $P$ by
\begin{equation*}
	\ex{\int_{\starttime}^{\finalpt}{Y_t \dd t}} \leq \ex{G_0\left(\s(T_0)\right)} + \pr{F > 0}\sum_{j=1}^{\infty}{G_j \prod_{i=2}^{j}{\left(1 - p_i\right)}}  + (B+C) \cdot \ex{\finalpt - T_0}.
\end{equation*}
\end{restatable}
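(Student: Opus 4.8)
The plan is to chop the interval $[T_0, P]$ along the grid of stopping times $T_0 \le T_1 \le T_2 \le \cdots$, bound the contribution of each block by the hypothesis that applies to it, and then reassemble the pieces, using the fact that the block lengths telescope to $P - T_0$.

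First I would record the pathwise identity
\[
    \int_{T_0}^{P} Y_t \dd t \;=\; \int_{T_0}^{\min(T_1, P)} Y_t \dd t \;+\; \sum_{i=1}^{\infty} \mathbf{1}\{F \ge i\} \int_{T_i}^{\min(T_{i+1}, P)} Y_t \dd t,
\]
which holds because, by the definition of $F$, the blocks $[T_0, \min(T_1,P)), [T_1, \min(T_2,P)), \dots, [T_{F-1}, T_F), [T_F, P)$ partition $[T_0, P)$ exactly, and the indicators kill the blocks past index $F$. Since $Y_t \ge 0$, Tonelli's theorem lets me take expectations term by term with no convergence concerns.

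Next I would bound the blocks. The initial block is immediate: taking $\mathbb{E}[\cdot]$ of hypothesis~1 gives $\mathbb{E}[\int_{T_0}^{\min(T_1,P)} Y_t \dd t] \le \mathbb{E}[G_0(\mathcal{S}(T_0))]$. For $i \ge 1$ (restricting to indices with $\Pr(F \ge i) > 0$, the rest contributing nothing), I would observe that $\{F \ge i\} = \{T_i \le P\}$ is $\mathcal{F}_{T_i}$-measurable, a standard stopping-time fact using that the $T_i$ are nondecreasing; hence the $\mathcal{F}_{T_i}$-conditional expectation of the block integral coincides, on $\{F \ge i\}$, with the conditional expectation given $(\mathcal{F}_{T_i}, F \ge i)$, and the tower property with hypothesis~2 gives
\[
    \mathbb{E}\!\left[\mathbf{1}\{F \ge i\}\!\int_{T_i}^{\min(T_{i+1},P)}\! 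Y_t \dd t\right] \le G_i \Pr(F \ge i) + B\,\mathbb{E}\!\left[\mathbf{1}\{F \ge i\}\big(\min(T_{i+1},P) - T_i\big)\right] + C\,\mathbb{E}\!\left[\mathbf{1}\{F \ge i\}\big(T_i - T_{i-1}\big)\right],
\]
where I used that $T_i - T_{i-1}$ is $\mathcal{F}_{T_i}$-measurable to pull it out of the inner expectation. Summing over $i$, hypothesis~3 iterates to $\Pr(F \ge i) \le \Pr(F > 0)\prod_{\ell=2}^{i}(1 - p_\ell)$, which reproduces exactly the middle term $\Pr(F > 0)\sum_{j\ge1} G_j \prod_{i=2}^{j}(1-p_i)$; meanwhile the length terms telescope, since on $\{F \ge 1\}$ one has $\sum_{i\ge1}\mathbf{1}\{F\ge i\}(\min(T_{i+1},P) - T_i) = P - T_1 \le P - T_0$ and $\sum_{i\ge1}\mathbf{1}\{F\ge i\}(T_i - T_{i-1}) = T_F - T_0 \le P - T_0$, so the $B$ and $C$ contributions together are at most $(B + C)\mathbb{E}[P - T_0]$. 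Adding the three pieces yields the lemma.

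I expect the main obstacle to be the measure-theoretic bookkeeping rather than anything algebraic: one must check that $\{F \ge i\} \in \mathcal{F}_{T_i}$, that $T_i - T_{i-1}$ may be pulled out of the $\mathcal{F}_{T_i}$-conditional expectation, that the bound in hypothesis~2 — stated on the event $\{F \ge i\}$ — can legitimately be integrated against $\mathbf{1}\{F \ge i\}$, and that the sum/expectation interchange is valid (here nonnegativity of $Y_t$ is what makes it painless). Once the decomposition is set up and these points are pinned down, the remaining steps — iterating hypothesis~3 and telescoping the interval lengths — are routine.
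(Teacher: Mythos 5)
Your proposal is correct and follows essentially the same route as the paper's proof: the identical decomposition of the integral along the intervening stopping times, the tower-property argument using that $\{F\geq i\}$ is $\filt{T_i}$-measurable, the iteration of the third hypothesis to bound $\pr{F\geq i}$, and the telescoping of the interval lengths to produce the $(B+C)\ex{\finalpt - T_0}$ term. The measure-theoretic points you flag (pulling out the indicator and $T_i - T_{i-1}$, Tonelli via $Y_t \geq 0$) are exactly the ones implicitly handled in the paper's argument, so there is no gap.
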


\subsection{Proof of Lemma~\ref{lem:waldstop}, the Intervening Stopping Time (IST) Lemma}\label{sec:waldstop}

We begin with a manipulation of the integral, finding
\begin{align*}
    \int_{T_0}^{P}{Y_t \dd t}
    &\;\;= \int_{T_0}^{\mintwo{T_1}{P}}{Y_t \dd t} + \sum_{i=1}^{\infty}{\int_{\mintwo{T_i}{P}}^{\mintwo{T_{i+1}}{P}}{Y_t \dd t}}\\
    &\;\; = \int_{T_0}^{\mintwo{T_1}{P}}{Y_t \dd t} + \sum_{i=1}^{\infty}{\indc{T_i < P} \int_{T_i}^{\mintwo{T_{i+1}}{P}}{Y_t \dd t}}.
\end{align*}
Applying linearity of expectation and the tower property, we find that
\begin{align*}
    \ex{\int_{T_0}^{P}{Y_t \dd t}}
    &\;\;= \ex{\ex{\int_{T_0}^{\mintwo{T_1}{P}}{Y_t \dd t}\middle | \filt{T_0}}} + \sum_{i=1}^{\infty}{\ex{\ex{\indc{T_i < P} \int_{T_i}^{\mintwo{T_{i+1}}{P}}{Y_t \dd t}\middle | \filt{T_i}}}}\\
    &\;\;= \ex{\ex{\int_{T_0}^{\mintwo{T_1}{P}}{Y_t \dd t}\middle | \filt{T_0}}} + \sum_{i=1}^{\infty}{\ex{\indc{T_i < P}\ex{\int_{T_i}^{\mintwo{T_{i+1}}{P}}{Y_t \dd t}\middle | \filt{T_i}}}}.\intertext{Noting that the event $\brc{T_i < P}= \brc{F \geq i}$, we have}
    &\;\;= \ex{\ex{\int_{T_0}^{\mintwo{T_1}{P}}{Y_t \dd t}\middle | \filt{T_0}}} + \sum_{i=1}^{\infty}{\ex{\indc{F \geq i}\ex{\int_{T_i}^{\mintwo{T_{i+1}}{P}}{Y_t \dd t}\middle | \filt{T_i}}}}\\
    &\;\;\leq \ex{G_0\left(S(T_0)\right)} + \sum_{i=1}^{\infty}{\ex{\indc{F \geq i}\left( G_i + B \cdot \ex{\min(T_{i+1}, \finalpt) - T_i\middle | \s(T_i), F \geq i} + C\left[ T_i - T_{i-1}\right]\right)}}\\
    &\;\;= \ex{G_0\left(S(T_0)\right)} + B \cdot {\ex{\finalpt - T_0}} + C \ex{T_F - T_0} + \sum_{i=1}^{\infty}{G_i \pr{F \geq i}}\\
    &\;\;\leq \ex{G_0\left(S(T_0)\right)} + B \cdot {\ex{\finalpt - T_0}} + C \ex{\finalpt - T_0} + \sum_{i=1}^{\infty}{G_i \pr{F \geq i}}.
\end{align*}
Applying our final assumption to bound $\pr{F \geq i}$, 
\begin{align*}
    \pr{F \geq i} &= \pr{F > 0} \prod_{j=2}^{i}{\pr{F \geq j\middle | F \geq j - 1}}\\
    &= \pr{F > 0} \prod_{j=2}^{i}{\ex{\pr{F \geq j\middle | F \geq j - 1, \filt{T_{j-1}}}}}\\
    &\leq \pr{F > 0} \prod_{j=2}^{i}{\ex{1- p_j}}= \pr{F > 0} \prod_{j=2}^{i}{(1- p_j)}.
\end{align*}
Applying this final result, we find
\begin{equation}
     \ex{\int_{T_0}^{P}{Y_t \dd t}} \leq \ex{G_0\left(\s(T_0)\right)} + \pr{F > 0}\sum_{j=1}^{\infty}{G_j \prod_{i=2}^{j}{\left(1 - p_i\right)}}  + (B+C) \cdot \ex{\finalpt - T_0}.\tag*{\Halmos}
\end{equation}

\subsection{Example: Application to Accumulation Period Upper Bound (Lemma~\ref{lem:UBintTA})}

\begin{figure}\label{fig:epoch}
    \centering
    \includegraphics[scale=0.23,clip]{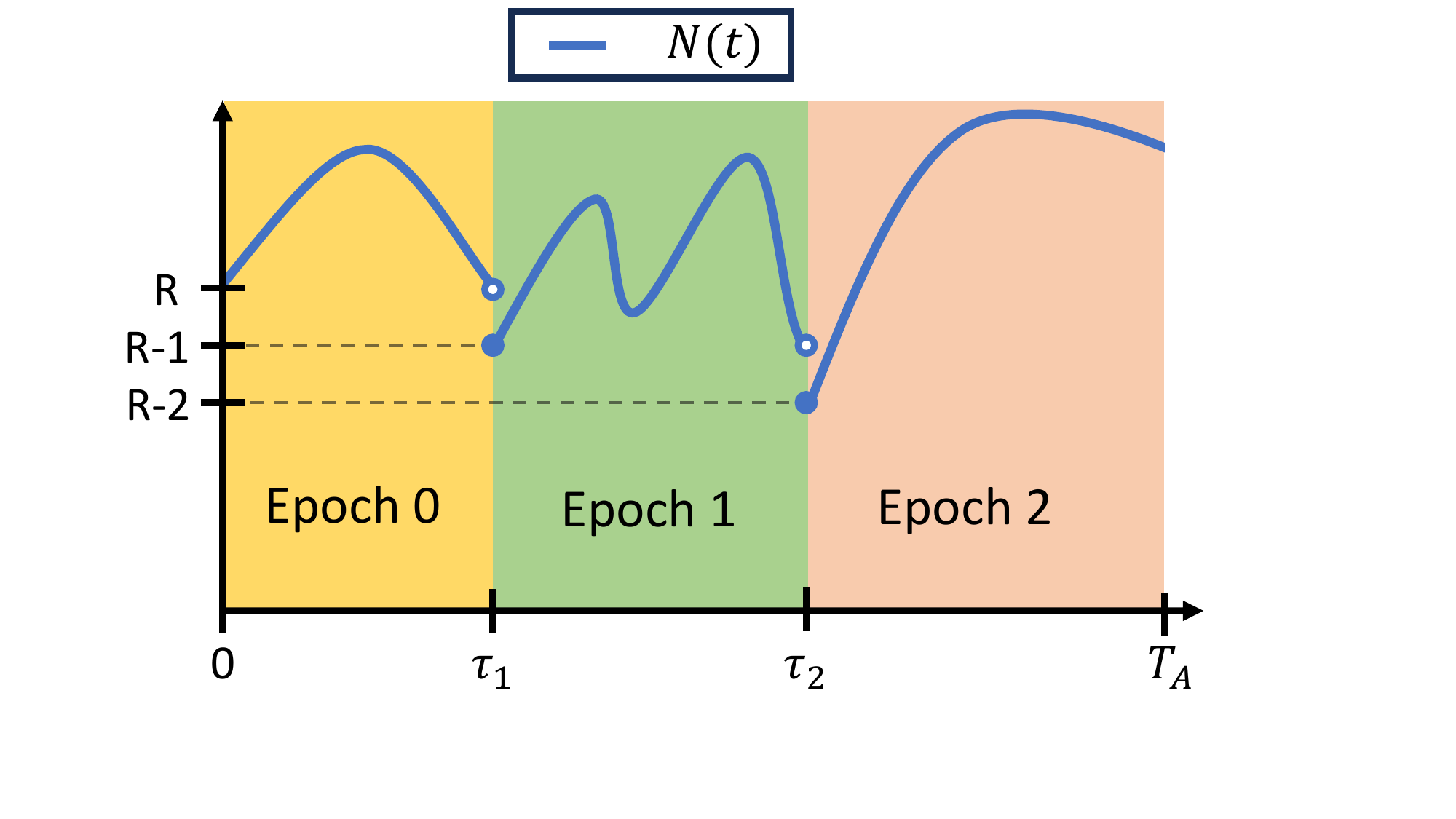}
    \vspace{-12pt}
    \caption{A depiction of the sequence of ``new lows'' $\tau_j$ and the epochs that they define. In this case, the number of epochs which occur is $n_e = 3$.}
\end{figure}


\paragraph{Specification step.}We begin our bound on the integral over the accumulation period with an application of the Intervening Stopping Time Lemma, Lemma~\ref{lem:waldstop}. 
To apply our new lemma, we must first specify all the necessary stopping times: the initial time $\starttime$, the final time $\finalpt$, and the intervening times $T_i$. 
In this case, we of course take $\starttime = 0$ and $\finalpt = T_A$. 
For the intervening times $T_i$, we choose a sequence which allows us to lower bound the departure rate of our system. 
In particular, we take $T_j = \tau_j$ where the $j$-th epoch start $\tau_j$ is
\begin{equation*}
    \tau_j \triangleq \minpar{t\geq0: N(t) \leq R-j};
\end{equation*}
in other words, the moments when the number of jobs $N(t)$ reaches a new minimum value or ``new low.''
We call the period $[\tau_j, \mint{\tau_{j+1}}{T_A})$ the \emph{$j$-th epoch}, and say epoch $j$ \textit{occurs} whenever $\tau_j < T_A$. We then let the random variable $\epochind$ denote the number of epochs that occur, i.e. we take $F=\epochind$.

\paragraph{The reduced problem.} With this decomposition, understanding the behavior of the accumulation phase reduces to understanding the behavior of each epoch. The beauty in our construction is that the behavior of $N(t)$ within an epoch is much more constrained: within epoch $j$, the departure rate $\mu Z(t)$ of our system is both lower-bounded by $\mu (R-j)$ and upper-bounded by $\mu R$. We use this observation together with various coupling arguments to show the three necessary preconditions of Lemma~\ref{lem:waldstop}. 
To prove these preconditions, we actually need to apply Lemma~\ref{lem:waldstop} \emph{again}; see Section~\ref{sec:UBintTA} for details.


\section{Proof of Theorem~\ref{thm:main}'s Three Main Lemmas}\label{sec:proof}

As shown in Section~\ref{sec:mainComp}, to prove Theorem~\ref{thm:main} it suffices to prove the following lemmas: 
\begin{enumerate}
    \item Lemma~\ref{lem:UBintTA}, an upper bound on the reward integral over the accumulation phase,
    \item Lemma~\ref{lem:UBintTB}, an upper bound on the reward integral over the draining phase, and
    \item Lemma~\ref{lem:LBX}, a lower bound on the length of our renewal cycle.
\end{enumerate}
Before we continue, we first give a formal system construction, then prove some useful coupling claims.


\subsection{Construction}
We now discuss how we formally construct this system using Poisson processes; being explicit here will prove useful when we make coupling arguments in the future.

\paragraph{The arrival and departure processes.} We take the number of jobs that have arrived at time $t$ to be $\Pi_A(t)$, where $\Pi_A$ is a Poisson process of rate $k \lambda$. In a slight abuse of notation, we let $\Pi_A( [a,b] )$ denote the number of arrivals that occur in the interval $[a,b]$; we apply the same extension to all other counting processes mentioned here.
We set the potential departure process of, say, server $i$ to be $\Pi_i( t)$, where $\Pi_i$ is a Poisson process of rate $\mu$.
A potential departure from server $i$ only ``counts'' if server $i$ is busy when that potential departure occurs, i.e., if the number of busy servers $Z(t) \geq i$ at the time.
Thus, the total number of departures from our system by time $t$ is, taking integrals with respect to the Poisson processes $\Pi_i$ as counting processes,
$$\dept(t) \triangleq \sum_{i=1}^{k}{\int_0^{t}{\indbig{Z(s) \geq i} \dd \Pi_i(s)} }.$$

\paragraph{The number of busy servers $Z(t)$.} To find the number of busy servers $Z(t)$, one could count the number of setup completion events that have occurred so far and the number of server shutoffs that have occurred so far; this description is a bit difficult to work with. Alternatively, one can see from the initial description of setup dynamics that server $i$ is \textit{on} at time $t$ if and only if the total number of jobs $N(s) \geq i$ for all $s \in [t-\beta, t]$, where one should recall that $\beta$ is the setup time.
An easier description of $Z(t)$ follows:
\begin{equation*}
    Z(t) = \min\left(k, \min_{s \in [t-\beta, t]}{N(s)}\right).
\end{equation*}

\paragraph{A departure operator.} We can extend our departure process $\dept(t)$ to a departure operator $\depOp{f(s)}{\interv}$ which takes a function $f(s) \in \left\{0,1,\dots, k\right\}$ defined on some interval $\interv$ and computes the number of departures that would occur in that interval provided that the number of busy servers $Z(s)= f(s)$, i.e.
\begin{equation*}
    \depOp{f(s)}{(a,b]} \triangleq \sum_{i=1}^{k}{\int_a^{b}{\indbig{f(s) \geq i} \dd \Pi_i(s)} }.
\end{equation*}
Note that  the total number of departures can now be written as $\dept(t) = \depOp{Z(s)}{[0,t]}$.

\subsection{Three Coupling Claims}
We now describe three useful claims applied throughout the proof. The first, we will state and prove immediately. The latter two, we prove later, in Section~\ref{sec:appendix}.

\subsubsection{Basic Coupling Claim: Maintaining an Initial Relation}\label{sec:coupling}
\begin{claim}[Basic Coupling]\label{clm:coupling}
    Suppose that we have two processes $N_1$ and $N_2$ with an initial relation $N_1(a) \leq N_2(a),$
    where the behavior of each process is governed, for all times $s$ from $a$ up to some stopping time $\tau$, by the equation
    \begin{equation*}
        N_j(s) \triangleq N_j(a) + \Pi_A\left( (a, s] \right) - \depOp{Z_j(x)}{(a,s]}, \text{ for $j \in \left\{1,2\right\}$}.
    \end{equation*}
    Furthermore, suppose that the first system's number of busy servers $Z_1(s)\geq Z_2(s)$ for all times $s \in [a, \tau].$
    Then, for all $s \in [a, \tau]$, the relation is maintained, i.e. $N_1(s) \leq N_2(s).$
\end{claim}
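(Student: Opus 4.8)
The plan is to prove the claim by a pathwise argument, analyzing the first time (if any) that the ordering $N_1 \leq N_2$ is violated. Define $\sigma \triangleq \inf\{s \in [a,\tau] : N_1(s) > N_2(s)\}$, with the convention that $\sigma = \infty$ if no such time exists; I want to show $\sigma = \infty$ (equivalently $\sigma > \tau$), which gives the conclusion. Both $N_1$ and $N_2$ are integer-valued pure-jump processes driven by the \emph{same} arrival process $\Pi_A$ and the \emph{same} potential-departure processes $\Pi_1,\dots,\Pi_k$, so they can only jump at the (countably many, discrete) epochs of these Poisson processes; between such epochs both processes are constant. Hence if a violation occurs, there is a \emph{first} jump epoch $s^\star$ at which it happens, and just before it we have $N_1(s^{\star-}) \leq N_2(s^{\star-})$.

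The key step is to check what happens at that jump epoch $s^\star$. Arrivals affect both processes identically (each increases by $1$), so an arrival cannot create a violation. A potential departure from server $i$ decreases $N_j$ by $1$ precisely when $Z_j(s^{\star-}) \geq i$, i.e. when the indicator $\indbig{Z_j(s^{\star-}) \geq i}$ fires. The dangerous case is when $N_1$ decreases but $N_2$ does not, which would need $Z_1(s^{\star-}) \geq i > Z_2(s^{\star-})$ — but this is impossible because we are given $Z_1(s) \geq Z_2(s)$ for all $s \in [a,\tau]$, so whenever $N_1$ takes a downward step at a potential departure, $N_2$ takes the same downward step. Therefore at $s^\star$ we have $N_1(s^\star) - N_1(s^{\star-}) \leq N_2(s^\star) - N_2(s^{\star-})$ (the only way the increments differ is $N_2$ goes down while $N_1$ stays, which only helps), and combined with $N_1(s^{\star-}) \leq N_2(s^{\star-})$ this yields $N_1(s^\star) \leq N_2(s^\star)$, contradicting the choice of $s^\star$ as a violation time. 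Since $N_1(a) \leq N_2(a)$ holds by hypothesis and the relation is preserved across every jump epoch, it holds for all $s \in [a,\tau]$.

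One mild subtlety to handle cleanly: $Z_j(s)$ is defined via a lookback window, $Z_j(s) = \min(k, \min_{u \in [s-\beta,s]} N_j(u))$, so $Z_j$ is not simply a function of the current state — but this does not affect the argument, since the hypothesis $Z_1(s) \geq Z_2(s)$ on all of $[a,\tau]$ is assumed outright and is exactly what is used. (If one instead wanted to \emph{derive} $Z_1 \geq Z_2$ from $N_1 \leq N_2$, one would need to be careful about the behavior on $[a-\beta, a]$, but the claim as stated hands us the server ordering for free.) I would also note that $\sigma = \tau$ is fine to include in the "no violation" conclusion by taking the closed interval and using right-continuity of the paths. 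The main obstacle — really the only place needing care — is the bookkeeping at simultaneous events and making rigorous the "first violation must occur at a jump epoch" step; since $\Pi_A$ and the $\Pi_i$ are independent Poisson processes, almost surely no two of them jump at the same time, so I can treat arrival jumps and each server's departure jumps one at a time, which keeps the case analysis short. A short induction on the successive jump epochs in $[a,\tau]$ then formalizes the whole thing.
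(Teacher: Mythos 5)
Your route is genuinely different from the paper's. The paper proves the claim in one computation: subtracting the two defining equations, the common arrival term $\Pi_A((a,s])$ cancels, and
\[
N_2(s)-N_1(s) \;=\; N_2(a)-N_1(a) \;+\; \sum_{i=1}^{k}\int_a^{s}\Bigl[\mathbf{1}\{Z_1(x)\ge i\}-\mathbf{1}\{Z_2(x)\ge i\}\Bigr]\,\dd \Pi_i(x),
\]
where the integrand is nonnegative because $Z_1\ge Z_2$ pointwise; hence $N_2(s)-N_1(s)\ge N_2(a)-N_1(a)\ge 0$ for all $s\in[a,\tau]$ at once, with no need to locate a first violation, invoke isolation of jumps, or rule out simultaneous Poisson epochs. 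Your first-violation/jump-induction argument establishes the same monotonicity fact event by event; it is sound in outline but buys nothing beyond the paper's one-line monotonicity of the departure operator, and it carries the extra (harmless, a.s.\ valid) bookkeeping about isolated, non-simultaneous jumps.

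There is, however, a concrete slip in your case analysis at the jump epoch $s^\star$. You call ``$N_1$ decreases but $N_2$ does not'' the dangerous case and assert that $Z_1(s^{\star-})\ge i > Z_2(s^{\star-})$ is impossible because $Z_1\ge Z_2$ --- but that configuration is perfectly consistent with $Z_1\ge Z_2$, and that jump direction in fact \emph{preserves} $N_1\le N_2$ (it pushes $N_1$ down). The genuinely dangerous jump is the mirror image: $N_2$ decreases while $N_1$ does not, which requires $Z_2(s^{\star-})\ge i > Z_1(s^{\star-})$, and it is exactly this that the hypothesis $Z_1\ge Z_2$ rules out. Your parenthetical ``the only way the increments differ is $N_2$ goes down while $N_1$ stays, which only helps'' has the same swap; the correct statement is that the only possible discrepancy is $N_1$ going down while $N_2$ stays, and that is the case that helps. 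The displayed increment inequality $N_1(s^\star)-N_1(s^{\star-})\le N_2(s^\star)-N_2(s^{\star-})$ is the right conclusion and, combined with $N_1(s^{\star-})\le N_2(s^{\star-})$, finishes the induction, so the proof is repaired by simply relabeling the cases --- but as written the key step justifies itself with a false implication, so you should fix the wording before relying on it.
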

\begin{proof}{Proof.}
We show equivalently that $N_2(s) - N_1(s) \geq 0$. Applying the definitions of $N_1$ and $N_2$, 
\begin{align*}
    N_2(s) - N_1(s) &= N_2(a) - N_1(a) + \left[\depOp{Z_1(x)}{(a,s]} - \depOp{Z_2(x)}{(a,s]}\right]\\
    &\geq \left[\depOp{Z_1(x)}{(a,s]} - \depOp{Z_2(x)}{(a,s]}\right]\\
    &= \sum_{i=1}^{k}{\int_a^{s}{\indbig{Z_1(x) \geq i} \dd \Pi_i(x)} } - \sum_{i=1}^{k}{\int_a^{s}{\indbig{Z_2(x) \geq i} \dd \Pi_i(x)} }\\
    &= \sum_{i=1}^{k}{\int_a^{s}{\bigg[ \indbig{Z_1(x) \geq i} - \indbig{Z_2(x) \geq i}\bigg] \dd \Pi_i(x)} }.
\end{align*}
Since $Z_1(x) \geq Z_2(x)$, the integrand $\bigg[ \indbig{Z_1(x) \geq i} - \indbig{Z_2(x) \geq i}\bigg] \geq 0;$ the claim follows. \hfill\Halmos
\end{proof}


\subsubsection{Description of Remaining Coupling Claims}\label{sec:remCoupling}

\paragraph{High-level explanation.}
This claim leads nicely into a couple more claims.
Both are concerned with bounding a quantity involving a general ``down-crossing'' time.
In particular, our analysis will begin at a stopping time $\tau$ and will ``end'' at the down-crossing time $\dgen$, where $\dgen \triangleq \minpar{t\geq 0: N(t + \tau) \leq h}$ is the length of time it takes for the number of jobs $N(t)$ to become lower than some given threshold $h$.
The first claim, Claim~\ref{clm:dcbnd}, uses a coupling argument to bound the expected integral of $N(t)$ from some arbitrary time $\tau$ until $N(t)$ drops below some pre-defined threshold $h$, provided that one has a lower bound on the number of busy servers $Z(t)$ over that period. 
The second claim, Claim~\ref{clm:dcprob}, uses a related argument to bound the probability that $N(t)$ drops below some threshold $h$ within some amount of time $\ell$, given that one has bounds on $Z(t)$ over the relevant period. 
\chmades{
We state and prove both of these claims now.
}

\subsubsection{Statement and Proof of Claim~\ref{clm:dcbnd}, the Coupling Integral Bound.}\label{sec:dcbnd}
\begin{restatable}[Coupling Integral Bound]{claim}{dcbnd}\label{clm:dcbnd}
    Let $\tau$ be some stopping time and $\dgen$ be the next down-crossing as described in Section~\ref{sec:remCoupling}. Suppose that, at time $\tau$, we have a lower bound on the number of busy servers over a period, i.e. we know that the number of busy servers $Z(t) \geq R - j,$
    for all $t \in \left[\tau, \tau + \mintwo{\ell}{\dgen} \right]$ and for some non-negative $j$.
    Then we have the following bound on the integral over this time period:
    \begin{equation*}
        \ex{\int_{\tau}^{\tau + \mintwo{\dgen}{\ell}}{\left[N(t) - h\right] \dd t}\middle | \filt{\tau}} \leq \ell \cdot \left[N(\tau) - h \right]^+ + \frac{1}{2}\mu j \ell^2.
    \end{equation*}
\end{restatable}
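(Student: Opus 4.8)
The plan is to couple the true number-of-jobs process $N(t)$ to a simpler upper-bounding process and then integrate the deterministic bound on that process. First I would introduce a coupled process $N^{(U)}$ started from $N^{(U)}(\tau) = N(\tau)$ and defined on $[\tau, \tau + \mintwo{\ell}{\dgen}]$ by the same arrival process but with a reduced, \emph{constant} number of busy servers $Z^{(U)}(s) \equiv R - j$; that is,
\begin{equation*}
    N^{(U)}(s) \triangleq N(\tau) + \Pi_A\left((\tau, s]\right) - \depOp{R-j}{(\tau, s]}.
\end{equation*}
Since the hypothesis gives $Z(s) \geq R-j = Z^{(U)}(s)$ for all $s \in [\tau, \tau + \mintwo{\ell}{\dgen}]$, the Basic Coupling Claim (Claim~\ref{clm:coupling}) applies with $N_1 = N$, $N_2 = N^{(U)}$, $a = \tau$, and stopping time $\tau + \mintwo{\ell}{\dgen}$, yielding $N(s) \leq N^{(U)}(s)$ pathwise on this interval. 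Hence $\left[N(t) - h\right]^+ \leq \left[N^{(U)}(t) - h\right]^+$, and since the integrand $N(t) - h$ over $[\tau, \tau+\mintwo{\dgen}{\ell}]$ satisfies $N(t) - h \le [N(t)-h]^+ \le [N^{(U)}(t)-h]^+$, it suffices to bound $\ex{\int_{\tau}^{\tau + \ell}{[N^{(U)}(t) - h]^+ \dd t} \mid \filt{\tau}}$ (extending the integration region from $\mintwo{\dgen}{\ell}$ to $\ell$ only adds a nonnegative amount since the integrand $[\,\cdot\,]^+$ is nonnegative).

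Next I would bound this last integral by pulling the expectation inside and using that $N^{(U)}(t) = N(\tau) + \Pi_A((\tau,t]) - \depOp{R-j}{(\tau,t]}$ is, conditioned on $\filt{\tau}$, an M/M/1-type process. Using $[a+b]^+ \leq a^+ + b$ for $b \ge 0$ pointwise with $a = N(\tau) - h$ and $b = \Pi_A((\tau,t]) - \depOp{R-j}{(\tau,t]} \le \Pi_A((\tau,t])$ — or more simply $[N^{(U)}(t)-h]^+ \le [N(\tau)-h]^+ + (\Pi_A((\tau,t]) - \depOp{R-j}{(\tau,t]})^+$ and then $(\cdot)^+ \le \Pi_A((\tau,t])$ is too lossy, so instead I would keep the centered increment — gives
\begin{equation*}
    \ex{\left[N^{(U)}(t) - h\right]^+ \middle| \filt{\tau}} \leq \left[N(\tau) - h\right]^+ + \ex{\Pi_A((\tau,t]) - \depOp{R-j}{(\tau,t]} \middle| \filt{\tau}} = \left[N(\tau)-h\right]^+ + (k\lambda - \mu(R-j))(t - \tau).
\end{equation*}
Recalling $k\lambda = \mu R$, the drift is $k\lambda - \mu(R-j) = \mu j$. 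Integrating over $t \in [\tau, \tau+\ell]$ then produces $\ell \cdot [N(\tau)-h]^+ + \int_0^{\ell}{\mu j u \,\dd u} = \ell \cdot [N(\tau)-h]^+ + \frac{1}{2}\mu j \ell^2$, which is exactly the claimed bound.

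The main obstacle is being careful about the order of the pathwise domination, the removal of the positive part, and the extension of the integration limit from $\mintwo{\dgen}{\ell}$ to $\ell$: I must make sure every step is monotone in the right direction so the inequalities all point the same way. In particular, the integrand on the left, $N(t) - h$, can be negative, so I should first bound it by its positive part (legitimate, since we then integrate over a possibly larger region where the positive part is still nonnegative), then apply the coupling, then extend the interval; doing these in the wrong order could break the bound. A secondary point is verifying that the coupled process $N^{(U)}$ is a bona fide process to which Claim~\ref{clm:coupling} applies — this is immediate from the construction since it uses the same driving Poisson processes $\Pi_A$ and $\Pi_i$ with the constant server count $R-j$, and $R-j \le k$ so the departure operator is well-defined. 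Everything else is a routine drift computation using $k\lambda = \mu R$.
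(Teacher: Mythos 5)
Your construction of the coupled process and your invocation of Claim~\ref{clm:coupling} match the paper's first step, but the step that follows contains a genuine error: the inequality $\ex{\left[N^{(U)}(t)-h\right]^+ \middle| \filt{\tau}} \leq \left[N(\tau)-h\right]^+ + \mu j (t-\tau)$ is false. Writing $N^{(U)}(t)-h = \left(N(\tau)-h\right) + W(t)$ with $W(t) = \Pi_A\left((\tau,t]\right) - \depOp{R-j}{(\tau,t]}$, subadditivity of the positive part only gives $\left[N^{(U)}(t)-h\right]^+ \leq \left[N(\tau)-h\right]^+ + W(t)^+$, and $\ex{W(t)^+}$ is not $\ex{W(t)} = \mu j (t-\tau)$: it also carries a fluctuation term of order $\sqrt{\mu(2R-j)(t-\tau)}$. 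Concretely, take $j=0$ and $N(\tau)=h+1$: your inequality asserts $\ex{\left[1+W(t)\right]^+}\leq 1$, yet $W(t)$ is then a mean-zero Skellam variable with variance $2\mu R(t-\tau)$, so the left side grows like $\sqrt{\mu R (t-\tau)}$. The root cause is that you discard the stopping at the down-crossing before taking expectations: once you replace the stopped integrand by the positive part of the \emph{unstopped} coupled walk over all of $[\tau,\taul]$, the quantity you are bounding genuinely exceeds $\ell\left[N(\tau)-h\right]^+ + \tfrac{1}{2}\mu j \ell^2$ by an additive term of order $\sqrt{\mu R}\,\ell^{3/2}$, so no correct completion of your argument can recover the claim.

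The fix, which is the paper's route, is to keep a stopping time in play so that only the drift enters, never a positive part of an unstopped walk. Since $\ncouple \geq N$ on the interval, the coupled down-crossing $\Tilde{\dgen} \triangleq \minpar{t>0: \ncouple(\tau+t)\leq h}$ satisfies $\Tilde{\dgen}\geq \dgen$, and $\ncouple(t)-h\geq 0$ on $[\tau+\dgen, \tau+\Tilde{\dgen}]$, so one may pass from the integrand $N\left(\mintwo{t}{\tau+\dgen}\right)-h$ to $\ncouple\left(\mintwo{t}{\tau+\Tilde{\dgen}}\right)-h$ over $[\tau,\taul]$ without ever taking $[\cdot]^+$. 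Then $V(t)=\ncouple(t)-\mu j t$ is a martingale and Doob's optional stopping theorem gives $\ex{\ncouple\left(\mintwo{t}{\tau+\Tilde{\dgen}}\right)-h\middle|\filt{\tau}} \leq \left[N(\tau)-h\right] + \mu j (t-\tau)$; integrating over $t\in[\tau,\taul]$ yields the claim. The essential point your proposal misses is that the expectation of the \emph{stopped} process is controlled by the drift $\mu j$ alone, whereas the expectation of the positive part of the unstopped process is not.
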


\chmades{
\newcommand{\startpt}{\tau}
\newcommand{\intend}{\tau + \min\left(\dgen, \ell\right)}
\newcommand{\dgcouple}{\Tilde{\dgen}}

\begin{proof}
    We prove this claim in two parts.  First, we construct a coupled process $\ncouple(t) \geq N(t)$ on the interval of interest. Then, we give an upper bound on $\ex{\int_{\startpt}^{\intend}{\ncouple(t) \dd t}\middle | \filt{\tau}}$.
Define $\ncouple(t)$ as 
\begin{equation*}
\ncouple(t) \triangleq N(\tau) + A(\tau, t) - \depOp{R-j}{(\tau, t)}.
\end{equation*}
Then, by Claim~\ref{clm:coupling}, we have that
$\ncouple(t) \geq N(t)$
on the interval of interest.
To develop the integral, we first move the minimum from the bounds of integration into the integrand. In particular, we note that the quantity $N(\dgen) - h = 0,$ and thus, for any $t > \tau + \dgen$, the quantity $N(\mintwo{\tau + \dgen}{t}) - h = 0$. 
On the other hand, for any $t < \tau + \dgen$, the quantity $N\left(\mintwo{\tau + \dgen}{t}\right) = N\left(t\right)$. 
It follows that                                   
\begin{align*}
&\int_{\startpt}^{\intend}{\left[N(t) - h \right] \dd t} \\
&\;\;= 
\int_{\startpt}^{\intend}{\left[N(\mintwo{t}{\tau+\dgen}) - h \right] \dd t}\\
&\;\;= 
\int_{\startpt}^{\intend}{\left[N(\mintwo{t}{\tau+\dgen}) - h \right] \dd t}+ \int_{\intend}^{\taul}{\left[N(\mintwo{t}{\tau+\dgen}) - h \right] \dd t}\\
&\;\;= \int_{\startpt}^{\taul}{\left[N(\mintwo{t}{\tau+\dgen}) - h \right] \dd t}\\
&\;\;\leq  \int_{\startpt}^{\taul}{\left[\ncouple(\mintwo{t}{\tau+\dgen}) - h \right] \dd t}.
\end{align*}
Defining $\dgcouple \triangleq \minpar{t > 0: \ncouple(\tau + t) \leq h}$, since $\ncouple(t) \geq N(t)$, we know both that $\dgcouple \geq \dgen$ and that, for any $t \in [\tau + \dgen, \tau + \dgcouple]$,
$
\ncouple(t) - h \geq 0.
$
Moreover, the process $V(t)$ defined as
$V(t) \triangleq \ncouple(t) - \mu j t$ is a martingale.
Thus, we have
\begin{align*}
\int_{\startpt}^{\taul}{\left[\ncouple(\mintwo{t}{\tau+\dgen}) - h \right] \dd t} 
&\leq \int_{\startpt}^{\taul}{\left[\ncouple(\mintwo{t}{\tau+\dgcouple}) - h \right] \dd t}.
\end{align*}
Taking the expectation, we find that
\begin{align}
&\ex{ \int_{\startpt}^{\taul}{\left[\ncouple(\mintwo{t}{\tau+\dgcouple}) - h \right] \dd t} \middle | \filt{\tau}}\nonumber\\
&\;\;= \int_{\startpt}^{\taul}{ \ex{\ncouple(\mintwo{t}{\tau+\dgcouple}) - h  \middle | \filt{\tau}} \dd t}\nonumber\\
 &\;\;= \int_{\startpt}^{\taul}{ \ex{V(\mintwo{t}{\tau+\dgcouple}) + \mu j \left(\mintwo{\tau + \dgcouple}{t}\right) - h  \middle | \filt{\tau}} \dd t}\nonumber\\
 &\;\;= \int_{\startpt}^{\taul}{ \ex{V(\tau) + \mu j \left(\mintwo{\tau + \dgcouple}{t}\right) - h \middle | \filt{\tau}} \dd t}\label{eq:dcb1}\\
&\;\;\leq \int_{\startpt}^{\taul}{ \ex{V(\tau) + \mu j t - h  \middle | \filt{\tau}} \dd t}\nonumber\\
&\;\;= \int_{\startpt}^{\taul}{ \ex{\ncouple(\tau)  - \mu j \tau + \mu j t - h  \middle | \filt{\tau}} \dd t}\nonumber\\
&\;\;= \left[ \ncouple(\tau) - h \right] \ell + \frac{1}{2} \mu j \ell^2,\nonumber
\end{align}
where \eqref{eq:dcb1} is an application of Doob's Optimal Stopping Theorem.
\end{proof}
}
\subsubsection{Proof of Claim~\ref{clm:dcprob}, the Coupling Probability Bound}\label{sec:dcprob}

\begin{restatable}[Coupling Probability Bound]{claim}{dcprob}\label{clm:dcprob}
    Let $\tau$ be some stopping time and $\dgen$ be the next down-crossing as described in Section~\ref{sec:remCoupling}.
    We consider two cases.
    
    In the first case, suppose that we have a \textbf{lower} bound on the number of busy servers $Z(t)$ over some length $\ell$ interval starting at time $\tau$, i.e. the busy servers $Z(t) \geq R - j,$
    for all $t \in \left[\tau, \tau + \mintwo{\ell}{\dgen} \right]$ and for some non-negative $j$. 
    Then, we can bound the threshold-crossing probability by
    \begin{equation*}
        \pr{\dgen < \ell\middle | \filt{\tau}} \geq 2\Phi\left(-\left[\frac{N(\tau) - h + \mu j \ell}{\sqrt{\ell(2 k\lambda - \mu j) }}\right]\right) - \frac{2}{3\sqrt{\ell(2k\lambda - \mu j)}}.
    \end{equation*}
    In particular, if $N(\tau) - h = c_1 \sqrt{\mu \beta R}$, then the probability $ \pr{\dgen < \ell\middle | \filt{\tau}} \geq 2 \Phi\left( - \frac{c_1}{\sqrt{2}} \right) - \frac{1}{100}.$

    In the second case, suppose that we instead have the \textbf{upper} bound on $Z(t) \leq R$ during this interval instead. Then,
    \begin{equation*}
        \pr{\dgen < \ell \middle | \filt{\tau}} \leq 2 \Phi\left(-\left[ \frac{N(\tau) - h}{\sqrt{2 \ell k \lambda}}\right] \right) - \frac{2}{3\sqrt{2 k\lambda \ell}}.
    \end{equation*}
    As before, if $N(\tau) - h =  c \sqrt{\mu \beta R}$, then the probability $\pr{\dgen < \ell\middle | \filt{\tau}} \leq 2 \Phi\left( - \frac{c}{\sqrt{2}} \right) + \frac{1}{100}.$
\end{restatable}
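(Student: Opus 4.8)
The plan is to reduce each case to a first-passage estimate for an explicit birth--death process obtained by replacing $Z(t)$ with the extreme value permitted by the hypothesis, and then to bound that first-passage probability by a reflection argument together with a central-limit estimate. \textbf{Coupling.} In the first case, since $Z(t)\ge R-j$ on $[\tau,\tau+\mintwo{\ell}{\dgen}]$, I would introduce the dominating process $\ncouple(t)\triangleq N(\tau)+\Pi_A\left((\tau,t]\right)-\depOp{R-j}{(\tau,t]}$, an $M/M/1$-type walk with up-rate $k\lambda$ and down-rate $\mu(R-j)$, hence with drift $k\lambda-\mu(R-j)=\mu j\ge 0$. Claim~\ref{clm:coupling}, applied up to the stopping time $\tau+\mintwo{\ell}{\dgen}$, gives $N(t)\le\ncouple(t)$ there; writing $\sigma\triangleq\min\{s\ge 0:\ncouple(\tau+s)\le h\}$, one checks $\{\sigma<\ell\}\subseteq\{\dgen<\ell\}$ (on $\{\sigma<\ell\}$ either $\dgen\le\sigma$ outright, or $\sigma<\dgen$, in which case $\sigma<\mintwo{\ell}{\dgen}$ and the coupling at $\tau+\sigma$ forces $N(\tau+\sigma)\le h$, hence $\dgen\le\sigma$ again). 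Therefore $\pr{\dgen<\ell\mid\filt{\tau}}\ge\pr{\sigma<\ell\mid\filt{\tau}}$. In the second case $Z(t)\le R$, so the walk with down-rate $\mu R=k\lambda$ --- a \emph{driftless, symmetric} walk --- lies \emph{below} $N$, reversing the inclusion and giving $\pr{\dgen<\ell\mid\filt{\tau}}\le\pr{\sigma<\ell\mid\filt{\tau}}$.

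\textbf{Clearing the drift.} Next I would pass to the mean-zero martingale $W_s\triangleq\ncouple(\tau+s)-N(\tau)-\mu j s$, a compensated Skellam process with $\Var{W_\ell}=(k\lambda+\mu(R-j))\ell=(2k\lambda-\mu j)\ell$. The event $\{\sigma<\ell\}$ equals $\{W_s\le h-N(\tau)-\mu j s\text{ for some }s\le\ell\}$, and since that threshold is monotone in $s$ I can replace it by its extreme value $h-N(\tau)-\mu j\ell$, obtaining $\pr{\sigma<\ell}\ge\pr{\min_{s\le\ell}W_s\le -b}$ with $b\triangleq N(\tau)-h+\mu j\ell$; in the second case $j=0$, there is nothing to clear, $b=N(\tau)-h$, and $\Var{W_\ell}=2k\lambda\ell$.

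\textbf{Reflection and CLT.} For the symmetric walk of the second case the exact reflection identity $\pr{\min_{s\le\ell}\ncouple(\tau+s)\le h}=2\pr{\ncouple(\tau+\ell)\le h}-\pr{\ncouple(\tau+\ell)=h}$ holds; combining it with a Berry--Esseen / integral-CLT estimate for the (mean-zero, variance $2k\lambda\ell$) Skellam law of $\ncouple(\tau+\ell)-N(\tau)$, and bundling the point mass, the lattice continuity correction, and the CLT error into one $\frac{2}{3}/\sqrt{2k\lambda\ell}$ term, gives the claimed upper bound. For the first case the dominating walk is asymmetric, so reflecting at the first-passage time carries a Radon--Nikodym weight $\left(k\lambda/\mu(R-j)\right)^m=\left(R/(R-j)\right)^m$ at displacement $m$; this is $1+O(j/R)$ per unit displacement, so it perturbs the symmetric reflection identity by a controlled amount and one again arrives at $2\Phi\left(-b/\sqrt{(2k\lambda-\mu j)\ell}\right)-\frac{2}{3}/\sqrt{(2k\lambda-\mu j)\ell}$. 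Specializing $N(\tau)-h=c\sqrt{\mu\beta R}$ and $\ell=\beta$ makes the $\Phi$-argument $c/\sqrt{2}$ up to lower-order terms controlled by $R,\mu\beta\ge 100$, and makes the $\frac{2}{3}/\sqrt{\cdot}$ corrections at most $\frac{1}{100}$, yielding the ``in particular'' statements.

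\textbf{Main obstacle.} The delicate step is the reflection in the asymmetric first case: the reflection principle is no longer an identity, so one must carry the tilting weight $\left(R/(R-j)\right)^m$ through the sum over displacements (or give a direct one-sided bound on the post-first-passage increment), and keeping the leftover error a clean absolute constant ($\tfrac{2}{3}$) implicitly uses that $j$ is not too large relative to $\sqrt{R/(\mu\beta)}$ in the regime where the claim is applied. A secondary point is verifying that the coupling of Claim~\ref{clm:coupling}, asserted only up to $\tau+\mintwo{\ell}{\dgen}$, still transfers the down-crossing comparison --- which it does, since both $\dgen$ and $\sigma$ are functionals of the sample paths up to that time.
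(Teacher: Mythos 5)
Your overall architecture (couple $N$ to a constant-rate walk via Claim~\ref{clm:coupling}, hit the factor of $2$ by reflection, finish with a Berry--Esseen estimate on the Skellam increment) is the same as the paper's, and your treatment of the second, symmetric case is essentially the paper's argument. The genuine gap is in the first case. By ``clearing the drift'' you pass to the compensated martingale $W_s=\ncouple(\tau+s)-N(\tau)-\mu j s$ and then try to reflect an \emph{asymmetric} walk, proposing to absorb the asymmetry into an exponential-tilting weight $\left(R/(R-j)\right)^m$. That weight is not a controlled perturbation: over a horizon $\ell$ the relevant displacements $m$ are of order $\sqrt{(2k\lambda-\mu j)\ell}$, so the cumulative tilt is $e^{\Theta(jm/R)}$, which blows up for moderate $j$ (e.g.\ $j\sim\sqrt{R}$, $\ell=\beta$ gives $e^{\Theta(\sqrt{\mu\beta})}$). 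The claim is stated for \emph{every} nonnegative $j$, and your own hedge (``implicitly uses that $j$ is not too large relative to $\sqrt{R/(\mu\beta)}$'') concedes that the argument as written does not deliver the stated inequality; nothing in the claim or in how it is invoked licenses that restriction. So the asymmetric case is not proved.

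The fix is to not clear the drift at all: the upward bias is an asset, not an obstacle. Work directly with $\ncouple$ and reflect at level $h$ as an \emph{inequality} rather than an identity. Since down-jumps are unit-sized, the first passage below $h$ lands exactly at $h$; by the strong Markov property and the fact that the remaining increment is a Skellam variable with nonnegative mean, the walk is at least as likely to finish at or above $h$ as strictly below it, so
\begin{equation*}
\pr{\inf_{t\in[0,\ell)}\ncouple(t)\le h\,\middle|\,\filt{\tau}}\;\ge\;2\,\pr{\ncouple(\ell)<h\,\middle|\,\filt{\tau}},
\end{equation*}
with no tilting and no leftover asymmetry error. A Berry--Esseen bound applied to $\ncouple(\ell)$, which has mean $N(\tau)+\mu j\ell$ and variance $(2k\lambda-\mu j)\ell$, then yields exactly the stated $2\Phi\left(-\left[N(\tau)-h+\mu j\ell\right]/\sqrt{\ell(2k\lambda-\mu j)}\right)-\tfrac{2}{3\sqrt{\ell(2k\lambda-\mu j)}}$. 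Your coupling step and the transfer of the down-crossing comparison through $\tau+\mintwo{\ell}{\dgen}$ are fine; only the drift-clearing-plus-tilting detour needs to be replaced by this direct one-sided reflection.
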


\chmades{
\begin{proof}
    We prove this result in three parts. First, we use Claim~\ref{clm:coupling} to construct a process $\ncouple(t) \geq N(t)$ on the interval of interest. Afterwards, we analyze the down-crossing probability of this coupled process. In particular, we use a reflection argument to show that
    \begin{equation*}
        \pr{\dgen < \ell} \geq 2 \pr{\ncouple(\tau + \ell) \leq h },
    \end{equation*}
    then use a Berry-Esseen bound to bound this final probability. In what follows, we focus on the lower-bound; the upper bound follows in precisely the same way. 

    To construct our coupled process, we note that, by assumption, the number of busy servers $Z(t) \geq R - j$ for any $t \in [\tau, \tau + \mintwo{\ell}{\dgen}]$. Thus, by Claim~\ref{clm:coupling}, the process $\ncouple(t)$ defined as 
    \begin{equation*}
        \ncouple(t) \triangleq N(\tau) + A(\tau, \tau + t) + \depOp{R-j}{[\tau, \tau + t]} 
    \end{equation*}
    is an upper bound for $N(t + \tau)$, i.e., $
        \ncouple(t) \geq N(\tau + t)$
    for any $t \in  [0, \mintwo{\ell}{\dgen}]$.
    By definition, we have
    \begin{align*}
        \pr{\dgen < \ell} &= \pr{\inf_{t \in [0,\ell)}{N(\tau + t)} \leq h}
        \geq \pr{\inf_{t \in [0,\ell)}{\ncouple(t)} \leq h}.
    \end{align*}
    From a reflection argument, since $\ncouple$ is upwards-biased,
    \begin{align*}
        \pr{\inf_{t \in [0,\ell)}{\ncouple(t)} \leq h} &= \pr{\inf_{t \in [0,\ell)}{\ncouple(t)} \leq h, \ncouple(\ell) < h} + \pr{\inf_{t \in [0,\ell)}{\ncouple(t)} \leq h, \ncouple(\ell) \geq h}\\
        &\geq 2\pr{\inf_{t \in [0,\ell)}{\ncouple(t)} \leq h, \ncouple(\ell) < h}\\
        &= 2\pr{ \ncouple(\ell) < h}.
    \end{align*}
    
Let $\sigma \triangleq \sqrt{\ell(2k\lambda - \mu j)}$. We now apply Now, assume that, for any $x$,
\begin{equation}\label{bebnd}
    \abs{\pr{\ncouple(\ell) < \ncouple(0) + \mu j \ell + x\sigma} - \Phi(x)} \leq \frac{0.3328}{\sigma},
\end{equation}
we have
\begin{align*}
    \pr{\ncouple(\ell) < h} &= \pr{\ncouple(\ell) < \ncouple(0) + \mu j \ell + \frac{h-\mu j \ell - \ncouple(0)}{\sigma} \cdot \sigma}\\
    &\geq \Phi\left(\frac{h-\mu j \ell - \ncouple(0)}{\sigma}\right) - \frac{1}{3\sigma}\\
    &= \Phi\left(-\frac{\left[ N(\tau) - h + \mu j \ell\right]}{\sigma}\right) - \frac{1}{3\sigma}.
\end{align*}
Putting this all together, we find
\begin{equation*}
    \pr{\dgen < \ell \middle | \filt{\tau}} \geq 2 \Phi\left(-\frac{\left[ N(\tau) - h + \mu j \ell\right]}{\sigma}\right) - \frac{2}{3\sigma}.
\end{equation*}

From here, then, it suffices to show \eqref{bebnd}. To begin, if we choose some arbitrarily large $n$ and define $$X_i \triangleq  \Pi_i' \left(\frac{k\lambda \ell}{n}\right) - \Pi_i''\left( \frac{\mu (R - j) \ell}{n}\right) - \frac{\mu j \ell}{n},$$ where each $\Pi(y)$ is an independent Poisson random variable with mean $y$,  then
$$\ncouple(\ell) =_d \sum_{i=1}^{n}{X_i} + \mu j  \ell + \ncouple(0).$$
To compute the moments of $X_i$, note that one can define centered Poisson random variables $A_i = \Pi\left(\frac{k\lambda \ell}{n}\right) - \frac{k\lambda \ell}{n}$ and $B_i = \Pi\left ( \frac{\mu (R-j) \ell}{n}\right) - \frac{\mu (R - j) \ell}{n}$, and then take $X_i = A_i - B_i$.
Doing this, one finds that
$$ \ex{X_i^2} = \ex{(A_i - B_i)^2} = \frac{k \lambda \ell}{n} + \frac{\mu (R-j) \ell}{n} = \frac{\mu(2R - j) \ell}{n}$$
and, using the triangle inequality, that
$$ \ex{\abs{X_i}^3} = \ex{\abs{A_i - B_i}^3} \leq \ex{\abs{A_i}^3} + \ex{\abs{B_i^3}} = \frac{\mu(2R - j) \ell}{n} + o\left(\frac{1}{n^2}\right).$$

We now apply the main result of \citep{shevtsova2011absolute}. Let $\sigma_n \triangleq \sqrt{\ex{X_i^2}} = \sqrt{\frac{\mu(2R - j) \ell}{n}} = \frac{\sigma}{\sqrt{n}}$ and note that $\rho_n = \ex{\abs{X_i}^3} < \sigma_n + o\left(\frac{1}{n^2}\right)$ (from \citep{normalBnd}). Then, since $\rho_n \geq 1.286 \sigma_n^3$ for sufficiently large $n$, we have 
\begin{equation*}
    \max_{x}{\abs{\pr{ \frac{\sum{X_i}}{\sqrt{n} \sigma_n} < x} - \Phi(x)}} \leq \frac{0.3328 \rho_n + 0.429 \sigma_n^3}{\sigma_n^3 \sqrt{n}} = \frac{0.3328}{\sqrt{\mu(2R - j) \ell}} + o\left(\frac{1}{n}\right).
\end{equation*}
Now noting that
\begin{equation*}
    \frac{\sum_{i=1}^{n}{X_i} }{\sqrt{n} \sigma} = \frac{\ncouple(\ell)-\ncouple(0) - \mu j \ell}{\sigma}
\end{equation*}
and taking $n \to \infty$, we have our result. \hfill \Halmos    
\end{proof}
}

\newcommand{\systag}{\textup{sys}}
\newcommand{\cfoursys}{C_4^{\systag}}

\subsection{Proof of Lemma~\ref{lem:UBintTA}, Upper Bound on Integral Over Accumulation Period}\label{sec:UBintTA}
We prove this result via two applications of the Intervening Stopping Time Lemma, Lemma~\ref{lem:waldstop}. To apply this decomposition lemma, there are two broad steps. First, we must specify a starting time ($T_0$), an ending time ($P$), a series of intervening stopping times $(T_i)$, the process ($Y_t$), and a counting variable ($F$). Second, we must prove that the three preconditions of the lemma hold, given these specifications. 

\subsubsection{First Application of Lemma~\ref{lem:waldstop}, at the Epoch Level}

\paragraph{Definition of $(\tau_j)$.} We define the sequence of stopping times $(\tau_j: j=0,1,\dots, R)$ as $\tau_j \triangleq \minpar{t>0: N(t) \leq R - j},$
i.e., $\tau_j$ is the first time there are only $R-j$ jobs within the system. Note that, by definition, $\tau_0 = 0$. We call the period $\bigg [\tau_j, \mint{\tau_{j+1}}{T_A} \bigg )$ the \emph{$j$-th epoch}, and say epoch $j$ occurs whenever $\tau_j < T_A$. We then let $\epochind$ denote the number of epochs which occur in a given renewal cycle.

\paragraph{Specification step.} 
Since we are interested in bounding $\ex{\int_0^{T_A}{[N(t) - R] \dd t}}$, we let our starting stopping time be $T_0 = 0$, our ending stopping time be $P=T_A$, our intervening stopping times be $T_j = \tau_j$, the process of interest $Y_t = N(t) - R$ and our counting variable be $F=\epochind$. 
Let the quantity
\begin{equation}\label{eq:prise}
    \prise{j} \triangleq \pr{\max_{t \in [\tau_j, \mintwo{\tau_{j+1}}{T_A}}{N(t)} \geq R + C_3\shs \middle | \epochind \geq j}
\end{equation}
be the probability that the total number of jobs $N(t)$ exceeds $R+ \threshone$ during epoch $j$. 

\paragraph{Required claims.} From here, we can apply Lemma~\ref{lem:waldstop} after showing the following claims: 

\begin{restatable}[Bound on the Probability of an Up-crossing $\prise{j}$]{claim}{prisebnd}\label{clm:prisebnd}
    Let $\prise{j}$ be the probability that the total number of jobs $N(t)$ exceeds $R+ \threshone$ during epoch $j$ defined in \eqref{eq:prise}. 
    Then, for any epoch $j \geq A_5 \sqrt{R}$,  we have $\prise{j} \geq 0.99\frac{A_5}{\sqrt{R}}.$
\end{restatable}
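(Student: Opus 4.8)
The plan is to condition on the epoch occurring, isolate a short interval on which the dynamics of $N$ collapse to an explicit biased birth--death chain, and then read off the bound from a gambler's-ruin identity together with a concentration estimate for a first-passage time.

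\emph{Step 1 (conditioning and a structural fact).} I would first condition on $\epochind \geq j$. Since $N$ changes by unit steps, $N(\tau_j) = R-j$ exactly. The key observation is that on the interval $[\tau_j, \tau_j + \beta]$, for as long as the epoch has not ended, the number of busy servers is exactly $Z(t) = R-j$: in the identity $Z(t) = \min(k, \min_{s\in[t-\beta,t]}N(s))$ the window $[t-\beta,t]$ contains $\tau_j$, where $N$ attains its running minimum $R-j$, and $N\geq R-j$ throughout the epoch, so the inner minimum equals $R-j$. Hence on this interval $N$ has birth rate $k\lambda = \mu R$ and death rate $\mu(R-j)$, i.e.\ upward drift $\mu j \geq \mu A_5\sqrt R$. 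I would also record that $T_A$ cannot occur in $[\tau_j,\tau_j+\beta]$: since $N(\tau_j) = R-j < R+1$, the first time after $\tau_j$ at which $N$ reaches $R+1$ is some $a' > \tau_j$, and server $R+1$ needs an uninterrupted setup of length $\beta$, so $T_A \geq a'+\beta > \tau_j+\beta$. Therefore, inside $[\tau_j,\tau_j+\beta]$ the epoch can only terminate through $\tau_{j+1}$, i.e.\ through $N$ reaching $R-j-1$.

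\emph{Step 2 (reduction to a birth--death chain).} Combining these two facts, on the random interval $[\tau_j,\ \min(\tau_j+\beta,\tau_{j+1}))$ the centered process $N(\cdot)-(R-j)$ coincides with a birth--death chain $N'$ started at $0$ with birth rate $\mu R$ and death rate $\mu(R-j)$, watched until it reaches $-1$. Writing $U \triangleq j + \threshone$ so that $N$ crossing level $R+\threshone$ corresponds to $N'$ crossing $U$, and letting $\zeta_U,\zeta_{-1}$ be the hitting times of $U$ and $-1$ by $N'$, we obtain
\begin{equation*}
    \prise{j} \;\geq\; \pr{\zeta_U \leq \beta,\ \zeta_U < \zeta_{-1}}.
\end{equation*}
It now remains to bound this probability for the explicit chain $N'$.

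\emph{Step 3 (the two factors).} For the ``before a new low'' part I would use gambler's ruin for the biased chain, whose step ratio is $q/p = (R-j)/R$, giving
\begin{equation*}
    \pr{\zeta_U < \zeta_{-1}} \;=\; \frac{j/R}{\,1 - (1 - j/R)^{U+1}\,} \;\geq\; \frac{j}{R} \;\geq\; \frac{A_5}{\sqrt R}.
\end{equation*}
For the time constraint, the strong drift makes $\zeta_U$ concentrate well below $\beta$: from the martingale $N'(t) - \mu j t$ and optional stopping, $\ex{\zeta_U} = U/(\mu j)$, which using $j\geq A_5\sqrt R$ and $\mu\beta\geq 100$ is a small fraction of $\beta$; moreover the displacement $N'(\beta)-N'(0)$ is a difference of two independent $\Poisson$ variables with mean $\mu j\beta$ exceeding $U$ by a comfortable margin, so a Bernstein/second-moment bound controls $\pr{\zeta_U>\beta,\ \zeta_U<\zeta_{-1}}$ by at most $0.01$ times $\pr{\zeta_U<\zeta_{-1}}$ (here one should condition on $\{\zeta_U<\zeta_{-1}\}$, which only speeds the chain up, rather than subtracting a fixed quantity). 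Multiplying, $\prise{j} \geq 0.99\,(j/R) \geq 0.99\,A_5/\sqrt R$.

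\emph{Main obstacle.} The delicate point is the time truncation in Step 3. Because the threshold $\threshone$, the drift $\mu j$, and the fluctuations of $N'$ all scale like powers of $\sqrt R$, the event ``$N'$ reaches $U$ before a new low but only after time $\beta$'' does not become small relative to a \emph{fixed} constant as $R\to\infty$, so it cannot simply be union-bounded against $\pr{\zeta_U<\zeta_{-1}}$; one must argue conditionally, or choose the constants $A_5$ and $C_3$ so that $\mu j\beta$ dominates $U$ by a wide enough margin uniformly over all admissible parameters. This is exactly where the numerical hypotheses $\mu\beta\geq100$ and $j\geq A_5\sqrt R$ are used, and pinning down the constants is the bulk of the work. (For $j$ much larger than $A_5\sqrt R$ the chain races upward and the bound is immediate; the real content is the regime $j\approx A_5\sqrt R$.)
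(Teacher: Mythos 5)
Your proposal is correct and is essentially the paper's own argument: the same exact correspondence between the first $\beta$ of epoch $j$ and a birth--death walk with birth rate $\mu R$ and death rate $\mu(R-j)$, the same factorization into a gambler's-ruin factor $\geq j/R$ and a ``reach the threshold within $\beta$'' factor $\geq 0.99$ obtained from normal-type concentration of the Poisson-difference displacement (the paper uses a Berry--Esseen bound and then sets $A_5=1$). The conditional-monotonicity you flag as the main obstacle is precisely the paper's inequality \eqref{eq:transinv} stated on the complementary event; the paper justifies it by conditioning on the down-crossing happening first and invoking the strong Markov property with translation invariance (farther to go, less time remaining), which is the cleanest way to make your ``conditioning on $\{\zeta_U<\zeta_{-1}\}$ only speeds the chain up'' assertion rigorous.
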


\begin{restatable}[Upper Bound on the Probability of Another Epoch]{claim}{atwo}\label{clm:a2}
 Recall that the total number of epochs $\epochind \triangleq \maxpar{j\in \zplus: \tau_j < T_A}$. Then, taking $C_4= 0.98$, we have $\pr{ \epochind \geq j+1 \middle | \epochind \geq j} \leq 1 - C_4\prise{j}.$
\end{restatable}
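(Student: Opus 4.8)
The plan is to derive this claim as a fairly direct consequence of Claim~\ref{clm:prisebnd} together with a short coupling/sample-path argument showing that an up-crossing of the level $R + \threshone$ during epoch $j$ forces the accumulation time $T_A$ to arrive before epoch $j+1$ can begin — i.e., it forces $\epochind = j$. First I would unpack the definitions: epoch $j+1$ begins at $\tau_{j+1} = \minpar{t>0: N(t) \leq R-(j+1)}$, so the event $\{\epochind \geq j+1\}$ is exactly the event $\{\tau_{j+1} < T_A\}$, meaning $N(t)$ drops all the way to $R-(j+1)$ before the $(R+1)$-th server ever turns on.

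The heart of the argument is the following implication: \emph{if, during epoch $j$, the number of jobs $N(t)$ ever reaches $R + \threshone$ and stays at or above level $R+1$ for a full setup time $\beta$, then the $(R+1)$-th server turns on, so $T_A$ occurs and $\epochind = j$}. Recall from the Construction section that $Z(t) = \min(k, \min_{s\in[t-\beta,t]}N(s))$, so server $R+1$ is on at time $t$ precisely when $N(s) \geq R+1$ throughout $[t-\beta,t]$. Now, $\threshone = C_3\shs$ is chosen to be large relative to $\mu\beta$; during the window of length $\beta$ following an up-crossing of $R+\threshone$, the number of departures is stochastically dominated (via Claim~\ref{clm:coupling}, with the trivial upper bound $Z(t)\le k$, or more simply $Z(t)\le R+j+O(\sqrt R)$) by a Poisson count whose mean is $O(\mu\beta k)$ — wait, that's too crude; the right comparison is with arrival rate $k\lambda = \mu R$ and departure rate at most $\mu R$ during the accumulation phase (by definition $Z(t)\le R$ before $T_A$), so in fact $N(t)$ is a \emph{non-negative-drift} random walk there, and the probability it falls by more than $\threshone - 1 = \Theta(\shs)$ over a window of length $\beta$ is a constant bounded away from $1$. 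This is exactly the kind of estimate furnished by Claim~\ref{clm:dcprob} (second case, the upper bound on down-crossing probability, applied with $Z(t)\le R$): conditioned on being at height $R+\threshone$, the chance of dropping below $R+1$ within time $\beta$ is at most $2\Phi(-c/\sqrt 2) + \tfrac{1}{100}$ for the appropriate constant $c = C_3 - O(1/\shs)$, which can be made strictly less than $1$ by the choice of $C_3$.

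Assembling: condition on $\{\epochind \geq j\}$ and on the up-crossing event of probability $\prise{j}$ (so we are at level $R+\threshone$ at some stopping time inside epoch $j$). On that event, with probability at least $1 - \bigl(2\Phi(-c/\sqrt2)+\tfrac1{100}\bigr) =: q > 0$ the process stays at or above $R+1$ for the ensuing $\beta$ time units, hence $T_A$ occurs and epoch $j+1$ never happens. Therefore $\pr{\epochind = j \mid \epochind \geq j} \geq q\cdot\prise{j}$, i.e. $\pr{\epochind \geq j+1 \mid \epochind\geq j} \leq 1 - q\,\prise{j}$, and tracking constants one checks $q \geq C_4 = 0.98$ under the standing assumptions $R \geq 100$, $\mu\beta\geq 100$ (which is where the slack in $C_3$ and the $\tfrac{1}{100}$ Berry--Esseen error get absorbed). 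The main obstacle I anticipate is the bookkeeping in the coupling: one must be careful that the ``stay above $R+1$ for time $\beta$'' event is measured starting from the \emph{up-crossing} stopping time (not the start of the epoch), that the departure-rate bound $Z(t)\le R$ genuinely holds throughout — which it does, since we are still before $T_A$ — and that one is allowed to condition on $\{\epochind \geq j\}$ without disturbing the Markov/strong-Markov structure needed to invoke Claim~\ref{clm:dcprob} at the up-crossing time. Getting the constant exactly to $0.98$ rather than merely ``bounded away from $1$'' will require plugging in the worst-case parameter values and verifying the arithmetic, but no new ideas.
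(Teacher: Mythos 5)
Your proposal follows essentially the same route as the paper's proof: condition on the up-crossing of $R+\threshone$ (probability $\prise{j}$), then argue that with probability at least $0.98$ the process does not fall back to level $R$ before the $(R+1)$-th server completes setup, using the $Z(t)\le R$ coupling and the second case of Claim~\ref{clm:dcprob} with the worst-case window length $\beta$. The paper phrases the survival event via the remaining setup time $\rem{R+1}(u)$ at the up-crossing time $u$ and then uses monotonicity to replace it by $\beta$; you instead wait a full $\beta$ after $u$ and invoke $Z(t)=\min\bigl(k,\min_{s\in[t-\beta,t]}N(s)\bigr)$. The probability estimate and the constant are the same either way.

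One step does need repair. You justify applying Claim~\ref{clm:dcprob} by asserting that $Z(t)\le R$ holds throughout the $\beta$-window after $u$ ``since we are still before $T_A$.'' But on precisely the event you are trying to capture, $T_A$ occurs \emph{inside} that window (indeed $T_A\le u+\rem{R+1}(u)\le u+\beta$), and after $T_A$ the bound $Z(t)\le R$ can fail, so the coupling inequality $\ncouple(t)\le N(t)$ is not guaranteed on all of $[u,u+\beta]$; consequently the probability of your event ``$N$ stays $\ge R+1$ on $[u,u+\beta]$'' is not directly lower-bounded by the coupled survival probability. The fix is to run the comparison only up to $\mintwo{T_A}{u+\beta}$ and to observe that nothing after $T_A$ matters: if the coupled down-crossing time exceeds $u+\beta$, then either $T_A\le u+\beta$, in which case $N$ stayed above $R$ up to $T_A$, so $\tau_{j+1}>T_A$ and $\epochind=j$ regardless of the subsequent behavior; or $T_A>u+\beta$, in which case $Z\le R$ on the whole window, the coupling forces $N\ge R+1$ throughout $[u,u+\beta]$, and hence $Z(u+\beta)\ge R+1$, a contradiction. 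This is exactly what the paper's use of $\rem{R+1}(u)$ accomplishes; with that adjustment your argument and the constant $C_4=0.98$ go through unchanged.
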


\begin{restatable}[Upper Bound on the Integral Over an Epoch]{claim}{aone}\label{clm:a1}
Let $\tau_j \triangleq \minpar{t\geq0: N(t) \leq R - j}$, $T_A \triangleq \minpar{t \geq0: Z(t) = R+1}$, and let $\epochind \triangleq \maxpar{i\in \zplus: \tau_i < T_A}$. Then,
    \begin{equation*}
        \ex{\int_{\tau_j}^{\mintwo{\tau_{j+1}}{T_A}}{[N(t) - R] \dd t}\middle | \epochind \geq j} \leq B_1 \shs \cdot \ex{\mintwo{\tau_{j+1}}{T_A} - \tau_j \middle | \epochind \geq j} + C_2 \beta^2 \mu j \prise{j} ,
    \end{equation*}
    where $B_1=3.6$ and $C_2= \frac{1}{2\cdot 0.98} > 0.511$.
\end{restatable}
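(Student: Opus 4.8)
The plan is to apply the Intervening Stopping Time Lemma (Lemma~\ref{lem:waldstop}) a \emph{second} time, now \emph{inside} a single epoch $j$, by using the ``new lows'' of $N(t)$ as intervening stopping times at a finer granularity, or, alternatively, by partitioning the epoch according to whether $N(t)$ has made an excursion above the high threshold $R + C_3\shs$. First I would set the starting time $T_0 = \tau_j$, the ending time $P = \mintwo{\tau_{j+1}}{T_A}$, and the process $Y_t = N(t) - R$. The key structural fact I would exploit is the one emphasized in the ``reduced problem'' discussion in Section~\ref{sec:keyideas}: throughout epoch $j$, the number of busy servers satisfies $\mu(R-j) \leq \mu Z(t) \leq \mu R$, so that $N(t)$ behaves (sample-path-wise) like an M/M/1 queue whose departure rate is sandwiched between $\mu(R-j)$ and $\mu R = k\lambda$. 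This lets me invoke Claim~\ref{clm:dcbnd} (the Coupling Integral Bound) with the lower bound $Z(t) \geq R-j$ and threshold $h = R$: over any sub-interval of length $\ell$ ending at the next down-crossing below $R$, the contribution to the integral is at most $\ell \cdot [N(\tau)-R]^+ + \frac12 \mu j \ell^2$.

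The main steps, in order: (1) Decompose the epoch at the stopping time $\sigma_j \triangleq \infpar{t \geq \tau_j : N(t) \geq R + C_3\shs}$ — the first up-crossing of the high threshold within the epoch — so the epoch splits into a ``pre-peak'' piece $[\tau_j, \mintwo{\sigma_j}{P})$ and a ``post-peak'' piece $[\mintwo{\sigma_j}{P}, P)$. (2) On the pre-peak piece, $N(t) - R \leq C_3\shs$ deterministically, so the integral there is at most $C_3\shs \cdot (\text{length of that piece})$, which is absorbed into the $B_1 \shs \cdot \ex{\cdots}$ term (with $B_1$ chosen larger than $C_3$ plus whatever slack the post-peak analysis demands). (3) On the post-peak piece — which has positive measure only on the event that an up-crossing happened, i.e. with probability $\prise{j}$ — I apply Claim~\ref{clm:dcbnd} (or a further application of Lemma~\ref{lem:waldstop} using new lows below $R$) starting from the peak. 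Since $N(\sigma_j) - R$ is of order $\shs$ and the relevant interval length until return below $R$ is, in expectation, of order $\beta$ (one setup time, because that is the timescale on which the extra servers we're counting toward $T_A$ would come on), the quadratic term $\frac12 \mu j \ell^2$ contributes order $\mu j \beta^2$, which is exactly the $C_2 \beta^2 \mu j$ scaling, weighted by the probability $\prise{j}$ that we ever entered this regime at all. Tracking constants carefully through the reflection/Berry–Esseen estimates of Claim~\ref{clm:dcprob} and the martingale optional-stopping step of Claim~\ref{clm:dcbnd} is what pins down $B_1 = 3.6$ and $C_2 = \frac{1}{2 \cdot 0.98}$.

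The hard part will be controlling the post-peak contribution \emph{conditioned on} $\epochind \geq j$ and getting the probability weighting $\prise{j}$ to appear multiplicatively on the $\beta^2 \mu j$ term rather than additively. This requires being careful that the length of the post-peak excursion is bounded in expectation by something like $\ell = O(\beta)$ even though $N$ could, in principle, wander; here I would lean on the fact that within the epoch the drift of $N(t)$ is at most $-\mu j < 0$ once $Z(t) \geq R - j$ is combined with... wait, actually the drift is $k\lambda - \mu Z(t) \geq k\lambda - \mu R = 0$ only gives non-positive drift when $Z = R$; the genuine negative drift $-\mu j$ holds when $Z(t) = R$ is replaced by the \emph{lower} bound being loose — so I must instead use that an up-crossing to $R + C_3\shs$ forces enough jobs to have been present long enough that additional servers turn on, accelerating the return. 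Reconciling ``epoch ends when $N$ drops to $R-j-1$ or $T_A$ is reached'' with ``the excursion above $R+C_3\shs$ is short'' is the delicate bookkeeping, and I expect it to consume most of the proof. The factor $\frac{1}{2 \cdot 0.98}$ in $C_2$ strongly suggests the $\frac12$ comes from the quadratic $\frac12 \mu j \ell^2$ in Claim~\ref{clm:dcbnd} and the $0.98 = C_4$ from the conditioning adjustment tying $\prise{j}$ to the probability of another epoch, so the proof should end by matching these pieces against Claim~\ref{clm:prisebnd}'s lower bound $\prise{j} \geq 0.99 A_5/\sqrt{R}$.
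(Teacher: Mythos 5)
You have the right skeleton---the paper also proves this claim by a second application of Lemma~\ref{lem:waldstop} inside the epoch, bounding the below-threshold portion trivially by $C_3\sqrt{\mu\beta R}$ times its length and using Claim~\ref{clm:dcbnd} above the threshold---but two of your steps have genuine gaps. First, your single pre-peak/post-peak split cannot work as stated: after the post-peak excursion returns below $R$, the epoch need not be over (it ends only at $\tau_{j+1}$ or $T_A$), so $N(t)$ can climb back to $R + C_3\sqrt{\mu\beta R}$ and generate another excursion, each contributing another $\tfrac{1}{2}\mu j\beta^2$. The paper's intervening stopping times are the successive up-crossings $u_i^{(j)}$ of the threshold, with down-crossings $d_i^{(j)}$ of level $R$ in between---not a single peak time, and not ``new lows below $R$''; the bound $\pr{\upind \geq i+1 \middle| \upind \geq i} \leq 1-p_2$ with $p_2 = 0.98$ (proved by re-using the argument of Claim~\ref{clm:a2}) is what controls the number of excursions, and its geometric sum is exactly where the factor $1/0.98$ in $C_2 = \frac{0.5}{0.98}$ comes from. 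Your one-shot split would give at best $C_2 = \tfrac{1}{2}$ and, more importantly, has no mechanism bounding how many excursions occur.

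Second, your explanation of why an excursion above the threshold lasts only order $\beta$ is wrong in its mechanism. Before $T_A$ no server beyond the $R$-th can be on, so the return is not ``accelerated'' by servers turning on. The correct fact is deterministic: at an up-crossing time $u_i$ we have $N(u_i) > R$ and $u_i < T_A$, hence the $(R+1)$-th server is in setup with remaining setup time $Y_{R+1}(u_i) \leq \beta$, and the fall $[u_i, \mintwo{d_i}{T_A})$ must end by time $u_i + Y_{R+1}(u_i)$---either $N$ drops to $R$ (cancelling that setup and triggering $d_i$) or the setup completes and $T_A$ occurs. This is what licenses applying Claim~\ref{clm:dcbnd} with horizon $\ell = Y_{R+1}(u_i) \leq \beta$, producing the $\tfrac{1}{2}\mu j\beta^2$ term. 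Relatedly, the linear term $C_3\sqrt{\mu\beta R}\cdot Y_{R+1}(u_i)$ that Claim~\ref{clm:dcbnd} produces still has to be folded into the $B_1\sqrt{\mu\beta R}\cdot\ex{\mintwo{\tau_{j+1}}{T_A}-\tau_j \middle| \epochind \geq j}$ shape of the claim; the paper does this by observing that with probability at least $p_2$ the fall lasts the full remaining setup time, so Markov's inequality gives $Y_{R+1}(u_i) \leq \frac{1}{p_2}\ex{\mintwo{d_i}{T_A}-u_i \middle| \cdot}$, whence $B_1 = C_3/p_2$. Your ``$C_3$ plus slack'' remark gestures at this, but without that step the post-peak linear contribution never takes the required form, and the constants $B_1 = 3.6$, $C_2 = \frac{1}{2\cdot 0.98}$ cannot be pinned down.
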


\subsubsection{Proof of Lemma~\ref{lem:UBintTA} Assuming Claims~\ref{clm:prisebnd}, ~\ref{clm:a2} and~\ref{clm:a1}.}
Before going further, we show how to complete the proof of Lemma~\ref{lem:UBintTA}, assuming the three prior claims. Applying Lemma~\ref{lem:waldstop}, we find that
\begin{align*}
    \ex{\int_0^{T_A}{[N(t) - R] \dd t}} &\leq B_1 \shs \cdot \ex{T_A} + C_2 \beta^2 \mu \sum_{j=1}^{R}{j \prise{j} \prod_{i=1}^{j-1}{\left(1 - C_4 \prise{j} \right)}}\\
    &\leq B_1 \shs \cdot \ex{T_A} + \frac{C_2}{C_4} \beta^2 \mu \left[\sum_{j=1}^{R}{j C_4 \prise{j} \prod_{i=1}^{j-1}{\left(1 - C_4 \prise{j} \right)}} \right]\\
    &= B_1 \shs \cdot \ex{T_A} + \frac{C_2}{C_4} \beta^2 \mu \left[\sum_{j=1}^{R}{\prod_{i=1}^{j}{\left(1 - C_4 \prise{j} \right)}} \right],
\end{align*}
where we have used the ``expectation as a sum of tails'' trick. 

Applying Claim~\ref{clm:prisebnd}, we find that
\begin{equation}\label{eq:UBintTA_fin1}
    \sum_{j=1}^{R}{\prod_{i=1}^{j}{\left(1 - C_4 \prise{j} \right)}} \leq  \sum_{j=1}^{R}{\left(1 - \frac{0.99 C_4 A_5}{\sqrt{R}}\right)^{[j - A_5 \sqrt{R}]^+}} \leq \sum_{j=1}^{\infty}{\left(1 - \frac{0.99 C_4 A_5}{\sqrt{R}}\right)^{[j - A_5 \sqrt{R}]^+}}.
\end{equation}
Bounding this as a Geometric sum, we obtain
\begin{align*}
    \eqref{eq:UBintTA_fin1} = A_5 \sqrt{R} + \sum_{j=0}^{\infty}{\left(1 - \frac{0.99 C_4 A_5}{\sqrt{R}}\right)^j}= A_5 \sqrt{R} + \frac{1}{0.99 C_4 A_5} \sqrt{R}.
\end{align*}
Returning to our original inequality, we obtain that
\begin{align*}
    \ex{\int_0^{T_A}{[N(t) - R] \dd t}} \leq B_1 \shs \cdot \ex{T_A} + \frac{C_2}{C_4}\left( A_5 + \frac{1}{0.99 C_4 A_5}\right) \beta^2 \mu \sqrt{R}.
\end{align*}
Noting that $A_5 =  1$ and taking $B_2\triangleq 1.04 > \frac{C_2}{C_4}\left( A_5 + \frac{1}{0.99 C_4 A_5}\right)$, we finish the proof of Lemma~\ref{lem:UBintTA}. \hfill \Halmos

\subsubsection{Proof of Claim~\ref{clm:prisebnd}}\label{sec:prisebnd}
From the above, assuming the preconditions of Lemma~\ref{lem:waldstop} (Claims~\ref{clm:a2} and~\ref{clm:a1}) as well as the helper claim Claim~\ref{clm:prisebnd}, we have proven Lemma~\ref{lem:UBintTA}. It thus suffices to prove these three claims. We begin with Claim~\ref{clm:prisebnd}.


To prove Claim~\ref{clm:prisebnd}, we show a more general claim: that, for $j \geq A_5 \sqrt{R}$,
\begin{equation}\label{eq:prisept1}
    \prise{j} \geq 0.99\frac{j}{R}.
\end{equation}

\subsubsection{\torp{Proof of \eqref{eq:prisept1}: Lower Bound on $\prise{j}$.}{Lower Bound on p\_rise\^j}}
We begin with a simple probability manipulation: 
\begin{align}
    \prise{j} &\triangleq \pr{\text{$N(t) \geq C_3 \shs$ at some point during epoch $j$}\middle | \filt{\tau_j}}\nonumber\\
    &\geq \pr{\text{$N(t) \geq C_3 \shs$ during the interval $\left[\tau_j, \mintwo{\tau_j + \beta}{\tau_{j+1}} \right]$}\middle | \filt{\tau_j}}.\nonumber
\end{align}
From here, we make with a useful observation: since there are no server \emph{in setup} at the beginning of an epoch (as we have just turned off a server), no servers can complete setup in the first $\setuptime$ time of an epoch.
Thus, the number of busy servers $Z(t) \leq R - j$ during this time, and, by Claim~\ref{clm:coupling}, the coupled process
\begin{equation*}
    \ncouple(t) \triangleq N(\tau_j) + A(\tau_j, t) - \depOp{R-j}{(\tau_j, t)} 
\end{equation*}
must be a lower bound on $N(t)$, during the interval $[\tau_j, \tau_j + \beta]$. Moreover, the number of busy servers $Z(t)$ can not be smaller than $R-j$ until the beginning of epoch $j+1$ either. Thus, we find that the behavior of $N(t)$ corresponds \emph{exactly} with the behavior of $\ncouple(t)$ during the interval $[\tau_j, \mintwo{\tau_{j+1}}{\tau_j + \setuptime}]$.

We now use this coupled process to analyze our original probability. Define the up-crossing time $\tup$ as
\begin{equation*}
    \tup \triangleq \minpar{t > 0: \ncouple(\tau_j + t) \geq R + C_3 \shs}.
\end{equation*}
Likewise, define the down-crossing time $\tdown$ as 
\begin{equation*}
    \tdown \triangleq \minpar{t > 0: \ncouple(\tau_j + t) \leq R - (j+1)}.
\end{equation*}
It follows that
\begin{align*}
    &\pr{\text{reach $N(t) \leq R - (j+1)$ during the interval $[\tau_j, \mintwo{\tau_j + \beta}{\tau_{j+1}}]$}\middle | \filt{\tau_j}}\\
    &\;\; = \pr{\text{reach $\ncouple(t - \tau_j) \leq R - (j+1)$ during the interval $[\tau_j, \mintwo{\tau_j + \beta}{\tau_{j+1}}]$}\middle | \filt{\tau_j}}\\
    &\;\;= \pr{\tup \leq \beta, \tup < \tdown}\\
    &\;\;= \pr{\tup \leq \beta} - \pr{\tup \leq \beta, \tup \geq \tdown}\\
    &\;\; = \pr{\tup \leq \beta} - \pr{\tup \leq \beta \middle | \tup \geq \tdown}\pr{\tup \geq \tdown}.
\end{align*}
We now observe that 
\begin{equation}\label{eq:transinv}
    \pr{\tup \leq \beta \middle | \tup \geq \tdown} \leq \pr{\tup \leq \beta},
\end{equation}
since the process has farther to go, less time to do so, and the process's behavior is translation-invariant (this last point is why we needed to analyze the coupled process instead).

Continuing from where we left off, we find that
\begin{align*}
    \prise{j} &= \pr{\tup \leq \beta} - \pr{\tup \leq \beta \middle | \tup \geq \tdown} \pr{\tup \geq \tdown}\\
    &\geq \pr{\tup \leq \beta} - \pr{\tup \leq \beta} \pr{\tup \geq \tdown}\\
    &= \pr{\tdown > \tup} \pr{\tup \leq \beta}\\
    &\geq \pr{\tdown > \infty} \pr{\tup \leq \beta}\\
    &= \frac{j}{R} \pr{\tup \leq \beta},
\end{align*}
where the last equality is a classical result on upwards-biased discrete random walks (one can think of $\ncouple$ as a discrete random walk driven by a Poisson process of rate $(k \lambda +\mu(R-j)$, where the probability that $\ncouple$ increases at a Poisson event is $\frac{k\lambda}{k\lambda + \mu(R-j)} = \frac{R}{2R - j}$).

From here, it suffices to lower bound $\pr{\tup \leq \beta}$. To begin, note
\begin{align*}
    \pr{\tup \leq \beta} & = \pr{\sup_{t \in [0,\beta)}{\ncouple(t)} \geq R + C_3 \shs}\\
    &\geq \pr{ \ncouple (\beta) \geq R + C_3 \shs}\\
    &= \pr{A(\tau_j, \tau_j + \beta) - \depOp{R-j}{\tau_j, \tau_j + \beta} \geq j + C_3 \shs}.\intertext{Noting that the number of arrivals $A(\tau_j, \tau_j + \beta)$ and the number of departures $\depOp{R-j}{[\tau_j, \tau_j + \beta]}$ are independent Poisson r.v.'s, we can apply the Berry-Esseen bound of Claim~\ref{clm:berryess} to find}
    &= 1 - \Phi\left(\frac{ \mu \beta j - j - C_3 \shs}{\sqrt{\mu \beta (2R - j)}}\right) - \frac{1}{3 \sqrt{\mu \beta (2R - j)}}\\
    &\geq 1 - \Phi\left(\frac{ 0.99 \mu \beta j- C_3 \shs}{\sqrt{2 \mu \beta R}}\right) - \frac{1}{3 \sqrt{\mu \beta R}}\\
    &= 1 - \Phi\left(-0.99\frac{j}{\sqrt{R}}\sqrt{\mu \beta} + \frac{C_3}{\sqrt{2}}\right) - \frac{1}{3 \shs}\\
    &\geq 1 - \Phi\left(-9.9 A_5 + \frac{C_3}{\sqrt{2}}\right) - \frac{1}{300}.
\end{align*}

To complete the proof, we set $A_5$ such that the final probability is $\geq 0.99$. In particular, we need
\begin{equation*}
    \Phi\left(-9.9 A_5 + \frac{C_3}{\sqrt{2}}\right) \leq \frac{2}{300},     
\end{equation*}
which is achieved when $A_5 > \frac{C_3}{9.9 \sqrt{2}} + 0.25$; choosing $A_5 = 1$ gives the result. \hfill \Halmos


\subsubsection{Proof of Claim~\ref{clm:a2}, Upper Bound on Probability of Another Epoch} 

\paragraph{Rewriting the claim.}
We begin by rewriting the probability of another epoch occurring as
\begin{align*}
    \pr{ \epochind \geq j+1 \middle | \epochind \geq j} &= 1 - \pr{\epochind = j \middle | \epochind \geq j} = 1 - \pr{T_A < \tau_{j+1} \middle | \epochind \geq j}.
\end{align*}
It thus suffices to show a bound on the probability that the accumulation phase ends in epoch $j$:
\begin{equation}\label{eq:UBintTA_prob1}
    \pr{T_A < \tau_{j+1} \middle | \epochind \geq j} \geq C_4 \prise{j}.
\end{equation}
\paragraph{Lower bound based on up-crossing and down-crossing times.} To show \eqref{eq:UBintTA_prob1}, we analyze a particular sequence of events which results in the accumulation phase ending in the current epoch, i.e. $T_A < \tau_{j+1}$. 
Specifically, we define the up-crossing time $u =\minpar{t >\tau_j: N(t) \geq R+\threshone}$ and the down-crossing time $d = \minpar{t > u: N(t) \leq R}$. 
We consider the event where (1) the up-crossing occurs during the accumulation phase ($u <T_A$) and (2) the accumulation phase ends before the next down-crossing occurs ($d > T_A$).
Symbolically, we have (at the end, recalling that $\prise{j}$ is the probability of an up-crossing occurs)
\begin{align*}
    \eqref{eq:UBintTA_prob1} \geq \pr{u < T_A < d \middle | \epochind \geq j} = \pr{d > T_A \middle | u < T_A}\pr{u <T_A} = \pr{d > T_A \middle | u < T_A} \prise{j}.
\end{align*}

\paragraph{Development of conditional probability.} To bound the conditional probability $\pr{d > T_A \middle | u < T_A}$, we condition on the filtration at time $u$, then make a coupling argument.
To begin, note that, if the number of jobs $N(t)$ does not fall to $R$ before the $(R+1)$-th server finishes setting up, then the accumulation time $T_A$ occurs exactly when the $(R+1)$-th server finishes, i.e. the accumulation time $T_A = u + \rem{R+1}(u)$. 
Furthermore, the number of busy servers $Z(t) \leq R$ at any time during the accumulation phase $t < T_A$.
Applying a basic coupling argument (Claim~\ref{clm:coupling}), we have a lower bound on $N(t)$ in the coupled process
\begin{equation*}
    \ncouple(t) \triangleq N(u) + \Pi_A\left( (u, t] \right) - \depOp{R}{(u, t]},
\end{equation*}
for any time $t \in [u, T_A]$.
Let $\dcouple \triangleq \minpar{t>u:\ncouple(t) \geq R}$ be the analogous down-crossing time in the coupled system.
Since the coupled $\ncouple(t)$ is a lower bound, the coupled down-crossing time $\dcouple \leq d$. Thus,
\begin{equation}\label{eq:UBintTA_prob2}
    \pr{d > T_A \middle | \filt{u}, u < T_A}
 = \pr{ d > u + \rem{R+1}(u) \middle | \filt{u}, u < T_A }
    \geq \pr{\dcouple > u + \rem{R+1}(u) \middle | \filt{u}, u < T_A}.
\end{equation}

\paragraph{Analyzing the coupled probability.}
Continuing, the probability that $\left\{\dcouple \geq \ell\right\}$ is decreasing in $\ell$. Thus,
\begin{align*}
    \eqref{eq:UBintTA_prob2} \geq \pr{\dcouple > u + \beta \middle | \filt{u}, u < T_A} 
    = \pr{\dcouple - u > \beta}  \geq 1 - 2\Phi\left(-\frac{C_3}{\sqrt{2}}\right) - \frac{2}{3\sqrt{ \mu \beta R}}\geq 0.98,
\end{align*}
where in the final inequalities we have applied both the down-crossing probability bound of Claim~\ref{clm:dcprob} and our assumptions.
Taking $C_4\triangleq 0.98$, we have the inter-epoch probability bound of Claim~\ref{clm:a2}. \hfill \Halmos



\subsubsection{Proof of Claim~\ref{clm:a1}, Upper Bound on the Integral Over an Epoch.}\label{sub:updown}
We now prove Claim~\ref{clm:a1}, the upper bound on the time integral over an epoch.
We do this via another application of Lemma~\ref{lem:waldstop} ---first specifying the intervening times, then completing the proof, then proving that the preconditions hold.

\begin{figure}
    \centering
    \includegraphics[scale=0.25,clip]{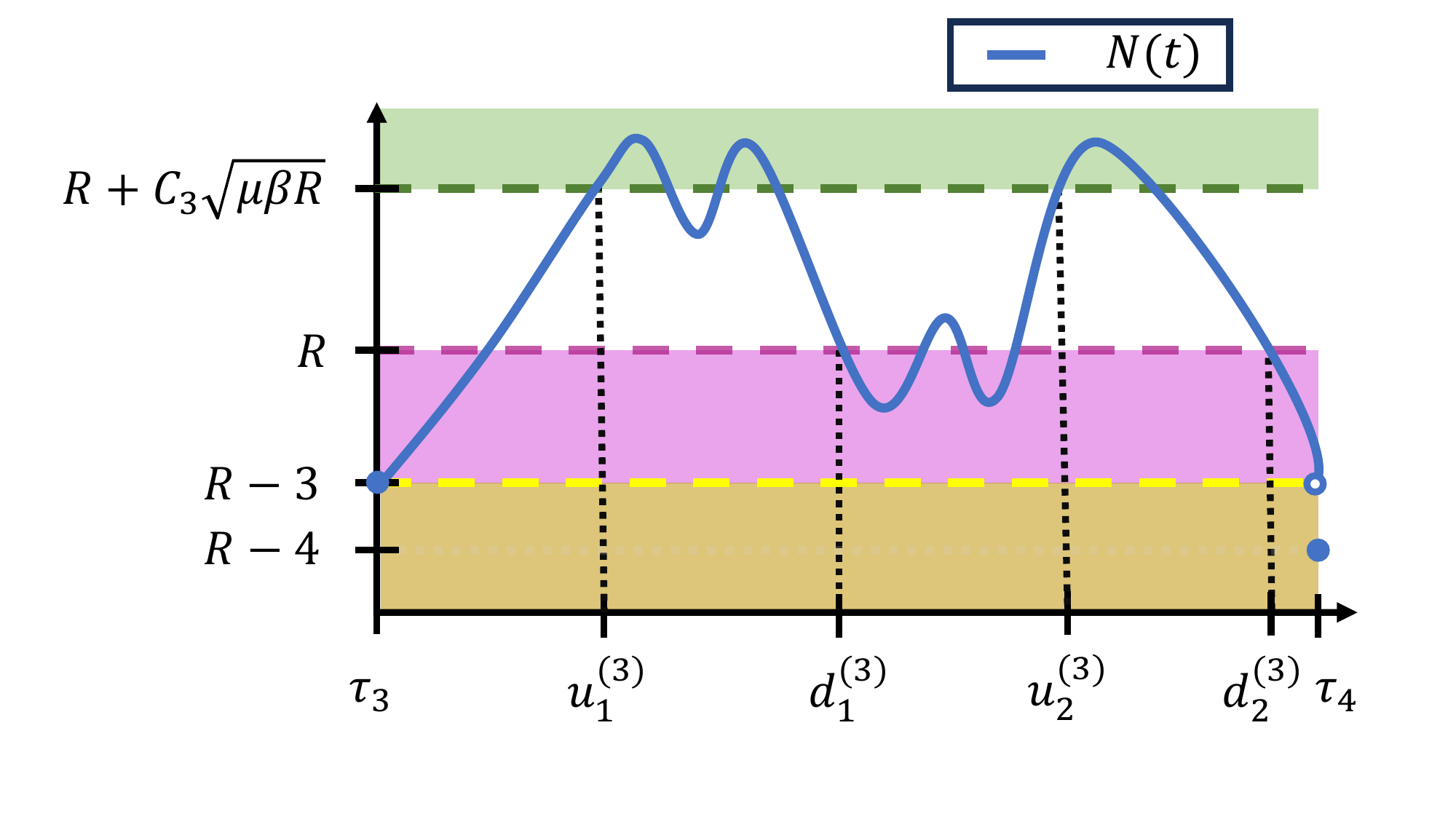}
    \vspace{-3pt}
    \caption{A depiction of the up-crossings and down-crossings defined in Section~\ref{sub:updown}. In this example, we see that the number of up-crossings in epoch $3$ is $\epochind^{(3)} = 2$ and that, in this case, epoch $3$ ends when epoch $4$ begins (i.e. at time $\tau_4$).}
    \label{fig:updown}
\end{figure}

\paragraph{Definition of up-crossings and down-crossings.} Let the $0$-th down-crossing time in epoch $j$ occur at time $\tau_j$, i.e. let $d_0^{(j)} \triangleq \tau_j.$
Next, define the first up-crossing in epoch $j$ as the first time during epoch $j$ that the total number of jobs $N(t)$ exceeds $ R+ \threshone$, i.e.
\begin{equation*}
    u_1^{(j)} \triangleq \minpar{t > \tau_j : N(t) \geq R + \threshone}.
\end{equation*}
From here, define $i$-th down-crossing in epoch $j$ and the $i+1$-th up-crossing in epoch $j$ as
\begin{gather*}
    d_i^{(j)} \triangleq \minpar{t \geq u_i^{(j)} : N(t) \leq R},\\
    u_{i+1}^{(j)} \triangleq \minpar{t \geq d_i^{(j)} : N(t) \geq R+ \threshone},
\end{gather*}
respectively; we visualize these definitions in Figure~\ref{fig:updown}. We say the $i$-th up-crossing occurs if $u_i^{(j)} < \mintwo{T_A}{\tau_{j+1}}$ and let 
$\upind \triangleq \maxpar{i \geq 0: u_i^{(j)} < \mintwo{\tau_{j+1}}{T_A}}$
be the random number of up-crossings which occur in epoch $j$. 
We call the interval $\bigg [ d_i^{(j)}, \mintwo{u_i^{(j)}}{\mintwo{T_A}{\tau_{j+1}}} \bigg)$ the $i$-th rise, and the interval $\bigg [u_i^{(j)}, \mintwo{d_i^{(j)}}{\mintwo{T_A}{\tau_{j+1}}}\bigg)$ the $i$-th fall.  
Note that, if the $i$-th up-crossing occurs, then, by definition, $d_i < \tau_{j+1}$; this means that the $i$-th fall can always be written as $\bigg[u_i, \mintwo{T_A}{d_i} \bigg )$. 
For readability, we fix our epoch of interest and freely omit the superscript $j$ on our up-crossings and down-crossings.

\paragraph{Specification step.} With up-crossings and down-crossings defined, we are now ready to specify our application of the IST Lemma,  Lemma~\ref{lem:waldstop}. We define our starting time as $T_0 = \tau_j = d_0$, our ending time as $P = \mintwo{T_A}{ \tau_{j+1}}$, our intervening sequence as $\left(u_i\right)_{i=1}^{\infty}$, and our counting variable as $F = \upind$. 

\paragraph{Required claims.} From here, in order to apply Lemma~\ref{lem:waldstop}, we must show the following three claims:

\begin{restatable}[Upper Bound on Integral Until First Up-crossing]{claim}{aonei}\label{clm:updowninit}
    Taking $B_1 = 3.6$, the integral until $u_1$ is
\begin{equation*}
        \ex{\int_{d_0}^{\mintwo{u_1}{\mintwo{T_A}{\tau_{j+1}}}}{[N(t) - R] \dd t} \middle | \upind \geq j} \leq B_1  \shs \cdot \ex{\mintwo{u_1}{\mintwo{T_A}{\tau_{j+1}}} - \tau_j \middle | \upind \geq j}.
\end{equation*}

\end{restatable}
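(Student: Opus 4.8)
The plan is to obtain this bound directly from the defining property of the first up-crossing $u_1$, with no coupling or martingale machinery: on the entire integration interval the number of jobs has not yet reached the threshold $R+\threshone$, so the integrand $[N(t)-R]$ is bounded \emph{pointwise} by $C_3\shs$. Write $P\triangleq\mintwo{T_A}{\tau_{j+1}}$ for the end of epoch $j$, so that the interval in the claim is $[d_0,\mintwo{u_1}{P})=[\tau_j,\mintwo{u_1}{P})$. By definition $u_1=\minpar{t>\tau_j:N(t)\geq R+\threshone}$, hence for every $t<u_1$ we have $N(t)<R+\threshone$ and therefore $N(t)-R<C_3\shs$; the same bound holds a fortiori for $[N(t)-R]^+$, which is the sign-definite quantity one actually feeds into Lemma~\ref{lem:waldstop} as $Y_t$ (replacing $N(t)-R$ by its positive part only enlarges the integral, so the resulting bound still dominates the one in the claim, and $Y_t$ is genuinely nonnegative over the accumulation period).

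First I would record the sample-path inequality, valid on $\{\epochind\geq j\}$ (on which $\tau_j<T_A$, so epoch $j$ is non-degenerate),
\begin{equation*}
    \int_{d_0}^{\mintwo{u_1}{P}}{[N(t)-R]\dd t}\;\leq\;C_3\shs\cdot\left(\mintwo{u_1}{P}-d_0\right).
\end{equation*}
Taking conditional expectation given $\{\epochind\geq j\}$ (the conditioning carried through from the enclosing Claim~\ref{clm:a1}) immediately yields
\begin{equation*}
    \ex{\int_{d_0}^{\mintwo{u_1}{P}}{[N(t)-R]\dd t}\middle|\epochind\geq j}\;\leq\;C_3\shs\cdot\ex{\mintwo{u_1}{P}-\tau_j\middle|\epochind\geq j},
\end{equation*}
and it only remains to observe that $C_3\leq B_1=3.6$.

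The one point requiring care — and the closest thing to an obstacle here — is this last numerical compatibility, which has to be arranged consistently with the other epoch-level claims rather than assumed. The threshold constant $C_3$ is not free: in the proofs of Claims~\ref{clm:prisebnd} and~\ref{clm:a2} it must be taken large enough that $2\Phi(-C_3/\sqrt{2})$ lies below roughly $\tfrac{1}{100}$ after subtracting the Berry--Esseen slack $\tfrac{2}{3\shs}$ (which is at most $\tfrac{2}{300}$ since $\mu\beta R\geq10^4$), and this forces $C_3\gtrsim 3.5$. Since those claims only require $C_3$ to be at least that large, we are free to fix $C_3$ at the low end of its admissible range, giving $C_3\in[3.5,3.6]$ and in particular $C_3\leq B_1$. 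With that single choice the displayed inequality above is exactly the assertion of Claim~\ref{clm:updowninit}, and it slots in as precondition~1 of the inner application of Lemma~\ref{lem:waldstop} with $G_0(\s(\tau_j))=B_1\shs\cdot\ex{\mintwo{u_1}{P}-\tau_j\middle|\filt{\tau_j}}$. The genuine analytic work for the epoch-level bound lies not here but in the two companion claims — the integral over a full rise-and-fall cycle between consecutive up-crossings, and the bound on the probability of yet another up-crossing — which is where the coupling estimates of Claims~\ref{clm:dcbnd} and~\ref{clm:dcprob} are actually invoked.
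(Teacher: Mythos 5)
Your proof is correct and is essentially the paper's own argument: the paper likewise bounds the integrand pointwise by $\threshone$ on the interval before the first up-crossing (since $N(t)<R+\threshone$ there), multiplies by the interval length, and relies on $C_3\leq B_1=\frac{C_3}{p_2}$ for the stated constant. Your extra remarks about the positive part and the admissible range of $C_3$ are harmless refinements of the same one-line argument.
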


\begin{restatable}[Upper Bound on Integral Between Up-crossings]{claim}{aoneii}\label{clm:updowncont}
    The integral between up-crossings $u_i$ is
\begin{align*}
      \ex{\int_{u_i}^{\mintwo{u_{i+1}}{\mintwo{T_A}{\tau_{j+1}}}}{[N(t) - R] \dd t} \middle | \upind \geq i} &\leq B_1 \shs \cdot  \ex{\mintwo{u_{i+1}}{\mintwo{T_A}{\tau_{j+1}}} - u_i \middle | \upind \geq i} \\
      &\;\;\;\; + \frac{1}{2} \beta^2 \mu j.
\end{align*}
\end{restatable}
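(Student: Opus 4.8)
The plan is to split the random interval $[u_i,\mintwo{u_{i+1}}{\mintwo{T_A}{\tau_{j+1}}})$ into two pieces — the \emph{$i$-th fall} $\big[u_i,\mintwo{d_i}{\mintwo{T_A}{\tau_{j+1}}}\big)$, on which $N(t)\ge R$, and the \emph{$(i{+}1)$-th rise} $\big[\mintwo{d_i}{\mintwo{T_A}{\tau_{j+1}}},\mintwo{u_{i+1}}{\mintwo{T_A}{\tau_{j+1}}}\big)$ — and to bound the integral of $[N(t)-R]$ over each. The rise is disposed of crudely: by the definition of $u_{i+1}$, $N(t)<R+\threshone$ throughout, and since the constants are calibrated so that $C_3\le B_1$, one gets $[N(t)-R]<B_1\shs$ pathwise, so the rise contributes at most $B_1\shs$ times its own length. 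Thus all the work is in the fall.

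For the fall I would first record two structural facts. \emph{(a) The capped fall has length at most $\beta$.} During the accumulation period (before $T_A$), $N(t)$ cannot remain $\ge R+1$ for a whole interval of length $\beta$, or else server $R+1$ would complete setting up and $T_A$ would already have occurred; since $N(t)\ge R+1$ on $[u_i,d_i)$, this interval sits inside a maximal ``$N\ge R+1$'' run, which is either strictly shorter than $\beta$ or is the run containing $T_A$ — and in the latter case truncation at $\mintwo{T_A}{\tau_{j+1}}$ already caps the fall at length $\beta$. \emph{(b) On the closed fall interval, $Z(t)\ge R-j$.} Within epoch $j$ one has $N(s)\ge R-j$ for all $s$ up to $\tau_{j+1}$ (this chains the definitions of the epoch boundaries through all earlier epochs), and $d_i<\tau_{j+1}$ whenever $u_i<\tau_{j+1}$ because $N$ jumps down by unit steps; combining with $Z(t)=\mintwo{k}{\min_{s\in[t-\beta,t]}N(s)}$ gives $Z(t)\ge R-j$ on the fall. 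Facts (a) and (b) are exactly the hypotheses of the Coupling Integral Bound, Claim~\ref{clm:dcbnd}, with $\tau=u_i$, threshold $h=R$, horizon $\ell=\beta$, and server deficit $j$; since $N(u_i)\le R+\threshone+1$, it yields
\begin{equation*}
 \ex{\int_{u_i}^{\mintwo{d_i}{\mintwo{T_A}{\tau_{j+1}}}}{[N(t)-R]\,\dd t}\ \middle|\ \filt{u_i}}\ \le\ \beta\,[N(u_i)-R]^+ + \tfrac12\mu j\beta^2\ \le\ C_3\shs\beta + \tfrac12\mu j\beta^2 + \beta,
\end{equation*}
and the trailing $\beta$ is absorbed into $\tfrac12\mu j\beta^2$ since $\mu j\beta\ge\mu\beta\ge 100$.

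The remaining — and main — difficulty is the term $C_3\shs\beta$: it is essentially inevitable, because during the fall $N$ sits $\Theta(\shs)=\Theta(\sqrt{\mu\beta R})$ above $R$ for up to $\beta$ units of time, yet the target bound permits only $B_1\shs\cdot\ex{\text{length}}$ beyond the already-used $\tfrac12\mu j\beta^2$. The plan is to absorb $C_3\shs\beta$ into $B_1\shs$ times the expected length of the whole fall-plus-rise interval, which amounts to showing $\ex{\mintwo{u_{i+1}}{\mintwo{T_A}{\tau_{j+1}}}-u_i\mid\upind\ge i}\gtrsim\tfrac{C_3}{B_1}\beta$. Here the diffusive timescale does the work: because $N$ moves on the scale $\sqrt{\text{rate}\times\text{time}}$ with rate $\Theta(\mu R)$, traversing a distance of order $\shs$ — as both the fall (from $R+\threshone$ down to $R$) and the rise (back up to $R+\threshone$) must — costs time of order $\shs^2/(\mu R)=\Theta(\beta)$; quantifying this by optional stopping applied to $N(t)^2$ minus its compensator gives the needed lower bound on the expected length, and one checks separately that an up-crossing to level $R+\threshone$ inside a length-$\beta$ run can occur only when the drift $\mu j$ is substantial (roughly the $j\gtrsim\sqrt R$ regime of Claim~\ref{clm:prisebnd}), so the length estimate is not vacuous. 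With $B_1=3.6$ and $C_3$ pinned near $3.5$ the inequality $C_3\le B_1$ holds with just enough slack for this absorption; the rest is constant-bookkeeping, after which Claim~\ref{clm:updowncont} (together with Claim~\ref{clm:updowninit} and a bound on the number of up-crossings) feeds through one more application of Lemma~\ref{lem:waldstop} into the epoch-level estimate Claim~\ref{clm:a1}.
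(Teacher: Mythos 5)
Your decomposition into the $i$-th fall and the rise, the crude bound on the rise, and the use of Claim~\ref{clm:dcbnd} on the fall all match the paper's skeleton, and you have correctly isolated the real difficulty: the linear-in-horizon term coming out of Claim~\ref{clm:dcbnd}. But your resolution of that difficulty does not work. Taking the horizon $\ell=\beta$ produces the term $\threshone\cdot\beta$, and absorbing it into $B_1\shs\cdot\ex{\mintwo{u_{i+1}}{\mintwo{T_A}{\tau_{j+1}}}-u_i\mid \upind\ge i}$ requires $\ex{\text{interval length}\mid\upind\ge i}\ge \tfrac{C_3}{B_1}\beta\approx 0.98\beta$, and that lower bound is false. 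The fall is contained in $[u_i,\,u_i+\rem{R+1}(u_i))$: if $N$ never returns to $R$, server $R+1$ completes setup and $T_A$ ends the interval after exactly $\rem{R+1}(u_i)$ more time (this is the event of probability at least $0.98$), and in that case there is no rise at all. The remaining setup time $\rem{R+1}(u_i)$ can be far smaller than $\beta$, because the climb from $R+1$ to $R+\threshone$ may consume most of server $(R+1)$'s setup time; for instance with drift $\mu j$, $j\approx\sqrt R$ and $\mu\beta=100$, the climb typically takes about $C_3\beta/\sqrt{\mu\beta}\approx 0.35\beta$, and it can take nearly all of $\beta$. So the conditional expected interval length is not bounded below by $0.98\beta$, nor by any fixed fraction of $\beta$ uniformly over the states at $u_i$. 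Your diffusive-timescale argument bounds the time for $N$ to traverse a distance of order $\shs$, but the fall is typically terminated by $T_A$, not by a return to level $R$, so no such traversal need occur; and the aside that up-crossings only happen when $j\gtrsim\sqrt R$ cannot be assumed either --- Claim~\ref{clm:prisebnd} only lower-bounds $\prise{j}$ in that regime, while up-crossings occur with positive probability for every $j$ and the claim must hold for all of them.

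The repair is to feed the correct horizon into Claim~\ref{clm:dcbnd}, which is what the paper does. Since the interval $[u_i,\mintwo{d_i}{T_A})$ coincides with $[u_i,\mintwo{d_i}{u_i+\rem{R+1}(u_i)})$, apply the coupling bound with $\ell=\rem{R+1}(u_i)$ to get $\tfrac12\mu j\beta^2+\threshone\cdot\rem{R+1}(u_i)$ (the quadratic term still uses $\rem{R+1}(u_i)\le\beta$). Then invoke the state-independent estimate from the proof of Claim~\ref{clm:a2}: $\pr{d_i>T_A\mid\filt{u_i},\upind\ge i}\ge p_2=0.98$, so on an event of conditional probability at least $p_2$ the fall lasts the full $\rem{R+1}(u_i)$, whence $\rem{R+1}(u_i)\le\tfrac1{p_2}\ex{\mintwo{d_i}{T_A}-u_i\mid\filt{u_i}}$. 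This converts the problematic term into $\tfrac{C_3}{p_2}\shs$ times the expected fall length, and taking $B_1=\tfrac{C_3}{p_2}$ (this is exactly how the paper's $B_1=3.6$ arises) and adding the rise contribution yields the claim.
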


\begin{restatable}[Upper Bound on Probability of Another Up-crossing]{claim}{aoneiii}\label{clm:updownprob}
    Recall that $\prise{j}$ is the probability that the number of jobs $(N(t)\geq \threshone$ at some point during epoch $j$, given that epoch $j$ occurs. 
    Then, $\pr{\upind > 0} = \prise{j}$, and, for all counts $i \geq 1$ and $p_2 = 0.98$, we have $\pr{\upind \geq i+1 \middle | \upind \geq i} \leq 0.02 = 1 - p_2.$
\end{restatable}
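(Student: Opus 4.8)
The plan is to handle the two assertions of Claim~\ref{clm:updownprob} separately. The identity $\pr{\upind > 0} = \prise{j}$ is essentially a restatement of definitions: by construction $\{\upind > 0\} = \{u_1 < \mintwo{T_A}{\tau_{j+1}}\}$, which is precisely the event that the number of jobs $N(t)$ reaches level $R + \threshone$ at some point during epoch $j$, and (conditioned on epoch $j$ occurring, which is the implicit conditioning here since $\upind$ is the counting variable for the application inside epoch $j$) this is exactly the event whose probability defines $\prise{j}$ in~\eqref{eq:prise}. So the substance is the second bound, $\pr{\upind \ge i+1 \mid \upind \ge i} \le 1 - p_2 = 0.02$ for $i \ge 1$, which I would prove in two stages: first a purely deterministic containment, and then a probabilistic estimate obtained from the coupling machinery already in place.

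For the deterministic step, I would fix $i \ge 1$ and work on $\{\upind \ge i\}$, so that $u_i < \mintwo{T_A}{\tau_{j+1}}$; then $N(u_i) \ge R + \threshone > R+1$ and, since the accumulation phase is not yet over, server $R+1$ must be \emph{in setup} at $u_i$ (otherwise $T_A \le u_i$), having been continuously in setup since the last time $s_i \le u_i$ at which $N$ rose from $R$ to $R+1$. I claim $\{\upind \ge i+1\} \subseteq \{d_i \le u_i + \beta\}$. Indeed, if $d_i > u_i + \beta$, then by definition of the $i$-th down-crossing $N(t) \ge R+1$ on $(u_i, u_i+\beta]$, and combining this with $N \ge R+1$ on $(s_i, u_i]$ gives $N \ge R+1$ throughout $[s_i, s_i+\beta]$; hence server $R+1$ completes its setup by $s_i + \beta \le u_i + \beta$, so $T_A \le u_i + \beta < d_i$. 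But $\{\upind \ge i+1\}$ requires $d_i < u_{i+1} < T_A$, a contradiction, so the containment holds.

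For the probabilistic step, it remains to bound, on $\{\upind \ge i\}$, the probability $\pr{d_i \le u_i + \beta \mid \filt{u_i}}$ by $0.02$. Throughout the accumulation phase $[u_i, T_A)$ the number of busy servers satisfies $Z(t) \le R$, so by Claim~\ref{clm:coupling} the critically-loaded M/M/1 process $\ncouple(t) \triangleq N(u_i) + \Pi_A( (u_i,t] ) - \depOp{R}{(u_i,t]}$ (arrival rate $k\lambda$, departure rate $\mu R = k\lambda$) stays weakly below $N$ up to $T_A$; since the containment already forces $d_i < T_A$ whenever $\{\upind \ge i+1\}$ holds, the down-crossing time of $\ncouple$ below level $R$ is then at most $d_i - u_i$, so $\pr{\upind \ge i+1 \mid \filt{u_i}}$ is at most the probability that $\ncouple$ drops below $R$ within time $\beta$. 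As $N(u_i) - R \ge \threshone = C_3 \shs$ and this probability is decreasing in the starting gap, the down-crossing estimate behind the second (upper-bound) case of Claim~\ref{clm:dcprob}, applied with $h = R$ and $\ell = \beta$, bounds it by $2\Phi(-C_3/\sqrt{2}) + \frac{1}{100} \le 0.02$, the last step using that the fixed constant $C_3$ is large enough that $2\Phi(-C_3/\sqrt{2}) \le 0.01$. Taking expectations over $\filt{u_i}$ then yields $\pr{\upind \ge i+1 \mid \upind \ge i} \le 0.02 = 1-p_2$, as claimed.

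I expect the main obstacle to be the deterministic containment $\{\upind \ge i+1\} \subseteq \{d_i \le u_i + \beta\}$: this is the one place where one must convert a statement about the sample-path-hidden remaining setup clock of server $R+1$ (encoded through $\{d_i < T_A\}$) into a clean statement about the trajectory of $N$ near $u_i$, and getting it right requires carefully tracking when server $R+1$'s current \emph{uninterrupted} setup attempt began and using that $N$ stays at least $R+1$ for the entire $i$-th fall. Once that containment is secured, the remaining estimate is a routine application of Claims~\ref{clm:coupling} and~\ref{clm:dcprob}, in the same spirit as the proofs of Claims~\ref{clm:prisebnd} and~\ref{clm:a2}.
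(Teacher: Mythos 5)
Your proposal is correct and follows essentially the same route as the paper: the paper handles the second bound by observing that $\{\upind \ge i+1\}$ forces the $i$-th down-crossing to occur before $T_A$ and then citing the state-independent bound $\pr{d_i < T_A \mid \upind \ge i}\le 1-C_4 = 0.02$ already established in the proof of Claim~\ref{clm:a2}, which is exactly the containment-plus-coupling-plus-Claim~\ref{clm:dcprob} computation you re-derive from scratch (your deterministic step that server $R+1$ has been in uninterrupted setup since the last rise from $R$ to $R+1$, so $T_A \le u_i + \beta$ unless $d_i \le u_i+\beta$, is the paper's "$T_A = u + \rem{R+1}(u)$ and $\rem{R+1}(u)\le\beta$" argument written out in sample-path form). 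The only quibble is constant bookkeeping: the paper uses the sharper error term $\frac{2}{3\sqrt{\mu\beta R}}\le\frac{1}{150}$ rather than the $\frac{1}{100}$ you take from the "in particular" form of Claim~\ref{clm:dcprob}, so it needs only $2\Phi\left(-C_3/\sqrt{2}\right)\le 0.0133$ rather than your $0.01$ — a harmless difference given that $C_3$ is never pinned down explicitly.
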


\subsubsection{Proof of Claim~\ref{clm:a1}, Assuming Claims~\ref{clm:updowninit},~\ref{clm:updowncont}, and~\ref{clm:updownprob}.} 
Once again, before we move on to proving these claims, we show that they indeed suffice to prove Claim~\ref{clm:a1}. 
By Lemma~\ref{lem:waldstop}, taking $C_2 \triangleq \frac{0.5}{p_2},$
\begin{align*}
    &\ex{\int_{\tau_j}^{\mintwo{T_A}{\tau_{j+1}}}{[N(t) - R] \dd t}\middle | \epochind \geq j}\\ 
    &\;\;\leq B_1  \shs \cdot \ex{{\mintwo{T_A}{\tau_{j+1}}} - \tau_j \middle | \epochind \geq j}+ \prise{j} 0.5 \beta^2 \mu j \sum_{i=1}^{\infty}{   \left(1 - p_2\right)^{i-1}}\\
    &\;\;= B_1  \shs \cdot \ex{{\mintwo{T_A}{\tau_{j+1}}} - \tau_j \middle | \epochind \geq j} + \prise{j} \frac{0.5}{p_2} \beta^2 \mu j. \tag*{\Halmos}
\end{align*}

\subsubsection*{Proofs of Claims~\ref{clm:updowninit},~\ref{clm:updowncont}, and~\ref{clm:updownprob}.}
All that remains to be proven are our three aforementioned claims. 
\paragraph{Proof of Claim~\ref{clm:updowninit}: Upper Bound on Integral until First Up-crossing.}
Proving this claim is quite simple. 
In fact, we now prove a far more general claim, that the integral from a down-crossing to the next up-crossing
\begin{equation}\label{eq:downbnd}
    \int_{d_i}^{\mintwo{u_i}{\mintwo{T_A}{\tau_{j+1}}}}{[N(t) - R] \dd t}\leq \threshone \cdot \left[\mintwo{T_A}{\tau_{j+1}} - d_i \right].
\end{equation}
To see this, note that, at any point between a down-crossing and up-crossing, the total number of jobs $N(t)$ must be strictly less than $R+\threshone$. Apply this to the $0$-th down-crossing and we have the claim. \hfill \Halmos

\paragraph{Proof of Claim~\ref{clm:updowncont}: Upper Bound on Integral Between Up-crossings.}
This proof is a bit more involved. 
We separate the interval $\bigg[u_i, \mintwo{u_{i+1}}{\mintwo{T_A}{\tau_{j+1}}} \bigg )$ into the $i$-th fall and the $i$-th rise, as discussed previously.  For the rising portion, we can simply apply the simple bound from $\eqref{eq:downbnd}$. For the falling portion, we apply the integral coupling claim, Claim~\ref{clm:dcbnd}. In particular, note that $Z(t)\geq R - j$ until time $\tau_{j+1}$ and that the interval $[u_i, \mintwo{d_i}{T_A})$ is equivalent to the interval $[u_i, \mintwo{d_i}{u_i + Y_{R+1}(u_i)})$. Applying Claim~\ref{clm:dcbnd},
\begin{align*}
    \ex{ \int_{u_i}^{\mintwo{d_i}{T_A}}{[N(t) - R] \dd t}\middle | \state(u_i)}
    &\leq  \frac{1}{2}\beta^2 \mu j + \left[N(u_i) - R\right] \cdot \rem{R+1}(u_i)\\
    &= \frac{1}{2}\beta^2 \mu j + \threshone \cdot \rem{R+1}(u_i).
\end{align*}
By our analysis in the proof of Claim~\ref{clm:a2}, we note that the remaining setup time $\rem{R+1}(u_i) \leq  \mintwo{d_i}{T_A}$  with probability at least $p_2 \leq \min_{\filt{u_i}}{\pr{d_i < T_A \middle |\filt{u_i}, \upind \geq i}}.$ By Markov's inequality,
\begin{equation*}
    \rem{R+1}(u_i) \leq \frac{1}{p_2} \ex{\mintwo{d_{i}}{T_A} - u_i \middle | \state(u_i)}.
\end{equation*}
Combining our bounds on the rises and falls and taking $B_1 = \frac{C_3}{p_2}$, we have Claim~\ref{clm:updowncont}. \hfill \Halmos

\paragraph{Proof of Claim~\ref{clm:updownprob}: Upper Bound on Probability of Another Up-Crossing.}
We now proceed to our final claim, concerning the up-crossing probabilities $\pr{\upind > 0}$ and $\pr{\upind \geq i+1 \middle | \upind \geq i}$.
To begin, we first note that, since the first up-crossing occurs precisely at the moment that $N(t)$ exceeds $R+\threshone$ during epoch $j$, one has 
$\prise{j} = \pr{u_1 < \mintwo{T_A}{\tau_{j+1}}} = \pr{\upind > 0}$.

To prove the second part of the claim, first observe that an $(i+1)$-th up-crossing can only occur if an $i$-th down-crossing occurs, i.e.
$\pr{\upind \geq i+1 \middle | \upind \geq i} \leq \pr{d_i < T_A \middle | \upind \geq i}.$

To bound this conditional probability, we can apply a previous result.
Recall the proof of the inter-epoch probability bound (Claim~\ref{clm:a2}).
In \eqref{eq:UBintTA_prob1}, we have already argued a bound on the conditional probability that the \textit{first} down-crossing occurs, \textit{in a state-independent manner.}
The bound derived there thus \textit{also applies here}:
\begin{equation*}
    \pr{d_i < T_A \middle | \upind \geq i}= 1- \pr{d_i > T_A \middle | \upind \geq i} \leq 1- C_4.
\end{equation*}
Taking $p_2 \triangleq C_4 = 0.98$, we have bounded the probability of another up-crossing (Claim~\ref{clm:updowncont}). \hfill \Halmos



\newcommand{\visit}[1]{v_{#1}}
\newcommand{\threshtwo}{C_4 \shs}
\newcommand{\vind}{n_V}
\newcommand{\upb}[1]{\upsilon^{(\text{up})}_{#1}}
\newcommand{\downb}[1]{\upsilon^{(\text{down})}_{#1}}
\newcommand{\mb}{M_B}
\newcommand{\tbind}{n_{b}}
\newcommand{\ntr}{\left[N(t) - R\right]}
\newcommand{\ibpfn}[3]{g\left(#1,#2,#3\right)}
\newcommand{\donesys}{D_1 }
\newcommand{\upbnda}{\frac{b_2}{\mu\sqrt{R}}}
\newcommand{\downbnd}{\frac{ b_1 \sqrt{\mu \beta}}{\mu \sqrt{R}} + \frac{6}{\mu R}}
\newcommand{\tbbnd}{\frac{1}{\mu}\left[\upbnda + \downbnd + \frac{1}{\mb}\right]  }
\newcommand{\ntk}{\left[ N\left(T_A\right) - k\right]}
\newcommand{\ntauk}{\left[ N\left(\tau\right) - k\right]}
\newcommand{\dgencouple}{\Tilde{\dgen}}
\newcommand{\downmod}{ b_1 \sqrt{ \mu \beta R}}
\newcommand{\nta}{N\left(T_A\right)}
\newcommand{\numserv}[1]{\textup{ns}\left(#1\right)}

\subsection{Proof of Lemma~\ref{lem:UBintTB}, Upper Bound on Integral Over Draining Period}\label{sec:UBintTB}


To prove this lemma, we again make use of Lemma~\ref{lem:waldstop}. 
We proceed through the usual two-step process, first defining the stopping time sequence we will analyze over, 
then proving the preconditions of the lemma.


\paragraph{Definition of the upward visits $\upb{i}$ and downward visits $\downb{i}$.} 
Recall that the draining phase begins at time $T_A$. 
Let $\mb \triangleq \min\left(k-R, \sqrt{R})\right)$ be a specially-set analysis threshold. 
Let the stopping time $\dnb{1} \triangleq \minpar{t \geq T_A: N(t) < R + \mb}$ be the first time the number of jobs $N(t)$ drops below $R +\mb$, and recursively define
\begin{align*}
    \unb{i} &\triangleq \minpar{ t\geq \dnb{i}: N(t) \geq R + \mb},\\
    \dnb{i+1} &\triangleq \minpar{ t \geq \unb{i}: N(t) < R+ \mb}.
\end{align*}

\subsubsection{Application of Lemma~\ref{lem:waldstop}.}
Applying Lemma~\ref{lem:waldstop}, we take our initial stopping time to be the accumulation time $T_A$, our final stopping time to be the end of the renewal cycle $X$, our intervening stopping times to be the downward visits $\dnb{i}$, and our counting index to be $\tbind$. 

\paragraph{Required claims.} To apply Lemma~\ref{lem:waldstop}, we need to show is the usual three claims: a bound on the initial integral, a bound on the continuing integral, and a bound on the probability. 
\begin{restatable}[Upper Bound on Integral Until First Downward Visit]{claim}{UBintTBinitial}\label{clm:UBintTBinit}
    Let $g(x,y,z) \triangleq  x \frac{1}{2\mu z} + y\left[\frac{R}{\mu z^2} + \frac{3}{2 \mu z} \right].$ Then, one can bound the integral immediately after time $T_A$ with
    \begin{align*}
        \ex{\int_{T_A}^{\dnb{1}}{\left[N(t) - R \right] \dd t}} 
    &\leq \left[\beta + \frac{1}{\mu}\right] \left[3 \mu \beta \sqrt{R} + \max\left(\sqrt{R}, \frac{\rho}{1-\rho}\right)\right] + \frac{2}{\mu}\ln\left(3 \frac{\mu \beta}{\sqrt{R}} \max\left(\sqrt{R}, \frac{\rho}{1-\rho}\right) \right)\\
    &\;\;\;+  g\left((9(\mu \beta)^2 R, 3 \mu \beta \sqrt{R}, k(1-\rho)\right) + \beta \frac{\rho}{1-\rho}.
    \end{align*}
\end{restatable}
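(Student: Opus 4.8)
The plan is to bound the integral $\ex{\int_{T_A}^{\dnb{1}}{[N(t)-R]\dd t}}$ by decomposing the interval $[T_A, \dnb{1})$ according to the server-state dynamics just after the accumulation time. At time $T_A$ the $(R+1)$-th server has just turned on, and there may be up to $k-R$ servers in setup with remaining setup times spread out over $(0,\beta]$. Two regimes govern the behavior of $N(t)$ over $[T_A,\dnb{1})$: first, before any further servers complete setup, the number of busy servers $Z(t)$ is exactly $R+1$ (or, worst-case for us, we only know $Z(t)\ge R+1$), so $N(t)$ behaves like an M/M/1 queue with arrival rate $k\lambda$ and service rate $(R+1)\mu$ — slightly subcritical; second, as more servers complete their Deterministic setups, $Z(t)$ grows and the drain accelerates. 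The key simplification is that $\dnb{1}$ is defined as the first dip below $R+\mb$ where $\mb=\min(k-R,\sqrt R)$, so $N(T_A)-R$ is at least $\mb$ but can also be much larger (up to roughly $L_1\beta\sqrt R$ by Lemma~\ref{lem:UBintTA}'s consequences and the draining-phase setup).

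First I would split $N(T_A)-R$ into its ``large'' part (the bulk, of order $\mu\beta\sqrt R$) and a residual part. For the large part, I would couple $N(t)$ above by a process $\ncouple(t)$ driven with $Z\equiv R+1$ (via Claim~\ref{clm:coupling}) until $N$ first hits $R+\mb$; this is exactly an M/M/1 busy-period-type calculation, and the time integral of $\ncouple(t)-R$ over that busy period is precisely $\ibp{n}{j}$-type with $n\approx N(T_A)-R$ and $j=1$, which expands into the $g(\cdot,\cdot,\cdot)$ term — this is where $g(9(\mu\beta)^2R,\,3\mu\beta\sqrt R,\,k(1-\rho))$ comes from, after bounding $N(T_A)-R\le 3\mu\beta\sqrt R$ (worst case up to absolute constants, using the accumulation-phase bound). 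The $[\beta+\frac1\mu][3\mu\beta\sqrt R+\max(\sqrt R,\frac{\rho}{1-\rho})]$ term and the logarithmic term I expect to arise from a second, finer decomposition: tracking the $O(\log)$-many ``levels'' the coupled process passes through as $N$ descends from roughly $3\mu\beta\sqrt R$ down to $R+\mb$, where at each level one applies Claim~\ref{clm:dcbnd} with the appropriate $j$ and the crude time bound (each down-crossing takes expected time at most about $\frac\beta1+\frac1\mu$ per unit, and Markov/geometric summation over halving levels gives the $\ln$). The final $\beta\frac{\rho}{1-\rho}$ correction I anticipate comes from the boundary contribution near $N\approx R+\mb$, where the M/M/1 coupling's steady-state-like tail contributes a $\frac{\rho}{1-\rho}$ worth of jobs integrated over a setup-time-scale duration.

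Concretely, the steps in order: (1) invoke the upper bound $N(T_A)-R \le 3\mu\beta\sqrt R$ (up to the stated constants) coming from Lemma~\ref{lem:UBintTA} combined with the renewal structure; (2) construct the coupled M/M/1 upper process $\ncouple$ with $Z\equiv R+1$ and identify $\ex{\int_{T_A}^{\dnb1}[N-R]}$ with (a bound on) the time-integral over an M/M/1 busy period started from $N(T_A)-R$, yielding the $g$-term via the closed form of $\ibp{\cdot}{\cdot}$; (3) for the non-busy-period part (i.e., the drift/transient contribution before the queue settles), apply Claim~\ref{clm:dcbnd} level-by-level as $N$ descends, using the known lower bounds on $Z(t)$ at each level and summing a geometric/harmonic series to extract the $[\beta+\frac1\mu][\cdots]$ and $\frac2\mu\ln(\cdots)$ terms; (4) collect the boundary term $\beta\frac{\rho}{1-\rho}$. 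The main obstacle I expect is step (3): controlling the expected \emph{duration} of the descent through intermediate levels without a clean martingale, since $Z(t)$ is only bounded below (not known exactly) once servers start completing setup — here one must carefully use that during $[T_A,\dnb1)$ we always have $N(t)\ge R+\mb$, hence $Z(t)\ge R+1$, to keep the drift strictly negative, and then invoke Claim~\ref{clm:dcbnd} with $j$ small; getting the logarithmic (rather than linear) dependence requires organizing the levels geometrically and bounding the per-level integral by its ``worst case at the top of the level,'' which is exactly the shape of Claim~\ref{clm:dcbnd}'s RHS.
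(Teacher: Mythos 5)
Your high-level plan (descend level by level from $N(T_A)$ down to $R+\mb$, pay roughly a setup time plus a busy-period cost per level) is the right flavor, but as written it has two genuine gaps. First, step (1) treats $N(T_A)-R$ as if it were deterministically at most $3\mu\beta\sqrt{R}$, ``coming from Lemma~\ref{lem:UBintTA}.'' That is not available: $N(T_A)$ has unbounded support, and Lemma~\ref{lem:UBintTA} bounds the \emph{time integral} over the accumulation period, not the terminal value. What the claim actually needs are moment bounds on the random variable $N(T_A)-R$ --- and, crucially, a bound on its \emph{second} moment, because the dominant term in the final bound (the first argument of $g$, which produces $9(\mu\beta)^2R$) is quadratic in $N(T_A)-R$; Jensen's inequality goes the wrong way, so no first-moment bound suffices. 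The paper supplies this with a separate, nontrivial claim (Claim~\ref{clm:taexpect}), proved via the up-/down-crossing decomposition of the accumulation phase together with an optional-stopping second-moment bound (Claim~\ref{clm:couplingEx}); your proposal has no substitute for it.

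Second, your step (2) attributes the term $g\left(9(\mu\beta)^2R,\,3\mu\beta\sqrt{R},\,k(1-\rho)\right)$ to an M/M/1-type busy period obtained by coupling with $Z\equiv R+1$. That coupling only gives a drain surplus of $\mu$ (i.e.\ $j=1$), so the resulting integral is of order $g(\cdot,\cdot,1)$ --- larger than the claimed term by roughly a factor of $k(1-\rho)$, and of order $\mu\beta^2 R$ rather than $\mu\beta^2\sqrt{R}$. The third argument $k(1-\rho)$ can only be earned through the ``wait-busy'' mechanism (Claim~\ref{clm:waitbusy2}): if $N(t)$ stays above level $R+h$ for a full setup time, then $R+\min(h,k-R)$ servers must be \emph{on}, and they remain on until that level is crossed, so the drain surplus becomes $\mu\min\left(h,k(1-\rho)\right)$. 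You gesture at accelerating drains in step (3), but you never supply this mechanism, and your per-level time accounting is also too loose: the $\left[\beta+\frac{1}{\mu}\right]\left[\,\cdots\right]$ and logarithmic terms in the paper come from an interchange-of-summation argument exploiting that the remaining setup times at successive levels are nested (so the aggregate ``waiting'' cost across all levels is charged only once, giving $\beta\left[N(T_A)-R\right]$), plus a harmonic sum of busy-period costs $\sum_i g(1,1,i)$ over levels $i$ from $\mb$ up to $k(1-\rho)$; a naive ``wait $\beta$ at each level'' charge, or a heuristic geometric-halving argument, does not obviously reproduce these terms. In short, the paper's proof of this claim is a direct substitution of Claim~\ref{clm:taexpect} into inequality \eqref{eq:waitbusy2} of Claim~\ref{clm:waitbusy2}; your proposal is essentially an attempt to re-derive both of those inputs, and it is exactly those two inputs that are missing.
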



\begin{restatable}[Upper Bound on Integral Between Downward Visits]{claim}{UBintTBbetween}\label{clm:UBintTBbtwn}
    The integral between consecutive downward visits is bounded as
    \begin{equation*}
        \ex{\int_{\dnb{i}}^{\dnb{i+1}}{\ntr \dd t}\middle | \filt{\dnb{i}}}\leq \frac{1}{\mu \mb} \left[ \max\left(\sqrt{R}, \frac{\rho}{1-\rho}\right) + 14 + \mu \beta \sqrt{R}  + 2 b_1\sqrt{\mu \beta R} + b_2\sqrt{R}\right].
    \end{equation*}
\end{restatable}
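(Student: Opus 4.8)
The plan is to cut the window $[\dnb{i}, \dnb{i+1})$ at the intervening upward visit $\unb{i}$ and treat the two pieces by different means. Writing (with every stopping time capped at the cycle end $X$)
\begin{equation*}
    \int_{\dnb{i}}^{\dnb{i+1}}{\ntr \dd t} = \int_{\dnb{i}}^{\unb{i}}{\ntr \dd t} + \int_{\unb{i}}^{\dnb{i+1}}{\ntr \dd t},
\end{equation*}
the first (``sub-threshold'') piece has $N(t) < R + \mb$ throughout, so its integrand is at most $\mb$, while the second (``super-threshold'') piece has $N(t) \geq R + \mb$ throughout and carries essentially all of the claimed bound, including the $\frac{1}{\mu\mb}$ prefactor.

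For the super-threshold piece the key point is that once $N$ has stayed at or above $R + \mb$ for a full setup time, the formula $Z(t) = \min(k, \min_{s \in [t-\beta, t]} N(s))$ together with $\mb \leq k - R$ forces $Z(t) \geq R + \mb$, i.e.\ the departure rate exceeds the arrival rate by at least $\mu\mb$. So I would split $[\unb{i}, \dnb{i+1})$ at $\unb{i} + \beta$. On $[\unb{i}, \mintwo{\unb{i}+\beta}{\dnb{i+1}})$ only $Z(t) \geq R+1$ is available, and I would use the basic coupling (Claim~\ref{clm:coupling}) to dominate $N$ by the M/M/1-type process run with $R+1$ busy servers, then use the Berry--Esseen estimate behind Claim~\ref{clm:dcprob} to show that $N(t) - (R+\mb)$ stays $O(\shs)$ on this stretch and --- using the negative drift, not merely this height bound, since the excursion is typically far shorter than $\beta$ --- that this stretch contributes only the lower-order $\shs$ and $\sqrt R$ terms. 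On $[\mintwo{\unb{i}+\beta}{\dnb{i+1}}, \dnb{i+1})$ I would again apply Claim~\ref{clm:coupling}, now dominating $N - (R+\mb)$ by the process run with $R+\mb$ busy servers and started from its $O(\shs)$ height; its queue-length integral over the ensuing busy period is of the form $\ibp{O(\shs)}{\mb}$, which supplies the $\frac{1}{\mu\mb}$ factor and the $\mu\beta\sqrt R$-type term.

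First, for the sub-threshold piece I would use $\ntr < \mb$ to get $\int_{\dnb{i}}^{\unb{i}}{\ntr \dd t} \leq \mb\,(\mintwo{\unb{i}}{X} - \dnb{i})$ and then control the climb-back length. Since $\mintwo{\unb{i}}{X} - \dnb{i} \leq \mintwo{\dnb{i+1}}{X} - \dnb{i}$, this term can be folded into the ``$B\cdot\ex{\text{duration}}$'' slot of Lemma~\ref{lem:waldstop} with $B$ of order $\mb$, where it is absorbed by the global $\ex{X - T_A}$ factor of the parent proof; a self-contained bound would instead run a nested application of Lemma~\ref{lem:waldstop} over the successive levels visited by $N$ inside $[\dnb{i}, \unb{i})$, handling each level by a gambler's-ruin/optional-stopping argument (the levels near $R+\mb$ are the only ones contributing $\Theta(\mb)$ per unit time, and they are exited quickly under the draining drift). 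With both pieces in hand I would collect terms and simplify the Gaussian-tail constants using $R \geq 100$ and $\mu\beta \geq 100$ to reach the stated form.

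The hard part will be the sub-threshold climb-back time $\mintwo{\unb{i}}{X} - \dnb{i}$: because the draining-phase dynamics give $N-R$ a downward, indeed mean-reverting, drift, the time to push $N$ one level back up to $R+\mb$ is heavy-tailed in isolation, so it cannot be bounded naively and must be handled by conditioning on the cycle not yet having ended or by the nested decomposition above. A secondary difficulty is dovetailing the two regimes of the super-threshold excursion --- the ``not yet ramped up'' stretch where only $Z \geq R+1$ holds and the ``ramped up'' stretch where $Z \geq R+\mb$ --- while respecting the non-negativity hypothesis on the drift parameter in Claim~\ref{clm:dcbnd}; this is exactly why the analysis is pinned to the level $R+\mb$ and shifted by a setup time $\beta$.
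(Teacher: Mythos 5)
Your skeleton matches the paper's: cut the window at $\unb{i}$, and on the excursion use the fact that if $N$ stays at or above $R+\mb$ for a full setup time then $Z\ge R+\mb$ afterwards (the paper's ``wait-busy'' Claim~\ref{clm:waitbusy2}), then drain via a busy period at net rate $\mu\mb$. But two steps in your plan would not deliver the stated bound. First, the sub-threshold climb-back is not the hard part, and your workarounds do not prove the claim. Since every busy server holds a job, the departure rate never exceeds $\mu N(t)$, so by the basic coupling $N$ dominates an M/M/$\infty$ started at $R+\mb-1$; because $\mb\le\sqrt R$ puts this level within one standard deviation of that queue's center $R$, Claim~\ref{clm:mminf} gives $\ex{\mintwo{\unb{i}}{X}-\dnb{i}\middle|\filt{\dnb{i}}}\le \frac{b_2}{\mu\sqrt R}$, hence a contribution $\mb\cdot\frac{b_2}{\mu\sqrt R}\le\frac{b_2\sqrt R}{\mu\mb}$ --- the up-passage is short in expectation, not heavy-tailed. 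By contrast, folding a coefficient of order $\mb$ into the $B$-slot of Lemma~\ref{lem:waldstop} proves a different statement than Claim~\ref{clm:UBintTBbtwn} (whose right-hand side has no duration term); in the parent application this claim is consumed as a constant $G_i$ and multiplied by the expected number of visits, and absorbing $B\approx\mb$ would require an a priori bound on $\ex{X-T_A}$, which the proof of Theorem~\ref{thm:main} never establishes. The ``nested gambler's-ruin'' fallback is left unspecified.

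Second, your accounting on the super-threshold piece overshoots the claimed constants. Feeding the post-$\beta$ busy period a starting height of order $\shs$ gives $\ibp{\shs}{\mb}$, whose leading terms are of order $\frac{\mu\beta R}{\mu\mb}$ and $\frac{\shs\, R}{\mu\mb^{2}}$, larger than the claimed $\frac{\mu\beta\sqrt R}{\mu\mb}$ and $\frac{R}{\mu\mb^{2}}$ by factors $\sqrt R$ and $\shs$ respectively, so the stated bound cannot be reached this way. The paper instead controls the \emph{moments} of the height above $R+\mb-1$ at the switchover time: by optional stopping its mean is at most $1$, and its second moment is at most $1+2\mu R\,\ex{\mintwo{\beta}{\dnb{i+1}-\unb{i}}}=O(\shs)$, because the capped excursion length is only $O\bigl(\sqrt{\beta/(\mu R)}\bigr)$ in expectation (Claim~\ref{clm:contint}); these moments, not the square of a pathwise height bound, enter the busy-period functional (via Claims~\ref{clm:waitbusy2} and~\ref{clm:couplingEx}) and produce the $2b_1\shs$ term. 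The dominant $\mu\beta\sqrt R$ term comes from the $\beta$-capped integral over the waiting stretch, which by Claim~\ref{clm:dcbnd} with starting excess $1$ is at most $\beta=\frac{\mu\beta\mb}{\mu\mb}\le\frac{\mu\beta\sqrt R}{\mu\mb}$, not from the busy-period start.
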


\begin{restatable}[Upper Bound on Probability of Another Downward Visit]{claim}{UBintTBprob}\label{clm:UBintTBprob}
    The probability of another downward visit occurring is bounded by
    \begin{equation}\label{eq:succ_TB_prob}
        \pr{\tbind \geq i+1 \middle | \tbind \geq i} \leq \frac{1}{\mb}.
    \end{equation}
\end{restatable}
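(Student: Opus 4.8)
The plan is to bound this probability by a gambler's-ruin estimate for the number-in-system process $N$ during the draining phase, exploiting that once the $(R+1)$-th server is on the departure rate strictly exceeds the arrival rate.

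First I would pin down the event. Conditioned on $\{\tbind \geq i\}$, the $i$-th downward visit has already occurred, so $N(\dnb{i}) = R+\mb-1$, and an $(i+1)$-th downward visit occurs before the renewal cycle closes exactly when $\unb{i}<X$ --- that is, when $N$, started from $R+\mb-1$, climbs back to level $R+\mb$ before it first reaches level $R$, the latter being precisely the instant $X$ at which the $(R+1)$-th server shuts off. So it suffices to estimate $\pr{N \text{ reaches } R+\mb \text{ before } R \mid \filt{\dnb{i}}}$.

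Next I would couple and apply gambler's ruin. On $[\dnb{i},\min(\unb{i},X)]$ the process $N$ stays strictly between $R$ and $R+\mb$, so --- by the formula for $Z(t)$ and the fact, used throughout the draining-phase analysis, that $Z(t)\geq R+1$ until the cycle ends --- the departure rate there is at least $\mu(R+1)$. By Claim~\ref{clm:coupling}, $N$ is dominated on this interval by the birth--death walk $\ncouple$ with arrival rate $k\lambda$ and departure rate $\mu(R+1)$, and by monotonicity of the coupling (as in the proof of Claim~\ref{clm:a2}) the event above is contained in $\{\ncouple \text{ reaches } R+\mb \text{ before } R\}$. The embedded jump chain of $\ncouple$ takes an up-step with probability $\tfrac{k\lambda}{k\lambda+\mu(R+1)}$, so writing $r\triangleq\tfrac{\mu(R+1)}{k\lambda}=1+\tfrac1R$, the classical two-barrier exit identity on the band $\{R,\dots,R+\mb\}$ started at $R+\mb-1$ evaluates this probability to $\tfrac{r^{\mb-1}-1}{r^{\mb}-1}$; one then bounds this quantity using $R\geq100$, $\mu\beta\geq100$, and $\mb=\min(k-R,\sqrt R)$.

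I expect that final inequality --- turning $\tfrac{r^{\mb-1}-1}{r^{\mb}-1}$ into the claimed bound --- to be the crux, since the only drift available ``for free'' on a between-visits fall is the minimal one coming from $Z\geq R+1$ and the barrier spacing $\mb\leq\sqrt R$ is small relative to $R$, so the estimate is tight; if the minimal drift does not suffice, one must instead lower-bound $Z(t)$ on the fall more carefully --- tracking how many servers remain on from the accumulation overshoot and from the recent sojourn of $N$ near $R+\mb$ --- which complicates the coupling. I would also treat separately the boundary cases $\mb=1$ and the event $\{N(T_A)<R+\mb\}$ (where $\dnb{1}=T_A$ and $N(\dnb{1})$ need not equal $R+\mb-1$), and check that the busy-server domination required by Claim~\ref{clm:coupling} genuinely holds all the way to $\min(\unb{i},X)$. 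One could alternatively route the estimate through the Coupling Probability Bound, Claim~\ref{clm:dcprob}, at the price of carrying its Berry--Esseen error terms.
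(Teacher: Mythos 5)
Your route is essentially the paper's: the paper also reduces $\{\tbind \geq i+1\}$ to $\{\unb{i} < X\}$, couples via Claim~\ref{clm:coupling} to a system with a fixed number of busy servers (it uses the even cruder bound $Z(t)\geq R$, giving a \emph{symmetric} walk rather than your $\mu(R+1)$-biased one), and then invokes gambler's ruin on the band $\{R,\dots,R+\mb\}$ started at $R+\mb-1$. The difficulty you flagged at the end is genuine, but your proposed fix points in the wrong direction. The exit probability you compute, $\frac{r^{\mb-1}-1}{r^{\mb}-1}$ with $r=1+\frac1R$ and $\mb\leq\sqrt R$, is $\approx\frac{\mb-1}{\mb}$, and no sharper lower bound on $Z(t)$ can push it down to $\frac{1}{\mb}$: the conditional probability of another downward visit is at least the probability that the very first transition after $\dnb{i}$ is an arrival, which is at least $\frac{k\lambda}{k\lambda+\mu k}=\frac{\rho}{1+\rho}$, a constant, so a bound of $\frac1\mb$ is unattainable (and false) once $\mb$ exceeds a small constant.

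The resolution is that what the coupling-plus-gambler's-ruin argument establishes --- and what the paper actually uses downstream --- is the \emph{complementary} statement: at each downward visit the cycle terminates (the walk from $\mb-1$ hits $0$ before $\mb$) with probability at least $\frac{1}{\mb}$, i.e. $\pr{\tbind\geq i+1\mid \tbind\geq i}\leq 1-\frac{1}{\mb}$, so the number of downward visits is dominated by a Geometric with success probability $\frac1\mb$ and has mean at most $\mb$. This is exactly how the claim is consumed in the assembly of Lemma~\ref{lem:UBintTB}, where the between-visits integral bound of Claim~\ref{clm:UBintTBbtwn} (which carries a $\frac{1}{\mu\mb}$ prefactor) is multiplied by $\mb$; the paper's sentence ``classically, this probability is $\frac1\mb$'' refers to that termination probability, and the displayed inequality in the claim states its complement. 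So your instinct that the final step ``turning $\frac{r^{\mb-1}-1}{r^{\mb}-1}$ into $\frac1\mb$'' cannot go through is correct --- the right move is to read the bound as $1-\frac1\mb$ rather than to strengthen the drift estimate. Your side remarks are fine: in the boundary case $\dnb{1}=T_A$ with $N(T_A)<R+\mb-1$ the starting point is only lower, which makes the up-exit probability smaller by monotonicity, and the busy-server domination indeed holds on all of $[\dnb{i},\min(\unb{i},X)]$ since $Z(t)\geq R+1$ throughout the draining phase.
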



\subsubsection{Proof of Lemma~\ref{lem:UBintTB} Assuming Claims~\ref{clm:UBintTBinit},~\ref{clm:UBintTBbtwn}, and~\ref{clm:UBintTBprob}.}
Simplifying the first bound further,
\begin{align*}
    \ex{\int_{T_A}^{\dnb{1}}{[N(t) - R] \dd t}} &\leq \left[\beta + \frac{3}{\mu} \right]\left[2.9\mu \beta \sqrt{R} + \max\left(\sqrt{R}, \frac{\rho}{1-\rho}\right) \right] \\
    &\mspace{21mu}+ g\left(9(\mu\beta)^2 R, 3 \mu \beta \sqrt{R}, k(1-\rho)\right) + \beta \frac{\rho}{1-\rho}\\
    &\leq 1.03 \beta \left[2.91\mu \beta \sqrt{R} + \frac{\rho}{1-\rho} \right] + g\left(9(\mu\beta)^2 R, 3 \mu \beta \sqrt{R}, k(1-\rho)\right) + \beta \frac{\rho}{1-\rho}\\
    &\leq 3 \mu \beta^2 \sqrt{R} + 2.03 \beta \frac{\rho}{1-\rho} + g\left(9(\mu\beta)^2 R, 3 \mu \beta \sqrt{R}, k(1-\rho)\right),
\end{align*}
where we have bounded $\ln(x)/x <0.1$ for values $x > 100$, noted that $\max(x,y) \leq x + y,$ and done some upwards rounding.
Simplifying the ``continuing integral'' term, we have
\begin{align*}
    \ex{\int_{\dnb{1}}^{X}{\ntr \dd t}}  &\leq \frac{1}{\textup{Claim~\ref{clm:UBintTBprob}}} \cdot \left[\textup{Claim~\ref{clm:UBintTBbtwn}}\right]\\
    &= \frac{1}{\mu}\left[ \max\left(\sqrt{R}, \frac{\rho}{1-\rho}\right) + 14 + \mu \beta \sqrt{R}  + 2 b_1\sqrt{\mu \beta R} + b_2\sqrt{R}\right]\\
    &\leq \frac{1}{\mu}\left[\mu \beta\sqrt{R}\left(1 + \frac{2b_1}{\sqrt{\mu\beta}} + \frac{b_2}{\beta} + \frac{14}{\mu \beta \sqrt{R}} \right] + \frac{\rho}{1-\rho}\right]\\
    &\leq \frac{1}{\mu}\left[ 1.6\mu \beta \sqrt{R} + \frac{\rho}{1-\rho}\right].
\end{align*}
Combining these two pieces, we obtain, as desired,
\begin{equation*}
    \ex{\int_{T_A}^{X}{\left[N(t) - R\right] \dd t}} \leq 3.01\mu\beta^2 \sqrt{R} + 2.04 \beta \frac{\rho}{1-\rho} + g\left(9(\mu\beta)^2 R, 3 \mu \beta \sqrt{R}, k(1-\rho)\right). \tag*{\Halmos}
\end{equation*}

\subsubsection*{Precursor: The ``Wait-Busy'' Idea.}\label{sec:waitbusy}
As such, to complete our proof it suffices to show Claims~\ref{clm:UBintTBinit},~\ref{clm:UBintTBbtwn}, and~\ref{clm:UBintTBprob}.
To prove these claims, we make heavy use of the following idea.

\begin{restatable}[Wait Busy Claim]{claim}{waitbusy}\label{clm:waitbusy2}
Let $\tau$ be some stopping time, let the number of jobs $N(\tau) = R+h$, and define $\numserv{h} \triangleq \min\left\{h, k(1-\rho)\right\}$. Let the down-crossing $\dgen\triangleq \minpar{t>0: N(\tau + t) = R + h - 1}.$ If $Z(\tau) \geq R$, then
\begin{equation}\label{eq:waitbusy1}
    \ex{\int_\tau^{\tau + \dgen}{[N(t) - (R+h - 1)] \dd t} \middle | \filt{\tau}} \leq \rem{R+ \numserv{h}}(\tau) + g\left(1 + 2\mu R \ex{\mintwo{\rem{R+\numserv{h}(\tau)}}{\dgen}}, 1, \mu \numserv{h}\right),
\end{equation}
where the function $g(x,y,z) \triangleq  x \frac{1}{2\mu z} + y\left[\frac{R}{\mu z^2} + \frac{3}{2 \mu z} \right].$

Furthermore, 
\begin{align}
    \ex{\int_{T_A}^{\dnb{1}}{[N(t) - R] \dd t}\middle | \filt{T_A}} &\leq \left[\beta + \frac{1}{\mu}\right] \left[\ex{\nta - R} + \frac{R}{\mb}\right] + \frac{2}{\mu}\ln\left(\frac{\ex{\nta - R}}{\mb}\right)\label{eq:waitbusy2}\\
    &\;\;\;+  g\left(\ex{\left[\nta - R\right]^2} + 2\ex{\nta - R}, \ex{\nta - R}, k(1-\rho)\right).\nonumber
\end{align}
\end{restatable}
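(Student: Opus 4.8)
The plan is to prove the two displayed bounds in order: \eqref{eq:waitbusy1} is the atomic estimate, and \eqref{eq:waitbusy2} will follow by iterating it along a telescoping family of hitting times of $N$ inside the draining period. For \eqref{eq:waitbusy1} I would first set $W\triangleq\rem{R+\numserv{h}}(\tau)$, the remaining setup time of server $R+\numserv{h}$ at $\tau$ (it is $\filt{\tau}$-measurable and at most $\setuptime$), and establish two sample-path invariants. First, on $[\tau,\tau+\dgen)$ we have $N(t)\ge R+h>R$ (the process has not yet fallen back to $R+h-1$), and since the only way $Z$ can drop below $R$ from $Z(\tau)\ge R$ is a server turn-off --- which requires $N=Z=R$ --- this forces $Z(t)\ge R$ throughout $[\tau,\tau+\dgen)$. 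Second, because $N(t)\ge R+h\ge R+\numserv{h}$ on that interval, server $R+\numserv{h}$ is never canceled, so on $\{\dgen>W\}$ it is \textit{on} at time $\tau+W$; as $\dets_i(t)$ is non-decreasing in $i$, servers $1,\dots,R+\numserv{h}$ are then all on, and the same turn-off argument yields $Z(t)\ge R+\numserv{h}$ on $[\tau+W,\tau+\dgen)$. With these in hand I would split $\int_{\tau}^{\tau+\dgen}[N(t)-(R+h-1)]\dd t$ at $\tau+\min(W,\dgen)$: on the first piece $Z(t)\ge R$, so Claim~\ref{clm:dcbnd} (threshold $R+h-1$, length $\ell=W$, $j=0$) bounds it by $W\cdot[N(\tau)-(R+h-1)]^+=W=\rem{R+\numserv{h}}(\tau)$; on the second piece (nonempty only when $\dgen>W$), $Z(t)\ge R+\numserv{h}$, so by Claim~\ref{clm:coupling} $N$ is dominated by the shifted $M/M/1$ with arrival rate $k\lambda$ and service rate $k\lambda+\mu\numserv{h}$ started from $N(\tau+W)$, whence this piece is at most the busy-period integral $\ibp{n}{\numserv{h}}$ with $n\triangleq N(\tau+W)-(R+h-1)$.

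The quantitative step is then taking expectations of $\ibp{n}{\numserv{h}}$, which requires the first two moments of the overshoot $n$. Using the worst-case $Z\equiv R$ coupling $\tilde N$ from Claim~\ref{clm:coupling}, the process $\tilde N(\cdot)-(R+h)$ is a mean-zero martingale whose predictable quadratic variation accrues at rate $k\lambda+\mu R=2\mu R$, so optional stopping at the bounded stopping time $\min(W,\dgen)$ gives $\ex{n\mid\filt{\tau}}\le 1$ and $\ex{n^2\mid\filt{\tau}}\le 1+2\mu R\,\ex{\min(W,\dgen)\mid\filt{\tau}}$ (up to a lower-order term absorbed into the stated constants). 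Substituting these into $\ibp{n}{\numserv{h}}=\tfrac{n(n+1)}{2\mu\numserv{h}}+\tfrac{nR}{\mu\numserv{h}^2}+\tfrac{n}{\mu\numserv{h}}$ and regrouping termwise into the form of $g$ yields exactly the second summand of \eqref{eq:waitbusy1}; adding the two pieces finishes it.

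For \eqref{eq:waitbusy2} I would set $h_0\triangleq\nta-R$ (measurable w.r.t.\ $\filt{T_A}$), $t_0=T_A$, and recursively $t_{m+1}\triangleq\minpar{t>t_m:N(t)\le R+h_0-m-1}$, the hitting times of the successive new minima of $N$; then $N(t_m)=R+(h_0-m)$, the turn-off argument (using $N\ge R+\mb>R$ on $[T_A,\dnb{1})$) gives $Z(t_m)\ge R$ whenever $t_m<\dnb{1}$, and $\dnb{1}=t_M$ with $M\triangleq h_0-\mb+1$. On each $[t_m,t_{m+1})$ the hypotheses of \eqref{eq:waitbusy1} hold with $\tau=t_m$, $h=h_0-m$, and down-crossing $t_{m+1}-t_m$; writing $N(t)-R=[N(t)-(R+h_0-m-1)]+(h_0-m-1)$, summing over $m=0,\dots,M-1$, and taking expectations (tower property) bounds $\ex{\int_{T_A}^{\dnb{1}}[N(t)-R]\dd t}$ by $\sum_m\ex{\rem{R+\numserv{h_0-m}}(t_m)}+\sum_m\ex{g(1+2\mu R\,\ex{\min(W_m,\Delta_m)\mid\filt{t_m}},1,\numserv{h_0-m})}+\sum_m(h_0-m-1)\,\ex{\Delta_m}$, with $W_m=\rem{R+\numserv{h_0-m}}(t_m)$ and $\Delta_m=t_{m+1}-t_m$. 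To collapse these I would bound the per-level drain time by $\ex{\Delta_m\mid\filt{t_m}}\le W_m+\tfrac{1}{\mu\numserv{h_0-m}}\le\setuptime+\tfrac{1}{\mu\numserv{h_0-m}}$ (the same two-phase split applied to the \textit{length}, now via $\bp{\cdot}{\cdot}$); the $W_m$-contributions $\sum_m\setuptime$ and $\sum_m(h_0-m-1)\setuptime$ produce the $[\setuptime+\tfrac{1}{\mu}]\ex{\nta-R}$ and $[\setuptime+\tfrac{1}{\mu}]\tfrac{R}{\mb}$ terms; the harmonic tail $\sum_m\tfrac{1}{\numserv{h_0-m}}=\sum_v\tfrac{1}{v}$ over the low levels $v\in[\mb,k-R)$ (where $\numserv{\cdot}$ is the identity) produces the $\tfrac{2}{\mu}\ln(\cdot)$ term; and the remaining pieces --- which at the high levels $h_0-m\ge k-R$ drain at the full rate $\mu(k-R)$ because by then every server has been \textit{on}-eligible for a full setup time --- collapse into a term of the form $g(\ex{[\nta-R]^2}+2\ex{\nta-R},\ex{\nta-R},k(1-\rho))$, which upper-bounds $\ex{\ibp{\nta-R}{k-R}}$. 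Jensen's inequality (concavity of $\ln$ and of $x\mapsto\sqrt{x}$) then replaces the stray $h_0,h_0^2$ by $\ex{\nta-R},\ex{[\nta-R]^2}$, and rounding the constants gives \eqref{eq:waitbusy2}.

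The atomic bound \eqref{eq:waitbusy1} should be essentially routine once the ``$Z\ge R$ until the down-crossing'' and ``$Z\ge R+\numserv{h}$ after the setup clock clears'' invariants are in place; the only delicate point there is controlling the second moment of the overshoot $n$ via the martingale's quadratic variation. The hard part will be the sum-collapsing for \eqref{eq:waitbusy2}: getting the telescoped bound into the stated closed form \emph{without} incurring a spurious $\setuptime\,\ex{[\nta-R]^2}$ term forces a genuine dichotomy between high levels (all $k$ servers busy $\Rightarrow$ the clean busy-period integral $\ibp{\nta-R}{k-R}$) and low levels (only $\mu\,\numserv{h_0-m}$ extra service $\Rightarrow$ a harmonic/$\ln$ contribution), and requires arguing carefully that the setup clocks $\rem{R+\numserv{h_0-m}}(t_m)$ needed at the high levels have already expired by the time those levels are reached.
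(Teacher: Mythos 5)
Your proof of \eqref{eq:waitbusy1} is correct and is essentially the paper's own argument: the same split of the integral at $\tau+\mintwo{\rem{R+\numserv{h}}(\tau)}{\dgen}$, the same invariants ($Z\geq R$ before the setup clock clears, $Z\geq R+\numserv{h}$ after), Claim~\ref{clm:dcbnd} for the first piece, and optional-stopping bounds on the first and second moments of the overshoot (the paper's Claim~\ref{clm:couplingEx}) feeding the busy-period integral, which regroups exactly into the stated $g$ term.

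For \eqref{eq:waitbusy2}, however, there is a genuine gap, and it sits exactly where you flag the hard part. Your level decomposition via the hitting times $t_m$ (the paper's $\eta_i$) and the per-level length bound $\ex{\Delta_m\mid\filt{t_m}}\leq W_m+\frac{1}{\mu \numserv{h_0-m}}$ match the paper, but the mechanism you propose for avoiding the spurious $\beta\,\ex{[\nta-R]^2}$ term does not work: it is \emph{not} true that at the high levels ($N(t)\geq k$) ``every server has been on-eligible for a full setup time.'' Those levels are traversed immediately after $T_A$, and the definition of $T_A$ only guarantees that $N$ has stayed above $R+1$ for a setup time; a high-indexed server began its (most recent) setup when $N$ first rose to its index, which may be arbitrarily close to $T_A$, so its clock need not have expired when its level is reached on the way down. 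Relatedly, your accounting assigns $\sum_m(h_0-m-1)\setuptime$ to the $\left[\setuptime+\frac{1}{\mu}\right]\frac{R}{\mb}$ term, but that sum is of order $\setuptime\,(\nta-R)^2$, whereas the $\frac{R\beta}{\mb}$ term actually originates from the $2\mu R\,\ex{\mintwo{\beta}{\cdot}}$ quadratic-variation input to $g$ at the bottom level $\mb$; so in your scheme the quadratic-in-$h_0$ mass has nowhere to go.

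The paper's collapsing step is an aggregate, not pointwise, accounting of the setup clocks. Because $N(t)$ stays above level $R+j$ on $[T_A,\eta_j)$, the setup of server $R+\numserv{j}$ has been running since at latest $T_A$, hence $\rem{R+\numserv{j}}(\eta_j)\leq\left[\beta-(\eta_j-T_A)\right]^+$; an interchange of summation converts the level weights into per-level elapsed times ($\sum_i i\,(\eta_i-\eta_{i+1})$ telescopes to $\sum_j(\eta_j-T_A)$), and, together with the monotonicity of server states (the clocks vanish below the last level at which a setup fails to complete), this gives the \emph{linear} bound $\sum_i\bigl[i\,\mintwo{\rem{R+\numserv{i}}(\eta_i)}{\eta_i-\eta_{i+1}}+\rem{R+\numserv{i}}(\eta_i)\bigr]\leq\beta\,[\nta-R]$, which is what produces the $\left[\beta+\frac{1}{\mu}\right]\ex{\nta-R}$ term in \eqref{eq:waitbusy2}. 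Without this (or an equivalent) argument, your telescoped bound degrades to $\beta\,\ex{[\nta-R]^2}$, which is too weak both for \eqref{eq:waitbusy2} and for its downstream use in Lemma~\ref{lem:UBintTB}.
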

\paragraph{Intuition.} We defer the proof of Claim~\ref{clm:waitbusy2} until Section~\ref{sec:waitbusy}. For now, we give some brief intuition for how the bound is derived and how we use it in our proof. Essentially, we can consider performing the following procedure at time $\tau$: First, watch the system for $\setuptime$ time. If the number of jobs ever dips below $R + h$ during this watching period, we can end our integral immediately. If the number of jobs $N(t)$ never dips below $R+h$ during this watching period, then we know for sure that we have at least $\mintwo{R+h}{k}$ servers on at time $\tau + \rem{\mintwo{R+h}{k}}(\tau)$, since we have continually had at least $R+h$ servers either busy or setting up during that period.
Moreover, since we only turn off servers when there isn't work for them to do, those servers will \emph{stay} on until the number of jobs $N(t)$ dips below $R + h;$ in other words, they will stay on until time $\dgen$. The proof of the claim follows along essentially the same lines, formalizing things and performing computations using coupling and martingales.

\subsubsection{\texorpdfstring{Proof of Claim~\ref{clm:UBintTBinit}: Bound on Integral Until First Visit.}{Upper Bound on Integral until First Downward Visit.}}
We return to proving our claims. The proof of this claim is simple; it is essentially rolled into the proof of Claim~\ref{clm:waitbusy2}. From here, it suffices to apply the following claim, substituting in and simplifying constants:
\begin{restatable}[Upper Bound on $\ex{N(T_A)}$]{claim}{taexpect}\label{clm:taexpect}
    Recall that $T_A \triangleq \minpar{t>0: Z(t) = R+1}$. Then,
\begin{align*}
    &\ex{ N(T_A) - R} \leq F_1 \mu \beta \sqrt{R}\left(1 + \frac{F_2}{\sqrt{\mu\beta}}\right) \leq 2.9 \mu \beta \sqrt{R}
\end{align*}
and
\begin{equation*}
    \ex{\left(N(T_A) - R\right)^2} \leq F_1^2 (\mu \beta)^2 R \left(1 + \frac{F_2}{\sqrt{\mu\beta}}\right)^2 + 2\mu \beta R \leq 8.4(\mu\beta)^2 R + 2 \mu \beta R
\end{equation*}
where $F_1 = 2.12$ and $F_2 = 3.645$.
\end{restatable}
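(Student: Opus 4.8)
The plan is to reduce the question to the $\beta$-window immediately preceding $T_A$. The $(R+1)$-th server turns on at $T_A$ precisely because a job arrived at the last up-crossing time $\tau:=T_A-\beta$ (taking $N$ from $R$ to $R+1$) and $N$ then stayed $\ge R+1$ for the whole setup duration; hence $N(T_A-\beta)=R+1$ \emph{deterministically}. Flow conservation (arrivals minus departures equals the net change in $N$) then gives the exact identity
\[
N(T_A)-R \;=\; 1 + \Pi_A\!\left((T_A-\beta,\,T_A]\right) - \depOp{Z(s)}{(T_A-\beta,\,T_A]}.
\]
On this window $N\ge R+1>Z$, so the queue is nonempty, no server shuts off, $Z$ is nondecreasing, and in fact $Z(s)\ge R-\epochind$ throughout, where $\epochind$ is the number of epochs in the cycle (equivalently the depth of the deepest dip of $N$ below $R$ before $T_A$). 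Consequently $N(T_A)-R$ is dominated by $1$ plus the increment, over an interval of length $\beta$, of a process that jumps up at rate $\mu R$ and down at rate at least $\mu(R-\epochind)$: a quantity whose conditional mean is at most $1+\mu\epochind\beta$ and whose fluctuation about its mean is the Poisson noise of arrivals-minus-departures over a length-$\beta$ interval, with variance at most $(\mu R+\mu R)\beta=2\mu\beta R$.

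The first step is therefore to control $\epochind$, which I would do with exactly the ingredients behind Lemma~\ref{lem:UBintTA}: Claim~\ref{clm:prisebnd} (in the form $\prise{i}\ge 0.99\,i/R$ for $i\ge\sqrt R$ established in its proof) together with Claim~\ref{clm:a2}, $\pr{\epochind\ge j+1\mid\epochind\ge j}\le 1-C_4\prise{j}$, gives $\pr{\epochind\ge j}\le\prod_{i<j}(1-C_4\prise{i})$, which decays like $\exp(-c\,j^2/R)$ once $j\gtrsim\sqrt R$. Summing these tails — the same geometric-series manipulation used to finish Lemma~\ref{lem:UBintTA} — yields $\ex{\epochind}\lesssim\sqrt R$ and, by summing $j\,\pr{\epochind\ge j}$, $\ex{\epochind^2}\lesssim R$; the constants $F_1=2.12$, $F_2=3.645$ are exactly what is needed to absorb these together with the lower-order $i\le\sqrt R$ boundary terms under the standing assumptions $R,\mu\beta\ge100$. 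Combining with the display, $\ex{N(T_A)-R}\le 1+\mu\beta\,\ex{\epochind}+O(\sqrt{\mu\beta R})\le F_1\mu\beta\sqrt R(1+F_2/\sqrt{\mu\beta})\le 2.9\,\mu\beta\sqrt R$, while the mean-square split $\ex{(N(T_A)-R)^2}\le\ex{(1+\mu\beta\epochind)^2}+2\mu\beta R$ (drift part plus a conditionally-mean-zero fluctuation of second moment $\le2\mu\beta R$) gives $\le F_1^2(\mu\beta)^2R(1+F_2/\sqrt{\mu\beta})^2+2\mu\beta R\le 8.4(\mu\beta)^2R+2\mu\beta R$.

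The main obstacle is making the window argument rigorous: $\tau=T_A-\beta$ is the \emph{last} up-crossing of level $R+1$ before $T_A$, which is not a stopping time, so one cannot directly invoke the strong Markov property at $\tau$ to treat the Poisson increments on the window as fresh, nor to decouple them from $\epochind$ (conditioning on "this setup attempt is the one that completes" biases the arrivals upward). The fix is a union bound over the up-crossing times $\tau^{(1)}<\tau^{(2)}<\cdots$ of level $R+1$, which \emph{are} stopping times: writing $F$ for the index of the completing attempt, $T_A=\tau^{(F)}+\beta$, and, crucially, on the event that attempt $m$ succeeds no new epoch starts during $(\tau^{(m)},\tau^{(m)}+\beta]$, so the epoch index there equals $j^{(m)}$, the epoch of $\tau^{(m)}$, which — like the event $\{F\ge m\}$ — is $\filt{\tau^{(m)}}$-measurable. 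One then bounds $\ex{(N(T_A)-R)\,\mathbf 1\{F=m\}\mid\filt{\tau^{(m)}}}$ by conditioning on $\filt{\tau^{(m)}}$ (where the starting state is standardized, $N(\tau^{(m)})=R+1$) and applying a conditioned-random-walk / Berry--Esseen estimate of the flavor of Claims~\ref{clm:dcbnd} and~\ref{clm:dcprob} to the forward $\beta$-window. The delicate point is to carry this out so that the $m$-sum collapses against $\sum_m\mathbf 1\{F=m\}=1$ — i.e.\ so that one only "pays" for the genuinely successful window — rather than degrading into a bound proportional to the (large, of order $\sqrt{\mu\beta R}$) expected number of up-crossings $\ex{F}$. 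Reassembling the pieces and rounding constants under $R,\mu\beta\ge100$ then yields the two stated inequalities.
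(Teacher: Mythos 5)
Your setup is sound up to exactly the point you flag yourself: the facts that $N(T_A-\beta)=R+1$, the flow-conservation identity over the window, the lower bound $Z(s)\ge R-n_e$ there, and the measurability of $\{F\ge m\}$ and of the epoch index at the level-$(R+1)$ up-crossing stopping times $\tau^{(m)}$ are all correct, and your target inequality $\ex{N(T_A)-R}\le 1+\mu\beta\,\ex{n_e}+O(\sqrt{\mu\beta R})$ has the right shape. The genuine gap is that you never supply the mechanism that makes the $m$-sum ``collapse.'' With the tools you invoke (optional-stopping/coupling bounds in the style of Claim~\ref{clm:couplingEx}, Berry--Esseen as in Claim~\ref{clm:dcprob}), the natural per-attempt estimate is $\ex{\left(N(\tau^{(m)}+\beta)-R\right)\mathbf{1}\{\text{attempt $m$ succeeds}\}\mid\filt{\tau^{(m)}}}\le 1+\mu j^{(m)}\beta$, with no factor of the small success probability, so you are charged in full for every failed setup initiation. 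Once an attempt occurs in epoch $j$, each failure returns $N$ to $R$ and the walk re-reaches $R+1$ before $R-j-1$ with probability about $(j+1)/(j+2)$, so the number of attempts in that epoch is of order $j$, and the successful attempt typically occurs at depth $j\approx\sqrt{R}$; paying $1+\mu j\beta$ per attempt then yields a total of order $\mu\beta R$, not the claimed $\mu\beta\sqrt{R}$ (Cauchy--Schwarz against the success indicator also loses too much). What is actually needed is a conditioned-walk (meander/$h$-transform-type) estimate of the form $\ex{\left(N(\beta)-R\right)\mathbf{1}\{\text{success}\}\mid\filt{\tau^{(m)}}}\lesssim p_{\mathrm{succ}}\cdot\left(\mu j\beta+\sqrt{\mu\beta R}\right)$, i.e.\ a proof that selecting the window on which the walk stayed above $R+1$ biases its endpoint only by $O(\sqrt{\mu\beta R})$; you neither state nor prove such an estimate, and your sentence about ``the delicate point'' is precisely where the proof is missing. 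The same selection bias invalidates the second-moment step as written, since the fluctuation over the selected window is not conditionally mean-zero.

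For contrast, the paper sidesteps this entirely by decomposing at a high threshold rather than at level $R+1$: it distinguishes whether $T_A$ falls during a ``rise'' (where $N(T_A)-R\le C_3\sqrt{\mu\beta R}$ trivially) or during a ``fall'' following an up-crossing $u_i^{(j)}$ of $R+C_3\sqrt{\mu\beta R}$. Those up-crossings are stopping times and, crucially, rare: by Claim~\ref{clm:prisebnd} together with the arguments behind Claims~\ref{clm:a2} and~\ref{clm:updownprob}, their total count (weighted by epoch) is dominated by a shifted geometric variable, so the full unconditioned optional-stopping bounds from Claim~\ref{clm:couplingEx}, namely $C_3\sqrt{\mu\beta R}+\mu j\beta$ and $\left(C_3\sqrt{\mu\beta R}+\mu j\beta\right)^2+2\mu\beta R$, can be paid for every such up-crossing and summed against geometrically decaying probabilities; this is where $F_1=2.12$ and $F_2=3.645$ come from. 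To salvage your window argument you would have to either import that high-threshold device or actually prove the conditioned-on-success estimate; as written, the proposal does not establish the claim or its explicit constants.
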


\subsubsection{\torp{Proof of Claim~\ref{clm:UBintTBbtwn}: Bound on Integral Between Visits.}{Bound on Integral Between Visits.}}
To prove Claim~\ref{clm:UBintTBbtwn}, we break the integral into two parts: from the down-crossing $\dnb{i}$ to the up-crossing $\unb{i}$, and vice-versa.

\paragraph{First part: from $\dnb{i}$ to $\unb{i}$.} 
To bound the integral from the down-crossing to the next up-crossing, we first make the simple observation that
\begin{equation}
    \int_{\dnb{i}}^{\mintwo{\unb{i}}{X}}{\ntr \dd t} 
    \leq \left[\mintwo{\unb{i}}{X} - \dnb{i} \right]\cdot \mb,
\end{equation}
since $\unb{i}$ is the next time $N(t) \geq R + \mb$.
To bound $\ex{\mintwo{\unb{i}}{X} - \dnb{i}\middle | \filt{\dnb{i}}}$, we couple the system to an infinite-server $M$/$M$/$\infty$ queue at time $\dnb{i}$, and note that the coupled up-crossing time $$\Tilde{T}_{(R+\mb-1)\to (R+\mb)} + \dnb{i} \geq \unb{i} \geq \mintwo{\unb{i}}{X}.$$ Since $\mb \leq \sqrt{R},$ from standard results on the $M$/$M$/$\infty$ (reproduced in Section~\ref{sec:mminf}), , we have that
\begin{equation*}
    \mb\ex{\mintwo{\unb{i}}{X} - \dnb{i}\middle | \filt{\dnb{i}}} \leq \mb\upbnda = \frac{1}{\mu\mb}b_2\frac{\mb^2}{\sqrt{R}} \leq \frac{1}{\mu \mb}b_2\sqrt{R}.
\end{equation*}

\paragraph{Second part: from $\unb{i}$ to $\dnb{i+1}$.} From $\unb{i}$ onwards, we use the ``wait-busy'' bound. Applying the ``Wait Busy'' Claim (Claim~\ref{clm:waitbusy2}) with $h = \mb$, we obtain
\begin{align*}
    \ex{\int_{\unb{i}}^{\dnb{i+1}}{\ntr \dd t} \middle | \filt{\unb{i}}} &\leq \beta + g\left(1 + 2\mu R \ex{\mintwo{\beta}{\dnb{i+1} -\unb{i} \middle | \filt{\unb{i}}}} , 1, \mb \right)\\
    &\;\;\; + \mb \ex{\mintwo{\beta}{\dnb{i+1} -\unb{i} \middle | \filt{\unb{i}}}}.
\end{align*}
To bound the conditional expectation of $\ex{\mintwo{\beta}{\dnb{i+1} - \unb{i}}\middle | \filt{\unb{i}}}$, we make our usual coupling argument.
Define a coupled system M/M/1 queue with departure rate $\mu R$, and let $\dgencouple$ be the length of its busy period.
It suffices to bound $\ex{\mintwo{\dgencouple}{\beta}}$, for the coupled relative down-crossing time $\dgencouple$.  From standard results on simple random walks (Claim~\ref{clm:contint}), we have $\ex{\mintwo{\dgencouple}{\beta}}\leq \downbnd$, giving
\begin{align*}
    \ex{\int_{\unb{i}}^{\dnb{i+1}}{\ntr \dd t} \middle | \filt{\unb{i}}} &\leq \frac{1}{\mu \mb} \left[ \frac{R}{\mb} + 2\right]  + \beta + \left[\downbnd\right]\left(\mb + \frac{R}{\mb}\right)\\
    &= \frac{1}{\mu \mb} \left[ \frac{R}{\mb} + 2 + \mu \beta \mb  + \left(\mb^2 + R\right)\left[\frac{ b_1 \sqrt{\mu \beta}}{\sqrt{R}} + \frac{6}{R}\right]\right] \\
    &\leq \frac{1}{\mu \mb} \left[ \max\left(\sqrt{R}, \frac{\rho}{1-\rho}\right) + 14 + \mu \beta \sqrt{R}  + 2 b_1\sqrt{\mu \beta R}\right],
\end{align*}
where in the last line we used $\mb \leq \sqrt{R}$; combining these two parts gives Claim~\ref{clm:UBintTBbtwn}. \hfill \Halmos

\subsubsection{\torp{Proof of Claim~\ref{clm:UBintTBprob}, Upper Bound on the Probability of Another Visit.}{Bound on the Probability of Another Visit.}}
To see \eqref{eq:succ_TB_prob}, we first note that, if there is another upcrossing, then there must be another downcrossing. As such, it suffices to upper bound $\pr{\unb{i} < X \middle | \dnb{i}}$. To do this, we note that the number of busy servers $Z(t) \geq R$. From Claim~\ref{clm:coupling}, it thus suffices to bound the corresponding probability in the coupled system with exactly $R$ busy servers. But this is simply the probability that a simple random walk started at $W(0) = \mb - 1$ hits $W(t) = \mb$ before it hits $W(t) = 0$. Classically, this probability is $\frac{1}{\mb}$; this proves the claim, and thus Lemma~\ref{lem:UBintTB}. \hfill \Halmos

\newcommand{\longep}{\tau_L + \setuptime}
\newcommand{\xcouple}{\Tilde{X}}
\newcommand{\tlba}{\tau_L + \setuptime}
\subsection{Proof of Lemma~\ref{lem:LBX}: Lower Bound on the Cycle Length}\label{sec:LBX}
\paragraph{Preliminaries.}

The proof of this lemma is much simpler than the others. Before describing our strategy, we first state some preliminaries. Recall the definition of the start of the $j$-th epoch $\tau_j \triangleq \minpar{t\geq 0: N(t) \leq R - j},$ that we call the period $\bigg[\tau_j, \mintwo{\tau_{j+1}}{T_A} \bigg )$ the $j$-th epoch, and the we say epoch $j$ occurs if $\tau_j < T_A$. Now, say epoch $j$ is \textit{long} if it lasts longer than a setup time $\setuptime$; note that such an epoch must exist, since servers can only turn on during long epochs, and a server must turn on before the accumulation phase ends at time $T_A$. Let $L \triangleq \minpar{ j \in  \left \{ 0,1,2,\dots, R\right\}: \mintwo{\tau_{j+1}}{T_A} - \tau_j > \beta}$ be the index of the \emph{first} long epoch.
Note that, although the random time $\tau_L$ is \emph{not} a stopping time (we do not know how long an epoch will last when the epoch starts), the first moment we can identify epoch $L$, the random time $\tau_L + \setuptime$, \emph{is} a stopping time. Moreover, we know that $\tau_L +\setuptime < T_A$. From here, one sees that $\ex{X} = \ex{\longep } + \ex{X - \left(\longep\right)} \geq \setuptime + \ex{X - \left(\longep\right)}.$
To complete the proof, it suffices to show
\begin{equation}\label{eq:LBX1}
    \ex{X - \left(\tlba\right)} \geq \frac{L_1 \mu \beta \sqrt{R}}{\mu k (1-\rho)}.
\end{equation}

\subsubsection{\texorpdfstring{Proof of \eqref{eq:LBX1}: Lower Bound on the Remaining Cycle Length.}{Proof of Eq.~\ref{eq:LBX1}: Lower Bound on the Remaining Cycle Length.}}
To show \eqref{eq:LBX1}, we first show we can bound an analogous quantity in a coupled process, then appeal to standard results on the M/M/1 queue.

\paragraph{Defining the coupled process $\ncouple(t)$.} 
Note that the number of busy servers $Z(t) \leq k$.
It follows from Claim~\ref{clm:coupling} that, for any time $t \geq \longep$, the coupled process below satisfies $\ncouple(t) \leq N(t)$:
$$    \ncouple(t) \triangleq N \left(\longep\right) + A\left( (\longep, t]\right) - \depOp{k}{(\longep, t]}$$

\paragraph{Using the coupled process to bound $\ex{X-\longep}$.}
We now use this process to bound $\ex{X - \longep}$.
Recall that the end of the renewal cycle $X\triangleq \minpar{t>0: Z(t^-) = R+1, Z(t) = R}$ occurs when the $(R+1)$-th server turns off.
It is useful to view $X$ in a different way: since the accumulation time $T_A$ is the moment when the $(R+1)$-th server turns on, we also know that the end of the renewal cycle $X=\minpar{t>T_A: N(t) \leq R}$ is the first moment after time $T_A$ that the number of jobs $N(t) \leq R$.
Furthermore, since the time $\longep$ happens \textit{before} any server could possibly turn on, the time $\longep < T_A$. 
Denoting the end of the coupled renewal as $\xcouple$ as the first moment the coupled process $\ncouple(t) \leq R$, we have 
$$ \xcouple \triangleq \minpar{t>\tlba: \ncouple(t) \leq R} \leq \minpar{t > \tlba: N(t) \leq R} \leq \minpar{t > T_A: N(t) \leq R} = X.$$
\paragraph{Bounding the end of coupled renewal $\ex{X-\longep}$.}
To bound the quantity $\ex{X-\longep}$, we condition on the filtration at time $\longep$ and use standard results on the M/M/1 busy period.
Note that, since the departure rate of the coupled system is fixed at $\mu k$, the period $\left[\xcouple - \tlba\right]$ is precisely the length of an M/M/1 busy period with 1) arrival rate $k\lambda$, 2) departure rate $k\mu$, and 3) started by $\left[N\left(\tlba\right) - R\right]^+$ jobs.
It follows that $\ex{X-\left(\tlba\right)\middle | \filt{\tlba}} \geq \frac{\left[N\left(\tlba\right)-R\right]^+}{\mu k(1-\rho)}$.
Taking expectations, applying Jensen's and results from the lower bound (\eqref{eq:ntlba} and \eqref{eq:lbound}), we obtain, proving \eqref{eq:LBX1}, Lemma~\ref{lem:LBX}, and Theorem~\ref{thm:main} simultaneously,
\begin{equation*}
    \ex{X-\left(\tlba\right)} \geq \frac{\left[\ex{N\left(\tlba\right) - R}\right]^+}{\mu k (1-\rho)} \geq \frac{\mu \beta \ex{L}}{\mu k (1-\rho)} \geq \frac{L_1 \mu \beta \sqrt{R}}{\mu k(1-\rho)}. \tag*{\Halmos}
\end{equation*}

\chmades{
\section{Proof of Multiplicative Tightness}\label{sec:tightness}
We now show that the upper and lower bounds of Theorems~\ref{thm:main} and~\ref{thm:improvement}, respectively, differ by at most a multiplicative factor. 

\begin{theorem}\label{thm:tightness}
    The bounds of Theorems~\ref{thm:main} and~\ref{thm:improvement} lie within a constant multiplicative factor of each other. In particular, using $=_c$ to denote equivalence modulo a multiplicative constant,
    \begin{equation}\label{eq:constAim}
    \ex{Q(\infty)} =_c  \beta \sqrt{R} + \frac{1}{1-\rho}.
    \end{equation}
\end{theorem}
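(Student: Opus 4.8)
The plan is to prove \eqref{eq:constAim} by directly comparing the two closed-form bounds of Theorems~\ref{thm:main} and~\ref{thm:improvement}. Since $\ex{Q(\infty)}$ is invariant under rescaling time, I will work with $\mu=1$ (equivalently, read every $\beta$ below as $\mu\beta$), so that the standing hypotheses read $\beta\ge 100$, $R=k\rho\ge 100$, and \eqref{eq:constAim} asks for $\ex{Q(\infty)}=_c\beta\sqrt R+\tfrac1{1-\rho}$. Because $a+b\le 2\max\{a,b\}\le 2(a+b)$ for $a,b\ge 0$, it suffices to show (U) the upper bound of Theorem~\ref{thm:main} is at most $C_{\mathrm U}\!\left(\beta\sqrt R+\tfrac1{1-\rho}\right)$, and (L) the lower bound of Theorem~\ref{thm:improvement} is at least $c_{\mathrm L}\beta\sqrt R$ \emph{and} at least $c_{\mathrm L}\tfrac1{1-\rho}$, for absolute constants $C_{\mathrm U},c_{\mathrm L}$.

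For (U) the argument is mechanical. Substituting $R=k\rho$ (so $\tfrac Rk=\rho$ and $\tfrac{\sqrt R}{k}=\tfrac{\rho}{\sqrt R}$) rewrites $g\!\left(9\beta^2R,3\beta\sqrt R,k(1-\rho)\right)$ as the sum of $\tfrac{9\beta^2\rho}{2(1-\rho)}$, $\tfrac{3\beta\rho^2}{\sqrt R(1-\rho)^2}$, and $\tfrac{9\beta\rho}{2\sqrt R(1-\rho)}$. I then bound the big fraction of Theorem~\ref{thm:main} by dividing its numerator termwise by the two elementary lower bounds on its denominator, $\beta+\tfrac{L_1\beta\sqrt R}{k(1-\rho)}\ge\beta$ and $\ge\tfrac{L_1\beta\sqrt R}{k(1-\rho)}=\tfrac{L_1\beta\rho}{\sqrt R(1-\rho)}$: the $4.05\beta^2\sqrt R$ term and the $\tfrac{9\beta^2\rho}{2(1-\rho)}$ piece become $O(\beta\sqrt R)$, while the other two pieces of $g$ become $O\!\left(\tfrac1{1-\rho}\right)$. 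Adding the two standalone terms $3.6\sqrt{\beta R}\le 3.6\,\beta\sqrt R$ and $2.04\tfrac{\rho}{1-\rho}\le 2.04\tfrac1{1-\rho}$ gives (U) with an explicit constant (about $14$ for $L_1=\tfrac23\sqrt{\pi/2}$).

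For (L) I first bound the denominator of Theorem~\ref{thm:improvement} from above: using $\beta\ge 100$, the $\tfrac32\ln\beta$ term is $\le\tfrac{3\ln 100}{200}\beta$ (as $x\mapsto\tfrac{\ln x}{x}$ is decreasing past $e$), and the remaining constant and $\max(\cdot)$ terms are each at most (constant)$\cdot\beta$ since $\beta,R\ge 100$; hence the denominator is at most $C_0\beta+\tfrac{F_1\beta\sqrt R}{k-R}$ for an absolute $C_0$. Now split on $k-R=k(1-\rho)$. If $k-R\ge\tfrac{F_1\sqrt R}{C_0}$ then $\tfrac{F_1\beta\sqrt R}{k-R}\le C_0\beta$, so keeping only $L_1\beta^2\sqrt R$ in the numerator gives $\ex{Q(\infty)}\ge\tfrac{L_1}{2C_0}\beta\sqrt R$; moreover $\tfrac1{1-\rho}=\tfrac{k}{k-R}\le\tfrac{C_0k}{F_1\sqrt R}$, and a short check (on $\rho\le\tfrac12$ versus $\rho\ge\tfrac12$, using $\beta\sqrt R\ge 100\sqrt R$) shows $\tfrac1{1-\rho}\le\beta\sqrt R$ here, so both halves of (L) follow. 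If instead $k-R<\tfrac{F_1\sqrt R}{C_0}$ (the near-saturated regime, $\rho$ near $1$), then $n\triangleq\bigl[L_1\beta\sqrt R-(k-R)\bigr]^+\ge\tfrac{L_1}{2}\beta\sqrt R$ (again by $\beta\ge 100$), the denominator is $<\tfrac{2F_1\beta\sqrt R}{k-R}$, and the busy-period term obeys $\ibp{n}{k-R}\ge\max\!\bigl\{\tfrac{n^2}{2(k-R)},\tfrac{nR}{(k-R)^2}\bigr\}$. Dividing the first of these by the denominator gives $\ex{Q(\infty)}\gtrsim\beta\sqrt R$; dividing the second gives $\ex{Q(\infty)}\gtrsim\tfrac{R}{k-R}=\tfrac{\rho}{1-\rho}\asymp\tfrac1{1-\rho}$ (here $\rho\ge\tfrac12$, since $1-\rho<\tfrac{F_1}{C_0\sqrt R}\le\tfrac{F_1}{10C_0}$). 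Combining the two cases gives (L), and together with (U) this yields \eqref{eq:constAim}.

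I expect the near-saturated regime of (L) to be the one real subtlety: there the ratio of the $\beta^2\sqrt R$ numerator term to the (now $\tfrac1{k-R}$-sized) denominator degenerates, so the matching $\beta\sqrt R+\tfrac1{1-\rho}$ scaling has to be extracted entirely from the busy-period term $\ibp{\cdot}{\cdot}$ — and crucially from \emph{both} of its components, the $\tfrac{n^2}{j}$ part supplying the $\beta\sqrt R$ growth and the $\tfrac{nR}{j^2}$ part supplying the $\tfrac1{1-\rho}$ growth. Everything else is bookkeeping: tracking the absolute constants ($L_1$, $F_1$, $D_1$–$D_3$, $C_0$) and repeatedly invoking $\beta\ge 100$, $R\ge 100$ to discard lower-order terms.
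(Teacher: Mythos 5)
Your proposal is correct and matches the paper's proof in substance: both are direct algebraic comparisons of the two closed-form bounds that strip constants and lower-order terms and then do casework on the near-saturated regime, extracting the $\mu\beta\sqrt{R}$ and $\frac{1}{1-\rho}$ scalings from the two components of $g$ (for the upper bound) and of the busy-period term $\ibp{\cdot}{\cdot}$ (for the lower bound). The only difference is organizational: you sandwich each bound against the target $\mu\beta\sqrt{R}+\frac{1}{1-\rho}$ directly, splitting on $k-R$ versus a $\sqrt{R}$-sized threshold, whereas the paper splits on $k(1-\rho)$ versus $\beta\sqrt{R}$ to replace the $\left[\cdot\right]^+$ term and then routes both bounds through the common intermediate expression $\mu\beta\sqrt{R}+\frac{\sqrt{R}}{k(1-\rho)+\sqrt{R}}\cdot\frac{1}{1-\rho}$.
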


\subsection{Proof for Lower Bound}
We prove Theorem~\ref{thm:tightness} in two parts, showing equivalence for the lower bound, then for the upper bound. For the lower bound, we first discard all constants and a number of terms in the denominator, since $\beta > \frac{1}{\mu}$ by assumption. Doing so, we obtain
\begin{equation}\label{eq:lbstripped}
    \ex{Q(\infty)} \geq_c 
    \frac{\mu \beta^2 \sqrt{R} + \frac{\left[L_1\beta\sqrt{R} - k(1-\rho)\right]^+}{\mu k (1-\rho)}\left[\left[L_1\beta\sqrt{R} - k(1-\rho\right]^+ + \frac{1}{1-\rho}  \right]}
    {\beta +  \frac{\beta \sqrt{R}}{\mu k (1 - \rho)}},
\end{equation}
where $\geq_c$ denotes that the inequality holds up to an (unspecified) constant factor.

\subsubsection{\torp{Replacing the $\left[\cdot\right]^+$ Term.}{Replacing the ``positive part'' term.}}
Now, we show that the $\left[\beta\sqrt{R} - k(1-\rho)\right]^+$ term can be replaced by the term $\mu \beta\sqrt{R}$, while losing only a constant factor; this turns out to be the difficult part.
We approach this by casing on whether the positive term $\frac{1}{2}L_1 \mu \beta \sqrt{R} \geq k(1-\rho)$.

\paragraph{First case.} 
If we have $\frac{1}{2}L_1\beta \sqrt{R} \geq k(1-\rho)$, then $\left[\mu L_1\beta\sqrt{R} - k(1-\rho)\right]^+ \geq \frac{1}{2}L_1 \beta \sqrt{R} =_c \mu\beta \sqrt{R}$.

\paragraph{Second case.} In the second case, assume that $\frac{1}{2}L_1\beta \sqrt{R} < k(1-\rho)$. 
In this case, even if we increase the value of the numerator by replacing the $[\cdot]^+$ term, the relevant term in the numerator becomes
\begin{align*}
     \frac{\beta\sqrt{R}}{ k (1-\rho)}\left[\mu \beta\sqrt{R} + \frac{1}{1-\rho}  \right] &= \mu \beta^2 \sqrt{R}\cdot \frac{\sqrt{R}}{k(1-\rho)} \cdot \left[1  + \frac{1}{\mu \beta \sqrt{R}} \right] =_c \mu \beta^2 \sqrt{R}\cdot \frac{\sqrt{R}}{k(1-\rho)} \leq_c \mu \beta^2 \sqrt{R}\cdot \frac{1}{\beta},
\end{align*}
where in the second equality we have used that $\sqrt{R}\geq 1$ and $\mu\beta \geq 1$, and in the final inequality we have used our case assumption.
From here, it's clear that one can replace the term $[L_1 \mu \beta \sqrt{R} - k(1-\rho)]^+$ with the term $\mu \beta \sqrt{R}$ without altering the scaling behavior of numerator. In other words,
\begin{equation}\label{eq:tightRes}
    \ex{Q(\infty)}  \geq_c  \frac{\mu \beta^2 \sqrt{R} + \frac{\mu \beta\sqrt{R}}{\mu k (1-\rho)}\left[\beta\sqrt{R} + \frac{1}{1-\rho}  \right]}
    {\beta +  \frac{\beta \sqrt{R}}{\mu k (1 - \rho)}}
    =_c \mu\beta\sqrt{R} + \frac{\sqrt{R}}{k(1-\rho) + \sqrt{R}}\frac{1}{1-\rho}.
\end{equation}

\subsubsection{Bounding the Final Term.} We now show equivalence for this final term, i.e. that
\begin{equation}\label{eq:tightEquiv}
     \mu\beta\sqrt{R} + \frac{\sqrt{R}}{k(1-\rho) + \sqrt{R}}\frac{1}{1-\rho} =_c \mu\beta \sqrt{R} + \frac{1}{1-\rho}.
\end{equation}
To do so, we bound the rightmost term in \eqref{eq:tightRes}. 
Note that, since $\frac{\sqrt{R}}{k - R + \sqrt{R}}  \leq \frac{R}{k - R + R} = \rho$, in order for this term to have an appreciable effect on the scaling, we must have that $\frac{\rho}{1-\rho} \geq_c
\mu \beta \sqrt{R}$, or, phrased more usefully, we must have $ \sqrt{R} \geq_c \mu \beta k(1-\rho)$.
But even in this case, we can bound the factor in the rightmost term of \eqref{eq:tightRes} with 
$\frac{\sqrt{R}}{k(1-\rho)  + \sqrt{R}} \geq_c \frac{\beta k(1-\rho)}{k(1-\rho) + \mu \beta k(1-\rho)} = \frac{\mu \beta}{1 + \mu \beta} =_c 1$; the multiplicative equivalence \eqref{eq:tightEquiv} follows.

\subsection{Proof for Upper Bound}

The proof for the upper bound follows along the same lines.
First, note that the terms outside of the fraction are $\frac{R}{M} \leq \frac{R}{k(1-\rho)} = \frac{\rho}{1-\rho}$ and $\sqrt{\mu \beta R} \ll \mu \beta \sqrt{R}$. 
Discarding the lower order terms and constants, we obtain
\begin{equation}
    \ex{Q(\infty)} \leq_c \mu \beta \sqrt{R} + \frac{\rho}{1-\rho} + \frac{\mu \beta^2 \sqrt{R} + g\left(9 (\mu  \beta)^2 R, 3 \mu \beta \sqrt{R}, k(1-\rho)\right)}{\beta + \frac{\mu \beta \sqrt{R}}{\mu k(1-\rho)}},
\end{equation}
where one should recall that $M =_c \min \left( k(1-\rho), \sqrt{\mu \beta R}\right)$.
For the terms in the fraction, the denominator of the upper bound is already up-to-constants-equivalent to the denominator of \eqref{eq:lbstripped}. It thus suffices to show that the numerator of the upper bound aligns with the numerator of \eqref{eq:tightRes}. 
However, note that, by definition, the function $$g\left(9(\mu\beta)^2 R, 3(\mu\beta)\sqrt{R}, k(1-\rho)\right) \triangleq \frac{3 \mu \beta \sqrt{R}}{k(1-\rho)}\left[ \frac{3 \mu \beta \sqrt{R} + 1}{2} + \frac{1}{1-\rho}\right];$$ thus, the terms are clearly equivalent up to scaling. \hfill \Halmos

}

\section{Conclusion and Future Work}\label{sec:conclusion}

\paragraph{Summary.} In this work, we provided a comprehensive characterization of the average waiting time within the M/M/k/Setup-Deterministic.
In particular, we proved the first multiplicatively-tight upper and lower bounds, respectively, on the average waiting time.
By combining our upper and lower bounds, we also obtained an approximation which we demonstrate to be highly accurate.
These results significantly advance our understanding of how setup times affect queueing systems beyond the prevalent Exponential setup model, providing new insights into capacity provisioning in systems with setup times.

Our work leaves open many possible directions for future research, discussed below.
\paragraph{Tail of Response Time.} 
One such possibility is studying other aspects of system performance in the M/M/k/Setup-Deterministic.
For context, in this work, we only analyze the average waiting time of a job in the system.
However, in many applications, a customer's quality of service is determined by more advanced metrics like the \emph{tail} of their waiting time, e.g. the fraction of jobs which are served within 100 ms of their arrival.
While our analysis lends itself most easily to studying the average waiting time, it’s conceivable that, with the right selection of stopping time and martingales, our results could be extended to higher moments and possibly even to a direct analysis of the tail.

\chmade{
\paragraph{Other natural policies.}
Another possible expansion on this work is studying other natural policies.
For example, consider the following variation on the policy studied in this paper: instead of turning a server off instantly when there is no work for that server to do, we allow the server to sit idle for, say, $5$ seconds before turning off.
One would imagine that such a policy would have a lower average waiting time and a higher power consumption than this paper’s setup policy.
Moreover, one would anticipate that the tradeoff between the waiting time and the power consumption could be tuned by changing the amount of allowed idle time $\gamma$; at $\gamma = 0$, we recover the policy studied in this paper, and as $\gamma \to \infty$, we recover the ``Always On’’ policy.
This class of policies, often referred to as the DelayedOff class of policies, have been studied extensively in the Exponential setup model due to their tunability, ease of implementation, and excellent empirical performance \cite{gandhi2012autoscale,pender2016law,gandhiguptaharcholkuzuch10, gandhiharchol11Allterton}; it would be a great contribution to extend our knowledge of these policies to settings with Deterministic setup times.
\paragraph{Tradeoff between waiting time and power.}
Another important avenue of study lies in characterizing the fundamental tradeoff between average waiting time and average power consumption.
One point on the waiting-v.s.-power Pareto curve is clear: to obtain the minimum possible waiting time, one should keep all $k$ servers on all the time.
However, all the other points on the optimal tradeoff curve are unknown; furthermore, even if we knew a given (waiting time, power consumption) pair was feasible, it would be entirely unclear how we should go about achieving it.
For example, it’s conceivable that the DelayedOff class of policies mentioned previously could perform near-optimally as one varies the ``idle time’’ parameter $\gamma$.
A more principled understanding of the optimal tradeoff between waiting time and power consumption could allow us to directly derive a policy which achieves this optimal tradeoff, and could grant us greater insight into how to design more performant setup systems in the practical setting.
\paragraph{More general setup distributions.}
Lastly, we note that our results might be generalized to setup distributions beyond Deterministic.
A promising step would be the establishment of closed form bounds for the \textit{Exponential} setup time model; note that no closed-form result exists for the Exponential model, excluding the infinite server result of \cite{gandhi2010server}.
With the establishment of an Exponential result, it seems likely that our arguments could be used to obtain a result for a Hyperexponential setup system, or even bootstrapped into a full phase-type result.
Another way to generalize our setup model is to study a system where setup times take on two possible values (a short setup or a long setup); whether a setup is long or short (and how the setup time of a server changes over time) could be used to model various real-world scenarios, e.g. where variable data loading times or intermittent network congestion.
By generalizing our setup model in these ways, we could greatly expand the practical utility of our results and gain even deeper insight into what fundamentally governs the performance of systems with setup times.
}

\bibliographystyle{./opre_style/informs2014}
\bibliography{setup}

\begin{APPENDICES}



\newpage
\removeforarxiv{
\begin{center}
{\Large \textbf{Electronic Companion}}
\end{center}
}
\section{Proof of the Improved Lower Bound, Theorem~\ref{thm:improvement}}\label{sec:improvement}
\newcommand{\erbterm}{e^{\frac{1}{24 R \setuptime}}}
\newcommand{\qcouple}{\Tilde{Q}}
\newcommand{\mtb}{\mintwo{\tau_i + \beta}{\tau_{i+1}}}
\newcommand{\upc}{\gamma}
\newcommand{\dnc}{\alpha}
\newcommand{\ncac}{\ncouple}
\newcommand{\cind}{n_\alpha}
\newcommand{\tac}{\Tilde{T}_A}


\subsection{The New Lower Bound: Proof Outline.}
\paragraph{Basic Structure.} We prove Theorem~\ref{thm:improvement} via the MIST method. 
As noted in Section~\ref{sec:keyideas}, we begin by applying the Renewal-Reward theorem to the queue length $Q(t)$, defining our renewal points as those points in time where the $(R+1)$-th server turns off. Defining time $0$ to be one of these points, and defining the cycle time $X\triangleq \minpar{t>0: Z(t^-) = R+1, Z(t) = R}$ as the next point, this gives
\begin{equation*}
    \ex{Q(\infty)} = 
    \frac
    {
        \ex{ \int_0^X{Q(t) \dd t}}
    }
    {
        \ex{X}
    }.
\end{equation*}
To obtain our lower bound, it suffices to lower bound the numerator and upper bound the denominator of this fraction, i.e. lower bound  $\ex{ \int_0^X{Q(t) \dd t}}$ and upper bound $\ex{X}$. 
The time integral lower bound is handled by Lemma~\ref{lem:LBint}, which we state at the end of this section. 
The cycle length upper bound is split into two separate lemmas: Lemma~\ref{lem:UBTA} upper bounds the length of the cycle's ``first part'' and Lemma~\ref{lem:UBTB} bounds the length of its ``second part''.

\paragraph{Decomposition into phases.} However, before we state or prove these lemmas, we first discuss the decomposition of the renewal cycle $[0,X)$ into two parts; one might think of this as a ``miniature'' application of the MIST method.
We begin by noting that the end of the renewal cycle is moment when the $(R+1)$-th server turns off. 
Since the $(R+1)$-th server is off at the start of a renewal period, we can break the renewal cycle into two phases based on whether the $(R+1)$-th server has turned on yet. 
Formally, we define the accumulation time $T_A \triangleq \minpar{t>0: Z(t) = R+1}$ as the first moment that the $(R+1)$-th server turns on.
From here, we can focus separately on the accumulation phase, from time $0$ to time $T_A$, and the draining phase, from time $T_A$ to time $X$. 

With this decomposition, we can now state our main lemmas. Their proofs follow in sequence afterwards.

\begin{restatable}[Lower bound on Cycle Integral]{lemma}{lbint}\label{lem:LBint}
    Define busy period integral $\ibp{x}{z}$ as 
    $$\ibp{x}{z} \triangleq \frac{x}{\mu z} \left[ \frac{x+1}{2} + \frac{1}{1 - \frac{k\lambda}{k\lambda + \mu z}}\right] = \frac{x}{\mu z} \left[ \frac{x+1}{2} + \frac{R}{z}\right].$$ For the time integral of the queue length $Q(t)$ over a renewal cycle, we have
    \begin{equation*}
        \ex{\int_0^{X}{Q(t) \dd t}} \geq \frac{1}{2} \mu \beta^2 L_1\sqrt{R} + \ibp{\left[\mu \beta L_1 \sqrt{R} - (k-R) \right]^+}{k - R}.
    \end{equation*}
\end{restatable}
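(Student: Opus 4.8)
The plan is to lower-bound the renewal-cycle integral $\ex{\int_0^{X}{Q(t)\dd t}}$ by retaining only two disjoint sub-intervals of $[0,X)$, one producing each term of the stated bound. Recall from the proof of Lemma~\ref{lem:LBX} that $L \triangleq \minpar{j \geq 0 : \mintwo{\tau_{j+1}}{T_A} - \tau_j > \beta}$ is the index of the first \emph{long} epoch, that such an epoch always exists, that $\tau_L + \beta$ is a stopping time, and that $\tau_L + \beta < T_A < X$. Since $Q(t) \geq 0$ and the intervals $[\tau_L,\tau_L+\beta)$ and $[\tau_L+\beta,X)$ are disjoint subsets of $[0,X)$,
\begin{equation*}
    \ex{\int_0^{X}{Q(t)\dd t}} \;\geq\; \ex{\int_{\tau_L}^{\tau_L+\beta}{Q(t)\dd t}} + \ex{\int_{\tau_L+\beta}^{X}{Q(t)\dd t}},
\end{equation*}
and I would bound the two terms separately.

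\textbf{The accumulation window.} First I would observe that on $[\tau_L,\tau_L+\beta]$ the number of busy servers is \emph{exactly} $Z(t) = R - L$: no server can complete setup during the first $\beta$ units of an epoch (the epoch begins just after a server shuts down, so no server is in setup, and a setup lasts $\beta$); no server shuts down, since during a long epoch $N(t) \geq R-L$ on this whole window; and the window $[t-\beta,t]$ always contains $\tau_L$, where $N = R-L$, so $Z(t) = \minbr{k,\ \min_{s\in[t-\beta,t]}N(s)} = R - L$. Hence $Q(t) = N(t) - (R-L)$ on this window, a process started at $Q(\tau_L) = 0$ with net upward drift $k\lambda - \mu(R-L) = \mu L$. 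A drift estimate of exactly the type used to establish $\ex{N(\tau_L+\beta) - R} \geq \mu\beta\ex{L}$ in the cycle-length analysis, but integrated over $[0,\beta]$ rather than evaluated at the endpoint, yields $\ex{\int_{\tau_L}^{\tau_L+\beta}{Q(t)\dd t}} \geq \frac{1}{2}\mu\beta^2\ex{L}$; combining with $\ex{L} \geq L_1\sqrt{R}$ gives $\frac{1}{2}\mu\beta^2 L_1\sqrt{R}$, the first term.

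\textbf{The draining phase.} On $[\tau_L+\beta, X]$ we always have $Z(t) \leq k$, so $Q(t) = N(t) - Z(t) \geq N(t) - k$. By the Basic Coupling Claim (Claim~\ref{clm:coupling}), applied with $k$ busy servers from time $\tau_L+\beta$ onward, the process $\tilde N(t) \triangleq N(\tau_L+\beta) + \Pi_A((\tau_L+\beta,t]) - \depOp{k}{(\tau_L+\beta,t]}$ satisfies $\tilde N(t) \leq N(t)$; note $\tilde N$ is an M/M/1 with arrival rate $k\lambda$ and departure rate $\mu k = k\lambda + \mu(k-R)$. Let $\sigma \triangleq \minpar{t > \tau_L+\beta : \tilde N(t) \leq k}$. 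Since $\tilde N$ is nearest-neighbour and $k > R$, it reaches level $k$ no later than level $R$, and $\minpar{t>\tau_L+\beta:\tilde N(t)\leq R} \leq \minpar{t > \tau_L+\beta: N(t)\leq R} \leq \minpar{t>T_A: N(t)\leq R} = X$, so $\sigma \leq X$; and on $[\tau_L+\beta,\sigma]$ we have $Q(t) \geq \tilde N(t) - k \geq 0$. The shifted process $\tilde N(t) - k$ is exactly an M/M/1 queue with arrival rate $k\lambda$ and service rate $k\lambda + \mu(k-R)$ started with $[N(\tau_L+\beta) - k]^+ = [(N(\tau_L+\beta)-R) - (k-R)]^+$ jobs, run over its busy period, so by the definition of $\ibp{\cdot}{\cdot}$,
\begin{equation*}
    \ex{\int_{\tau_L+\beta}^{X}{Q(t)\dd t}} \;\geq\; \ex{\int_{\tau_L+\beta}^{\sigma}{(\tilde N(t)-k)\dd t}} \;=\; \ex{\ibp{\left[(N(\tau_L+\beta)-R)-(k-R)\right]^+}{k-R}}.
\end{equation*}
Since $x \mapsto \ibp{[x-(k-R)]^+}{k-R}$ is nondecreasing and convex, Jensen's inequality together with $\ex{N(\tau_L+\beta)-R} \geq \mu\beta\ex{L} \geq L_1\mu\beta\sqrt{R}$ gives $\ex{\int_{\tau_L+\beta}^{X}{Q(t)\dd t}} \geq \ibp{[\,L_1\mu\beta\sqrt{R} - (k-R)\,]^+}{k-R}$, the second term; adding the two bounds proves the lemma.

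\textbf{Main obstacle.} The draining step is essentially routine given Claim~\ref{clm:coupling} and the closed-form busy-period integral, the only care needed being that the coupled first passage to level $k$ terminates before $X$ and that the degenerate case $N(\tau_L+\beta)\leq k$ is consistent with the $[\,\cdot\,]^+$. The delicate part is the accumulation window: $\tau_L$ is \emph{not} a stopping time (we only learn epoch $L$ is long at time $\tau_L+\beta$), and conditioning on ``epoch $L$ is long'' distorts the law of $N$ on $[\tau_L,\tau_L+\beta]$, so the clean inequality $\ex{\int_{\tau_L}^{\tau_L+\beta}{Q(t)\dd t}} \geq \frac{1}{2}\mu\beta^2\ex{L}$ has to be argued with some care --- for instance by summing over epochs $j$ using the $\filt{\tau_j}$-measurable event $\{\tau_j < T_A\}$, running a nested optional-stopping/drift argument on each epoch exactly as in the cycle-length proof, and only then invoking $\ex{N(\tau_L+\beta)-R} \geq \mu\beta\ex{L}$ and $\ex{L} \geq L_1\sqrt{R}$.
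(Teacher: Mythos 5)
Your proposal is correct and follows essentially the same route as the paper: split the cycle at $\tau_L + \beta$, extract $\tfrac{1}{2}\mu\beta^2\ex{L}$ from the first long epoch via a drift/optional-stopping argument that avoids conditioning on the epoch being long (summing over epochs $j$ with $\filt{\tau_j}$-measurable events, exactly as you flag in your ``main obstacle'' paragraph), then couple to an M/M/1 with $k$ busy servers after $\tau_L+\beta$ and apply the busy-period integral with Jensen's inequality together with $\ex{N(\tau_L+\beta)-R} \geq \mu\beta\ex{L}$ and $\ex{L} \geq L_1\sqrt{R}$. This is the paper's own decomposition and argument, so no further comparison is needed.
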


\begin{restatable}[Upper bound on Accumulation Phase Length]{lemma}{ubacc_cycle}\label{lem:UBTA}
    Recall that $$T_A \triangleq \minpar{t > 0: Z(t) \geq R + 1}$$ is the amount of time until the $(R+1)$-th server turns on. Then we can bound the expectation $\ex{T_A}$ by
    \begin{equation*}
        \ex{T_A} \leq \erbterm \left( \sqrt{1 + \frac{1}{2 R \setuptime} } \right) 
         \left[ 
            1 
            + \frac{e^{\frac{1}{12 R}}}{\sqrt{2 \mu \beta}}       
        \right] \beta \leq 1.08 * \beta.
    \end{equation*}
\end{restatable}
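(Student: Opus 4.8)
The plan is to write $T_A=\beta+W$, where $W$ is the elapsed time from $0$ until the \emph{start} of the setup attempt that ultimately completes, and then to show that $\ex{W}$ is only a lower-order multiple of $\beta$. Server $R+1$'s setup completes exactly one setup time after the start of the first excursion of $N$ above level $R$ that survives for a full duration $\beta$, so $W$ is a sum of the \emph{failed} above-$R$ excursions together with the intervening idle gaps. I would therefore bound $\ex{W}$ by a MIST-style application of Lemma~\ref{lem:waldstop} with the above-$R$ excursion start times playing the role of the intervening stopping times (mirroring how $\tau_j$ and the $u_i$'s are used in the proof of Theorem~\ref{thm:main}).

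First I would record the structural fact that, during the accumulation phase, $Z(t)=\min_{s\in[t-\beta,t]}N(s)$, so each time $N$ attains a new running minimum a server shuts off and $Z$ drops. Hence an above-$R$ excursion is an exactly critically loaded M/M/1 busy period only at the very start; the instant $N$ first dips below $R$, the number-of-jobs process acquires a strictly positive drift $\mu(R-Z)$, which only grows as $N$ visits still lower levels. Coupling $N$ over each excursion to the relevant M/M/1 via Claim~\ref{clm:coupling}, I would bound the probability that an excursion survives a full setup time using the reflection/Berry--Esseen estimate of Claim~\ref{clm:dcprob}, and bound the expected truncated length of a failed excursion via the optional-stopping argument behind Claim~\ref{clm:dcbnd} (applied to the drift-corrected martingale $N(t)-\mu(R-Z)t$ and its second-moment analogue). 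Feeding the per-level survival probabilities into the $\prod_i(1-p_i)$ factor of Lemma~\ref{lem:waldstop} should show that the total failed-excursion time is governed by the brief near-critical transient and is $o(\beta)$ --- indeed it should come out as the $O(\sqrt{\beta/\mu})$ correction visible in the stated bound. What then remains is to track the Stirling- and Jensen-type constants generated along the way (these are the sources of the $e^{1/(12R)}$, $e^{1/(24R\beta)}$, and $\sqrt{1+1/(2R\beta)}$ factors) and, finally, to invoke $R,\mu\beta\ge 100$ to collapse the whole expression below $1.08\beta$.

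\textbf{Main obstacle.} The delicate point is obtaining a \emph{leading} constant of $1$ rather than $2$. A crude worst-case coupling that freezes $Z\equiv R$ throughout the phase makes every above-$R$ excursion exactly critically loaded; one then needs on the order of $\sqrt{\mu\beta R}$ attempts, whose failed portions already accumulate to total time of order $\beta$, yielding only $\ex{T_A}\lesssim 2\beta$. Beating this requires genuinely using that $Z$ ratchets downward, so that the \emph{successful} excursion almost surely begins once the queue already has a favorable drift; quantifying how quickly $Z$ must drop --- equivalently, controlling the running minimum of the drift-adjusted walk --- and then carrying all the constants through that argument is where the real work lies.
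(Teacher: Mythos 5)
Your plan takes a genuinely different---and substantially harder---route than the paper, and the decisive step is missing. The paper never tries to exploit the downward ratcheting of $Z(t)$: it notes that $Z(t)\le\min(N(t),R)$ throughout the accumulation phase, couples the true system to an M/M/$R$ queue (Claim~\ref{clm:coupling}) whose job count is dominated by $N(t)$, so the coupled time $\Tilde{T}_A$ at which the coupled queue has stayed above $R$ for a full setup time satisfies $\Tilde{T}_A\ge T_A$, and then runs exactly the excursion decomposition you describe, but entirely inside that coupled queue: the below-$(R+1)$ gaps are M/M/$R$ passage times from $R$ to $R+1$ bounded via the product-form stationary distribution and Stirling's formula (the source of the $e^{1/(12R)}$-type factors), the above-$R$ excursions are critically loaded M/M/1 busy periods handled with Claims~\ref{clm:contdown} and~\ref{clm:contint}, and the bookkeeping is Lemma~\ref{lem:waldstop} with $Y_t\equiv 1$. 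Your ``main obstacle'' diagnosis of what this frozen-at-$R$ coupling can deliver is exactly right: with per-attempt success probability of order $1/\sqrt{\mu R\beta}$ and truncated excursion length of order $\sqrt{\beta/(\mu R)}$, the Wald product has leading term about $\beta/c_4\approx 2\beta$ (this is visible in the paper's own final display $\tfrac{1}{\mu}\bigl[\tfrac{c_3}{b_1 c_4}\sqrt{\mu\beta}+\tfrac{1}{c_4}\mu\beta+\tfrac{6}{b_1c_4}\sqrt{\mu\beta/R}\bigr]$ with $c_4=0.499$), so the displayed argument lands near $2\beta$ rather than the $\bigl(1+\tfrac{1}{\sqrt{2\mu\beta}}\bigr)\beta$ form stated in the lemma; downstream this discrepancy only perturbs the constant in the denominator of Theorem~\ref{thm:improvement}, since only an $O(\beta)$ bound is needed there.

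The gap in your proposal is that the central claim---$\ex{T_A}-\beta=O(\sqrt{\beta/\mu})$ for the \emph{true} system---is asserted rather than proved, and none of the tools you cite supply the missing ingredient. Claim~\ref{clm:dcprob} and the optional-stopping argument behind Claim~\ref{clm:dcbnd} give per-excursion estimates \emph{once the drift $\mu(R-Z)$ is pinned down}, but the whole difficulty is quantifying how that drift advances: you must (i) control the expected number of above-$R$ excursions launched while the running minimum sits at $R-j$, (ii) bound the expected length of a failed excursion and of the below-$(R+1)$ gap when the drift is $\mu j$, and (iii) show the running minimum ratchets down fast enough that the sum over $j$ of (i)$\times$(ii) is $o(\beta)$ uniformly in $R$. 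That requires an epoch-by-epoch argument in the spirit of Claims~\ref{clm:prisebnd} and~\ref{clm:a2}, but tracking wasted \emph{time} rather than integrals or probabilities, and it is precisely the part you defer as ``where the real work lies.'' As it stands, the proposal is a credible plan---arguably aiming at a sharper constant than the paper's own displayed computation achieves---but it is not yet a proof of the lemma.
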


\begin{restatable}[Upper bound on Draining Phase Length]{lemma}{ubdr_cycle}\label{lem:UBTB}
     Recall that the accumlation time $$T_A \triangleq \minpar{t > 0: Z(t) \geq R + 1}$$ is the amount of time until the $(R+1)$-th server turns on and the cycle time $X$ is the moment when it turns off. Then, one can bound $\ex{X-T_A}$ by
    \begin{equation*}
        \ex{X-T_A} \leq \beta +  \frac{1}{\mu}\frac{F_1 \beta \sqrt{R}}{k - R} + \frac{1}{\mu}\frac{3}{2}\ln(\beta) + \frac{1}{\mu}\ln(F_1 D_1)
        +
        \frac{2}{\mu} 
        + \left[D_2 + \frac{D_3}{\sqrt{R}} \right]\max\left(\frac{1}{D_1 \sqrt{\mu \beta}}, \frac{1}{\sqrt{R}}\right),
    \end{equation*}
    where $F_1$, $D_1$, $D_2$, and $D_3$ are constants not depending on system parameters. 
\end{restatable}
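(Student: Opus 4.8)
The plan is to reuse, essentially verbatim, the MIST decomposition already set up for the integral bound in Lemma~\ref{lem:UBintTB}, but now with the trivial reward $Y_t\equiv 1$, so that the ``integral'' being bounded is exactly the elapsed time $X-T_A$. Concretely, I would keep the analysis threshold $\mb\triangleq\mintwo{k-R}{\sqrt R}$, the first downward visit $\dnb 1\triangleq\minpar{t\geq T_A:N(t)<R+\mb}$, the alternating crossings $\unb i,\dnb{i+1}$ of the level $R+\mb$, and the count $\tbind$ of downward visits, then apply the Intervening Stopping Time Lemma (Lemma~\ref{lem:waldstop}) with $T_0=T_A$, $\finalpt=X$, $T_i=\dnb i$, $F=\tbind$, and $Y_t\equiv1$. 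With $Y_t\equiv1$ both integral preconditions degenerate into pure \emph{expected-length} bounds, so one may take $B=C=0$ and fold everything into $G_0$ and the $G_i$'s; precondition~3 is exactly the already-proved Claim~\ref{clm:UBintTBprob}, giving $1-p_i=\tfrac1\mb$. Since $\dnb1\leq X$ always and $\prod_{i=2}^{j}(1-p_i)=\mb^{-(j-1)}$, the lemma yields
\begin{equation*}
  \ex{X-T_A}\;\leq\;\ex{G_0(\state(T_A))}\;+\;\pr{\tbind>0}\sum_{j\geq1}G_j\,\mb^{-(j-1)},
\end{equation*}
where $G_0$ bounds the first-descent time $\ex{\dnb1-T_A\mid\filt{T_A}}$ and, by translation invariance, $G_j\equiv G_\star$ for $j\geq1$ bounds the length of one down-then-up excursion, so the sum is a geometric series controlled by $G_\star$ alone.

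\paragraph{The first descent (the term $G_0$).} This is where the leading $\beta$ and the bulk of the stated bound come from, and it is a length-analogue of the second part of the ``Wait Busy'' Claim (Claim~\ref{clm:waitbusy2}, eq.~\eqref{eq:waitbusy2}). At time $T_A$ we have $Z(T_A)=R+1$ and $N(T_A)=R+H$ for some random $H\geq1$. First watch the system for one setup time $\beta$, at cost at most $\beta$: if $N$ falls below $R+\mb$ during this window, $\dnb1$ has already occurred; otherwise the system has carried at least $R+H$ (indeed $\geq R+\mb$) jobs throughout, so by time $T_A+\rem{\mintwo{R+H}{k}}(T_A)$ there are $R+\mintwo{H}{k-R}$ busy servers, and since servers are shut off only when there is no work for them, these servers stay on until $N$ descends past their index. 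From there I would couple, via Claim~\ref{clm:coupling}, the descent of $N(t)-R$ through levels $\ell=H,H-1,\dots,\mb$ to a birth--death chain whose death rate at level $R+\ell$ is $\mu\mintwo{\ell}{k-R}$, and bound the expected time to pass each level using the coupled busy-period estimate of Claim~\ref{clm:dcbnd} together with the standard M/M/1 and M/M/$\infty$ busy-period facts; summing gives a harmonic-type bound of the shape $\frac1{\mu(k-R)}\bigl(H-(k-R)\bigr)^{+}+\frac1\mu\ln\bigl(\tfrac{H}{\mb}\bigr)+O(\tfrac1\mu)$. Finally I would take the expectation over $H=N(T_A)-R$: the part linear in $H$ uses $\ex{H}\leq F_1\mu\beta\sqrt R$ from Claim~\ref{clm:taexpect}, and the $\ln H$ part uses Jensen (concavity of $\ln$) with the same moment bound. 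Together with the $\beta$ spent watching, this produces exactly the $\beta$, $\frac1\mu\frac{F_1\beta\sqrt R}{k-R}$, $\frac1\mu\frac32\ln\beta$, $\frac1\mu\ln(F_1D_1)$, and $\frac2\mu$ terms.

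\paragraph{The continuing excursions (the term $G_\star$).} For a later interval $[\dnb i,\mintwo{\dnb{i+1}}{X})$ with $i\geq1$ I would split at $\unb i$. From $\dnb i$ (where $N=R+\mb-1$) up to $\unb i$ (where $N=R+\mb$): couple from below to an M/M/$\infty$ queue of offered load $R$, which has a weakly larger death rate and so up-crosses no faster than the true system, whence $\ex{\mintwo{\unb i}{X}-\dnb i\mid\filt{\dnb i}}$ is at most the M/M/$\infty$ one-step up-crossing time near level $R$, of order $\tfrac1{\mu\sqrt R}$ because $\mb\leq\sqrt R$. From $\unb i$ (where $N=R+\mb$) down to $\dnb{i+1}$: use $Z(t)\geq R+1$ and, when $N$ persists above $R+\mb$ for a full setup time, the wait-busy gain of $R+\mb$ busy servers, to bound $\ex{\mintwo{\dnb{i+1}}{X}-\unb i\mid\filt{\unb i}}$ by a constant times $\maxpar{\tfrac1{\sqrt{\mu\beta}},\tfrac1{\sqrt R}}$. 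Setting $G_\star$ to the sum of these two contributions, the geometric series $\sum_j G_\star\mb^{-(j-1)}$ collapses to a constant multiple of $G_\star$, which is exactly the $\bigl[D_2+\tfrac{D_3}{\sqrt R}\bigr]\maxpar{\tfrac1{D_1\sqrt{\mu\beta}},\tfrac1{\sqrt R}}$ term. Adding the three pieces and rounding constants gives the lemma.

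\paragraph{Main obstacle.} The crux is the first-descent bound $G_0$, which must simultaneously absorb three effects: (i) the $\beta$-delay before the many servers that are in setup at $T_A$ actually come online, during which only $R+1$ servers are guaranteed and the queue drains at the minimal rate $\mu$; (ii) the passage through a logarithmic number of levels whose drain rate changes character at $\ell=k-R$ (rate $\mu(k-R)$ above, rate $\mu\ell$ below), which is what produces both the $\frac1\mu\frac{F_1\beta\sqrt R}{k-R}$ term and the $\frac1\mu\ln\beta$ term; and (iii) the un-conditioning on the random height $N(T_A)$ without losing a multiplicative factor, which forces the Jensen step against the concavity of $\ln$ and relies on the first- and second-moment control of $N(T_A)$ from Claim~\ref{clm:taexpect}. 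Since this is morally the length-version of eq.~\eqref{eq:waitbusy2}, the hope is that the same coupling-and-martingale machinery carries over after replacing the reward $N(t)-R$ by $1$; the remaining bookkeeping is in checking that the resulting constants can indeed be folded into $F_1,D_1,D_2,D_3$ as claimed (and in dispatching the degenerate case where $\mb$ is too small for the geometric sum to be useful).
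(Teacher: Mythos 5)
Your first-descent term is essentially the paper's own argument: the level-by-level drain with rate $\mu\min(i,k-R)$ (the paper's Claim~\ref{clm:tb_induct}), the harmonic/log bookkeeping, and the un-conditioning on $N(T_A)$ via Jensen together with Claim~\ref{clm:taexpect}. The genuine gap is in the continuing-excursion part, and it sits exactly where the paper's proof differs from your plan: the paper does \emph{not} reuse the threshold $\min\left(k-R,\sqrt R\right)$ from Lemma~\ref{lem:UBintTB}. For the cycle length it crosses a different, $\beta$-dependent level, $M_L\triangleq\min\left(k-R,\max\left(\frac{\sqrt R}{D_1\sqrt{\beta}},1\right)\right)$, and this re-tuning is what produces the stated right-hand side.

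The reason your version fails is a balancing act. The expected number of excursions above the crossing level scales like the level height $M$ itself (from height $M-1$ the $Z=R$-coupled walk returns to $M$ before $0$ with probability $1-1/M$; correspondingly, the paper's final step multiplies its per-excursion bound by the threshold, not by an $O(1)$ geometric factor), while the per-excursion duration is only $\min\left(\frac1\mu,\,O\!\left(\sqrt{\beta/(\mu R)}\right)\right)+\frac{1}{\mu M}$: a one-level descent with only $Z(t)\ge R+1$ guaranteed, or the critical coupling capped at $\beta$ plus the wait-busy continuation. Your claimed per-excursion bound of a constant times $\frac1\mu\max\left(\frac1{\sqrt{\mu\beta}},\frac1{\sqrt R}\right)$ is not justified: when $\mu\beta\asymp R$, for instance, the true per-excursion time is $\Theta(1/\mu)$, a factor $\sqrt R$ larger. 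Hence with threshold $\sqrt R$ the continuing part contributes $\Theta\left(\frac{\min\left(\sqrt R,\sqrt{\mu\beta}\right)}{\mu}\right)$, which can be as large as order $\beta/\sqrt{\mu\beta}$ and is not covered by the claimed $\frac2\mu+\left[D_2+\frac{D_3}{\sqrt R}\right]\max\left(\frac1{D_1\sqrt{\mu\beta}},\frac1{\sqrt R}\right)$. The paper's $M_L$ is chosen precisely so that threshold $\times$ per-excursion collapses: $\min\left(\frac1\mu,D_1\sqrt{\beta/(\mu R)}\right)\le\frac{1}{\mu M_L}$ by construction, so $M_L\cdot\frac{2}{\mu M_L}=\frac2\mu$ and $M_L\cdot\left[\frac{D_2}{\mu\sqrt R}+\frac{D_3}{\mu R}\right]$ gives exactly the $\max$ term; it is also the source of the $\frac1\mu\frac32\ln\beta+\frac1\mu\ln(F_1D_1)$ piece, since the harmonic sum in the first descent now runs down to $M_L$ and yields $\ln\left(F_1D_1\beta^{3/2}\right)$. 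To repair your argument, replace your crossing level by $M_L$ and account for the excursion count as $\Theta(M_L)$ rather than $O(1)$; the remainder of your outline then matches the paper.
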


\subsection{Proof of Lemma~\ref{lem:LBint}: Lower Bound on Cycle Integral.}
\newcommand{\epochdone}{\gamma}
\newcommand{\torpdf}[2]{\texorpdfstring{#1}{#2}}

\subsubsection{Lemma~\ref{lem:LBint} Proof Outline}

\paragraph{Basic Strategy.} First, we split the first phase $[0,T_A)$ into epochs, where epoch $i$ begins when the number of busy servers $Z(t)$ first drops to $R-i$, and an epoch ends either when the next epoch starts or when the first phase ends. Our goal will be to analyze a specific ``significant'' epoch. In particular, we say that an epoch is {\em long} if it lasts for longer than a setup time $\beta$. Because the accumulation phase ends when the $(R+1)$-th server turns on, at least one epoch must be long. We use $L$ to denote the index of the \emph{first} long epoch. From here, we argue via a martingale/coupling argument that the expected time integral over the first $\beta$ time in epoch $L$ is at least $\frac{1}{2} \beta^2 \ex{L}$. To bound the integral afterwards, we couple the behavior of the total number of jobs $N(t)$ to the queue length in an M/M/1 queue with arrival rate $k\lambda$ and departure rate $k \mu$.

\paragraph{Formalization.} Define the stopping time $\tau_i \triangleq \minpar{t\geq 0: N(t) \leq R - u}$ as the beginning of epoch $i$. 
We say that the epoch \textit{occurs} if $\tau_i < T_A$, and define the end of epoch $i$ as $\epochdone_i \triangleq \mintwo{\tau_{i+1}}{T_A}$ the moment when either epoch $i+1$ begins or the first phase ends. 
If epoch $i$ occurs, we say it is long if $\epochdone_i - \tau_i \geq \beta$. Let $L \triangleq \minpar{i \in \mathbb{N}: \epochdone_i - \tau_i \geq \beta}$ be the index of the first long epoch. 
We now proceed to the proofs.

\subsubsection{\torpdf{Lower Bound on Integral until $\tau_L + \beta$.}{Lower Bound on Integral until tau\_L + beta}}
\begin{claim}\label{clm:LBintINIT}
    Let $L$ be the index of the first long epoch.
    Then,
    \begin{equation}
        \ex{\int_0^{\tlba}{Q(t)\dd t}} \geq \frac{1}{2} \mu \beta^2 L_1 \sqrt{R},
    \end{equation}
    where $L_1$ is some absolute constant.
\end{claim}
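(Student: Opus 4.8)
The plan is to keep only the piece $\int_{\tau_L}^{\tlba}Q(t)\dd t$ (legitimate since $Q\ge 0$), bound its expectation below by $\tfrac12\mu\beta^2\ex{L}$, and then invoke the auxiliary bound $\ex{L}\ge L_1\sqrt{R}$ to conclude. The structural fact behind this: at the start $\tau_\ell$ of epoch $\ell$, i.e.\ the first time $N(t)$ hits the running minimum $R-\ell$, every server of index $>R-\ell$ has its setup canceled, while servers $1,\dots,R-\ell$ have had $N(s)\ge R-\ell$ throughout $[\tau_\ell-\beta,\tau_\ell]$ and hence are on; so $Z(\tau_\ell)=R-\ell$ with no server in setup. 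Therefore no new server can come on during $[\tau_\ell,\tau_\ell+\beta)$, giving $Z(t)\le R-\ell$ there, and if epoch $\ell$ is long (so $N(t)\ge R-\ell$ on that whole interval) then in fact $Z(t)=R-\ell$ and $Q(t)=N(t)-(R-\ell)$. By the Poisson construction, on this event $Q(\tau_\ell+s)=\Pi_A((\tau_\ell,\tau_\ell+s])-\depOp{R-\ell}{(\tau_\ell,\tau_\ell+s]}$, which, by the strong Markov property at the stopping time $\tau_\ell$, is a fresh random walk $W$ from $0$ with up-rate $k\lambda$ and down-rate $\mu(R-\ell)$, hence upward drift $\mu\ell$.

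I would then condition on $\{L=\ell\}$. The event $\{L\ge\ell\}$ (epochs $0,\dots,\ell-1$ all short) is $\filt{\tau_\ell}$-measurable, and, given it, $\{L=\ell\}$ is exactly $\{W(s)\ge0\ \forall s\in[0,\beta)\}$, on which $\int_{\tau_\ell}^{\tau_\ell+\beta}Q(t)\dd t=\int_0^\beta W(s)\dd s$; since moreover $\pr{L\ge\ell}\pr{W\text{ stays}\ge0}=\pr{L=\ell}$,
\begin{equation*}
  \ex{\int_{\tau_L}^{\tlba}Q(t)\dd t}=\sum_{\ell}\pr{L=\ell}\,\ex{\int_0^\beta W(s)\dd s\;\middle|\;W\text{ stays}\ge0}.
\end{equation*}
The crucial estimate is that conditioning an upward‑drifting walk to stay nonnegative cannot decrease it: both $\{W\text{ stays}\ge0\}$ and $\{W(s)>x\}$ are nondecreasing functions of the arrival marks and of the negatives of the departure marks, so Harris' (FKG) inequality makes them positively correlated, yielding $\ex{W(s)\mid W\text{ stays}\ge0}\ge\ex{W(s)}=\mu\ell s$. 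Integrating over $s\in[0,\beta]$ gives $\ex{\int_0^\beta W(s)\dd s\mid W\text{ stays}\ge0}\ge\tfrac12\mu\ell\beta^2$, so the sum is at least $\tfrac12\mu\beta^2\sum_\ell\ell\,\pr{L=\ell}=\tfrac12\mu\beta^2\ex{L}$.

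It remains to show $\ex{L}\ge L_1\sqrt{R}$. Epoch $j$ is long only if its walk avoids $-1$ for time $\beta$, which occurs with probability at most $\tfrac jR$ (the escape probability of the upward‑biased walk) plus $\tfrac1{\mu j\beta}$ (Markov applied to the return time of the walk conditioned to reach $-1$, whose mean is $\tfrac1{\mu j}$). Hence $\pr{L\ge\ell}\ge\prod_{j=1}^{\ell-1}\bigl(1-\tfrac jR-\tfrac1{\mu j\beta}\bigr)$, and summing over $\ell$ and comparing with $\int_0^\infty e^{-x^2/(2R)}\dd x=\sqrt{\pi R/2}$ gives $\ex{L}=\sum_\ell\pr{L\ge\ell}\ge L_1\sqrt R$ after absorbing the lower‑order corrections (and the $j=0$ factor) into $L_1=\tfrac23\sqrt{\pi/2}$, using $\mu\beta\ge100$ and $R\ge100$. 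Chaining, $\ex{\int_0^{\tlba}Q(t)\dd t}\ge\ex{\int_{\tau_L}^{\tlba}Q(t)\dd t}\ge\tfrac12\mu\beta^2 L_1\sqrt R$.

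The main obstacle is the conditioning step: one must rigorously argue that conditioning $W$ to stay nonnegative over the fixed horizon $[0,\beta)$ does not pull $\ex{W(s)}$ below its unconditioned value $\mu\ell s$. The FKG route above is clean but requires phrasing $W$ and the two events as monotone functions of an i.i.d.\ source (e.g.\ a fine time‑discretization of the two independent Poisson processes) so that Harris' inequality applies; an explicit monotone coupling of the conditioned walk to the free walk is the alternative. Secondary care goes into the measurability bookkeeping---that $\{L\ge\ell\}\in\filt{\tau_\ell}$ and that the post‑$\tau_\ell$ Poisson increments are independent of $\filt{\tau_\ell}$---and into the running‑minimum argument that pins down $Z(\tau_\ell)=R-\ell$ with no pending setups.
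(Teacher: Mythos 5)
Your first half---discarding everything before $\tau_L$, identifying $Q(\tau_\ell+s)$ on a long epoch with a fresh walk $W$ of drift $\mu\ell$ (no pending setups at $\tau_\ell$, so $Z\equiv R-\ell$ and no server can turn on before $\tau_\ell+\beta$, whence ``epoch $\ell$ long'' $\Leftrightarrow$ $W\ge 0$ on $[0,\beta)$), and then showing $\ex{W(s)\mid W\text{ stays}\ge 0}\ge \mu\ell s$ via Harris/FKG---is sound in outline, and it is a genuinely different route from the paper's. The paper never conditions on the epoch being long: it integrates the stopped walk $\tilde Q(\min(t,\tau_{i+1}))$ over $[\tau_i,\tau_i+\beta]$ in \emph{every} epoch, applies optional stopping to get $\ex{\tilde Q(\min(t,\tau_{i+1}))\mid\mathcal F_{\tau_i}}=\mu i\,\ex{\min(t,\tau_{i+1})-\tau_i\mid\mathcal F_{\tau_i}}$, and lower-bounds the latter by $t\,\pr{L=i\mid L\ge i}$ for $t\le\beta$; summing gives the same $\tfrac12\mu\beta^2\ex{L}$ with no FKG machinery and none of the monotone-coupling/discretization bookkeeping you flag as your main obstacle. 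Your FKG step can be made rigorous (increasing events over the two independent Poisson streams, plus nonnegativity of $W(s)$ on the conditioning event), so this part I regard as correct, just heavier than necessary.

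The genuine gap is in the second half, $\ex{L}\ge L_1\sqrt R$ for an \emph{absolute} constant. Your per-epoch estimate $\pr{\text{epoch }j\text{ long}}\le \frac{j}{R}+\frac{1}{\mu\beta j}$, obtained from Markov's inequality on the conditional mean return time $\frac{1}{\mu j}$, is too lossy for small $j$: the Gaussian comparison needs the product of $\bigl(1-\frac jR-\frac{1}{\mu\beta j}\bigr)$ over $j$ up to order $\sqrt R$, and the correction terms sum to $\sum_{j\le\sqrt R}\frac{1}{\mu\beta j}=\Theta\bigl(\tfrac{\ln R}{\mu\beta}\bigr)$, so the tail $\pr{L>\ell}$ picks up a factor of roughly $\ell^{-1/(\mu\beta)}$ and the resulting bound is $\ex{L}\gtrsim \sqrt R\,R^{-1/(2\mu\beta)}$. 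Since $\mu\beta$ and $R$ vary independently (e.g.\ $\mu\beta=100$ fixed, $R\to\infty$), $R^{-1/(2\mu\beta)}\to 0$, so these ``lower-order corrections'' cannot be absorbed into a fixed $L_1$; your stated conclusion does not follow from your estimate. The repair is to replace Markov by a tail bound: the walk conditioned to return is (by the $h$-transform, equivalently a thinning coupling) stochastically faster than the driftless walk with both rates $\mu R$, whose return-time tail at horizon $\beta$ is $O\bigl(1/\sqrt{\mu\beta R}\bigr)$ \emph{uniformly in} $j$---this is exactly what the paper's Claim~\ref{clm:contdown} supplies. Then the cumulative loss over $O(\sqrt R)$ epochs is $\bigl(1-O(1/\sqrt{\mu\beta R})\bigr)^{O(\sqrt R)}=e^{-O(1/\sqrt{\mu\beta})}$, a true constant, and the Gaussian-integral comparison (together with the separate, easy $j=0$ factor you mention) yields $L_1=\tfrac23\sqrt{\pi/2}$ as in the paper.
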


\paragraph{Claim~\ref{clm:LBintINIT} Proof Strategy.} First, we show that the initial integral is bounded by
\begin{equation}\label{eq:int_taul}
    \ex{\int_0^{\tau_L + \setuptime}{Q(t) \dd t}} \geq \frac{1}{2} \mu \beta^2  \ex{L} .
\end{equation}
Afterwards, we give a bound on $\ex{L}$, showing that
\begin{equation}\label{eq:lbound}
    \ex{L} \geq L_1 \sqrt{R}.
\end{equation}

\paragraph{Proof of \eqref{eq:int_taul}, Bound in terms of $\ex{L}$.}
To show \eqref{eq:int_taul}, we first condition on whether $L \geq i$, giving
\begin{align*}
    \ex{\int_0^{\tlba}{Q(t) \dd t}} &= \sum_{i=0}^{\infty}{\ex{\int_{\tau_i}^{\mintwo{\tau_i + \beta}{\tau_{i+1}}}
    {Q(t) \dd t}\indc{L \geq i}}}\\
    &=  \sum_{i=0}^{\infty}{\ex{\int_{\tau_i}^{\mintwo{\tau_i + \beta}{\tau_{i+1}}}
    {Q(t) \dd t}\middle | \filt{\tau_i}}\pr{L \geq i}}.
\end{align*}

To further develop this conditional expectation, we note that during the interval $[\tau_i, \mintwo{\tau_i + \beta}{\tau_{i+1}})$, the system must have exactly $Z(t) = R -i$ busy servers running, meaning that $Q(t) = N(t) - (R - i)$. 
Defining a coupled process $\qcouple(t)$ as 
$
    \qcouple(t) =  A(\tau_i, t) - \depOp{R-i}{\tau_i, t},
$
we see that $Q(t)$ and $\qcouple(t)$ coincide during the interval in question. Moreover, one can redefine the stopping time $\gamma = \tau_{i+1}$ as $\minpar{t>\tau_i: \qcouple(t) = -1}$. Noting that $Q\left( \mintwo{\gamma}{t}\right) = -1$ for any time $t > \gamma$, we find that
\begin{align*}
    \int_{\tau_i}^{\mtb}{Q(t) \dd t} &= \int_{\tau_i}^{\mtb}{\qcouple(t) \dd t}\\
    &= \int_{\tau_i}^{\mtb}{\qcouple\left(\mintwo{t}{\tau_{i+1}}\right) \dd t} +  \int_{\mtb}^{\tau_i + \beta}{\left(\qcouple\left(\mintwo{t}{\tau_{i+1}}\right) + 1\right) \dd t} \\
    &=\int_{\tau_i}^{\tau_i + \beta}{\qcouple\left(\mintwo{t}{\tau_{i+1}}\right) \dd t} + \left[ \beta - \mintwo{\beta}{\tau_{i+1} - \tau_i}\right]\\
    &\geq \int_{\tau_i}^{\tau_i + \beta}{\qcouple\left(\mintwo{t}{\tau_{i+1}}\right) \dd t}.
\end{align*}
Taking the conditional expectation at time $\tau_i$, we find
\begin{align*}
    \ex{\int_{\tau_i}^{\tau_i + \beta}{\qcouple\left(\mintwo{t}{\tau_{i+1}}\right) \dd t}\middle | \filt{\tau_i}} = \int_{\tau_i}^{\tau_i + \beta}{\ex{\qcouple\left(\mintwo{t}{\tau_{i+1}}\right)\middle | \filt{\tau_i}} \dd t}.
\end{align*}
Noting that $V_L(t) = \qcouple(t) - \mu i \left[t - \tau_i\right]$ is a martingale, and that $\mintwo{t}{\tau_{i+1}}$ is an almost-surely bounded stopping time, we have that
\begin{align*}
    \qcouple(\tau_i) = V_L(\tau_i) = 0
                    = \ex{V_L\left(\mintwo{t}{\tau_{i+1}}\right)\middle| \tau_i}
                    = \ex{\qcouple\left(\mintwo{t}{\tau_{i+1}}\right)\middle| \filt{\tau_i}} - \mu i \ex{\mintwo{t}{\tau_{i+1}}\middle| \filt{\tau_i}}.
\end{align*}
Since 
\begin{equation*}
    \ex{\mintwo{t}{\tau_{i+1}}\middle|\filt{\tau_i}} \geq t\cdot \pr{\tau_{i+1} - \tau_i \geq t} \geq t \cdot \pr{\tau_{i+1} - \tau_i \geq \beta} = t \pr{L = i \middle | L \geq i},
\end{equation*}
we have 
\begin{align*}
   \pr{L\geq i} \ex{\int_{\tau_i}^{\mtb}{Q(t) \dd t}\middle| L \geq i}  &\geq \pr{L \geq i} \ex{\int_{\tau_i}^{\tau_i + \beta}{\qcouple\left(\mintwo{t}{\tau_{i+1}}\right) \dd t}\middle | L \geq i}\\
    & \geq \int_{\tau_i}^{\tau_i + \beta}{\mu i t \pr{L = i}}\\
    &= \mu \frac{\beta^2}{2} i \pr{L=i}
\end{align*}
Summing across all $i$, we obtain \eqref{eq:int_taul}.

\paragraph{\torp{Proof Sketch for \eqref{eq:lbound}, bound on $\ex{L}$.}{Proof Sketch of bound on E[L].}}
We defer the full proof of \eqref{eq:lbound} to Section~\ref{sec:lbound}, and for now give a proof sketch. 
We prove \eqref{eq:lbound} by first showing that
\begin{equation*}
    \pr{ L > j \middle | L \geq j } \geq \left(1 - \frac{j}{R}\right) \left(1 - \frac{b_1}{\sqrt{\mu \beta R}}\right), 
\end{equation*}
where $b_1 = \frac{2}{\sqrt{\pi}}$.
Next, we show that this implies that, for any $\delta \in (0,1)$ and any $j < \delta R$,
\begin{equation*}
     \pr{L > j} \geq \left( 1 - \frac{b_1}{\sqrt{\mu \beta R}}\right)^{j+1} e^{-\frac{j(j+1)}{2 R} \frac{1}{1 - \delta}}.
\end{equation*}
From here, we use the sum of tails formula $\ex{L} = \sum_{j=0}^{\infty}{\pr{L>j}}$ to show
\begin{align*}\label{eq:lbex}
    \ex{L} &\geq \left(1 - \frac{b_1}{\shs}\right) \left(\left[\sqrt{\frac{\pi}{2}}(1-\delta) - \frac{1.15(1-\delta)}{\sqrt{\mu \beta}}\right] \sqrt{R}  - \frac{1}{2} - \frac{2(1-\delta)}{\delta} e^{- R \frac{\delta^2}{1-\delta}}\right).
\end{align*}
Choosing $\delta = \frac{2}{\sqrt{R}}$ then noting that $\mu \beta \geq 100$ and $R \geq 100$ gives the result. 
\hfill\Halmos

\subsubsection{\torp{Lower Bound on Integral after $\tau_L + \beta$.}{Lower Bound on Integral after tau\_L + beta.}}

To finish our lower bound on the integral, we now show the following claim.
\begin{claim}\label{clm:LBintFIN}
    Let $L$ be the index of the first long epoch.
    Then,
    \begin{equation}
        \ex{\int_{\tlba}^{X}{Q(t)\dd t}} \geq \ibp{\left[\mu \beta L_1 \sqrt{R} - \left( k - R\right)\right]^+}{k-R}
    \end{equation}
    where $L_1$ is some absolute constant.
\end{claim}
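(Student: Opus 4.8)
The goal is to lower-bound the contribution to the cycle integral coming after time $\tau_L+\beta$. The plan is to couple, from time $\tau_L+\beta$ onward, the number of jobs $N(t)$ to an M/M/1 queue whose departure rate is $k\mu$, and then to compare the queue length $Q(t)$ to that M/M/1's backlog above level $k$. I would take as given two facts established in the proof of Lemma~\ref{lem:LBX}: that $\tau_L+\beta$ is a stopping time with $\tau_L+\beta<T_A$, and that $\ex{N(\tau_L+\beta)-R}\geq \mu\beta\,\ex{L}\geq \mu\beta L_1\sqrt{R}$ (namely \eqref{eq:ntlba} combined with the bound $\ex{L}\geq L_1\sqrt{R}$ from \eqref{eq:lbound}). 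Given these, the argument is short.

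\emph{The coupling.} For $t\geq \tau_L+\beta$, define $\ncouple(t)\triangleq N(\tau_L+\beta)+\Pi_A\left((\tau_L+\beta,t]\right)-\depOp{k}{(\tau_L+\beta,t]}$, the number of jobs in an M/M/1 queue with arrival rate $k\lambda$ and departure rate $k\mu=k\lambda+\mu(k-R)$ started from $N(\tau_L+\beta)$. Since the true system always has $Z(t)\leq k$ busy servers, Claim~\ref{clm:coupling} gives $\ncouple(t)\leq N(t)$ for all $t\geq \tau_L+\beta$. Let $\xcouple\triangleq\minpar{t>\tau_L+\beta:\ncouple(t)\leq k}$. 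Because $\tau_L+\beta<T_A$ and $k>R$, the same set-inclusion reasoning used in the proof of Lemma~\ref{lem:LBX} gives $\xcouple\leq\minpar{t>\tau_L+\beta:N(t)\leq R}\leq\minpar{t>T_A:N(t)\leq R}=X$. On $[\tau_L+\beta,\xcouple)$ we have $\ncouple(t)>k$, hence $Q(t)=N(t)-Z(t)\geq N(t)-k\geq\ncouple(t)-k>0$ there; combined with $Q(t)\geq 0$ everywhere, this gives $Q(t)\geq(\ncouple(t)-k)^+$ on all of $[\tau_L+\beta,X]$.

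\emph{Conclusion.} Conditioning on $\filt{\tau_L+\beta}$,
\[
\ex{\int_{\tau_L+\beta}^{X}{Q(t)\,\dd t}\,\middle|\,\filt{\tau_L+\beta}}\geq\ex{\int_{\tau_L+\beta}^{\xcouple}{(\ncouple(t)-k)\,\dd t}\,\middle|\,\filt{\tau_L+\beta}}.
\]
The process $\ncouple(t)-k$, started from $(N(\tau_L+\beta)-k)^+$ and run until it hits $0$, is precisely the number of jobs in an M/M/1 queue with arrival rate $k\lambda$ and service rate $k\lambda+\mu(k-R)$ over a busy period, so the right-hand side equals $\ibp{(N(\tau_L+\beta)-k)^+}{k-R}$. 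Taking the outer expectation and using that $x\mapsto\ibp{(x)^+}{k-R}$ is nondecreasing and convex (it is a convex parabola for $x>0$ with positive right derivative at the kink $x=0$), Jensen's inequality yields
\[
\ex{\int_{\tau_L+\beta}^{X}{Q(t)\,\dd t}}\geq\ibp{(\ex{N(\tau_L+\beta)}-k)^+}{k-R}\geq\ibp{\left(\mu\beta L_1\sqrt{R}-(k-R)\right)^+}{k-R},
\]
which is Claim~\ref{clm:LBintFIN}. Combined with Claim~\ref{clm:LBintINIT} and additivity of the integral over $[0,\tau_L+\beta)$ and $[\tau_L+\beta,X)$, this also yields Lemma~\ref{lem:LBint}.

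\emph{Main obstacle.} The coupling step above is routine; the real work is hidden in the input $\ex{N(\tau_L+\beta)-R}\geq\mu\beta\,\ex{L}$ (that is, \eqref{eq:ntlba}). Because $L$ is not a stopping time at time $\tau_L$, proving it requires (as in the proof of \eqref{eq:int_taul}) first verifying that exactly $R-L$ servers are busy throughout $[\tau_L,\tau_L+\beta]$ --- there are no servers in setup at the instant $\tau_L$ (since $N$ has just dropped to a new low $R-L$, cancelling any pending setups), and none can complete setup within the next $\beta$ units --- so that $N(t)-(R-L)$ coincides there with the martingale $\qcouple(t)-\mu L(t-\tau_L)$, and then arguing that conditioning on $\{L=i\}$ can only bias $\qcouple(\tau_L+\beta)$ upward. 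Everything else, including the convexity check for $\ibp{(\cdot)^+}{k-R}$, is bookkeeping.
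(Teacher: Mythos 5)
Your proposal is correct and takes essentially the same route as the paper: couple $N(t)$ after $\tau_L+\beta$ to an M/M/1 with departure rate $k\mu$, identify the resulting integral as the busy-period quantity $\ibp{\left[N(\tau_L+\beta)-k\right]^+}{k-R}$, and finish with Jensen's inequality plus $\ex{N(\tau_L+\beta)}\geq R+\mu\beta\ex{L}\geq R+\mu\beta L_1\sqrt{R}$. The only caveat is attribution: \eqref{eq:ntlba} is not a fact you can import from the proof of Lemma~\ref{lem:LBX} (that proof in turn cites it), but is established inside the paper's proof of this very claim by precisely the conditioning-on-$\{L=i\}$ and optional-stopping argument you sketch in your ``main obstacle'' paragraph, so that step must be carried out here rather than taken as given.
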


\paragraph{Claim~\ref{clm:LBintFIN}: Proof Strategy.} First, we show that the remaining integral is bounded by
\begin{equation}\label{eq:int_aft}
    \ex{\int_{\tau_L+ \setuptime}^{X}{Q(t) \dd t}} \geq \ibp{\left[ \ex{N(\tau_L + \beta)} - k\right]^+}{k-R}.
\end{equation}
Then, we use martingales again to show that
\begin{equation}\label{eq:ntlba}
    \ex{N\left(\tau_L + \beta\right)} \geq R + \mu \beta \ex{L}.
\end{equation}
Applying \eqref{eq:lbound}, our bound on $\ex{L}$, we obtain the result.

\paragraph{\torp{Proof of \eqref{eq:int_aft}, Bound in terms of $\ex{N(\cdot)}$.}{Bound in terms of N(TA).}}
To prove \eqref{eq:int_aft}, we make a simple coupling argument. Let $\eta_k \triangleq \minpar{t\geq \tau_L + \beta: N(t) \leq k}$. Since the draining phase starts at $T_A \geq \tau_L + \beta$ and the end of the cycle $X = \minpar{t \geq T_A: N(t) \leq R}$, we know that $X \geq \eta_k$. Moreover, we know the number of busy servers $Z(t) \leq k$; it follows by Claim~\ref{clm:coupling} that we can define $\ncouple(t)$ as 
\begin{equation*}
    \ncouple(t) \triangleq N\left(\tau_L + \beta\right) + A\left(\tau_L + \beta, t\right) - \depOp{k}{\left(\tau_L + \beta, t\right)}
\end{equation*}
and have $\ncouple(t) \leq N(t)$ for any $t > \tlba.$ Even further, we can defined a coupled hitting time $\Tilde{\eta}_k \triangleq \minpar{t > \tlba: \ncouple(t) \leq k}$ which must happen before $\eta_k$. In other words,
\begin{align*}
    \int_{\tlba}^{X}{Q(t) \dd t} &\geq \int_{\tlba}^{\eta_k}{Q(t) \dd t} 
    \geq \int_{\tlba}^{\eta_k}{\left[N(t) - k\right] \dd t}
    \geq \int_{\tlba}^{\Tilde{\eta}_k}{\left[\ncouple(t) - k\right] \dd t}.
\end{align*}
This final term is just the time integral of the number of jobs in system over a M/M/1 busy period started by $\left[N\left(\tlba\right) - k\right]^+$ jobs, where jobs arrive at rate $k\lambda$ and depart at rate $k\mu$. 
Accordingly, we have
\begin{align*}
     \ex{\int_{\tlba}^{X}{Q(t) \dd t}} 
     &\geq \ex{\ibp{\left[N\left(\tlba\right) - k \right]^+}{k - R}}\\
     &\geq \ibp{\ex{N\left(\tlba\right) - k }^+}{k-R}\\
     &\geq \ibp{\ex{\left[N\left(\tlba\right) - R - \left( k - R\right)\right]^+}}{k-R},
\end{align*}
where in the last two lines we have applied Jensen's inequality.

\paragraph{\torp{Proof of \eqref{eq:ntlba}, Bound on $\ex{N(\cdot)}$.}{Upper bound on E[N(--)]}}

To bound $\ex{N(\tlba)}$, we condition on the value of $L$, then make a martingale argument.
\begin{align*}
    \ex{N(\tlba)} &= \sum_{i=0}^{R}{\ex{N(\tau_i + \beta) \indc{L = i }}}\\
    &= \sum_{i=0}^{R}{\ex{N(\tau_i + \beta) \indc{L = i, L \geq i }}}\\
    &\geq \sum_{i=0}^{R}{\pr{L\geq i}\ex{N(\tau_i + \beta) \indc{L = i} \middle | L \geq i}}.
\end{align*}
Continuing with this conditional expectation,
\begin{align}
    &\;\ex{N(\tau_i + \beta) \indc{L = i} \middle | \filt{\tau_i}}  \nonumber \\
    &\;\;=  \ex{N(\tau_i + \beta) \indc{\tau_i + \beta < \tau_{i+1}} \middle | \filt{\tau_i}}  \nonumber \\
    &\;\;= \ex{\left[N(\tau_i + \beta) - (R-(i+1))\right]\indc{\tau_i + \beta < \tau_{i+1}} \middle | \filt{\tau_i}} + (R-i - 1)\pr{\tau_i + \beta < \tau_{i+1} \middle | \filt{\tau_i}} \nonumber \\
    &\;\;= \ex{N(\mintwo{\tau_i + \beta}{\tau_{i+1}}) - (R-(i+1)) \middle | \filt{\tau_i}} +  + (R-i - 1)\pr{\tau_i + \beta < \tau_{i+1} \middle | \filt{\tau_i}} \nonumber \\
    &\;\;= 1 +\mu i \ex{\mintwo{\beta}{\tau_{i+1} - \tau_i}}   + (R-i - 1)\pr{\tau_i + \beta < \tau_{i+1}} \nonumber \\
    &\;\;\geq 1 +\mu i \beta \pr{\tau_{i+1} - \tau_i \geq \beta}   + (R-i - 1)\pr{\tau_i + \beta < \tau_{i+1}} \nonumber \\
    &\;\;= 1 +\mu i \beta\pr{L = i \middle | L \geq i}   + (R-i - 1)\pr{L = i \middle | L \geq i}.\label{eq:resaba}
\end{align}

Summing across $i$, we find
\begin{align*}
        \ex{N(\tlba)} &= \sum_{i=0}^{R}{\pr{L\geq i}}
        = \left(1+ \ex{L}\right) + \left(\mu \beta \ex{L}\right) + \left(R - \ex{L} - 1\right)
        = R + \mu \beta \ex{L}.\tag*{\Halmos}
\end{align*}

Combining Claims~\ref{clm:LBintINIT} and~\ref{clm:LBintFIN}, we obtain a lower bound on $\ex{\int_0^{X}{Q(t) \dd t}}$, proving Lemma~\ref{lem:LBint}.

\subsection{\torp{Proof of Lemma~\ref{lem:UBTA}: Upper Bound on the Accumulation Time $\ex{T_A}$.}{Upper Bound on the Accumulation Time E[TA]}}\label{sec:UBTA}

\paragraph{Defining a coupling.} To prove Lemma~\ref{lem:UBTA}, we first note that, during the accumulation phase, we have two bounds on the number of busy servers $Z(t)$: it must be less than the total number of jobs $N(t)$, and it must be less than $R$; the former because every busy server must be working on a job, and the latter because otherwise the accumulation phase would be over. Thus, we can define a coupled $M/M/R$ system for which the number of jobs $\ncac(t)$ in the coupled system is always at least $N(t)$ in the original system.

\paragraph{How we use the coupling.} To use this coupled process to bound $\ex{T_A}$, recall that the accumulation point $T_A$ is the first time the $(R+1)$-th server turns on. Accordingly, one can also think of this as the first time that there have been at least $R+1$ jobs in the system for $\beta$ time.  Thus, if we define a coupled accumulation point
$\tac \triangleq \minpar{t \geq \beta : \min_{s \in [t-\beta, t)}{\ncac(t)} \geq R+1}$, then we know $\tac \geq T_A$. It thus suffices to bound $\ex{\tac}$.

\paragraph{General Strategy.} 
We bound $\ex{\tac}$ using the MIST method of Lemma~\ref{lem:waldstop}. As such, we define a few stopping times, then list the preconditions/claims that we will satisfy to complete the proof of Lemma~\ref{lem:UBTA}.

\paragraph{Definition of $\upc$ and $\dnc$.} Let the initial cycle-downcrossing occur at $\dnc_0 \triangleq  0$ and iteratively define the upcrossings $\upc$ and downcrossings $\dnc$ as
$$
    \upc_{i} \triangleq \minpar{t\geq \dnc_i: \ncac(t) \geq R+1} 
    \qquad  \mbox{and} \qquad 
    \dnc_{i+1} \triangleq \minpar{t \geq \upc_{i}: \ncac(t) \geq R+1}.
$$

\paragraph{Application of Lemma~\ref{lem:waldstop}, the IST Lemma.}
Applying Lemma~\ref{lem:waldstop} using $0= \dnc_0$ as our starting point, the coupled accumulation point $\tac$ as our ending point, our test function as $Y_t =1$, and the cycle-upcrossings $\left(\dnc_i\right)$ as our intervening stopping times, we now must prove that
\begin{equation}\label{eq:acc_init}
    \ex{\upc_{i} - \dnc_i \middle | \cind \geq i } \leq \frac{1}{\mu} e^{\frac{1}{12 R}} \sqrt{1 + \frac{1}{R}} \frac{\sqrt{2 \pi}}{\sqrt{R}} \leq \frac{c_3}{\mu \sqrt{R}},
\end{equation}
\begin{equation}\label{eq:acc_succ}
    \ex{\mintwo{\tac}{\dnc_{i+1}} - \upc_{i} \middle| \cind \geq i} \leq b_1\sqrt{\frac{\beta}{\mu R}} + \frac{6}{\mu R},
\end{equation}
and
\begin{equation}\label{eq:acc_prob}
    \pr{ \cind \geq i+1 \middle | \cind \geq i} \leq 1 -  \frac{b_1}{\sqrt{2}} e^{-\frac{1}{3(\mu 2 R \beta-1)}} \frac{1}{\sqrt{\mu 2 R \beta + 2}} \leq 1 - \frac{b_1 c_4}{\sqrt{\mu R \beta}},
\end{equation}
where $b_1 \triangleq \sqrt{\frac{2}{\pi}}$, $c_3 = 1.001\sqrt{2 \pi }$, and $c_4 = 0.499$.

\paragraph{Completion of Proof, assuming \eqref{eq:acc_init}, \eqref{eq:acc_succ}, and \eqref{eq:acc_prob}.}
Applying Lemma~\ref{lem:waldstop}, one finds that
\begin{align*}
    \ex{\tac} &= \sum_{i=0}^{\infty}{\ex{\mintwo{\tac}{\dnc_{i+1}} - \dnc_i \middle | \cind \geq i} \pr{\cind \geq i}}\\
    &\leq \left[\frac{c_3}{\mu \sqrt{R}} + \frac{b_1 \sqrt{\beta}}{\sqrt{\mu R}} + \frac{6}{\mu R} \right] \frac{\sqrt{\mu R \beta}}{b_1 c_4}\\
    &= \frac{1}{\mu}\left[\frac{c_3}{b_1 c_4}\sqrt{\mu \beta} + \frac{1}{c_4}\beta + \frac{6}{b_1 c_4} \sqrt{\frac{\mu \beta}{R}} \right].
\end{align*}

\subsubsection{\torp{Proof of \eqref{eq:acc_init}: Upper bound on initial up-crossing time.}{Upper bound on initial up-crossing time.}}
To prove \eqref{eq:acc_init}, we note that, since our coupled system is an $M/M/R$, the expected time $\ex{\upc_{i} - \dnc_i \middle | \cind \geq i }$ is simply the expected passage time from state $R$ to $(R+1)$ in an $M$/$M$/$R$ ( and equivalently an $M$/$M$/$R$/$(R+1)$, an $M$/$M$/$R$ which can contain only $R+1$ jobs. Solving, one finds that 
\begin{align}
    \ex{T_{R\to(R+1)}} &\leq \ex{T_{(R+1)\to(R+1)}}\nonumber\\
    &=\frac{1}{\mu(R+1)} \frac{1}{\pi_{R+1}}\nonumber\\
    &= \frac{1}{\mu(R+1)} \frac{\sum_{i=0}^{R+1}{\frac{R^i}{i!}}}{\frac{R^{R+1}}{(R+1)!}}\nonumber\\
    &\leq \frac{1}{\mu (R+1)} e^{R}\frac{(R+1)!}{R^{R+1}}\nonumber\\
    &\leq \frac{1}{\mu (R+1)} e^{R}\frac{e^{\frac{1}{12 (R+1)}}\sqrt{2 \pi (R+1)}(R+1)^{R+1} e^{-(R+1)}}{R^{R+1}}\label{eq:acc_2}\\
    &= e^{\frac{1}{12 (R+1)}}\frac{1}{\mu} \sqrt{2 \pi} \frac{\sqrt{R+1}}{R} \left(1 + \frac{1}{R}\right)^R e^{-1}\nonumber\\
    &\leq \frac{1}{\mu} e^{\frac{1}{12 (R+1)}} \sqrt{1 + \frac{1}{R}} \frac{\sqrt{2 \pi}}{\sqrt{R}}\label{eq:acc_4}\\
    &\leq \frac{1}{\mu} 1.006 \frac{\sqrt{2 \pi}}{\sqrt{R}}\nonumber\\
    &\triangleq \frac{c_3}{\mu \sqrt{R}}\nonumber,
\end{align}
\chmade{where in \eqref{eq:acc_2} we have applied Stirling's approximation and in \eqref{eq:acc_4} we have used that $R\geq 100$.}
\hfill\Halmos

\subsubsection{\torp{Proof of \eqref{eq:acc_succ}: Bound on time between up-crossings.}{ Bound on time between up-crossings.}}
To bound the expected time $\ex{\mintwo{\tac}{\dnc_{i+1}} - \upc_i \middle | \cind \geq i}$, we first note that, if $\upc_i + \beta \leq \dnc_{i+1}$, then $\tac = \upc_i + \beta$. Likewise, if $\upc_i + \beta > \dnc_{i+1}$, then $\tac > \dnc_{i+1}$. It follows that, given that $\cind \geq i$, the time $\mintwo{\tac}{\dnc_{i+1}} = \mintwo{\beta + \upc_i}{\dnc_{i+1}}$. Thus, we have that
\begin{equation*}
    \ex{\mintwo{\tac}{\dnc_{i+1}} - \upc_i \middle | \cind \geq i} = \ex{\mintwo{\beta}{\dnc_{i+1} - \upc_i} \middle | \cind \geq i} = \int_0^{\beta}{\pr{\dnc_{i+1} - \upc_i > s\middle| \cind \geq i} \dd s}.
\end{equation*}

We continue by bounding this tail probability. To begin, note that, while $\ncouple(t)$ stays above $R+1$, the dynamics of $\ncouple$ are precisely that of a critically-loaded $M$/$M$/$1$ queue with arrival rate and departure rate equal to $k\lambda$. The tail probability we are interested in bounding is precisely the probability that a busy period (started with 1 job) in such a system lasts longer than $s$ time. Applying Claim~\ref{clm:contdown}, one finds that, for any $t \geq \frac{3}{\mu 2 R}$,
\begin{equation*}
    \pr{\dnc_{i+1} - \upc_i > s\middle| \cind \geq i} \leq b_1 \left( \frac{1}{\sqrt{\mu 2 R s}} + \frac{b_2}{(\mu 2 R s)^{3/2}}\right).
\end{equation*}
Integrating, we find that
\chmades{
\begin{align*}
    \int_0^{\beta}{\pr{\dnc_{i+1} - \upc_i > s\middle| \cind \geq i} \dd s} &\leq \frac{3}{\mu 2 R} + \frac{b_1}{\sqrt{2}} \int_{\frac{3}{\mu 2 R}}^{\beta} {\frac{1}{\sqrt{\mu 2 R s}} + \frac{b_2}{(\mu 2 R s)^{3/2}} \dd s}\\
    &\leq \frac{3}{\mu 2 R} + \frac{b_1}{\sqrt{2}} \left[\sqrt{\frac{2 \beta}{ \mu R}} + b_2 \sqrt{\frac{2}{3}}  \frac{1}{\mu R} \right]\\
    &\leq \frac{2}{\sqrt{\pi}}\sqrt{\frac{\beta}{\mu R}} + \frac{6}{\mu R}.\tag*{\Halmos}
\end{align*}
}

\subsubsection{\torp{Proof of \eqref{eq:acc_prob}: Bound on probability of another $\upc$ up-crossing.}{Bound on probability of another up-crossing.}}
To prove \eqref{eq:acc_prob}, it suffices to note that, upon conditioning on the filtration at $\upc_i$, the probability $\pr{ \cind \geq i+1 \middle | \cind \geq i}$
is simply the probability that a busy period in a critically-loaded M/M/1, with arrival and departure rate equal to $\mu R$, ends before $\beta$ time has passed. Applying Claim~\ref{clm:contdown}, one finds that this is
\begin{equation*}
    \pr{ \cind \geq i+1 \middle | \cind \geq i} \geq 1 -  \frac{b_1}{\sqrt{2}} e^{-\frac{1}{3(\mu 2 R \beta-1)}} \frac{1}{\sqrt{\mu 2 R \beta + 2}}. 
    \tag*{\Halmos}
\end{equation*}

\subsection{\torp{Proof of Lemma~\ref{lem:UBTB}: Upper Bound on the Remaining Cycle Time $\ex{X-T_A}$.}{Upper Bound on the Cycle Time E[X - TA]}}\label{sec:UBTB}

\newcommand{\bind}{n_\zeta}
\newcommand{\tbthresh}{M_L}

We now prove the upper bound on $\ex{X - T_A}$. We make use of the ``wait-busy'' idea from Section~\ref{sec:waitbusy} as well as our main tool, Lemma~\ref{lem:waldstop}. As such, we begin by defining some stopping times.

\paragraph{Definition of $\dnb{i}$ and $\unb{i}$. }
Recall that the draining phase begins at time $T_A$. Let $\tbthresh \triangleq \min\left(k-R, \max\left(\frac{\sqrt{R}}{D_1 \sqrt{\beta}}, 1 \right)\right)$ be a specially-set analysis threshold. Let the stopping time $\dnb{1} \triangleq \minpar{t \geq T_A: N(t) < R + \tbthresh}$ be the first time the number of jobs $N(t)$ drops below $R +\tbthresh$, and recursively define
\begin{align*}
    \unb{i} &\triangleq \minpar{ t\geq \dnb{i}: N(t) \geq R + \tbthresh},\\
    \dnb{i+1} &\triangleq \minpar{ t \geq \unb{i}: N(t) < R+ \tbthresh}.
\end{align*}

\paragraph{Specification Step.}
Now, we apply Lemma~\ref{lem:waldstop} using the accumulation point $T_A$ as our initial point, the cycle end $X$ as our ending point, the constant function $Y_t = 1$ as our test function, and the draining-downcrossing points $\left(\dnb{i}\right)$ as our intervening points; we use $\bind$ to count the number of intervening points. To complete the proof, we must show that the following claims:
\begin{equation}\label{eq:tb_init}
    \ex{\dnb{1} - T_A} \leq \beta +  \frac{1}{\mu}\frac{F_1 \beta \sqrt{R}}{k - R} + \frac{1}{\mu}\frac{3}{2}\ln(\beta) + \frac{1}{\mu}\ln(F_1 D_1) ,
\end{equation}
\begin{equation}\label{eq:tb_cont}
    \ex{ \mintwo{X}{\dnb{i+1}} - \dnb{i} \middle | \bind \geq i} \leq \frac{D_2}{\mu \sqrt{R}} + \frac{D_3}{\mu R} + \frac{2}{\mu \tbthresh} 
\end{equation}
\begin{equation}\label{eq:tb_prob}
    \pr{ \bind \geq i+ 1 \middle | \bind \geq i} \leq \frac{1}{\tbthresh}.
\end{equation}

\paragraph{Completion of Proof assuming \eqref{eq:tb_init}, \eqref{eq:tb_cont}, and \eqref{eq:tb_prob}.}
Before proving the claims, we now prove the lemma. It suffices to give a bound on $\ex{X - \dnb{1}}$; applying Lemma~\ref{lem:waldstop} gives
\begin{align*}
    \ex{X- \dnb{1}} &\leq \tbthresh \left[ \frac{D_2}{\mu \sqrt{R}} + \frac{D_3}{\mu R} + \frac{2}{\mu \tbthresh}\right]
    = \frac{2}{\mu} + \left[D_2 + \frac{D_3}{\sqrt{R}} \right]\max\left(\frac{1}{D_1 \sqrt{\mu \beta}}, \frac{1}{\sqrt{R}}\right). 
\end{align*}

\subsubsection{\torp{Proof of \eqref{eq:tb_init}: Upper bound on time until first downward visit.}{Upper bound on time until first downward visit.}}
To bound $\ex{\dnb{1} - T_A}$, we make a coupling argument then apply basic results on M/M/1 busy periods. Moreover, instead of proving \eqref{eq:tb_init} directly, we first show a more general claim.
\begin{claim}\label{clm:tb_induct}
    For $\tbthresh \leq j \leq N(T_A) - R$, define $\eta_j$ as the first time after $T_A$ that $N(t) \leq R+ j$. Note that this means that $\eta_{N(T_A)} = T_A$ and $\eta_{\tbthresh} = \dnb{1}$. Then we have the following bound:
    \begin{equation*}
        \ex{\eta_{\tbthresh} - \eta_{j} \middle | \filt{\eta_j}} \leq \rem{R+j}(\eta_j) + \frac{1}{\mu}\sum_{i= \tbthresh}^{j}{\frac{1}{\mintwo{i}{k-R}}}. 
    \end{equation*}
\end{claim}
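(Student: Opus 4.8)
The plan is to prove Claim~\ref{clm:tb_induct} by induction on the level $j$, equivalently by telescoping the descent $\eta_{\tbthresh} - \eta_j = \sum_{i=\tbthresh+1}^{j}(\eta_{i-1}-\eta_i)$ and controlling each single-level drop. The base case $j=\tbthresh$ is immediate: $\eta_{\tbthresh}-\eta_{\tbthresh}=0\le \rem{R+\tbthresh}(\eta_{\tbthresh})$. For the inductive step I would isolate a \emph{single-level descent bound}: conditioned on $\filt{\eta_j}$,
\[
\ex{(\eta_{j-1}-\eta_j)+\rem{\min(R+j-1,\,k)}(\eta_{j-1})-\rem{\min(R+j,\,k)}(\eta_j)}\;\le\;\frac{1}{\mu\min(j,\,k-R)}.
\]
Combining this with the induction hypothesis applied at level $j-1$ and the tower property gives exactly the claimed inequality, since the $\rem{\cdot}$ terms telescope, leaving only $\rem{\min(R+j,k)}(\eta_j)$ together with the fresh summand $\tfrac1{\mu\min(j,k-R)}$ appended to the sum $\tfrac1\mu\sum_{i=\tbthresh}^{j-1}\tfrac1{\min(i,k-R)}$.

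The single-level bound itself I would prove via the ``wait-busy'' mechanism of Section~\ref{sec:waitbusy}. Fix $j$, write $m=\min(j,k-R)$, and set $\sigma_j\triangleq \eta_j+\rem{R+m}(\eta_j)$; this is a genuine stopping time because the residual setup time is part of the state at $\eta_j$. Using $Z(t)=\min(k,\min_{s\in[t-\beta,t]}N(s))$ and the fact that $N(t)\ge R+j$ on all of $[\eta_j,\eta_{j-1})$ (as $\eta_{j-1}$ is the \emph{first} down-crossing of $R+j-1$ after $T_A$), one argues that if $N$ has not already reached $R+j-1$ by time $\sigma_j$, then the $(R+m)$-th server has finished its setup and stays busy, so $Z(t)\ge R+m$ throughout $[\sigma_j,\eta_{j-1})$. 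Given this, Claim~\ref{clm:coupling} couples $N$ on $[\sigma_j,\eta_{j-1})$ above an M/M/1 queue with arrival rate $k\lambda=\mu R$ and departure rate $\mu(R+m)$, so $N(t)-\mu m(t-\sigma_j)$ is a supermartingale there; optional stopping at $\eta_{j-1}$ (where $N=R+j-1$) yields $\ex{\eta_{j-1}-\sigma_j\mid\filt{\sigma_j}}\le \frac{N(\sigma_j)-(R+j-1)}{\mu m}$ on that event. During the ``wait'' window $[\eta_j,\sigma_j]$ only $Z\ge R$ is available, so $N$ is coupled below a critical M/M/1; this controls both $\ex{N(\sigma_j)}$ and the probability that the window survives, and together they bound $\ex{(\eta_{j-1}-\sigma_j)^+}$. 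Finally, the term $\rem{\min(R+j-1,k)}(\eta_{j-1})$ is handled by observing that server $R+j-1$ was already active before $\eta_j$ and its index stays $\le N$ all the way down to $\eta_{j-1}$, so its residual setup at $\eta_{j-1}$ is at most the elapsed time since its pre-$\eta_j$ activation --- which is precisely what lets it be absorbed against $\rem{\min(R+j,k)}(\eta_j)$ and the descent time.

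I expect the main obstacle to be the ``wait'' window together with the conditioning: bounding $\ex{(\eta_{j-1}-\sigma_j)^+}$ forces one to condition on the event that $N$ stays above $R+j$ for the full residual-setup duration, and on that event $N(\sigma_j)$ is a critically-loaded walk conditioned to stay positive (mean overshoot of order $\sqrt{\mu\beta R}$) while the event itself has probability of order $1/\sqrt{\mu\beta R}$; one must show these cancel so the per-level contribution is genuinely $O(1/(\mu m))$ with no residual $\sqrt{\cdot}$ terms. A secondary nuisance is the bookkeeping of exactly which servers are guaranteed busy at $\eta_{j-1}$, needed to make the $\rem{\cdot}$ telescoping exact; this leans on the monotone-activation facts above and on each $\eta_{j-1}$ being a \emph{first} down-crossing. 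Integrability of the busy-period lengths (hence validity of optional stopping and of Wald in Lemma~\ref{lem:waldstop}) is routine here because $\min(j,k-R)\ge 1$, so the coupled queue is strictly stable.
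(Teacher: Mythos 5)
Your proposal follows essentially the same route as the paper's proof: the paper also inducts over the levels $\eta_j$, uses the wait-busy mechanism (wait out $\rem{\mintwo{R+j}{k}}(\eta_j)$, then a busy period with excess rate $\mu\mintwo{j}{k-R}$), bounds the expected overshoot at the stopped time by $N(\eta_j)-(R+j-1)=1$ via coupling and optional stopping, and absorbs the next level's residual setup using the monotonicity of server states and the rate-one decay of residuals --- which is exactly your telescoping single-level bound, so your anticipated ``main obstacle'' dissolves since $\ex{\left[N(\sigma_j)-(R+j-1)\right]\indc{\eta_{j-1}>\sigma_j}}\le 1$ comes out of the unconditioned stopped expectation with no survival-conditioning needed. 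The only nit is a sign slip: the supermartingale you want on $[\sigma_j,\eta_{j-1})$ is $N(t)+\mu m\,(t-\sigma_j)$, not $N(t)-\mu m\,(t-\sigma_j)$, though your stated conclusion $\ex{\eta_{j-1}-\sigma_j\mid\filt{\sigma_j}}\le \frac{N(\sigma_j)-(R+j-1)}{\mu m}$ is the correct one.
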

\newcommand{\remrj}{\rem{R+j}\left(\eta_j \right)}
Afterwards, we complete the proof by noting that $\left[N(T_A) - k\right]^+ \leq \left[N(T_A) - R \right]$, taking expectations, applying Jensen's inequality to the minimum function and the $\ln(\cdot)$ (which is concave), using the bound on $\ex{N(T_A) - R}$ from Claim~\ref{clm:taexpect}, then letting $h = \tbthresh$.

\paragraph{Proof of Claim~\ref{clm:tb_induct}.}
We prove Claim~\ref{clm:tb_induct} by induction. In the base case, suppose that $j = \tbthresh + 1$. Note that at time $\eta_{\tbthresh + 1}$, the numbers of jobs $N(\eta_{\tbthresh + 1}) = R + \tbthresh + 1$ and the remaining time until the $(R+\tbthresh+1)$-th server turns on is $\rem{R + \tbthresh+1}\left(\eta_{\tbthresh + 1}\right)$. As such, we can simply wait until either that server turns on, in which case we can analyze the system as an M/M/1 busy period with departure rate $\mu \mintwo{R+\tbthresh+1}{k}$, or the number of jobs $N(t)$ drops below $R+\tbthresh + 1$ on its own. In other words, (using $j$ here to save space)
\begin{equation*}
    \ex{\eta_{j-1} - \eta_{j}\middle | \filt{\eta_{j}}} 
    \leq \remrj + 
    \frac{\ex{\left[N\left(\eta_j + \rem{R+j}\left(\eta_j\right)\right) - \left(R+(j-1)\right)  
        \right] \indc{\eta_{j-1} > \eta_{j} + \remrj} \middle | \filt{\eta_j}}}
    {\mu \mintwo{j}{k-R}}.
\end{equation*}
Now, we reframe the expectation as an expectation up to a stopping time. We note that, if $\eta_{j-1} > \eta_j + \remrj$, then we have that
\begin{equation*}
    N\left(\eta_j + \remrj \right) = N\left(\mintwo{\eta_j + \remrj}{\eta_{j-1}}\right).
\end{equation*}
Likewise, if $\eta_{j-1} \leq \eta_j + \remrj$, then
\begin{equation*}
     R + j - 1 =  N\left(\eta_{j-1}\right) = N\left( \mintwo{\eta_j + \remrj}{\eta_{j-1}}\right).
\end{equation*}
Using this and applying a simple coupling argument, one sees that
\begin{align*}
    &\ex{\left[N\left(\eta_j + \rem{R+j}\left(\eta_j\right)\right) - \left(R+(j-1)\right)  
        \right] \indc{\eta_{j-1} > \eta_{j} + \remrj} \middle | \filt{\eta_j}}\\
        &\;\; = \ex{N\left(\mintwo{\eta_j + \remrj}{\eta_{j-1}}\right) - (R + j - 1) \middle | \filt{\eta_j}}\\
        &\;\; \leq N\left(\eta_j\right) - (R + j - 1) = 1.
\end{align*}
Thus, we find that
\begin{equation*}
    \ex{\eta_{j-1} - \eta_{j}\middle | \filt{\eta_{j}}} 
    \leq \remrj + \frac{1}{\mu \mintwo{j}{k-R}}.
\end{equation*}

\paragraph{Inductive case.}
The inductive case proceeds in much the same way, except now, if $N(t)$ does drop below $R+j$ ``early'', then we can factor in the time that has elapsed in the value of $\rem{R+j}(\eta_j)$.
In particular, note that, since the $(R+j)$-th server would have already turned on,
\begin{equation*}
\ex{\eta_{\tbthresh} - \eta_j \middle |\filt{\eta_j}}\indc{\eta_j \geq \eta_{j+1} + \rem{R+j+1}(\eta_{j+1})}\leq \frac{1}{\mu}\sum_{i=\tbthresh}^{j}{\frac{1}{\mu \mintwo{i}{k-R}}}\indc{\eta_j \geq \eta_{j+1} + \rem{R+j+1}(j+1)}.
\end{equation*}
It follows that
\begin{equation*}
\ex{\eta_{\tbthresh} - \eta_j \middle |\filt{\eta_j}}\leq \rem{R+j}\left(\eta_j\right)\indc{\eta_j < \eta_{j+1} + \rem{R+j+1}\left(\eta_{j+1}\right)} + \frac{1}{\mu}\sum_{i=\tbthresh}^{j}{\frac{1}{\mu \mintwo{i}{k-R}}}.
\end{equation*}
Now, we note that
\begin{align*}
    \remrj \indc{\eta_j < \eta_{j+1} + \rem{R+1+j}\left(\eta_{j+1}\right)} &= \left[\remrj + \eta_j - \eta_j \right] \indc{\eta_j < \eta_{j+1} + \rem{R+1+j}\left(\eta_{j+1}\right)}\\
    &= \left[\rem{R+j}\left(\eta_{j+1}\right) + \eta_{j+1} - \eta_j \right] \indc{\eta_j < \eta_{j+1} + \rem{R+1+j}\left(\eta_{j+1}\right)}\\
   &\leq  \left[\rem{R+j+1}\left(\eta_{j+1}\right) + \eta_{j+1} - \eta_j \right] \indc{\eta_j < \eta_{j+1} + \rem{R+1+j}\left(\eta_{j+1}\right)}\\
   &= \left[\rem{R+j+1}\left(\eta_{j+1}\right) + \eta_{j+1} - \eta_j \right]^+,
\end{align*}
so that we find
\begin{equation*}
    \ex{\eta_{\tbthresh} - \eta_j \middle |\filt{\eta_j}}\leq \left[\rem{R+j+1}\left(\eta_{j+1}\right) + \eta_{j+1} - \eta_j \right]^+ + \frac{1}{\mu}\sum_{i=\tbthresh}^{j}{\frac{1}{\mu \mintwo{i}{k-R}}}.
\end{equation*}
Finally, we note that
\begin{equation*}
    \ex{\eta_{j} - \eta_{j+1} \middle | \filt{\eta_{j+1}}} \leq \ex{\mintwo{\eta_{j} - \eta_{j+1}}{\rem{R+j+1}\left(\eta_{j+1}\right)}\middle | \filt{\eta_{j+1}}} + \frac{1}{\mu \mintwo{j+1}{k-R}}.
\end{equation*}
Summing these final two expressions gives the inductive result, proving Claim~\ref{clm:tb_induct}. 

\paragraph{Using Claim~\ref{clm:tb_induct}.}
Thus, we obtain that, using $H_i$ to denote the $i$-th harmonic number,
\begin{align*}
    \ex{\dnb{1} - T_A\middle | \filt{T_A}} &\leq \beta + \frac{1}{\mu}\frac{\left[N(T_A) - k\right]^+}{k-R} + \frac{1}{\mu}\left[H_{\mintwo{N(T_A)-R}{k-R}} - H_{\tbthresh}\right]\\
    &\leq \beta + \frac{1}{\mu}\frac{\left[N(T_A) - R\right]^+}{k-R} + \frac{1}{\mu}\ln\left(\frac{\mintwo{N(T_A) - R}{k-R}}{M_L}\right). 
\end{align*}
Taking expectations and applying Jensen's inequality twice, we find
\begin{align*}
     \ex{\dnb{1} - T_A\middle | \filt{T_A}} &\leq \beta + \frac{1}{\mu} \frac{F_1 \mu \beta \sqrt{R}}{k-R} + \frac{1}{\mu}\ln\left(\frac{F_1 \mu \beta \sqrt{R}}{M_L}\right)\\
     &\leq \beta + \frac{1}{\mu} \frac{F_1 \mu \beta \sqrt{R}}{k-R} + \frac{1}{\mu}\ln\left(\frac{\min\left(F_1 \mu \beta \sqrt{R}, k - R\right)}{\min\left(\max\left(1,\frac{\sqrt{R}}{D_1 \sqrt{\beta}}\right) , k-R\right)}\right)\\
     &\leq \beta + \frac{1}{\mu} \frac{F_1 \mu \beta \sqrt{R}}{k-R} + \frac{1}{\mu}\ln\left(F_1 D_1 \beta^{3/2}\right)\\
     &= \beta + \frac{1}{\mu} \frac{F_1 \mu \beta \sqrt{R}}{k-R} + \frac{1}{\mu}\frac{3}{2}\ln\left(\beta\right) + \frac{1}{\mu} \ln\left(F_1 D_1\right). \tag*{\Halmos}
\end{align*}

\subsubsection{\torp{Proof of \eqref{eq:tb_cont}: Upper Bound on Time between Consecutive Downward Visits.}{Upper Bound on Time between Consecutive Downward Visits.}}\label{sec:downtodown}
To bound the expectation $\ex{\mintwo{\dnb{i+1}}{X} - \dnb{i} \middle | \filt{\dnb{i}}}$, we split the interval into two parts, $\left[\mintwo{\unb{i}}{X} - \dnb{i}\right]$ and $\left[\dnb{i+1} - \dnb{i}\right]$.

To bound the expectation of the first quantity, it suffices to note that, if we couple the system to an M/M/$\infty$, the coupled number of jobs $\ncouple(t)$ will reach $R + \tbthresh$ only after the original system. Using Claim~\ref{clm:mminf} to bound this passage time, we thus know that
\begin{align*}
    \ex{\mintwo{\unb{i}}{X} - \dnb{i}\middle | \filt{\dnb{i}}} &\leq \ex{\mintwo{T^{M/M/\infty}_{(R+\tbthresh - 1)\to (R+ \tbthresh)} + \dnb{i}}{X} - \dnb{i} \middle | \filt{\dnb{i}}}\\
    &\leq \ex{T^{M/M/\infty}_{(R+\tbthresh - 1)\to (R+ \tbthresh)}}\\
    &\leq \frac{D_2}{\sqrt{R}}.
\end{align*}

To bound the expectation of the second quantity, we provide two bounds. First, we again make use of the ``wait-busy'' idea; as we argued in the proof of \eqref{eq:tb_init},
\begin{equation*}
    \ex{\dnb{i+1} - \unb{i}\middle | \filt{\unb{i}}} \leq \ex{\mintwo{\dnb{i+1} - \unb{i}}{\beta}\middle | \filt{\unb{i}}} + \frac{1}{\mu \tbthresh}.
\end{equation*}
From here, we note, by coupling to an M/M/1 with arrival rate and departure rate both equal to $k \lambda$, we can bound $\ex{\mintwo{\dnb{i+1} - \unb{i}}{\beta}\middle | \unb{i} < X}$ by the expected minimum between $\beta$ and the length of a single-job busy period in that system. Applying Claim~\ref{clm:contint}, we can complete the proof, finding that
\begin{equation*}
    \ex{\mintwo{\dnb{i+1} - \unb{i}}{\beta}\middle | \filt{\unb{i}}} \leq D_1 \frac{\sqrt{\beta}}{\sqrt{\mu R}} + \frac{6}{\mu R}.
\end{equation*}

For the second bound, we simply note that, during the draining phase, the number of busy servers $Z(t) \geq R+1$. It follows from a simple coupling argument that
\begin{equation*}
     \ex{\mintwo{\dnb{i+1} - \unb{i}}{\beta}\middle | \filt{\unb{i}}} \leq \frac{1}{\mu}.
\end{equation*}
Combining the bounds pessimistically, we find that
\begin{align*}
    \ex{\mintwo{\dnb{i+1}}{X} - \dnb{i} \middle | \filt{\dnb{i}}} &\leq \frac{D_2}{\mu \sqrt{R}} + \frac{D_3}{\mu R} + \frac{1}{\mu\tbthresh} + \min \left(D_1 \frac{\sqrt{\beta}}{\sqrt{\mu R}}, \frac{1}{\mu}\right)\\
    &\leq  \frac{D_2}{\mu \sqrt{R}} + \frac{D_3}{\mu R} + \frac{2}{\mu \tbthresh}.\tag*{\Halmos}
\end{align*}

\subsubsection{\torp{Proof of \eqref{eq:tb_prob}: Upper Bound on Probability of Another Downward Visit.}{Upper Bound on Probability of Another Downward Visit.}}
To bound the probability of an additional downcrossing, we again make a coupling argument. In particular, we couple again to the system which only has $R$ servers busy, which gives an upper bound on the number of jobs in the system $N(t)$. If, in our coupled system, we reach $\ncouple(t) = R + \tbthresh$ before we reach $\ncouple(t) = R$, then another upcrossing \emph{must} have previously occurred in the original system, and thus another downcrossing must also occur. But, of course, we know classically that the probability that this happens is just $\frac{1}{\tbthresh}$; this is precisely what is asserted by \eqref{eq:tb_prob}. \hfill \Halmos





\newcommand{\dgcouple}{\Tilde{\dgen}}
\newcommand{\startpt}{\tau}
\newcommand{\intend}{\tau + \mintwo{\ell}{\dgen}}

\newcommand{\cnine}{\left(1 - \frac{C_9}{\sqrt{\mu \beta R}}}





\section{Hitting Time Bounds}

\subsection{Proof of Claim~\ref{clm:discretedown}, Discrete-Time Hitting Time Tail Bound.}\label{sec:discretedown}
\begin{restatable}[Discrete-Time Hitting Time Tail Bound]{claim}{discretedown}\label{clm:discretedown}
    Suppose one has an upwards-biased discrete random walk $V(t)$ where in each step 
    \begin{equation*}
        \pr{V(t+1) = V(t) + 1\middle| \filt{t}} = p = 1-q,
    \end{equation*}
    where $p \geq \frac{1}{2} \geq q$. Suppose that $V(0) = 1$ and let the hitting time $\gamma \triangleq \minpar{t \in \mathcal{N}: V(t) = 0}$ be the first timestep where the walk $V(t) = 0$. Then, for $n \geq 1$,
    \begin{equation*}
        \pr{ \gamma \geq 2m + 1} \leq \frac{1}{\sqrt{\pi}}  \frac{2q}{\sqrt{m}}\left(1 + \frac{1}{2(m+1)}\right).
    \end{equation*}
\end{restatable}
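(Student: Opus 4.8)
The plan is to obtain the exact first-passage distribution of $\gamma$ and then sum its tail using a Stirling estimate for Catalan numbers. First I would record the parity constraint: since $V(0)=1$ and each step changes $V$ by $\pm 1$, the walk can reach $0$ only at odd times, so $\gamma\in\{2k-1:k\ge 1\}$ and $\{\gamma\ge 2m+1\}$ is the disjoint union $\bigsqcup_{k\ge m+1}\{\gamma=2k-1\}$ (together with the null event $\{\gamma=\infty\}$ when $p>\tfrac12$; in the critically loaded case $p=q=\tfrac12$ relevant to the applications of this claim the walk hits $0$ almost surely, so this extra event does not appear). Writing $g(x)\triangleq\ex{x^{\gamma}\ind{\gamma<\infty}}=\sum_{k\ge 1}\pr{\gamma=2k-1}\,x^{2k-1}$, a first-step decomposition combined with the strong Markov property and the translation invariance of the increments gives the functional equation $g(x)=qx+px\,g(x)^{2}$: with probability $q$ the walk hits $0$ immediately, and with probability $p$ it steps to $2$, from which the passage to $0$ is the concatenation of two i.i.d.\ copies of the $1\to 0$ passage. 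Solving this quadratic and taking the branch that vanishes at $x=0$ yields $g(x)=\bigl(1-\sqrt{1-4pq\,x^{2}}\bigr)/(2px)$.

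Next I would expand $g$ via the Catalan generating function $\sum_{n\ge 0}C_{n}y^{n}=(1-\sqrt{1-4y})/(2y)$, which is legitimate here because $pq\le\tfrac14$, so $y=pq\,x^{2}$ satisfies $|y|\le\tfrac14$ for $|x|\le 1$. This rewrites $g(x)=qx\sum_{n\ge 0}C_{n}(pq)^{n}x^{2n}$, and matching the coefficient of $x^{2k-1}$ identifies $\pr{\gamma=2k-1}=C_{k-1}\,p^{k-1}q^{k}$ for $k\ge 1$. Summing over $k\ge m+1$ and reindexing $\ell=k-1$ gives $\pr{\gamma\ge 2m+1}=q\sum_{\ell\ge m}C_{\ell}(pq)^{\ell}$.

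To finish, I would invoke the elementary bound $\binom{2\ell}{\ell}\le 4^{\ell}/\sqrt{\pi\ell}$ (valid for all $\ell\ge 1$), so that $C_{\ell}=\tfrac{1}{\ell+1}\binom{2\ell}{\ell}\le 4^{\ell}/\bigl((\ell+1)\sqrt{\pi\ell}\bigr)$, together with the crude estimate $(4pq)^{\ell}\le 1$; hence $C_{\ell}(pq)^{\ell}\le 1/\bigl((\ell+1)\sqrt{\pi\ell}\bigr)$ and
\[
\pr{\gamma\ge 2m+1}\le\frac{q}{\sqrt{\pi}}\left[\frac{1}{(m+1)\sqrt{m}}+\sum_{\ell\ge m+1}\frac{1}{(\ell+1)\sqrt{\ell}}\right].
\]
For the residual sum I would bound $1/\bigl((\ell+1)\sqrt{\ell}\bigr)\le \ell^{-3/2}$ and compare with $\int_{m}^{\infty}x^{-3/2}\,dx=2/\sqrt{m}$, collapsing the bracket to $\tfrac{2}{\sqrt{m}}\bigl(1+\tfrac{1}{2(m+1)}\bigr)$, which is exactly the claimed bound.

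The main obstacle is the first paragraph: carefully justifying the convolution identity $g(x)=qx+px\,g(x)^{2}$ (strong Markov property, and the observation that $\{\gamma=\infty\}$ contributes nothing to $g$), together with selecting the correct square-root branch and confirming the radius of convergence through $pq\le\tfrac14$. Everything downstream — the Catalan expansion, the Stirling estimate for $\binom{2\ell}{\ell}$, and the tail-integral comparison — is routine, though the final step must be executed with some care so as to land precisely on the constant factor $1+\tfrac{1}{2(m+1)}$ rather than a slightly weaker one.
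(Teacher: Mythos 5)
Your proof is correct in substance and, after the first paragraph, is essentially the paper's proof: both use the exact law $\pr{\gamma = 2\ell+1} = q\,(pq)^{\ell} C_{\ell}$, discard the bias via $(4pq)^{\ell}\le 1$, bound the central binomial coefficient by $4^{\ell}/\sqrt{\pi\ell}$ (the paper via Stirling), and then sum the tail by splitting off the $\ell=m$ term and comparing the rest with $\int_m^{\infty} x^{-3/2}\,\dd x = 2/\sqrt{m}$, landing on exactly the constant $1+\tfrac{1}{2(m+1)}$. The only genuine difference is how you obtain the hitting-time law: you derive it self-containedly from the first-step functional equation $g(x)=qx+px\,g(x)^2$ and the Catalan generating function, whereas the paper simply invokes a counting argument (citing the Catalan-number identity from prior work). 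Your route is longer but fully self-contained; the paper's is shorter but rests on the cited combinatorial fact. Both yield the same formula, so this is a stylistic rather than mathematical divergence.

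One sentence of yours is backwards and should be fixed: $\{\gamma=\infty\}$ is \emph{not} a null event when $p>\tfrac12$ --- by gambler's ruin it has probability $1-q/p>0$; it is null precisely in the critical case $p=q=\tfrac12$. Consequently your displayed identity $\pr{\gamma\ge 2m+1}=q\sum_{\ell\ge m}C_{\ell}(pq)^{\ell}$ actually computes $\pr{2m+1\le\gamma<\infty}$, and this matters in principle because the claim is applied downstream with $p=\tfrac{R}{2R-j}>\tfrac12$ (for $j\ge 1$), where the full tail cannot decay like $1/\sqrt{m}$. The paper's own proof makes exactly the same silent identification (it sums only over finite hitting times), so this does not put you behind the paper, but your parenthetical justification as written is incorrect; the honest statement is that what is being bounded is the probability of hitting $0$ at a finite time $\ge 2m+1$.
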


Moreover, if $p=q=\frac{1}{2}$, then
\begin{equation*}
    \pr{ \gamma \geq 2m + 1} \geq \frac{1}{\sqrt{\pi}} e^{-\frac{1}{6 m}} \frac{1}{\sqrt{m+1}}.
\end{equation*}
\subsubsection{Proof}

 First note, as in \cite{WHW}, that by a counting argument $\pr{\gamma = 2\ell + 1} = q \left(qp\right)^\ell C_\ell$, where $C_\ell\triangleq \frac{1}{\ell +1}\frac{(2\ell)!}{\ell!\ell!}$ is the $\ell$-th Catalan number; note that $\gamma$ can not be even, since the number of downward steps must exceed the number of upward steps by exactly 1.

We proceed by bounding the Catalan numbers using Stirling's approximation. For $m = 0$, then $\pr{\gamma \geq 1} = \pr{\gamma \geq 2} = p$, i.e. the probability that the first step is an upward step. For $m \geq 1$, applying Stirling's approximation and simplifying gives
\begin{equation*}
    e^{-\frac{1}{6 \ell}} \frac{1}{\sqrt{\pi \ell}(\ell+1)} q \left(4 p q\right)^\ell \leq  \pr{\gamma = 2\ell + 1} \leq \frac{1}{\sqrt{\pi \ell}(\ell + 1)} q \left(4 p q\right)^\ell.
\end{equation*}

\paragraph{Lower bound.}
Since we are interested in the lower bound only when $q = p = \frac{1}{2}$, we obtain that
\begin{align*}
    \pr{\gamma \geq 2m + 1} &\geq  \frac{1}{\sqrt{\pi}}\frac{1}{2} \sum_{\ell = m}^{\infty}{\frac{e^{-\frac{1}{6\ell}}}{\sqrt{\ell}(\ell + 1)}}\geq \frac{1}{\sqrt{\pi}} \frac{1}{2} \sum_{\ell = m}^{\infty}{ \frac{ e^{-\frac{1}{6 m}} }{\sqrt{\ell}(\ell +1)}}\geq \frac{1}{\sqrt{\pi}}\frac{1}{2} \int_{m}^{\infty}{ \frac{ e^{-\frac{1}{6 m}} }{(\ell+1)^{3/2}} \dd \ell} = \frac{1}{\sqrt{\pi}} e^{-\frac{1}{6 m}} \frac{1}{\sqrt{m+1}}.
\end{align*}
\paragraph{Upper bound.}
Noting that $4pq \leq 1$, we have likewise that
\begin{align*}
   \pr{\gamma \geq 2m + 1} &\leq \frac{1}{\sqrt{\pi}} q \sum_{\ell = m}^{\infty}{\frac{1}{\sqrt{\ell}(\ell + 1)}}  \leq \frac{1}{\sqrt{\pi}} q \frac{1}{\sqrt{m}(m+1)} + \int_{m}^{\infty}{\frac{1}{\ell^{3/2}} \dd \ell} =\frac{1}{\sqrt{\pi}} q \frac{2}{\sqrt{m}}\left(1 + \frac{1}{2(m+1)}\right).
\end{align*}

\subsection{Proof of Claim~\ref{clm:contdown}, Continuous-Time Hitting Time Tail Bound.}\label{sec:contdown}
We further extend this discrete-time bound into a continuous-time bound.
\begin{restatable}[Continuous-Time Hitting Time Tail Bound]{claim}{contdown}\label{clm:contdown}
    Suppose one has an Poisson arrival process $Y_A(t)$ of rate $k \lambda$ and a Poisson departure process $Y_D(t)$ of rate $\mu (R - j)$, for some integer $j\geq 0$. Let the continuous random walk $X_c(t) = Y_A(t) - Y_D(t)$, with $X_c(0) = 1$, and define $\gamma_c \triangleq \minpar{t>0: X_c(t) = 0}$. Let $\nu = (2R - j)\mu t$. For any $\nu \geq 3$, we have
    \begin{equation*}
        \pr{\gamma_c \geq t} \leq  \frac{b_1}{\sqrt{2}} \left(\frac{1}{\sqrt{\nu}}  + \frac{b_2}{\nu^{3/2}}\right)\\
    \end{equation*}
    where $b_1 = \sqrt{\frac{2}{\pi}}$ and $b_2 = 1 + \frac{2.5}{b_1 \sqrt{2}}$.

    Moreover, if $j = 0$, then
    \begin{equation*}
        \pr{\gamma_c \geq t} \geq \frac{b_1}{\sqrt{2}} e^{-\frac{1}{3(\nu-1)}} \frac{1}{\sqrt{\nu + 2}}.
    \end{equation*}
\end{restatable}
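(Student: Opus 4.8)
The plan is to reduce the continuous-time hitting time $\gamma_c$ to the discrete-time hitting time $\gamma$ of Claim~\ref{clm:discretedown} via an embedding argument, and then average over the random number of jumps. Concretely, I would superpose the arrival process $Y_A$ (rate $k\lambda = \mu R$) and the departure process $Y_D$ (rate $\mu(R-j)$) into a single Poisson process of rate $\mu(2R-j)$ whose events are, independently, ``up'' with probability $p \triangleq \frac{R}{2R-j} \geq \frac12$ and ``down'' with probability $q \triangleq \frac{R-j}{2R-j} \leq \frac12$. Since the total event rate is state-independent, the embedded walk $V_n$ (with $V_0 = 1$ and up/down steps of probability $p/q$) is independent of the event times; let $K(t)$ count the events in $[0,t]$, so $K(t) \sim \Poisson(\nu)$ with $\nu = \mu(2R-j)t$, and let $\gamma \triangleq \minpar{n : V_n = 0}$ as in Claim~\ref{clm:discretedown}. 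Then $\gamma_c$ is exactly the time of the $\gamma$-th event, so $\{\gamma_c \geq t\} = \{K(t) < \gamma\}$, and by independence of $K(t)$ and $\gamma$,
$$\pr{\gamma_c \geq t} = \pr{K(t) < \gamma} = \sum_{m \geq 0} \ppr{\Poisson(\nu) = m}\,\pr{\gamma \geq m+1}.$$

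Next I would feed in Claim~\ref{clm:discretedown}. Splitting off the $m=0$ term ($\pr{\gamma\geq1}=1$, contributing a stray $e^{-\nu}$), and using that $\gamma$ is a.s.\ odd so that $\pr{\gamma \geq m+1} = \pr{\gamma \geq 2\ell+1}$ with $\ell = \lceil m/2\rceil$, the discrete tail bound turns the sum into a weighted average of $\frac{1}{\sqrt{\lceil m/2\rceil}}\bigl(1 + \tfrac{1}{2(\lceil m/2\rceil+1)}\bigr)$ against the Poisson weights, with prefactor $\tfrac{2q}{\sqrt\pi}$. Since $\lceil m/2\rceil \geq m/2$, this is dominated by $\frac{2q\sqrt2}{\sqrt\pi}\,\ex{K(t)^{-1/2}\ind{K(t)\geq1}}$ plus a lower-order $\ex{K(t)^{-3/2}\ind{K(t)\geq1}}$ correction; bounding $q \leq \tfrac12$ turns the leading coefficient into at most $b_1 = \sqrt{2/\pi}$. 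For the lower bound one restricts to $j=0$ (so $q=p=\tfrac12$ and $\gamma_c<\infty$ a.s., which is exactly the regime in which Claim~\ref{clm:discretedown} supplies a matching lower bound) and repeats the computation, now averaging $e^{-1/(6\lceil m/2\rceil)}(\lceil m/2\rceil+1)^{-1/2}$ against the Poisson weights from below.

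The main obstacle is the Poisson negative-moment estimate: one needs $\ex{K^{-1/2}\ind{K\geq1}} \leq \nu^{-1/2} + O(\nu^{-3/2})$ together with a matching lower bound, with explicit constants and an error term honestly controlled once $\nu \geq 3$. The clean route is the integral representation $m^{-1/2} = \tfrac{1}{\sqrt\pi}\int_0^\infty u^{-1/2}e^{-mu}\,\dd u$ combined with the Poisson p.g.f.\ $\ex{e^{-uK}} = e^{-\nu(1-e^{-u})}$, which gives
$$\ex{K^{-1/2}\ind{K\geq1}} = \frac{1}{\sqrt\pi}\int_0^\infty u^{-1/2}\Bigl(e^{-\nu(1-e^{-u})} - e^{-\nu}\Bigr)\,\dd u;$$
bounding $1-e^{-u}$ above by $u$ and below by $u - u^2/2$ reduces everything to Gamma integrals, $\int_0^\infty u^{-1/2}e^{-\nu u}\,\dd u = \sqrt{\pi/\nu}$ and $\int_0^\infty u^{1/2}e^{-\nu u}\,\dd u = \tfrac12\sqrt\pi\,\nu^{-3/2}$, and the $\nu\geq3$ hypothesis makes the resulting corrections (and the stray $e^{-\nu}$, since $\nu^{3/2}e^{-\nu}$ is decreasing and small for $\nu\geq3$) all of order $\nu^{-3/2}$.

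Finally I would assemble: collect the $\tfrac{2q}{\sqrt\pi}\le b_1$ prefactor, the $\sqrt2$ coming from the $\lceil m/2\rceil$ bookkeeping, the $e^{-\nu}$ term, and the Gamma-integral corrections, then define $b_2$ to dominate the aggregate lower-order coefficient, yielding the stated upper bound; the lower bound follows by the same accounting with all inequalities reversed and $e^{-1/(3(\nu-1))}$ and $(\nu+2)^{-1/2}$ emerging from the $e^{-1/(6\lceil m/2\rceil)}$ and $(\lceil m/2\rceil+1)^{-1/2}$ factors under the Poisson average. (For $j>0$, where $\pr{\gamma=\infty} = j/R > 0$, the identical computation instead bounds $\pr{t\leq\gamma_c<\infty}$, which is the quantity actually invoked in the applications; in all uses in the paper $j=0$.)
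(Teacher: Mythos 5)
You take essentially the same route as the paper: superpose the two Poisson streams, use that the embedded $\pm 1$ walk is independent of the event epochs, condition on $K(t)\sim\Poisson(\nu)$, and feed in Claim~\ref{clm:discretedown}. The only genuine difference is the averaging step: the paper absorbs a factor $j+1$ into a shifted Poisson weight so the summand becomes $\sqrt{x}+3/\sqrt{x}$ and then applies Jensen's inequality to that increasing concave function, whereas you estimate the negative moments $\ex{K(t)^{-1/2}\ind{K(t)\geq 1}}$ directly through the Gamma-integral representation and the Poisson p.g.f. Your device is standard and reusable, but note one technical snag in the upper-bound direction: inserting $1-e^{-u}\geq u-u^{2}/2$ into the exponent gives $e^{-\nu u+\nu u^{2}/2}$, which is not integrable against $u^{-1/2}$ on $(0,\infty)$. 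You must split the integral (say at $u=1$) and use the subtracted $e^{-\nu}$ term, e.g. $e^{-\nu(1-e^{-u})}-e^{-\nu}\leq \nu e^{-u}e^{-\nu(1-e^{-1})}$ for $u\geq 1$, so that the tail is exponentially small and folds into the $\nu^{-3/2}$ term for $\nu\geq 3$; the lower-bound direction is clean since $1-e^{-u}\leq u$ gives $\ex{e^{-uK(t)}}\geq e^{-\nu u}$ globally.

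The more substantive point is the constant, and here your accounting is the correct one. After $\lceil m/2\rceil\geq m/2$ and $q\leq \tfrac12$ your leading coefficient is $2qb_1\leq b_1$, not the claimed $b_1/\sqrt{2}$, and no proof can reach $b_1/\sqrt{2}$: for $j=0$, conditioning on $K(t)$ (which concentrates at $\nu$) and applying the matching lower bound of Claim~\ref{clm:discretedown} with $m\approx\nu/2$ shows $\pr{\gamma_c\geq t}=(1+o(1))\,b_1\nu^{-1/2}$ (equivalently, the exact tail is $e^{-\nu}\left(I_0(\nu)+I_1(\nu)\right)$), which exceeds $\tfrac{b_1}{\sqrt{2}}\nu^{-1/2}\left(1+b_2\nu^{-1}\right)$ once $\nu$ is moderately large. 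The discrepancy traces to the substitution $m=(j+\mathbf{1}\{j\ \text{even}\}-1)/2$ in the paper's proof, where $1/\sqrt{m}=\sqrt{2}/\sqrt{j+\mathbf{1}\{j\ \text{even}\}-1}$ and a factor $\sqrt{2}$ is dropped; indeed the paper's own final display reads $2q\sqrt{2/\pi}$, i.e. $b_1$ at $q=\tfrac12$, in tension with the stated claim. So your argument proves the claim with $b_1/\sqrt{2}$ replaced by $b_1$ (and $b_2$ adjusted), which is the version that should be stated; the $\nu^{-1/2}$ scaling and the downstream numerical uses survive with only trivial slack adjustments. Your closing caveat is also right and worth keeping: for $j>0$ one has $\pr{\gamma=\infty}=j/R>0$, so the computation really bounds $\pr{t\leq\gamma_c<\infty}$, which is exactly the quantity needed in the one application with $j>0$ (Section~\ref{sec:lbcons}); the paper glosses over this.
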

\newcommand{\epsj}{\indc{\text{$j$ is even}}}
\subsubsection{Proof of Upper Bound.}
To prove this claim, we first condition on the value of $Y_T = Y_A(t) + Y_D(t)$, the total number of Poisson events during the interval $[0,t]$, then relate that to the same question in a discrete-time random walk, a la Claim~\ref{clm:discretedown}. Note that $Y_T \sim \Poisson(\nu)$, and thus
\begin{align*}
    \pr{ \gamma_c \geq t} = \pr{\gamma \geq Y_T}
    &= \sum_{j=0}^{\infty}{e^{-\nu}\frac{\nu^j}{j!} \pr{ \gamma \geq j}}\\
    &= e^{-\nu} + 2p\nu e^{-\nu} +\sum_{j=3}^{\infty}{e^{-\nu}\frac{\nu^j}{j!} \pr{ \gamma \geq j + \epsj}}\\
    &=  e^{-\nu} + 2p\nu e^{-\nu} + \sum_{j=0}^{\infty}{e^{-\nu}\frac{\nu^j}{j!} \pr{ \gamma \geq 2\left(\frac{j + \epsj - 1}{2}\right)+1}}.
\end{align*}
Applying the discrete upper bound to the sum, we obtain
\begin{align*}
    &\sum_{j=3}^{\infty}{e^{-\nu}\frac{\nu^j}{j!} \pr{ \gamma \geq 2\left(\frac{j + \epsj - 1}{2}\right)+1}}\\
    &\;\;\leq b_1 \sqrt{2} q \sum_{j=3}^{\infty}{e^{-\nu}\frac{\nu^j}{j!}  \frac{1}{\sqrt{j + \epsj - 1}}\left(1 + \frac{1}{(j + \epsj + 1)}\right)}\\
    &\;\;= b_1 \sqrt{2} q \frac{1}{\nu}\sum_{j=3}^{\infty}{e^{-\nu}\frac{\nu^{(j+1)}}{(j+1)!}  \frac{1}{\sqrt{j + \epsj - 1}}\left(j+1 + \frac{j+1}{j + \epsj + 1}\right)}\\
    &\;\;\leq b_1 \sqrt{2} q \frac{1}{\nu}\sum_{j=3}^{\infty}{e^{-\nu}\frac{\nu^{(j+1)}}{(j+1)!}  \frac{j+2}{\sqrt{j + \epsj - 1}}}\\
    &\;\;\leq b_1 \sqrt{2} q \frac{1}{\nu}\sum_{j=3}^{\infty}{e^{-\nu}\frac{\nu^{(j+1)}}{(j+1)!}  \frac{j+\epsj - 1 + 3}{\sqrt{j + \epsj - 1}}}\\
    &\;\;= b_1 \sqrt{2} q \frac{1}{\nu}\sum_{j=3}^{\infty}{e^{-\nu}\frac{\nu^{(j+1)}}{(j+1)!}  \left(\sqrt{j + \epsj -1} + \frac{3}{\sqrt{j + \epsj -1}} \right)}.\intertext{From here, we note that the function $f(x) = \sqrt{x} + \frac{3}{\sqrt{x}}$ is both increasing and concave for all $x\geq 3$. After increasing the argument and applying Jensen's inequality, we find that}
    &\;\;\leq b_1 \sqrt{2} q \frac{1}{\nu}\sum_{j=3}^{\infty}{e^{-\nu}\frac{\nu^{(j+1)}}{(j+1)!}  \left(\sqrt{j + 1} + \frac{3}{\sqrt{j + 1}} \right)}\\
    &\;\; \leq b_1 \sqrt{2} q \frac{1}{\nu} \left( \sqrt{\nu} + \frac{3}{\sqrt{\nu}}\right),
\end{align*}
where in the final line we have used that the function $f(x)$ is increasing in $x$ for any $x \geq 3$, and that $\ex{Y_T\indc{Y_T \geq 4}} \geq \nu - 3 \geq 3$. Thus, we have that
\begin{align*}
    \pr{\gamma_c \geq t} &\leq \left(3 \nu\right) e^{-\nu} + 2q \sqrt{\frac{2}{\pi}} \left(\frac{1}{\sqrt{\nu}}  + \frac{1}{\nu^{3/2}}\right)
     \leq \frac{2.5}{\nu^{3/2}} + 2q \sqrt{\frac{2}{\pi}} \left(\frac{1}{\sqrt{\nu}}  + \frac{1}{\nu^{3/2}}\right).
\end{align*}

\subsubsection{Proof of Lower Bound.}
We approach the initial stages of the proof in the precisely the same way:
\begin{align*}
    \pr{ \gamma_c \geq t} &= \pr{\gamma \geq Y_T}\\
    &=  e^{-\nu} + 2p\nu e^{-\nu} + \sum_{j=3}^{\infty}{e^{-\nu}\frac{\nu^j}{j!} \pr{ \gamma \geq 2\left(\frac{j + \epsj - 1}{2}\right)+1}}\\
    &\geq \sum_{j=3}^{\infty}{e^{-\nu}\frac{\nu^j}{j!}  b_1 e^{-\frac{1}{3 (j+ \epsj - 1)}} \frac{q\sqrt{2}}{\sqrt{(j + \epsj + 1)}}}\\
    &\geq \sum_{j=3}^{\infty}{e^{-\nu}\frac{\nu^j}{j!}  b_1 e^{-\frac{1}{3 (j - 1)}} \frac{q\sqrt{2}}{\sqrt{(j + 2)}}} \\
    &\geq b_1 q \sqrt{2} e^{-\frac{1}{3(\nu-1)}} \frac{1}{\sqrt{\nu + 2}}. \quad \mbox{ (via Jensen's inequality)}\tag*{\Halmos}
\end{align*}

\subsection{Proof of Claim~\ref{clm:contint}, Bound on Expected Length of Stopped Random Walk.}
\begin{restatable}[Bound on Expected Length of Stopped Random Walk]{claim}{contint}\label{clm:contint}
Suppose we have a critically loaded M/M/1 queue with arrival rate and departure rate both equal to $k\lambda$, with offered load $R > 100$ and setup time $\beta > 100$. Suppose also that at time $0$, a job arrives. Let $\tau$ be the length of the busy period which follows. Then, letting $b_1 \triangleq \sqrt{\frac{2}{\pi}}$,
\begin{equation*}
    \ex{\mintwo{\beta}{\tau}} \leq b_1 \frac{\sqrt{\beta}}{\sqrt{\mu R}} + \frac{6}{\mu R}.
\end{equation*}

\end{restatable}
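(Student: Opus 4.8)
The plan is to combine the standard tail identity $\ex{\mintwo{\beta}{\tau}} = \int_0^{\beta}{\pr{\tau > t}\,\dd t}$ with the continuous-time hitting-time tail bound already proved as Claim~\ref{clm:contdown}. The first step is to note that since $R = k\rho = k\lambda/\mu$, the arrival rate and the departure rate of this queue are both $k\lambda = \mu R$, so the M/M/1 is exactly balanced. Writing $Y_A, Y_D$ for the arrival and departure Poisson processes, the number of jobs in system over the busy period is the walk $X_c(t) = Y_A(t) - Y_D(t)$ with $X_c(0) = 1$, and its busy-period length $\tau$ is precisely the hitting time $\gamma_c = \minpar{t>0: X_c(t) = 0}$ from Claim~\ref{clm:contdown} in the case $j = 0$, with $\nu = 2\mu R t$.

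Next I would split the integral at $t_0 \triangleq \frac{3}{2\mu R}$, the point below which $\nu < 3$ and Claim~\ref{clm:contdown} is not available. On $[0,t_0)$ we use the trivial bound $\pr{\tau > t} \le 1$, contributing at most $t_0 = \frac{3}{2\mu R}$. On $[t_0,\beta]$ we apply the $j=0$ upper bound
\[
\pr{\tau > t} \;\le\; \frac{b_1}{\sqrt{2}}\left(\frac{1}{\sqrt{2\mu R t}} + \frac{b_2}{(2\mu R t)^{3/2}}\right),
\]
and integrate term by term, enlarging the lower limit of the first term down to $0$ and the upper limit of the second term up to $+\infty$ (both of which only weaken the inequality). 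The first term integrates to $\frac{b_1}{\sqrt2}\cdot\frac{2\sqrt{\beta}}{\sqrt{2\mu R}} = b_1\frac{\sqrt{\beta}}{\sqrt{\mu R}}$, which is exactly the leading term of the claim; the second integrates over $[t_0,\infty)$ to $\frac{c}{\mu R}$ for an explicit absolute constant $c$ built from $b_1, b_2$. Collecting everything gives $\ex{\mintwo{\beta}{\tau}} \le b_1\frac{\sqrt{\beta}}{\sqrt{\mu R}} + \bigl(\tfrac32 + c\bigr)\frac{1}{\mu R}$, and one finishes by checking the numerical inequality $\tfrac32 + c \le 6$; since $b_1 = \sqrt{2/\pi}$ and $b_2 = 1 + \tfrac{2.5}{b_1\sqrt2}$ are small, $\tfrac32 + c$ comes out to roughly $2.5$, well below $6$, so the constant $6$ is deliberately generous and absorbs all rounding. (The hypotheses $R > 100$, $\beta > 100$ are not really needed for this particular estimate, but are harmless.)

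The main obstacle is not any of this integration but making the reduction in the first step airtight: one must verify that a busy period "started by one job" in the balanced M/M/1 is genuinely the first-return-to-zero time of $Y_A - Y_D$ started at $1$, so that Claim~\ref{clm:contdown} applies verbatim with $j = 0$ and the stated $\nu$, and that the events $\{\tau > t\}$ and $\{\gamma_c \ge t\}$ coincide up to a null set so that the tail formula for $\ex{\mintwo{\beta}{\tau}}$ is valid. Once that identification is in place, the rest is bookkeeping: splitting at $\nu = 3$, two elementary power-law integrals, and one constant comparison.
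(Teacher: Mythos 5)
Your proposal is correct and follows essentially the same route as the paper's proof: apply the tail identity $\ex{\mintwo{\beta}{\tau}} = \int_0^{\beta}\pr{\tau > t}\,\dd t$, bound the tail by $1$ below $t_0 = \frac{3}{2\mu R}$ and by the $j=0$ case of Claim~\ref{clm:contdown} above it, and integrate the two power-law terms, absorbing all constants into $\frac{6}{\mu R}$. Your numerical accounting (leading term $b_1\sqrt{\beta}/\sqrt{\mu R}$ exactly, remainder roughly $2.5/(\mu R)$) matches the paper's computation, so nothing is missing.
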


\subsubsection{Proof.}
From Claim~\ref{clm:contdown}, the continuous-time random walk hitting time bound, we have that
\begin{equation}
    \pr{\tau \geq t} \leq  \frac{b_1}{\sqrt{2}} \left(\frac{1}{\sqrt{\nu}}  + \frac{b_2}{\nu^{3/2}}\right),
\end{equation}
where $\nu=2 \mu R t$ and we require that $\nu \geq 3$.
By integrating this bound (using a bound of $1$ wherever this bound doesn't apply), we obtain
\begin{align*}
    \ex{\mintwo{\beta}{\tau}} &= \int_0^{\beta}{\pr{\tau > t} \dd t} \leq \frac{3}{2 \mu R} + \int_{\frac{3}{2\mu R}}^{\beta}{ \frac{b_1}{\sqrt{2}} \left(\frac{1}{\sqrt{2 \mu R t}}  + \frac{b_2}{\left(2\mu R\right)^{3/2}}\right) \dd t}\\
    &\leq \frac{3}{2\mu R} + b_1 \frac{\sqrt{\beta}}{\sqrt{\mu R}} +  \frac{b_1 b_2}{4 \mu R}\left[2\sqrt{\frac{2}{3}}\right] \leq b_1 \frac{\sqrt{\beta}}{\sqrt{\mu R}} + \frac{6}{\mu R}. \tag*{\Halmos}
\end{align*}

\subsection{\torp{Proof of Claim~\ref{clm:mminf}, Bound on the Expected Hitting Time in the M/M/$\infty$.}{ Proof of Claim~\ref{clm:mminf}, Bound on the Hitting Time in the M/M/inf. }}\label{sec:mminf}

\begin{restatable}[M/M/$\infty$ Passage Time Bound]{claim}{mminf}\label{clm:mminf}
    Given an M/M/$\infty$ queue, let $T_{x \to y}$ denote the random amount of time taken to go from state $x$ to state $y$. Suppose this system has an arrival rate of $k\lambda$ and a per-server departure rate of $\mu$. Let $R\triangleq k \frac{\lambda}{\mu}$. Then, for any $h$ such that $1\leq h \leq \sqrt{R}$,
    \begin{equation*}
        \ex{T_{(R+h-1)\to (R+h)}} \leq \frac{\sqrt{2\pi}}{\mu \sqrt{R}}
        \left(1 + \frac{h}{R}\right)^{h - \frac{1}{2}} e^{\frac{1}{12R}} 
        \leq D_2 \frac{\sqrt{\pi}}{\mu \sqrt{R}}.
    \end{equation*}
\end{restatable}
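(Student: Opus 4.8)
## Proof Proposal for Claim~\ref{clm:mminf} (M/M/$\infty$ Passage Time Bound)

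\textbf{Overall approach.} The plan is to use the standard fact that in a birth–death process, the expected first-passage time from state $n$ to state $n+1$ has a clean closed form in terms of the stationary distribution. For the M/M/$\infty$ with arrival rate $k\lambda$ and per-server rate $\mu$, the stationary distribution is Poisson with mean $R = k\lambda/\mu$, i.e. $\pi_i = e^{-R} R^i / i!$. The expected passage time from $i$ to $i+1$ is $\ex{T_{i \to i+1}} = \frac{1}{\mu(i+1)\,\pi_{i+1}}\sum_{j=0}^{i}\pi_j = \frac{1}{k\lambda\,\pi_i}\sum_{j=0}^{i}\pi_j$ (using the detailed-balance relation $k\lambda\,\pi_i = \mu(i+1)\pi_{i+1}$). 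First I would specialize this to $i = R+h-1$, so the numerator is $\sum_{j=0}^{R+h-1}\pi_j \le 1$ and we are left with $\ex{T_{(R+h-1)\to(R+h)}} \le \frac{1}{k\lambda\,\pi_{R+h-1}} = \frac{1}{\mu R}\cdot\frac{e^{R}(R+h-1)!}{R^{R+h-1}}$.

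\textbf{Key estimation step.} The bulk of the work is bounding $\frac{e^{R}(R+h-1)!}{R^{R+h-1}}$. I would apply Stirling's approximation in the form $(R+h-1)! \le \sqrt{2\pi(R+h-1)}\,\left(\frac{R+h-1}{e}\right)^{R+h-1} e^{\frac{1}{12(R+h-1)}}$. Substituting and simplifying, the $e^{R}$ partially cancels the $e^{-(R+h-1)}$ factor, leaving an $e^{h-1}$ term, and we get
\[
\frac{e^{R}(R+h-1)!}{R^{R+h-1}} \le \sqrt{2\pi(R+h-1)}\;e^{h-1}\left(1 + \frac{h-1}{R}\right)^{R+h-1}\!\!e^{\frac{1}{12(R+h-1)}}.
\]
Now I would use $\left(1+\frac{h-1}{R}\right)^{R} \le e^{h-1}$... wait, that goes the wrong direction. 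Instead: write $\left(1+\frac{h-1}{R}\right)^{R+h-1} = \left(1+\frac{h-1}{R}\right)^{R}\left(1+\frac{h-1}{R}\right)^{h-1}$ and bound $\left(1+\frac{h-1}{R}\right)^{R} \le e^{h-1}$. This cancels the stray $e^{h-1}$ from Stirling against... no — I need to be careful since $e^{h-1}\cdot e^{h-1}$ would blow up. The correct move is: $e^{h-1}\left(1+\frac{h-1}{R}\right)^{R+h-1}$ — here the $\left(1+\frac{h-1}{R}\right)^R \le e^{h-1}$ bound makes this at most $e^{2(h-1)}(1+\frac{h-1}{R})^{h-1}$, which for $h$ up to $\sqrt R$ is still $O(1)$ but with a large constant. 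A tighter route uses $\ln(1+x) \le x - \frac{x^2}{2} + \frac{x^3}{3}$ so that $\left(1+\frac{h-1}{R}\right)^{R} \le \exp\big((h-1) - \frac{(h-1)^2}{2R} + \cdots\big)$, and the $-\frac{(h-1)^2}{2R}$ term controls the remaining $(1+\frac{h-1}{R})^{h-1} \le e^{(h-1)^2/R}$. I expect the quadratic terms to roughly cancel when $h \le \sqrt R$, which is exactly where the hypothesis $h \le \sqrt R$ is used, and we land on $\sqrt{2\pi R}\cdot\left(1+\frac{h}{R}\right)^{h-1/2} e^{1/(12R)}$ after absorbing the $\sqrt{R+h-1} \le \sqrt{R}\sqrt{1+h/R}$ factor — matching the first inequality in the claim. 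Dividing by $\mu R$ gives the $\frac{\sqrt{2\pi}}{\mu\sqrt R}$ prefactor.

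\textbf{Final simplification.} For the second, cleaner inequality, I would note that under $h \le \sqrt R$ and $R \ge 100$ (the standing assumption), the factor $\left(1+\frac{h}{R}\right)^{h-1/2} e^{1/(12R)}$ is bounded by a universal constant: $\left(1+\frac{1}{\sqrt R}\right)^{\sqrt R} \le e$, so this factor is at most $e\cdot e^{1/1200}$, and I would set $D_2$ to absorb $\sqrt{2}$ times this constant (so $D_2 = \sqrt{2}\,e^{1+1/1200}$ or a rounded-up value), giving $\ex{T_{(R+h-1)\to(R+h)}} \le \frac{D_2\sqrt{\pi}}{\mu\sqrt R}$.

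\textbf{Main obstacle.} The delicate part is the exponent bookkeeping in the Stirling estimate: getting the powers of $e$ and the $(1+h/R)$ factors to combine into the clean form $\left(1+\frac{h}{R}\right)^{h-1/2}$ without an unwanted $e^{\Theta(h)}$ surviving. This is exactly why the constraint $h \le \sqrt R$ is imposed — it ensures the "bad" quadratic-in-$h$ contribution from $(1+h/R)^h$ is cancelled (up to constants) by the concavity correction $-\frac{h^2}{2R}$ coming from $\ln(1+h/R)$, so the whole expression stays $O(1)$. Everything else is a routine birth–death first-passage computation plus algebra.
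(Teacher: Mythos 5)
Your reduction is sound and is, in substance, the same as the paper's. The birth--death passage-time identity you invoke gives
\begin{equation*}
\mathbb{E}\!\left[T_{(R+h-1)\to(R+h)}\right] \;=\; \frac{1}{k\lambda\,\pi_{R+h-1}}\sum_{j=0}^{R+h-1}\pi_j \;\le\; \frac{e^{R}\,(R+h-1)!}{\mu\,R^{R+h}},
\end{equation*}
which is exactly the quantity the paper bounds (it arrives there by dominating the passage time with the return time $T_{(R+h)\to(R+h)}$ in the truncated M/M/$(R+h)$/$(R+h)$ system and using its product form); from that point on both arguments are Stirling plus $1+x\le e^x$.

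The genuine problem is in your ``key estimation step'': after Stirling, $e^{R}\cdot e^{-(R+h-1)} = e^{-(h-1)}$, not $e^{h-1}$. The displayed inequality you wrote, with $e^{h-1}\left(1+\tfrac{h-1}{R}\right)^{R+h-1}$ on the right, is true but lossy by a factor of roughly $e^{2(h-1)}$, and none of the repairs you sketch can recover the claim from it: the assertion that $e^{2(h-1)}$ is ``$O(1)$ with a large constant'' for $h\le\sqrt R$ is false (it can be as large as $e^{2\sqrt R-2}$), and the hoped-for cancellation from the $-\tfrac{(h-1)^2}{2R}$ term of $\ln(1+x)$ cannot offset a linear-in-$h$ excess of $2(h-1)$ in the exponent. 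With the sign corrected, no delicate cancellation is needed and your ``main obstacle'' disappears: $\left(1+\tfrac{h-1}{R}\right)^{R}\le e^{h-1}$ exactly absorbs the surviving $e^{-(h-1)}$, leaving
$\sqrt{2\pi(R+h-1)}\left(1+\tfrac{h-1}{R}\right)^{h-1}e^{\frac{1}{12(R+h-1)}}\le \sqrt{2\pi R}\left(1+\tfrac{h}{R}\right)^{h-\frac12}e^{\frac{1}{12R}}$,
which after dividing by $\mu R$ is the first inequality of the claim; the hypothesis $h\le\sqrt R$ is used only at the very end, via $\left(1+\tfrac{h}{R}\right)^{h}\le e^{h^2/R}\le e$, to get the absolute constant (your proposed value of $D_2$ is then fine, and matches the paper's final $7/(\mu\sqrt R)$). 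So the route is correct and essentially the paper's, but the step as written would not yield the bound; it needs the sign fixed and the cancellation discussion removed.
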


\subsubsection{Proof.}
The proof here is quite simple. 
First, we note that the passage time in the M/M/$\infty$ from state $(R+h-1)$ to state $(R+h)$ is exactly the passage time from those states in the M/M/$(R+h)$/$(R+h)$. 
This new system has a nice product form, so that
\chmades{
\begin{align*}
    \ex{T_{(R+h-1)\to(R+h)}} \leq \ex{T_{(R+h)\to(R+h)}}=\frac{1}{\mu(R+h)} \frac{1}{\pi_{R+h}} = \frac{1}{\mu(R+h)} \frac{\sum_{i=0}^{R+h}{\frac{R^i}{i!}}}{\frac{R^{R+h}}{(R+h)!}} \leq \frac{1}{\mu (R+h)} e^{R}\frac{(R+h)!}{R^{R+h}}.
\end{align*}
}
\chmades{
Applying Stirling's approximation,
\begin{align*}
    \frac{1}{\mu (R+h)} e^{R}\frac{(R+h)!}{R^{R+h}} \leq \frac{1}{\mu} e^{R}\frac{e^{\frac{1}{12 (R+h)}}\sqrt{2 \pi (R+h)}(R+h)^{R+h-1} e^{-(R+h)}}{R^{R+h}} \leq  e^{\frac{1}{12 R}}
    \frac{1}{\mu\sqrt{R+h}}\sqrt{2 \pi}  \left(1 + \frac{h}{R}\right)^{R+h} e^{-h}.
\end{align*}
}
\chmades{
Continuing on with this term,
\begin{align*}
     e^{\frac{1}{12 R}}
    \frac{1}{\mu\sqrt{R+h}}\sqrt{2 \pi}  \left(1 + \frac{h}{R}\right)^{R+h} e^{-h}
    \leq \frac{\sqrt{2\pi}}{\mu \sqrt{R}}
        \left(1 + \frac{h}{R}\right)^{h - \frac{1}{2}} e^{\frac{1}{12R}}
    \leq \frac{1}{\mu}\frac{\sqrt{2\pi}}{\sqrt{R}}e^{\frac{h^2}{R}} e^{\frac{1}{12 R}}
    \leq \frac{7}{\mu \sqrt{R}},
\end{align*}
}
where we have made extensive use of Stirling's approximation and the bound $(1+x) \leq e^x$.

\section{Miscellaneous Claims}\label{sec:appendix}

\subsection{\torp{Proof of Claim~\ref{clm:couplingEx}: Bound on Expected Value After Coupling.}{Bound on Expected Value After Coupling.}}\label{sec:couplingEx}

\begin{restatable}[Bound on Expected Value after Coupling.]{claim}{couplingEx}\label{clm:couplingEx}
Let $\tau$ be some stopping time and $\dgen$ be the next down-crossing as described in Section~\ref{sec:remCoupling}. Suppose that we have a \textbf{lower} bound on the number of busy servers $Z(t)$ over some length $\ell$ interval starting at time $\tau$, i.e. the busy servers $Z(t) \geq R - j,$ for all $t \in \left[\tau, \tau + \mintwo{\ell}{\dgen} \right]$ and for some non-negative $j$. Then, bounding the first moment,
\begin{equation}\label{eq:couplingfirst}
    \ex{\left[N\left(\tau + \ell\right) - h\right] \indc{\dgen > \ell} \middle | \filt{\tau}} \leq \left[N(\tau) - h\right] + \mu j \ell,
\end{equation}
and, bounding the second moment,
\begin{equation}\label{eq:couplingsecond}
    \ex{\left[N\left(\tau + \ell\right) - h\right] \indc{\dgen \geq \ell}} \leq \left[N(\tau) - h + \mu j \ell\right]^2 + 2 \mu R \ell.
\end{equation}

\end{restatable}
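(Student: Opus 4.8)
The plan is to prove Claim~\ref{clm:couplingEx} with the same coupling-and-optional-stopping template used for Claims~\ref{clm:coupling}, \ref{clm:dcbnd}, and \ref{clm:dcprob}. First I would construct the dominating process
\begin{equation*}
    \ncouple(t) \triangleq N(\tau) + A(\tau, \tau + t) - \depOp{R-j}{(\tau, \tau + t]},
\end{equation*}
which, by the hypothesis $Z(t)\geq R-j$ on $[\tau, \tau + \mintwo{\ell}{\dgen}]$ together with Claim~\ref{clm:coupling}, satisfies $\ncouple(t)\geq N(\tau+t)$ for all $t\in[0,\mintwo{\ell}{\dgen}]$. Since $\ncouple$ is a birth--death walk with birth rate $k\lambda=\mu R$ and death rate $\mu(R-j)$, the process $V(t)\triangleq\ncouple(t)-\mu j t$ is a martingale. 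Let $\dgcouple\triangleq\minpar{t>0:\ncouple(t)\leq h}$ be the coupled down-crossing; because $\ncouple\geq N(\tau+\cdot)$ up to the crossing, $\dgcouple\geq\dgen$, hence $\indc{\dgen>\ell}\leq\indc{\dgcouple>\ell}$. (If $N(\tau)\leq h$ then $\dgen=0$ and both indicators vanish, so we may assume the nondegenerate case $N(\tau)>h$, which is the only regime in which the claim is applied.)

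For the first moment: on $\{\dgen>\ell\}$ we have $\mintwo{\ell}{\dgen}=\ell$, hence $0<N(\tau+\ell)-h\leq\ncouple(\ell)-h$; and since $\ncouple$ moves in unit steps and starts above $h$, $\ncouple(\dgcouple)=h$, so $\ncouple(\mintwo{\ell}{\dgcouple})-h=[\ncouple(\ell)-h]\indc{\dgcouple>\ell}$. Combining these,
\begin{equation*}
    [N(\tau+\ell)-h]\indc{\dgen>\ell}\;\leq\;\ncouple(\mintwo{\ell}{\dgcouple})-h .
\end{equation*}
Applying Doob's optional stopping theorem to $V$ at the bounded stopping time $\mintwo{\ell}{\dgcouple}$ gives $\ex{\ncouple(\mintwo{\ell}{\dgcouple})\middle|\filt{\tau}}=N(\tau)+\mu j\,\ex{\mintwo{\ell}{\dgcouple}\middle|\filt{\tau}}\leq N(\tau)+\mu j\ell$, and subtracting $h$ yields \eqref{eq:couplingfirst}.

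For the second moment (reading the left side of \eqref{eq:couplingsecond} as $\ex{[N(\tau+\ell)-h]^2\indc{\dgen\geq\ell}}$, the square on $N$ having evidently been dropped): the contribution of $\{\dgen=\ell\}$ is $0$ since $N(\tau+\ell)=h$ there, and on $\{\dgen>\ell\}$ we again have $0<N(\tau+\ell)-h\leq\ncouple(\ell)-h$, so $[N(\tau+\ell)-h]^2\indc{\dgen\geq\ell}\leq[\ncouple(\ell)-h]^2$ (dropping the now-redundant indicator only enlarges the bound). Writing $\ncouple(\ell)-h=W+c$ with $c\triangleq N(\tau)-h+\mu j\ell$ and $W\triangleq\ncouple(\ell)-N(\tau)-\mu j\ell$ a difference of centered Poisson variables with means $\mu R\ell$ and $\mu(R-j)\ell$, one has $\ex{W\middle|\filt{\tau}}=0$ and $\Var{W\middle|\filt{\tau}}=\mu(2R-j)\ell\leq 2\mu R\ell$, so $\ex{(W+c)^2\middle|\filt{\tau}}=c^2+\Var{W\middle|\filt{\tau}}\leq[N(\tau)-h+\mu j\ell]^2+2\mu R\ell$, which is \eqref{eq:couplingsecond}. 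The main obstacle is the bookkeeping around the two distinct down-crossing times: the coupling bound $\ncouple\geq N(\tau+\cdot)$ is valid only up to $\dgen$, so one must pass to the coupled down-crossing $\dgcouple\geq\dgen$ and argue that $\ncouple(\mintwo{\ell}{\dgcouple})-h$ really does dominate $[N(\tau+\ell)-h]\indc{\dgen>\ell}$ (and its square), the key point being that on $\{\dgen>\ell\}$ the bracket is already strictly positive so that multiplying by the indicator and squaring preserve the inequality. Once that reduction is in place, the martingale identity and the Poisson-variance computation are routine.
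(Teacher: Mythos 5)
Your proposal is correct, and for the first moment it is essentially the paper's argument: same coupled process $\ncouple$, same observation that $\dgcouple\geq\dgen$ so the indicator can be transferred, same reduction of $[\ncouple(\ell)-h]\indc{\dgcouple>\ell}$ to $\ncouple(\mintwo{\ell}{\dgcouple})-h$, and the same linear martingale with Doob's optional stopping. For the second moment you diverge: the paper keeps the indicator, builds the quadratic martingale $V_2(t)=\left(\ncouple(t)-h-\mu j t\right)^2-\mu(2R-j)t$, stops it at $\mintwo{\dgcouple}{\ell}$, and then untangles the cross terms algebraically; you instead discard the (nonnegative-integrand) indicator, evaluate at the deterministic time $\ell$, and compute the conditional second moment of the Skellam-type increment directly, using $\Var{W\mid\filt{\tau}}=\mu(2R-j)\ell\leq 2\mu R\ell$. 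Your route is more elementary and avoids the optional-stopping bookkeeping for the squared process, at the mild cost of being slightly lossier in principle (you give up the indicator), though it lands on exactly the same bound $[N(\tau)-h+\mu j\ell]^2+2\mu R\ell$. You also correctly read \eqref{eq:couplingsecond} as a second-moment bound (the square on the left-hand bracket is a typo in the statement, which the paper's own proof implicitly corrects), and your explicit handling of the boundary event $\{\dgen=\ell\}$ and of the degenerate case $N(\tau)\leq h$ is careful and consistent with how the claim is actually invoked.
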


\subsubsection{Proof.}
The proof is essentially an application of Doob's Optional Stopping Theorem to an appropriately selected martingale. To begin, we define a coupled process $\ncouple(t)$ with
\begin{equation*}
    \ncouple(t - \tau) \triangleq N(\tau) + A[\tau, t] - \depOp{R-j}{[\tau, t]};
\end{equation*}
by Claim~\ref{clm:coupling}, we know that $\ncouple(t - \tau) \geq N(t)$ for any $t \in \left[ \tau, \tau + \mintwo{\dgen}{\ell}\right]$, and that the coupled hitting time $\dgcouple\triangleq \minpar{t>0: \ncouple(t) \leq h}$ can not be smaller than the original hitting time $\dgen$. It follows that
\begin{equation*}
    N\left(\tau + \ell\right)\indc{\dgen > \ell} \leq \ncouple\left(\ell\right) \indc{\dgcouple > \ell}.
\end{equation*}
Thus, we bound coupled versions of \eqref{eq:couplingfirst} and \eqref{eq:couplingsecond}.

\paragraph{Construction of martingales.}
We now construct our martingales and set up the language of optional stopping.
Note that, for any process $\ncouple(t)$ with independent, stationary increments, both functions $V_1$ and $V_2$, defined as
\begin{equation*}
    V_1(t) \triangleq  \left[\ncouple(t) -h\right] - \ex{\ncouple(t) - \ncouple(0)}
\end{equation*}
and 
\begin{align*}
    V_2(t) &\triangleq \left[ \ncouple(t) - h - \ex{\ncouple(t) - \ncouple(0)} \right]^2 - \ex{ \left[ \ncouple(t) - h - \ex{\ncouple(t) - \ncouple(0)} \right]^2 }\\
    &= \left(\ncouple(t) - h - \mu j t \right)^2 - \mu \left(2R - j\right) t
\end{align*}
are martingales \cite{karatzas2012brownian}. Moreover, one has that
\begin{align*}
    \left[\ncouple(\ell) - h\right]\indc{\dgcouple > \ell} &= \left[\ncouple\left(\mintwo{\dgcouple}{\ell}\right) - h\right]\indc{\dgcouple > \ell}\\
    &= \left[\ncouple\left(\mintwo{\dgcouple}{\ell}\right)-h\right] \indc{\dgcouple > \ell} + \left[\ncouple\left(\mintwo{\dgcouple}{\ell}\right)-h\right] \indc{\ell \leq \dgcouple}\\
    &= \left[\ncouple\left( \mintwo{\dgcouple}{\ell}\right)-h\right].
\end{align*}

\paragraph{Proof of \eqref{eq:couplingfirst}.}
Combining these facts allows us to prove our desired result. Applying Doob's Optional Stopping Theorem along with our previous deductions, we obtain 
\begin{align*}
     \ex{\left[N\left(\tau + \ell\right) - h\right]\indc{\dgen > \ell}\middle | \filt{\tau}} \leq \ex{\left[\ncouple\left(\ell\right) - h\right]\indc{\dgcouple > \ell}}
     = \ex{\ncouple\left(\mintwo{\dgcouple}{\ell}\right) - h}. 
\end{align*}
Substituting in our martingale $V_1(\cdot)$ and applying the Optional Stopping Theorem,
\begin{align*}
    \ex{\ncouple\left(\mintwo{\dgcouple}{\ell}\right) - h}
     = \ex{V_1\left (\mintwo{\dgcouple}{\ell}\right)} + \mu j \ex{\mintwo{\dgcouple}{\ell}} = \ex{V_1\left(0\right)} + \mu j \ex{\mintwo{\dgcouple}{\ell}}
\end{align*}
Exchanging definitions again, we have
\begin{align*}
     \ex{V_1\left(0\right)} + \mu j \ex{\mintwo{\dgcouple}{\ell}} = \left[\ncouple(0) - h\right] + \mu j \ex{\mintwo{\dgcouple}{\ell}}
    \leq \left[\ncouple(0) - h\right] + \mu j \ell
    = \left[ N(\tau) - h \right] + \mu j \ell.
\end{align*}
\newcommand{\dmin}{\mintwo{\dgcouple}{\ell}}
\paragraph{Proof of \eqref{eq:couplingsecond}.}
To do the same for the squared martingale $V_2(t)$, we must first note, via some algebra, that
\begin{equation*}
    \left(\ncouple(t) - h\right)^2 = V_2(t) + \left(\ncouple(t) - h\right) \mu j t - \mu j^2 t^2 + \mu \left(2R - j\right)t.
\end{equation*}
Now, applying the same deductions we made previously, 
\begin{align*}
    \ex{\left[N\left(\tau + \ell\right) - h\right]\indc{\dgen > \ell}\middle | \filt{\tau}} 
    &\leq \ex{\left[\ncouple\left(\ell\right) - h\right]^2\indc{\dgcouple > \ell}}\\
    &= \ex{\left(\ncouple\left(\mintwo{\dgcouple}{\ell}\right) - h\right)^2}\\
    &= \ex{V_2\left(\mintwo{\dgcouple}{\ell}\right)} + \ex{\left(\ncouple(\dmin) - h\right) \mu j \dmin}  \\
    &\;\;\; - \mu j^2 \left(\dmin\right)^2 + \mu \left(2R - j\right)\ex{\dmin}\\
    &\leq \ex{V_2\left(\mintwo{\dgcouple}{\ell}\right)} + \ex{\left(\ncouple(\dmin) - h\right) } \mu j \ell + \mu \left(2R\right)\ell\\
    &\leq \ex{V_2\left(0\right)} + \left[\ncouple(0)- h + \mu j \ell\right] \mu j \ell + \mu \left(2R\right)\ell\\
    &= \left[\ncouple(0) - h\right]^2 + \left[\ncouple(0)- h\right]\mu j \ell + \left(\mu j \ell\right)^2 + \mu \left(2R\right)\ell\\
    &= \left[ \ncouple(0) - h + \mu j \ell\right]^2 - \left[\ncouple(0) - h\right]\mu j \ell + \mu 2R \ell\\
    &\leq \left[\ncouple(0) - h + \mu j \ell\right]^2 + 2\mu R \ell\\
    &= \left[N(\tau) - h + \mu j \ell\right]^2 + 2\mu R \ell. \tag*{\Halmos}
\end{align*}

\subsection{Proof of Claim~\ref{clm:bpclaim}, the Busy Period Integral Bound.}

\begin{restatable}[Busy Period Integral Bound]{claim}{bpclaim}\label{clm:bpclaim}
    Suppose that, at time $\tau$, we can guarantee that $N(\tau)\geq Z(\tau) \geq R + j$. Let $\eta_i \triangleq \minpar{t > 0: N(t) \leq R + i}$, for $i \in \left\{j,j+1, \dots, \left[N(\tau) - R\right]\right\}.$ Then,
    \begin{equation*}
        \ex{\int_{\tau}^{\eta_j}{[N(t) - R] \dd t} \middle| \filt{\tau}} \leq \left( N(\tau) - (R+j) \right)\left[\frac{3}{2\mu j} + \frac{1}{\mu} + \frac{R}{\mu j^2} \right] + \frac{\left(N(\tau) - (R+j)\right)^2}{2 \mu j} \triangleq \ibp{[N(\tau) - R]}{j}.
    \end{equation*}
\end{restatable}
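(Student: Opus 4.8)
The plan is to reduce this integral to M/M/1 busy-period quantities already tabulated in Section~\ref{sec:model}, by means of a single coupling. The structural fact that drives everything is that on $[\tau,\eta_j)$ the number of busy servers never falls below $R+j$: we are given $Z(\tau)\ge R+j$, a busy server can shut off only upon completing a job while the queue is empty (that is, only when $N(t)=Z(t)$), and as long as $N(t)>R+j$ every server among the lowest $R+j$ indices has queued work and stays on, while setup completions can only raise $Z(t)$. Hence $Z(t)\ge R+j$ for all $t\in[\tau,\eta_j)$ (assuming, as in the applications, $j\ge 1$).

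Given this, I would invoke the Basic Coupling Claim, Claim~\ref{clm:coupling}, with $a=\tau$, $N_1=N$, and $N_2=\ncouple$ the ``constant $R+j$ busy servers'' process
\[
  \ncouple(t)\triangleq N(\tau)+\Pi_A\big( (\tau,\tau+t] \big)-\depOp{R+j}{(\tau,\tau+t]},
\]
which is well defined since $R+j\le k$; this gives $N(\tau+t)\le\ncouple(t)$ on $[\tau,\eta_j)$. Writing $y\triangleq N(\tau)-(R+j)\ge 0$, the shifted process $\ncouple(\cdot)-(R+j)$ is exactly the queue-length process of an M/M/1 queue with arrival rate $k\lambda$ and service rate $k\lambda+\mu j$ started with $y$ jobs; let $\tilde\eta_j$ denote the end of its busy period, i.e. its first hitting time of level $R+j$. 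Since $\ncouple$ dominates $N$ we have $\eta_j-\tau\le\tilde\eta_j$, and since $\ncouple(s)-R\ge j>0$ for $s<\tilde\eta_j$, enlarging the integration window only increases a nonnegative integrand:
\[
  \int_\tau^{\eta_j}\big[N(t)-R\big]\,\dd t
  \;\le\;\int_0^{\tilde\eta_j}\big[\ncouple(s)-R\big]\,\dd s
  \;=\;\int_0^{\tilde\eta_j}\big[\ncouple(s)-(R+j)\big]\,\dd s\;+\;j\,\tilde\eta_j .
\]

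Taking $\ex{\cdot\mid\filt{\tau}}$, the first term is by definition the busy-period time-integral $\ibp{y}{j}=\frac{y}{\mu j}\big[\tfrac{y+1}{2}+\tfrac{R}{j}+1\big]$, and the second is $j\,\bp{y}{j}=j\cdot\frac{y}{\mu j}=\frac{y}{\mu}$, using the closed forms recorded in Section~\ref{sec:model}. Summing and substituting $y=N(\tau)-(R+j)$ yields $\frac{y^2}{2\mu j}+y\big[\tfrac{3}{2\mu j}+\tfrac{1}{\mu}+\tfrac{R}{\mu j^2}\big]$, which is precisely the asserted bound, so I would identify it with $\ibp{[N(\tau)-R]}{j}$ and finish. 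I expect the only real friction to be domain bookkeeping: confirming that $Z(t)\ge R+j$ persists exactly up to $\eta_j$ so the coupling hypothesis holds, that $\tilde\eta_j$ — hence $\eta_j$ — is almost surely finite because the coupled queue is strictly subcritical, and that replacing $\eta_j-\tau$ by the larger limit $\tilde\eta_j$ is legitimate because the dominating integrand stays nonnegative on the extra stretch. None of this is deep; all the substantive content lies in the ``busy servers do not leave'' observation together with the already-available M/M/1 busy-period identities.
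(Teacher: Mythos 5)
Your proof is correct and yields exactly the claimed constant, and it rests on the same two ingredients as the paper's own argument --- the Basic Coupling Claim (Claim~\ref{clm:coupling}) applied with a constant envelope of $R+j$ busy servers, and the M/M/1 busy-period expectations $\bp{\cdot}{j}$ and $\ibp{\cdot}{j}$ recorded in Section~\ref{sec:model} --- but you organize the reduction differently. The paper decomposes level by level, writing $\int_{\tau}^{\eta_j}[N(t)-R]\,\dd t=\sum_{i=j+1}^{N(\tau)-R}\int_{\eta_{i}}^{\eta_{i-1}}[N(t)-R]\,\dd t$, couples each slice to a fresh $(R+j)$-server process started at $\eta_i$, treats each slice as a single-job busy period (contributing $\frac{1}{\mu j}+\frac{R}{\mu j^2}$ plus $i$ times the busy-period length $\frac{1}{\mu j}$), and recovers the bound by summing the arithmetic series. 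You instead perform a single coupling over the whole interval, extend the integral to the coupled hitting time $\tilde{\eta}_j$ (legitimate since the dominating integrand stays nonnegative there), and read the answer off the $y$-job busy-period formulas with $y=N(\tau)-(R+j)$, using $\ibp{y}{j}+j\,\bp{y}{j}$. The two computations are algebraically identical --- the $y$-job formula is precisely the level-by-level sum --- but your version is a bit cleaner: one application of Claim~\ref{clm:coupling} rather than one per slice, and you make explicit the step the paper leaves implicit, namely that $Z(t)\geq R+j$ persists on $[\tau,\eta_j)$ because a busy server can only shut off when a departure would leave $N(t)$ below the current number of busy servers, which cannot occur before $\eta_j$. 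The only bookkeeping to watch is what you already flag: $j\geq 1$ (so the coupled walk is strictly subcritical and the $1/j$ terms are meaningful) and $R+j\leq k$ (guaranteed by $Z(\tau)\geq R+j$), both of which hold wherever the claim is invoked.
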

\begin{proof}
    We prove this claim via an appeal to conventional M/M/1 busy period analysis. In particular, we first note that
    \begin{equation*}
        \int_{\tau}^{\eta_j}{\left[N(t) - R\right] \dd t} = \sum_{i=j+1}^{N(\tau) - R}{\int_{\eta_{i}}^{\eta_{i-1}}{\left[N(t) - R\right] \dd t}},
    \end{equation*}
    meaning we need only bound the integrals between the $\eta_i$'s. To bound that process, we define a coupled process $\ncouple(t)$ and bound the integrals over that process.

    To do so, note that, until time $\eta_j$, the number of busy servers $Z(t) \geq R + j$. By Claim~\ref{clm:coupling}, we can define, for each index $i$, the $i$-th coupled process $\ncouple_i(t)$ as
    \begin{equation*}
        \ncouple_i(t) = N(\eta_{i+1}) + A(\eta_{i+1}, t) - \depOp{R+j}{[\eta_{i+1},t)},
    \end{equation*}
    and have $\ncouple(t) \geq N(t)$ on the interval $[\eta_{i+1}, \eta_i]$. Furthermore, we can extend our integral of interest from the interval $[\eta_{i+1}, \eta_i)$ to the interval $[\eta_{i+1}, \ceti)$, where $\ceti \triangleq \minpar{t > 0: N(t) \leq R + i}$. Now, we note that
    \begin{equation*}
        \ex{\int_{\eta_{i+1}}^{\ceti}{\left[\ncouple_i(t) - R\right] \dd t}\middle | \filt{\tau}} = \ex{\int_{\eta_{i+1}}^{\ceti}{\left[\ncouple_i(t) - (R+i)\right] \dd t}\middle | \filt{\tau}} + i \ex{\eta_{i+1} - \ceti\middle | \filt{\tau}}.
    \end{equation*}
    The first term on the right is simply the expected time integral of the number of jobs in an M/M/1 queue over a busy period, with arrival rate $k \lambda$ and departure rate $\mu(R+j)$. The second term is simply the quantity $i$ multipled by the expected length of that M/M/1 busy period. Let $\rho_j = \frac{k\lambda}{\mu(R+j)}$. Then, from standard results on the M/M/1 busy period,
    \begin{align*}
        \ex{\int_{\eta_{i+1}}^{\ceti}{\left[\ncouple_i(t) - (R+i)\right] \dd t}\middle | \filt{\tau}} = \frac{1}{\mu j}\left[ \frac{1}{1-\rho_j} \right] = \frac{1}{\mu j}\left[ \frac{R}{j} + 1 \right] = \frac{1}{\mu j} + \frac{R}{\mu j^2}.
    \end{align*}
    Summing over all values of $i$, we obtain
    \begin{align*}
        &\ex{\int_{\tau}^{\eta_j}{\left[N(t) - R \right] \dd t}\middle | \filt{\tau}} \\
        &\;\;\leq  \sum_{i={j+1}}^{N(\tau) - R}{\ex{\int_{\eta_{i}}^{\Tilde{\eta}_{i-1}}{\left[\ncouple_i(t) - R\right] \dd t}\middle | \filt{\tau}}}\\
        &\;\;=\sum_{i={j+1}}^{N(\tau) - R}{\left[ \frac{1}{\mu j} + \frac{R}{\mu j^2}\right] + i \frac{1}{\mu j}}\\
        &\;\;= \left( N(\tau) - (R + j)\right) \left[ \frac{1}{\mu j} + \frac{R}{\mu j^2}\right] + \left( N(\tau) - (R + j)\right) \frac{1}{\mu} + \frac{1}{\mu j} \left[ \frac{\left(N(\tau) - (R+j)\right) \left(N(\tau) - (R+j) + 1\right)}{2}\right] \\
        &= \left( N(\tau) - (R+j) \right)\left[\frac{3}{2\mu j} + \frac{1}{\mu} + \frac{R}{\mu j^2} \right] + \frac{\left(N(\tau) - (R+j)\right)^2}{2 \mu j}.\tag*{\Halmos}
    \end{align*}
\end{proof}

\subsection{Proof of Claim~\ref{clm:waitbusy2}, the Wait Busy Claim.}

\waitbusy*

\newcommand{\remh}{\rem{R + \numserv{h}}\left(\tau\right)}
\newcommand{\remi}{\rem{R + \numserv{i+1}}\left(\eta_{i+1}\right)}
\newcommand{\ntakpos}{\left[\nta - k\right]^+}
\newcommand{\nprime}[1]{N_{\textup{adj}}\left(#1\right)}
\newcommand{\yimp}{Y_{\textup{imp}}}


\subsubsection{\texorpdfstring{Proof of \eqref{eq:waitbusy1}.}{Proof of Eq.~\ref{eq:waitbusy1}.}}
We prove the two parts of Claim~\ref{clm:waitbusy2} separately; we first show \eqref{eq:waitbusy1} by applying coupling, martingales, and busy period analysis.
First, note that, if the down-crossing at $\tau + \dgen$ does not occur by time $\rem{R+\numserv{h}}(\tau)$, then the system must have at least $(R + \numserv{h})$ servers at its disposal afterwards.
(Note that the $\numserv{\cdot}$ function here is just to account for the case where you have more jobs than servers.)
Accordingly, we split our analysis into two parts.

\paragraph{First portion.} 
For the first portion, since the number of busy servers $Z(t)\geq R$, by coupling our system to a critically-loaded M/M/1 using Claim~\ref{clm:dcbnd}, we have
\begin{equation*}
    \ex{\int_{\tau}^{\tau + \mintwo{\dgen}{\rem{R+\numserv{h}}(\tau)}}{\left[N(t) - (R+h-1)\right] \dd t}\middle | \filt{\tau}} \leq \rem{R+\numserv{h}}(\tau).
\end{equation*}

\paragraph{Second portion.} For the second portion, since, at that point the number of busy servers $Z(t) \geq R + \numserv{h}$, we can apply a stronger bound. 
In particular, at time $\left(\tau + \remh \right)$, we can couple to an accordingly-stronger M/M/1 with the same number of jobs.
From basic busy period analysis and Claim~\ref{clm:coupling}, this tells us that, letting the adjusted number of jobs $\nprime{t} \triangleq N(\tau + t) - (R+h-1)$ and the important remaining setup time $\yimp \triangleq \remh$ as a shorthand,
\begin{equation*}
        \ex{\indc{\yimp < \dgen} \int_{\yimp}^{\dgen}{\left[\nprime{t}\right] \dd t}\middle | \filt{\tau + \yimp}} \leq \indc{\yimp< \dgen} g\left(\nprime{\yimp}^2, \nprime{\yimp}, \numserv{h}\right),
\end{equation*}
since the function $g(x^2,x,z)$ describes the integral of the number of jobs over a busy period started by $x$ jobs in an M/M/1 with arrival rate $k\lambda$ and departure rate $\mu\left(R + z\right)$.
Note that $\indc{\yimp<\dgen}\left[N(\tau + \yimp) - (R+h - 1)\right] = \left[N(\tau + \mintwo{\dgen}{\yimp}) - (R+h - 1)\right]$, since $N(\tau + \dgen) = R + h -1$ by definition.
Coupling our system to the critically-loaded M/M/1 $\ncouple(t)$ system as before, then taking expectations and applying Doob's Optional Stopping Theorem, we obtain, as desired,
$$\ex{\indc{\remh < \dgen} \int_{\remh}^{\dgen}{\nprime{t} \dd t}\middle | \filt{\tau + \remh}} \leq g\left(1 + 2\mu R \ex{\mintwo{\remh}{\dgen}}, 1,\numserv{h} \right).$$
Combining these two terms, we obtain \eqref{eq:waitbusy1}.

\subsubsection{\texorpdfstring{Proof of \eqref{eq:waitbusy2}.}{Proof of Eq.~\ref{eq:waitbusy2}.}}
We now apply \eqref{eq:waitbusy1} to prove \eqref{eq:waitbusy2}.

\paragraph{Decomposition in terms of $\eta_i$'s.} We first fix the filtration/state at the accumulation time $T_A$, then define the hitting times
$\eta_i \triangleq \minpar{t > T_A: N(t) \leq R + i}$ for a set number of jobs $i \in \left[\mb, N(T_A)\right]$; we accordingly omit the filtration at time $T_A$ in our expectations.
Note that the down-crossing $\dnb{1} = \eta_{\mb}$, by this definition.
From there, it's clear that
$$\int_{T_A}^{\dnb{1}}{[N(t) - R] \dd t} = \sum_{i=\mb}^{\nta - R - 1}{\int_{\eta_{i+1}}^{\eta_i}{\left[N(t) - R\right] \dd t}}.$$
For each of these terms, we can separate $[N(t) - R] =[N(t) - (R+i)] + i$ and apply \eqref{eq:waitbusy1} to find
\begin{align}
    \ex{\int_{\eta_{i+1}}{\eta_i}{[N(t) - R]}} &\leq \ex{\rem{R+ \numserv{h}}(\eta_{i+1})}  + i\ex{\mintwo{\remi}{\eta_{i} - \eta_{i+1}}}\label{eq:wbh1}\\
    &\;\;\;+ g\left(\left[1 + 2\mu R \ex{\mintwo{\remi}{(\eta_{i} - \eta_{i+1})}}\right], 1, \mu \numserv{i+1}\right)\label{eq:wbh2}\\
    &\;\;\;+ i\cdot\frac{1}{\mu \numserv{i+1}}.\label{eq:wbh3}
\end{align}
We analyze each of these terms separately.

\paragraph{Bound on \eqref{eq:wbh1}, the remaining setup time portion.} From here, it suffices to note that the sum 
\begin{equation}\label{eq:waitbusySumInt}
    \sum_{i=1}^{N(T_A) - R}{i \mintwo{\remh}{\eta_{i} - \eta_{i+1}}} + \sum_{i=1}^{N(T_A) - R}{\ex{\remi}} \leq \beta \left[N(T_A) - R\right];
\end{equation} actually, the statement is true without expectations.
To see this, we make an interchange of summation argument.
First note that, if the $(R+i)$-th server becomes busy, then, by the monotonicity of server states, all servers of index smaller than $(R+i)$ must also be busy; in other words,  the remaining setup time $\remi = 0$ for all $i < s$.
To use this, we let $s$ be the largest index for which $\rem{R+\numserv{s}}(\eta_s) < (\eta_{s-1} - \eta_{s})$.
Note also that $\eta_i$ is the first time after $T_A$ that the number of jobs $N(t) \leq R+i$ (and so must have been continuously decreasing from time $T_A$); it follows that the remaining setup time $\rem{R+\numserv{i}}(\eta_{i}) \leq \left[\beta - (\eta_i - T_A)\right]^+$.
Breaking things down further,
\begin{equation*}
    \sum_{i=s}^{N(T_A) - R}{i \mintwo{\remi}{(\eta_{i} - \eta_{i+1})}} = \sum_{i=s}^{N(T_A) - R}{i \left(\eta_{i} - \eta_{i+1}\right)} + s \rem{R+\numserv{s}}(\eta_s).
\end{equation*}
From here, by an interchange of summation argument,
$$\sum_{i=s}^{N(T_A) - R}{i \left(\eta_{i} - \eta_{i+1}\right)} = \sum_{i=s}^{N(T_A) - R}{\sum_{j=1}^{i}{\left(\eta_{i} - \eta_{i+1}\right)}} = \sum_{j=s}^{N(T_A) - R}{\sum_{i=j}^{N(T_A) - R} {\left(\eta_{i} - \eta_{i+1}\right)}} = \sum_{j=1}^{N(T_A) - R}{\eta_s - T_A},$$
where, by definition, the (relative to $T_A)$ hitting time $(\eta_s - T_A) \leq \beta - \rem{R+\numserv{s}}(\eta_s)$; using this,\eqref{eq:waitbusySumInt} follows.

\paragraph{Bound on \eqref{eq:wbh2}, the busy period integral portion.}
Applying similar reason to the sum of the first terms in $g$, and using the independence of $g$ in its first and second arguments (i.e. that $g(x,y,z) = f_1(x,z) + f_2(y,z)$ for two functions $f_1$ and $f_2$ linear in their first argument),
$$\sum_{i=\mb}^{N(T_A)-R}{\eqref{eq:wbh2}} \leq g\left(2 \mu R \beta, 0, \mb\right) + g\left(\ntakpos, \ntakpos, k(1-\rho)\right)  + \sum_{i=\mb}^{\mintwo{\nta -R - 1}{k(1-\rho)}}{g(1,1,i+1)}.$$

This last term above can be bounded by replacing it with an integral, which gives
\begin{align*}
    \sum_{i=\mb}^{\mintwo{\nta -R}{k(1-\rho)}}{g(1,1,i)} &\leq \int_{\mb}^{\mintwo{\nta-R}{k(1-\rho)}}{\frac{2}{\mu i} + \frac{R}{\mu i^2} \dd i}\\
    &\;\;\; \leq \frac{2}{\mu}\ln\left(\frac{\mintwo{\nta-R}{k(1-\rho)}}{\mb}\right) + \frac{R}{\mu} \left[ \frac{1}{\mb}\right]\\ 
    &\;\;\;\leq \frac{2}{\mu}\ln\left(\frac{\nta - R}{\mb}\right) + \frac{R}{\mu \mb}.
\end{align*}

\paragraph{Bound on \eqref{eq:wbh3}, the busy period length portion. } Using the definition of the function $g$, we also find that
$\sum_{i=\mb}^{\nta-R}{\frac{i}{\mu \numserv{i}}} \leq g\left(\left(\left[\nta-k\right]^+\right)^2 + \ntakpos, 0, k(1-\rho) \right) + \frac{1}{\mu}\mintwo{k(1-\rho), N(T_A) - R}.$

\paragraph{Combining the terms to bound \eqref{eq:waitbusy2}, the integral from $T_A$ to $\dnb{1}$.} Combining terms, noting that $N(T_A) - k \geq \nta - R \geq 0$, and applying Jensen's inequality to the $\ln{(\cdot)}$ term, we find that
\begin{align*}
    \eqref{eq:waitbusy2} &\leq  
    \beta \ex{N(T_A) - R} 
    + g\left(2 \mu R \beta, 0, \mb\right)
    + \ex{g\left(\ntakpos, \ntakpos, k(1-\rho)\right)}\\
    &\;\;\;+\ex{\frac{2}{\mu}\ln\left(\frac{\nta - R}{\mb}\right)} + \frac{1}{\mu}\frac{R}{\mb}\\
    &\;\;\;+ \ex{g\left(\left(\left[\nta-k\right]^+\right)^2 + \ntakpos, 0, k(1-\rho) \right)} + \frac{1}{\mu}\ex{\mintwo{k(1-\rho)}{ N(T_A) - R}}\\
&\leq \left[\beta + \frac{1}{\mu}\right] \left[\ex{\nta - R} + \frac{R}{\mb}\right] + \frac{2}{\mu}\ln\left(\frac{\ex{\nta - R}}{\mb}\right)\\
    &\;\;\;+  g\left(\ex{\left[\nta - R\right]^2} + 2\ex{\nta - R}, \ex{\nta - R}, k(1-\rho)\right). \tag*{\Halmos}
\end{align*}




\subsection{Proof of Claim~\ref{clm:berryess}.}
\begin{restatable}[Berry-Esseen bound for the Skellam distribution]{claim}{berryess}\label{clm:berryess}
Given two independent random variables $Y_1\sim \Poisson(\mu_1)$ and $Y_2 \sim \Poisson(\mu_2)$, as well as a constant $C$ with $\mu_1 > \mu_2 + C$, one has
    \begin{equation*}
        \pr{Y_1 - Y_2 \geq C} \geq 1 - \Phi\left(-\left[\frac{\mu_1 - \mu_2 - C}{\mu_1 + \mu_2} \right]\right) - \frac{1}{3\sqrt{\mu_1 + \mu_2}}.
    \end{equation*}
\end{restatable}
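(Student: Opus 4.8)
The plan is to recognize $Y_1 - Y_2$ as a Skellam random variable and apply a sharp Berry--Esseen estimate, reusing the Poisson-thinning device already used in the proof of Claim~\ref{clm:dcprob}. First I would express $Y_1 - Y_2$ as a sum of i.i.d.\ increments: for an arbitrarily large integer $n$, thinning gives $Y_1 - Y_2 =_d \sum_{i=1}^{n} X_i$ with $X_i \triangleq A_i - B_i + \frac{\mu_1 - \mu_2}{n}$, where $A_i$ and $B_i$ are independent centered $\Poisson$ variables with parameters $\mu_1/n$ and $\mu_2/n$ (so that $\sum_i A_i =_d \Poisson(\mu_1) - \mu_1$ and $\sum_i B_i =_d \Poisson(\mu_2) - \mu_2$, independently). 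A short computation then gives $\ex{X_i} = \frac{\mu_1-\mu_2}{n}$, $\sigma_n^2 \triangleq \Var{X_i} = \Var{A_i} + \Var{B_i} = \frac{\mu_1 + \mu_2}{n}$, and, by the triangle inequality in $L^3$ together with the fact that a centered $\Poisson(\lambda)$ variable has third absolute moment $\lambda + O(\lambda^2)$,
\[
\rho_n \triangleq \ex{\left|X_i - \ex{X_i}\right|^3} = \ex{\left|A_i - B_i\right|^3} \le \ex{\left|A_i\right|^3} + \ex{\left|B_i\right|^3} = \frac{\mu_1 + \mu_2}{n} + O\!\left(\frac{1}{n^2}\right).
\]

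Next I would feed these moments into the Berry--Esseen bound of \citep{shevtsova2011absolute}. Writing $W \triangleq \frac{(Y_1 - Y_2) - (\mu_1 - \mu_2)}{\sqrt{\mu_1 + \mu_2}}$, note that $\sum_{i=1}^{n}(X_i - \ex{X_i})$ has the same distribution as $(Y_1 - Y_2) - (\mu_1 - \mu_2)$ and $\sigma_n\sqrt{n} = \sqrt{\mu_1 + \mu_2}$ for \emph{every} $n$, so the standardized partial sum has the same distribution as $W$ for every $n$. Hence, for all $n$ large enough that $\rho_n \ge 1.286\,\sigma_n^3$ (which holds eventually, since $\rho_n/\sigma_n^3$ grows like $\sqrt{n/(\mu_1+\mu_2)}$),
\[
\sup_{x}\left|\pr{W \le x} - \Phi(x)\right| \le \frac{0.3328\,\rho_n + 0.429\,\sigma_n^3}{\sigma_n^3 \sqrt{n}} = \frac{0.3328}{\sqrt{\mu_1 + \mu_2}} + o(1).
\]
Since the left-hand side does not depend on $n$, letting $n \to \infty$ gives $\sup_x \left|\pr{W \le x} - \Phi(x)\right| \le 0.3328/\sqrt{\mu_1 + \mu_2}$, which is strictly below $\frac{1}{3\sqrt{\mu_1 + \mu_2}}$; this tiny gap is the source of the slack in the claimed constant $\frac13$.

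Finally I would convert this two-sided approximation into the one-sided tail bound. Put $x_0 \triangleq \frac{C - (\mu_1 - \mu_2)}{\sqrt{\mu_1 + \mu_2}} = -\frac{\mu_1 - \mu_2 - C}{\sqrt{\mu_1 + \mu_2}}$, which is negative since $\mu_1 > \mu_2 + C$. Because $Y_1 - Y_2$ is integer-valued, $\{W > x_0\} = \{Y_1 - Y_2 > C\} \subseteq \{Y_1 - Y_2 \ge C\}$, so
\[
\pr{Y_1 - Y_2 \ge C} \ge \pr{W > x_0} = 1 - \pr{W \le x_0} \ge 1 - \Phi(x_0) - \frac{1}{3\sqrt{\mu_1 + \mu_2}},
\]
which is the asserted inequality once we rewrite $\Phi(x_0) = \Phi\!\left(-\frac{\mu_1 - \mu_2 - C}{\sqrt{\mu_1 + \mu_2}}\right)$. (The denominator inside $\Phi$ in the claim as printed should carry this square root, matching every invocation of the claim, e.g.\ in the proof of Claim~\ref{clm:prisebnd}.)

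The argument is essentially routine --- which is why it is relegated to an appendix rather than the main text --- and the only points requiring care are (i) pinning down the third absolute central moment of the small-parameter Poisson summands precisely enough that every $O$-term vanishes in the $n \to \infty$ limit and the surviving Esseen constant $0.3328$ is genuinely below $\frac13$, leaving a real (if minute) slack; and (ii) the lattice bookkeeping that turns the two-sided uniform bound on $\pr{W \le x}$ into a tail bound in the correct direction, via $\pr{Y_1 - Y_2 \ge C} \ge \pr{Y_1 - Y_2 > C} = 1 - \pr{W \le x_0}$. Neither is a genuine obstacle.
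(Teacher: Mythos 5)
Your proof is correct, but it takes a different (and more detailed) route than the paper's. The paper disposes of this claim in a single sentence: it invokes the Poisson-specific Berry--Esseen bound of its cited reference twice, approximating $Y_1$ and $Y_2$ by normals separately and then combining. You instead treat $Y_1-Y_2$ as a single Skellam variable, write it via thinning as an i.i.d.\ sum of $n$ small Poisson differences, apply the general i.i.d.\ Berry--Esseen bound of Shevtsova, and let $n\to\infty$ --- which is precisely the device the paper itself uses in the proof of Claim~\ref{clm:dcprob}, so you are reusing a verified technique rather than introducing a new one. What your route buys is a self-contained derivation of the error term directly in the required form, $\frac{0.3328}{\sqrt{\mu_1+\mu_2}} < \frac{1}{3\sqrt{\mu_1+\mu_2}}$, whereas the paper's two-application route produces errors of order $1/\sqrt{\mu_1}$ and $1/\sqrt{\mu_2}$ whose combination into the stated $\frac{1}{3\sqrt{\mu_1+\mu_2}}$ is left implicit; in that sense your argument is the more careful one, at the cost of length. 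Two minor remarks: the step $\ex{\abs{A_i-B_i}^3}\le \ex{\abs{A_i}^3}+\ex{\abs{B_i}^3}$ is not literally the $L^3$ triangle inequality, but the cross terms it omits are $O(n^{-3/2})$ and vanish in the $n\to\infty$ limit, so the conclusion stands (the paper makes the identical move in Claim~\ref{clm:dcprob}); and you are right that the argument of $\Phi$ in the statement should read $\frac{\mu_1-\mu_2-C}{\sqrt{\mu_1+\mu_2}}$ --- the missing square root is a typo, consistent with every invocation of the claim --- and your lattice bookkeeping converting the two-sided bound into the one-sided tail bound is correct.
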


This follows directly from the Poisson Berry-Esseen bound of \cite{normalBnd}, applied twice; first approximating $Y_1$ then approximating $Y_2$.
\hfill \Halmos

\subsection{\torp{Proof of \eqref{eq:lbound}: Lower Bound on $\ex{L}$, Expected Value of First Long Epoch Index.}{Lower Bound on Expected Value of First Long Epoch Index.}}\label{sec:lbound}

We prove this result by first showing that
\begin{equation}\label{eq:lbcons}
    \pr{ L > j \middle | L \geq j } \geq \left(1 - \frac{j}{R}\right) \left(1 - \frac{b_1}{\sqrt{\mu \beta R}}\right), 
\end{equation}
where $b_1 = \frac{2}{\sqrt{\pi}}$.
Next, we show that this implies that for any $\delta \in (0,1)$ and any $j < \delta R$,
\begin{equation}\label{eq:lbtail}
     \pr{L > j} \geq \left( 1 - \frac{b_1}{\sqrt{\mu \beta R}}\right)^{j+1} e^{-\frac{j(j+1)}{2 R} \frac{1}{1 - \delta}}.
\end{equation}
From here, we use the sum of tails formula $\ex{L} = \sum_{j=0}^{\infty}{\pr{L>j}}$ to show
\begin{equation*}\label{eq:lbex}
    \ex{L} \geq \left(1 - \frac{b_1}{\shs}\right) \left(\left[\sqrt{\frac{\pi}{2}}(1-\delta) - \frac{1.15(1-\delta)}{\sqrt{\mu \beta}}\right] \sqrt{R}  - \frac{1}{2} - \frac{2(1-\delta)}{\delta} e^{- R \frac{\delta^2}{1-\delta}}\right).
\end{equation*}
Choosing $\delta = \frac{2}{\sqrt{R}}$ then noting that $\mu \beta \geq 100$ and $R \geq 100$ gives the result.

\subsubsection{\torp{Proof of \eqref{eq:lbcons}: Lower Bound on Probability that Current Epoch is Short.}{Lower Bound on Probability that Current Epoch is Short.}}\label{sec:lbcons}
Recall that an epoch $j$ is \emph{long} if $\tau_{j+1} - \tau_j > \setuptime$, that $L$ is the index of the first long epoch, and that, if $L \geq j$, then we learn that $L \geq j$ precisely at time $\tau_j$, i.e. when epoch $j$ begins. Moreover, since the system is Markovian, the behavior of the system from $\tau_j$ onwards is completely independent of what happened previously. Thus,
\begin{equation*}
    \pr{ L > j \middle | L \geq j} = \pr{ L > j\middle | \filt{\tau_j}, L \geq j} = \pr{ \tau_{j+1} - \tau_j \leq \beta \middle | \filt{\tau_j}, L \geq j} = \pr{\tau_{j+1} - \tau_j \leq \beta}.
\end{equation*}
From here, we note that the random time $\tau_{j+1} - \tau_j$ is a stopping time; a hitting time, to be exact. Moreover, since the number of servers $Z(t)$ can not increase before time $\tau_j + \beta$ and can not decrease until $\tau_{j+1}$, we have that the coupled process $\ncouple(t)$ defined as
\begin{equation*}
    \ncouple(t-\tau_j) \triangleq 1 + A(\tau_j, t) - \depOp{R-j}{(\tau_j, t)}
\end{equation*}
is in correspondence with $N(t)$; in particular,  
\begin{equation*}
    N(t) = \ncouple(t -\tau_j) + R - j - 1
\end{equation*}
for any time $t \in \left[ \tau_j, \mintwo{\tau_j + \setuptime}{\tau_{j+1}}\right]$. If we define the coupled hitting time $\gamma_c \triangleq \minpar{t>0: \ncouple(t) \leq 0}$, then we also have that the hitting time $\gamma_c = \tau_{j+1} - \tau_j$, whenever the event $\left\{\tau_{j+1} - \tau_j \leq \setuptime \right \}$ occurs. From here, we can apply Claim~\ref{clm:contdown} to find that
\begin{equation*}
    \pr{\gamma_c \leq \beta} \geq  \left(1 - \frac{j}{R}\right) \left(1 - \frac{b_1}{\sqrt{\mu \beta R}}\right). \tag*{\Halmos}
\end{equation*}
\subsubsection{Proof of \texorpdfstring{\eqref{eq:lbtail}}{eq:lbtail}.}
Having shown the above bound on the conditional extension of the tail, we note that, for $j \leq \delta R$,
\chmades{
\begin{align*}
    \pr{L \geq j+1} = \prod_{i=0}^{j}{\pr{L \geq i + 1\middle | L \geq i}}
    \geq \prod_{i=0}^{j}{ \left(1 - \frac{i}{R}\right) \left(1 - \frac{b_1}{\sqrt{\mu \beta R}}\right)}.
\end{align*}
Manipulating the product, we complete the proof by noting
\begin{align*}
    \prod_{i=0}^{j}{e^{-\frac{i}{R-i}}} =  e^{-\sum_{i=0}^{j}{\frac{i}{R-i}}}
    = e^{-\sum_{i=0}^{j}{\frac{i}{R}}\frac{R}{R-j}}
     \geq e^{-\frac{j(j+1)}{2 R} \frac{1}{1 - \delta}}.\tag*{\Halmos}
\end{align*}
}


\subsubsection{\torp{Proof of \eqref{eq:lbex}: Final Bound on $\ex{L}$ using Gaussian Integral.}{Final Bound on E[L] using Gaussian Integral.}}
We now complete the proof. Let $a \triangleq \frac{1}{2 R}\frac{1}{1-\delta}$ and $\psi \triangleq -\ln\left( 1 - \frac{b_1}{\sqrt{\mu \beta R}}\right)$ as a shorthand. Then we can rewrite \eqref{eq:lbtail} as 
\begin{equation*}
    \pr{ L \geq j+1} \geq e^{- a j^2 - \left(\psi + a\right) j - \psi}.
\end{equation*}
Now, using the sum-of-tails formula for expectations, we find that
\chmades{
\begin{align*}
    \ex{L} = \sum_{j=0}^{R-1}{\pr{L \geq j+1 }}
    \geq \sum_{j=0}^{\delta R - 1}{\pr{L \geq j+1 }}
    \geq \sum_{j=0}^{\delta R-1}{e^{- a j^2 - \left(\psi + a\right) j - \psi}}
    \geq \int_{0}^{\delta R}{e^{- a j^2 - \left(\psi + a\right) j - \psi} \dd j}.
\end{align*}
Manipulating the argument of the exponential,
\begin{align*}
    \int_{0}^{\delta R}{e^{- a j^2 - \left(\psi + a\right) j - \psi} \dd j} = \int_{0}^{\delta R}{e^{- a \left(j + \frac{1}{2}\left(\frac{\psi}{a} + 1\right)\right)^2 + \frac{a}{4} \left(\frac{\psi}{a}   + 1\right)^2 - \psi} \dd j} 
    = e^{\frac{a}{4} \left(\frac{\psi}{a}   + 1\right)^2 - \psi} \int_{0}^{\delta R}{e^{- a \left(j + \frac{1}{2}\left(\frac{\psi}{a} + 1\right)\right)^2} \dd j}.
\end{align*}
}
Evaluating the integral further, we find that
\begin{align*}
    \int_0^{\delta R}{e^{- a \left(j + \frac{1}{2}\left(\frac{\psi}{a} + 1\right)\right)^2} \dd j} = \int_{\frac{1}{2}\left(\frac{\psi}{a} + 1\right)}^{\delta R + \frac{1}{2}\left(\frac{\psi}{a} + 1\right)}{e^{- a j^2} \dd j}
    = \int_{0}^{\infty}{e^{- a j^2} \dd j} 
    -  \int_0^{\frac{1}{2}\left(\frac{\psi}{a} + 1\right)}{e^{- a j^2} \dd j} 
    -  \int_{\delta R + \frac{1}{2}\left(\frac{\psi}{a} + 1\right)}^{\infty}{e^{- a j^2} \dd j}.
\end{align*}
We now bound each of these integrals in turn. First, we know classically that $$\int_0^{\infty}{e^{-aj^2}\dd j} = \frac{1}{2}\sqrt{\frac{\pi}{a}} = \sqrt{\frac{\pi}{2}}\cdot\sqrt{1-\delta}\sqrt{R}\geq \sqrt{\frac{\pi}{2}}\cdot(1-\delta)\sqrt{R}$$ Next, we note that, since the integrand is $\leq 1$,
\begin{align*}
    \int_0^{\frac{1}{2}\left(\frac{\psi}{a} + 1\right)}{e^{- a j^2} \dd j} \leq \frac{1}{2}\left(\frac{\psi}{a} + 1\right)
    &=\frac{1}{2}\left(2R(1-\delta)\ln\left(\frac{1}{ 1 - \frac{b_1}{\sqrt{\mu \beta R}}}\right)\right) + \frac{1}{2}\\
    &\leq  R(1-\delta) \frac{b_1}{\shs}\frac{1}{1 - \frac{b_1}{\shs}} + \frac{1}{2}\\
    &\leq \left(\frac{1}{\sqrt{\setuptime}}\right)(1-\delta)\cdot\frac{100\cdot b_1}{100 - b_1}\sqrt{R} + \frac{1}{2}\\
    &\leq \frac{1.15(1-\delta)}{\sqrt{\mu \beta}} \sqrt{R} + \frac{1}{2}.
\end{align*}
Finally, we have that,
\begin{align*}
    \int_{\delta R + \frac{1}{2}\left(\frac{\psi}{a} + 1\right)}^{\infty}{e^{- a j^2} \dd j} \leq  \int_{\delta R}^{\infty}{e^{- a j^2} \dd j} \leq \int_{\delta R }^{\infty}{e^{-a \delta R j} \dd j} = \frac{1}{a \delta R}e^{- a \delta^2 R^2} = \frac{2(1-\delta)}{\delta} e^{- R \frac{\delta^2}{1-\delta}}.
\end{align*}

To complete the proof, we note that $e^{-\psi} = \left(1 - \frac{b_1}{\shs}\right)$, thus
\begin{align*}
    \ex{L} &\geq e^{\frac{a}{4} \left(\frac{\psi}{a}   + 1\right)^2 - \psi} \int_{0}^{\delta R}{e^{- a \left(j + \frac{1}{2}\left(\frac{\psi}{a} + 1\right)\right)^2} \dd j}\\
    &\geq e^{- \psi} \int_{0}^{\delta R}{e^{- a \left(j + \frac{1}{2}\left(\frac{\psi}{a} + 1\right)\right)^2} \dd j}\\
    &\geq \left(1 - \frac{b_1}{\shs}\right) \left[\sqrt{\frac{\pi}{2}}(1-\delta) - \frac{1.15(1-\delta)}{\sqrt{\mu \beta}}\right] \sqrt{R}  - \frac{1}{2} - \frac{2(1-\delta)}{\delta} e^{- R \frac{\delta^2}{1-\delta}}\\
    &= \left(1 - \frac{b_1}{\shs}\right) \left[\sqrt{\frac{\pi}{2}}(1-\delta) - \frac{1.15(1-\delta)}{\sqrt{\mu \beta}}\right] \sqrt{R}  - \frac{1}{2} - \frac{2(1-\delta)}{\delta} e^{- R \frac{\delta^2}{1-\delta}}\\
    &= \left(1 - \frac{b_1}{\shs}\right) \left(\left[\sqrt{\frac{\pi}{2}}(1-\delta) - \frac{1.15(1-\delta)}{\sqrt{\mu \beta}}\right] \sqrt{R}  - \frac{1}{2} - \frac{2(1-\delta)}{\delta} e^{- R \frac{\delta^2}{1-\delta}}\right).
\end{align*}
From here, we could choose $\delta$ to maximize our lower bound further, 
but simply using $\delta = \frac{2}{\sqrt{R}}$ yields:
\begin{align*}
    \ex{L} &\geq \left(1 - \frac{b_1}{\shs}\right)\left[\left(1 - \frac{2}{\sqrt{R}}\right) \left(\sqrt{\frac{\pi}{2}} - \frac{1.15}{\sqrt{\mu \beta}} - e^{-4}\right) - \frac{1}{2 \sqrt{R}}\right] \sqrt{R}
    \geq \frac{2}{3}\sqrt{\frac{\pi}{2}} \sqrt{R}.\tag*{\Halmos}
\end{align*}

\subsection{\texorpdfstring{Proof of Claim~\ref{clm:taexpect}: Upper Bound on $\ex{\nta}$.}{Proof of Claim~\ref{clm:taexpect}: Upper Bound on E[N(T\_A)]} }\label{sec:taexpect}
\taexpect*

\newcommand{\ma}{\threshone}
\newcommand{\uij}{u_i^{(j)}}
\newcommand{\dij}{d_i^{(j)}}

\subsubsection{Proof.}
The beginning of the proof will be the same for both of these inequalities. Using the up-crossing and down-crossing decomposition of Section~\ref{sec:UBintTA}, we know that time $T_A$ occurs either during a \emph{rise} or during a \emph{fall}. Since the number of jobs $N(t) \leq R + \ma$ during a rise, 
\begin{equation*}
    \ntar\indc{\textrm{$T_A$ during a rise}} \leq \ma \indc{\textrm{$T_A$ during a rise}}.
\end{equation*}

If $T_A$ occurs during a fall, we need a more nuanced bound. Writing out the event $\brc{\textrm{$T_A$ during a fall}}$ in terms of disjoint events, we find
\begin{equation*}
    \brc{\textrm{$T_A$ during a fall}} = \bigcup_{j=0}^{R}{\bigcup_{i=1}^{\infty}{\brc{\uij \leq T_A < \dij}}},
\end{equation*}
so that, for $c\in\brc{1,2}$,
\begin{align*}
    \ex{\ntar^c \indc{\textrm{$T_A$ during a fall}}} &= \sum_{j=0}^{R-1}{\sum_{i=1}^{\infty}{\ex{\ntar^c \indc{\uij \leq T_A < \dij}}}}\\
    &= \sum_{j=0}^{R-1}{\sum_{i=1}^{\infty}{\ex{\ntar^c \indc{T_A < \dij}\middle | \filt{\uij}, n_u^{(j)}\geq i}\pr{n_u^{j} \geq i} }}.
\end{align*}

To bound this conditional expectation, we apply Claim~\ref{clm:couplingEx}. Notice that $N(\uij) - R = \threshone$, the $(R+1)$-th server starts up at time $T_A = \uij + \rem{R+1}(\uij)$ if $T_A < \dij$, the time $\dij$ is a hitting time, and that $Z(t)\geq R-j$ until time $\tau_{j+1} \geq \dij$. It follows that
\begin{equation*}
    \ex{\ntar \indc{T_A < \dij}\middle | \filt{\uij}, n_u^{(j)}\geq i} \leq \threshone + \mu j \rem{R+1}(\uij) \leq \threshone + \mu j \beta,
\end{equation*}
and that
\begin{align*}
    \ex{\ntar^2 \indc{T_A < \dij }\middle | \filt{\uij}, n_u^{(j)}\geq i}
    &\leq \left(\threshone + \mu j \rem{R+1}\left(\uij\right) \right)^2+ \mu 2 R \rem{R+1}\left(\uij\right)\\
    &\leq \left(\threshone + \mu j \beta \right)^2+ \mu 2 R \beta\\
    &= C_3^2 \mu \beta R + 2 \threshone \mu \beta j + (\mu \beta)^2 j^2 + 2 \mu R \beta.
\end{align*}

It now suffices to bound $\sum_{j}{\sum_{i}{j^c \pr{n_u^{(j)} \geq i}}}$, where $c\in\brc{0,1,2}$.
We do this via the same method used in Section~\ref{sec:UBintTA}:
\begin{align*}
    \sum_{j=1}^{R}{\sum_{i=1}^{\infty}{j^c\pr{n_u^{(j)} \geq i}}} &\leq \sum_{j=1}^{R}{\sum_{i=1}^{\infty}{j^c\pr{n_e \geq j}\prise{j} (1- p_2)^{i-1}}}\\
    & = \frac{1}{p_2} \sum_{j=1}^{R}{j^c \prise{j} \pr{n_e \geq j}}\\
    & \leq  \frac{1}{C_4 p_2} \sum_{j=1}^{R}{j^c C_4\prise{j} \prod_{\ell=0}^{j-1}{\left(1 - C_4 \prise{\ell}\right)}}.
\end{align*}
This is simply the expectation of a time-varying geometric random variable $G$, with $\pr{G = j\middle | G  \geq j} = C_4\prise{j}$. It follows that if one lower-bounds $\prise{j}$, then an upper bound on the desired expectation is obtained. Applying Claim~\ref{clm:prisebnd}, we note that we are essentially bounding $G$ using $Y\sim \text{Geometric}\left(\frac{0.99 C_4 A_5}{\sqrt{R}}\right)$ and saying $Y+A_5 \sqrt{R}$ stochastically-dominates $G$. It follows that
\begin{equation*}
    \ex{G} \leq A_5\sqrt{R} + \frac{1}{0.99 C_4 A_5}\sqrt{R}
\end{equation*}
and that, for any $b$,
 \begin{align*}
     \ex{(G+b)^2} &\leq \ex{(Y+ A_5 \sqrt{R}+b)^2} = \ex{Y^2} + 2 (A_5\sqrt{R} +b) \ex{Y} + (A_5 \sqrt{R} + b)^2\\
     &= 2\ex{Y}^2 - \ex{Y} + 2 \left(A_5\sqrt{R} +b\right) \ex{Y} + \left(A_5 \sqrt{R} + b\right)^2\\
     &\leq \left(\ex{Y}  + A_5 \sqrt{R} + b\right)^2 + \ex{Y}^2.
 \end{align*}
 Defining $B_5 \triangleq \frac{C_3}{C_4 p_2}$, $B_6 \triangleq \frac{1}{C_4 p_2} \left(\frac{1}{0.99 C_4 A_5} + A_5\right)$, and $B_7 \triangleq \frac{1}{2 C_4 p_2}\left[\frac{1}{(0.99 C_4 A_5)^2} + 2\right]$, it follows that
{\small \begin{align*}
      \ex{\ntar^2 \indc{\textrm{$T_A$ during a fall}}} & \leq \frac{1}{C_4 p_2} \sum_{j=1}^{R}{C_4\prise{j} \left[\left(\threshone + \mu j \beta \right)^2+ \mu 2 R \beta\right]}\prod_{\ell=0}^{j-1}{\left(1 - C_4 \prise{\ell}\right)}\\
      &= \frac{1}{C_4 p_2}\ex{\left(\threshone + \mu G \beta \right)^2+ \mu 2 R \beta}\\
      &\leq \frac{1}{C_4 p_2}\ex{\left(\threshone + \mu \beta Y + \mu \beta A_5 \sqrt{R} \right)^2+ \mu 2 R \beta}\\
      &=\frac{1}{C_4 p_2}\left[\left(\threshone + \mu \beta \frac{1}{0.99 C_4 A_5}\sqrt{R} + \mu \beta A_5 \sqrt{R} \right)^2 + \mu \beta \frac{1}{(0.99 C_4 A_5)^2}R + 2\mu \beta R\right]\\
      &\leq\frac{1}{C_4^2 p_2^2}\left[\left(\threshone + \mu \beta \frac{1}{0.99 C_4 A_5}\sqrt{R} + \mu \beta A_5 \sqrt{R} \right)^2\right] + \frac{1}{C_4 p_2}\left[\frac{1}{(0.99 C_4 A_5)^2} + 2\right] \mu \beta R\\
      &=\left(B_5 \shs + B_6 \mu \beta \sqrt{R}\right)^2 + 2 B_7 \mu \beta R.
 \end{align*} }
 and that
 \begin{align*}
      \ex{\ntar \indc{\textrm{$T_A$ during a fall}}}
      &\;\;\leq \frac{1}{C_4 p_2} \sum_{j=1}^{R}{C_4\prise{j} \left[\threshone + \mu j \beta\right]}\prod_{\ell=0}^{j-1}{\left(1 - C_4 \prise{\ell}\right)}\\
      &\;\;= \frac{1}{C_4 p_2}\ex{\threshone + \mu \beta G}\\
      &\;\; \leq \frac{1}{C_4 p_2}\left[\threshone + \mu \beta\left(A_5 \sqrt{R} + \frac{1}{0.99 C_4 A_5}\sqrt{R}\right)\right]\\
      &\;\; = \left(B_5 \shs + B_6 \mu \beta \sqrt{R}\right).
 \end{align*}

Defining $F_1 \triangleq B_6$ and $F_2 \triangleq \frac{\left(B_5 + C_3\right)}{B_6}$, it follows that
\begin{equation*}
    \ex{N(T_A) - R} \leq \left((B_5 + C_3) \shs + B_6 \mu \beta \sqrt{R}\right) = F_1 \mu \beta \sqrt{R} \left(1 + \frac{F_2}{\sqrt{\mu \beta}}\right)
\end{equation*}
and that
\begin{equation*}
    \ex{\ntar^2 }\leq \left(B_5 \shs + B_6 \mu \beta \sqrt{R}\right)^2 + 2 B_7 \mu \beta R \leq F_1 (\mu \beta)^2 R \left(1 + \frac{F_2}{\sqrt{\mu \beta}}\right)^2 + 2\mu\beta R. \hfill \Halmos
\end{equation*}




\section{\texorpdfstring{Analysis of $m$-Policies}{Analysis of m-Policies}}\label{sec:extension}

In this section, we study a natural policy that aims to mitigate the impact of setup times on waiting time.
Our result reveals that this policy is inadequate unless it keeps significantly more servers on compared to the original policy described in Section~\ref{sec:model}; see Figure~\ref{fig:m_comp} for an illustration.
This highlights the need for further studies on designing policies for turning servers on and off in queueing systems with setup times to achieve better tradeoffs between energy efficiency and performance. 

The policies we study in this section are referred to as $m$-policies; we now describe how these policies operate.
Recall that servers are indexed by integers $1,2,\dots,k$, and that we, without loss of generality, re-index our collection of servers to ensure that each job departure is always from the highest indexed server among the busy servers.
Under the original policy (described in Section~\ref{sec:model}), server $i$ is turned off when the number of jobs in the system drops from $i$ to $i-1$.
In contrast, an $m$-policy turns off servers less aggressively.
Specifically, server $i$ is turned off when the number of jobs in the system drops from $i-m$ to $i-m-1$, where $0\le m\le k$.
Here $m$ is referred to as the buffer size.
The special case $m=0$ corresponds to the original policy.
The $m$-policies with $m>0$ reduce job waiting times, as keeping $m$ extra servers on ensures that the next $m$ job arrivals do not need to wait for server setup.

The $m$-policies are more complicated to analyze compared to the original policy.
One difficulty comes from the fact, that under an $m$-policy, servers which are \emph{on} are not necessarily \emph{busy}.
To capture this distinction, we introduce a new variable $\on(t)$ which tracks the number of servers that are \textit{on} at time $t$.
This quantity generally is not equal to the number of servers that are \textit{busy} at time $t$, which we continue to call $Z(t)$.
However, despite this difficulty, we are able to analyze $m$-policies using our MIST technique, with a more complicated coupling construction.

\subsection{\texorpdfstring{Main Result: A Lower Bound on $m$-Policies.}{Main Result: A Lower Bound on m-Policies.}}
Our main result is that, unless the buffer size $m$ exceeds $\Theta\left(\sqrt{R}\right)$, the average waiting time under an $m$-policy will not be much smaller than the original $(m=0)$ policy (see Theorem~\ref{thm:mpol}).

\begin{theorem}[Lower Bound on $m$-Policies]\label{thm:mpol}
When using an $m$-policy to control setup in an M/M/k/Setup-Deterministic system with an offered load $R \triangleq k \rho \geq 100$, a setup time $\setuptime \geq 100 \frac{1}{\mu}$, and an $m$-value with $m \leq \sqrt{R}$,
the expected number of jobs in queue in steady state is lower-bounded by
\begin{align*}
    \ex{Q(\infty)} 
    &\geq F_1\left(\mu \beta \sqrt{R} + \frac{1}{1-\rho}\right),
\end{align*}
for some constant $F_1$ independent of system parameters.
\end{theorem}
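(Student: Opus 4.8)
The plan is to re-run the argument of Theorem~\ref{thm:improvement} essentially step for step, since an $m$-policy differs from the original ($m=0$) policy only in that it keeps up to $m$ extra servers \emph{on} (though possibly idle), and when $m\le\sqrt R$ and $\mu\beta\ge 100$ this buffer is small compared with the $\Theta(\shs)=\Theta(\sqrt{\mu\beta}\cdot\sqrt R)$ scale at which the queue builds during an accumulation phase (note $\shs\ge 10\sqrt R\ge 10m$). Concretely, I would first apply the Renewal--Reward theorem, taking as renewal points the instants at which the $(R+1)$-st \emph{busy} server shuts off, and split each cycle into an accumulation phase $[0,T_A)$ --- with $T_A\triangleq\min\{t:Z(t)=R+1\}$ --- and a draining phase $[T_A,X)$, exactly as in Appendix~\ref{sec:improvement}. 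This reduces the claim to lower bounding $\ex{\int_0^X Q(t)\,\dd t}$ and upper bounding $\ex{X}$, and after the Section~\ref{sec:tightness}-style simplification (whose $[\cdot]^+$-replacement casework is unchanged) these two bounds collapse to $\ex{Q(\infty)}\ge F_1(\mu\beta\sqrt R+\frac{1}{1-\rho})$ for an absolute constant.

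The one genuinely new ingredient is the coupling construction. Under an $m$-policy the number of busy servers is $Z(t)=\min\bigl(N(t),\,k,\,m+\min_{s\in[t-\beta,t]}N(s)\bigr)$, so one must carry the auxiliary quantity $\on(t)=\min(k,\,m+\min_{s\in[t-\beta,t]}N(s))$ and re-verify that the sample-path coupling results (Claims~\ref{clm:coupling}, \ref{clm:dcbnd}, \ref{clm:dcprob}, \ref{clm:couplingEx}, \ref{clm:waitbusy2}) still apply with $Z(t)=\min(N(t),\on(t))$ in place of $Z(t)=\min(k,\min_{[t-\beta,t]}N(s))$; the monotonicity underlying Claim~\ref{clm:coupling} survives because adding busy servers only speeds departures. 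Granting this, the accumulation-phase lower bound (the analogue of Lemma~\ref{lem:LBint}) proceeds as in Claims~\ref{clm:LBintINIT} and~\ref{clm:LBintFIN}: define epochs via $\tau_j\triangleq\min\{t:N(t)\le R-j\}$ and let $L$ be the index of the first epoch lasting $\ge\beta$. During epoch $j$ and before $\tau_j+\beta$ one has $\min_{[t-\beta,t]}N(s)=R-j$, hence $Z(t)=R-j+\min(\hat Q(t),m)$ with $\hat Q(t)\triangleq N(t)-(R-j)\ge 0$, so $Q(t)=[\hat Q(t)-m]^+$. Running the martingale computation of Claim~\ref{clm:LBintINIT} on $\hat Q$ (whose drift is $\mu(j-\min(\hat Q,m))$, hence $\ge\mu(j-m)$ whenever $j\ge m$) gives $\ex{\int_0^{\tau_L+\beta}Q(t)\,\dd t}\gtrsim\mu\beta^2\,\ex{(L-m)^+}$, while the post-$\tau_L+\beta$ contribution is handled, as in Claim~\ref{clm:LBintFIN}, by coupling $N$ from below to an M/M/1 busy period with service rate $k\mu$, yielding the $\ibp{\cdot}{k-R}$ term. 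For the denominator I would re-derive the $m$-policy analogues of Lemmas~\ref{lem:UBTA} and~\ref{lem:UBTB}; here the buffer is only helpful --- it can only make $T_A$ arrive sooner and the draining phase end faster --- so the same couplings to M/M/$R$, M/M/$\infty$ and M/M/1 (Claims~\ref{clm:mminf}, \ref{clm:contint}, \ref{clm:contdown}) together with the ``wait-busy'' idea give $\ex{X}=_c\beta+\frac{\beta\sqrt R}{\mu k(1-\rho)}$, exactly as before.

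The main obstacle is controlling $\ex{(L-m)^+}$: unlike the $m=0$ case, where the queue at $T_A$ is $\approx\mu L\beta$, under the $m$-policy it is only $\approx\mu(L-m)\beta$, so one must show that the first long epoch is shifted \emph{up} by roughly $m$ --- precisely, $\ex{(L-m)^+}\ge c\sqrt R$ for an absolute constant $c$. The plan for this is to re-run the conditional-tail computation of Appendix~\ref{sec:lbound}: by Claim~\ref{clm:coupling} the walk $N$ restricted to epoch $j$ has at least as many busy servers as the critically-/over-loaded M/M/1 of rates $(k\lambda,\mu(R-j))$ used there, and for $j\ge m$ the top-of-epoch dynamics match those of the $m=0$ walk of \emph{index $j-m$}; this should give $\pr{L>j\mid L\ge j}\gtrsim 1-\frac{(j-m)^+}{R}$, from which the same Gaussian-integral estimate as in Appendix~\ref{sec:lbound} yields $\ex{(L-m)^+}=\Omega(\sqrt R)$. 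Combined with $m\le\sqrt R\le\frac{1}{10}\shs$, every correction introduced by the buffer is at most a constant fraction of the corresponding main term, so the final constant $F_1$ is just a (somewhat smaller) absolute constant. A minor technicality is the regime $k-R<m$, where the buffer is truncated by $k$ before it can reach size $m$; there it can only be \emph{smaller}, so all of the above bounds hold a fortiori with $m$ replaced by $\min(m,k-R)$.
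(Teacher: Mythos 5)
Your numerator plan is plausible and, interestingly, not the paper's: the paper does not track an $N$-based first-long-epoch index $L$ and $(L-m)^+$, but instead works with the stopping time $\poss$ (first time the queue has been nonempty for a full $\beta$) and the quantity $\lext\triangleq\max_{s\in[0,\poss]}[R-\on(s)]$, proving $\ex{\lext}\geq F_2\min(R/m,\sqrt R)$ via a rather delicate recursive decomposition of the conditional stopping probability $\pr{\lext=j\mid\lext\geq j}$ into ``start'', ``long'' and ``restart'' events, with M/M/$\infty$ hitting-probability estimates (Claim~\ref{clm:ext_lext}, Section~\ref{sec:ext_prob_bnds}). Your shortcut --- dominate the in-epoch walk by the $m=0$ walk with departure rate $\mu(R-j)$, conclude the epoch-short probability is at least as large as before, and then observe that a Gaussian-scale tail of width $\sqrt R$ still has mass $\Omega(\sqrt R)$ beyond $m\le\sqrt R$ --- looks sound for the accumulation integral, and the $-m\beta$ correction from $Q=[\hat Q-m]^+$ is indeed negligible since $m\beta\le\sqrt R\,\beta\ll\mu\beta^2\sqrt R$.

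The genuine gap is in the denominator. Your claim that the buffer ``can only make $T_A$ arrive sooner and the draining phase end faster,'' so that Lemmas~\ref{lem:UBTA} and~\ref{lem:UBTB} transfer verbatim, is wrong: under the correct renewal definition for an $m$-policy (the moment $\on(t)$ drops from $R+1$ to $R$ --- note the server that turns off is idle, not busy, so your ``$(R+1)$-st busy server shuts off'' does not even define the right event), the cycle ends only when $N(t)$ falls to $R-m$, i.e.\ $m$ levels \emph{below} $R$. In that band the number of busy servers tracks $N(t)$ itself (departure rate $\approx\mu N(t)\approx k\lambda$), so the process is essentially a balanced random walk with no negative drift, and the Lemma~\ref{lem:UBTB} machinery, which only needs to reach $N=R$ under a strictly supercritical service rate, gives you nothing here. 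This is exactly why the paper introduces a third ``balanced phase'' $[T_B,X)$ and bounds it by counting excursions above $R+H$ via the hitting probability $p_e$ (Claim~\ref{clm:ext_pe_clm}, built on the M/M/$\infty$ hitting analysis of Section~\ref{sec:ext_prob_bnds}) together with an M/M/$R$/$(R+H)$ passage-time bound from $R+H$ down to $R-m$; without some such argument your upper bound on $\ex{X}$, and hence the theorem, is not established. A secondary but real omission of the same kind: the draining-phase bound needs $\ex{N(T_A)-R}\lesssim\mu\beta\sqrt R$ under the $m$-policy, and the proof of Claim~\ref{clm:taexpect} does not carry over unchanged (epochs are now $\on$-based and the epoch-bottom dynamics are M/M/$\infty$-like over a band of width $m$); the paper re-derives it with a new ``trials'' argument in Section~\ref{sec:ext_nta}. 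You assume both of these transfer ``exactly as before,'' and that is where the proposal fails.
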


\begin{figure}
    \centering
    \includegraphics[scale=0.6]{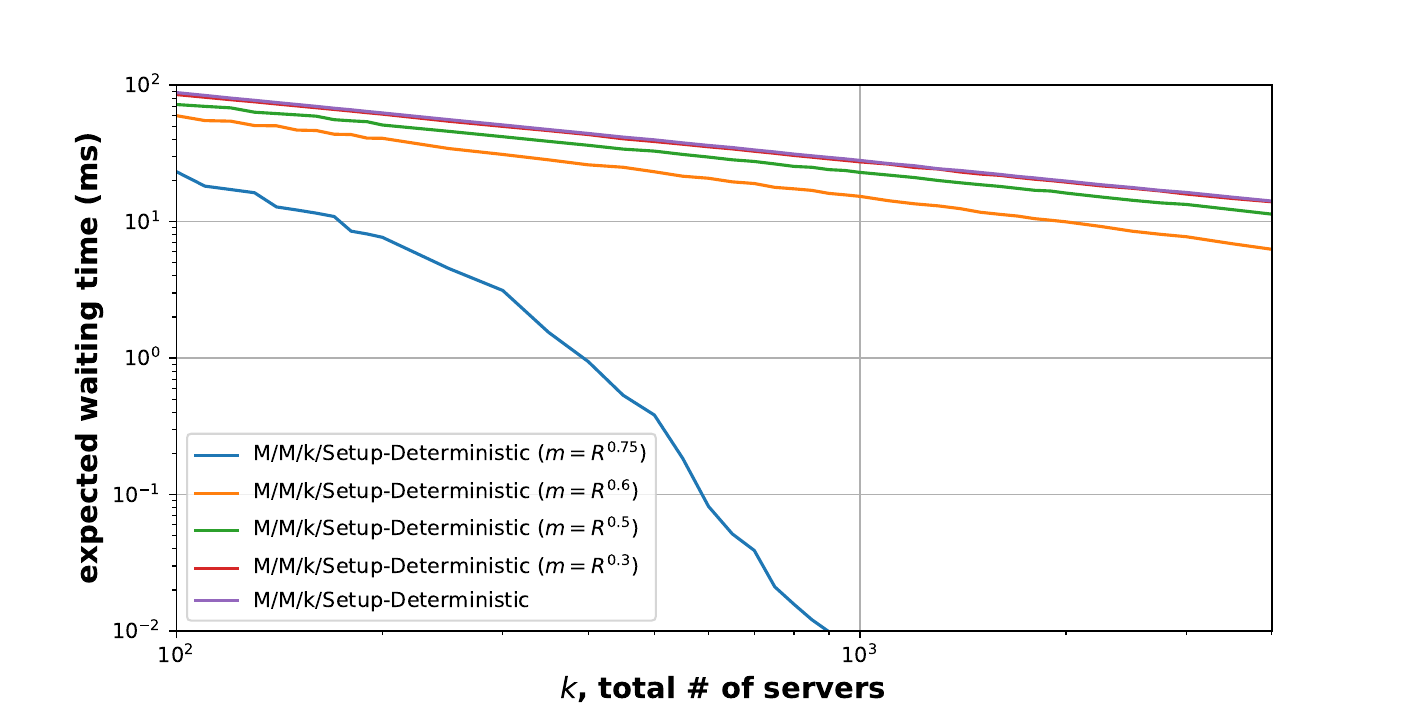}
    \caption{Simulation results showing the average waiting time for the M/M/k/Setup-Deterministic under various $m$-policies. All results obtained with mean service time $\frac{1}{\mu}=1$ ms, mean setup time $\beta = 1000$ ms, and load kept at a constant $\rho = 0.5$. Note that a significant performance difference can only be found once the quantity $m$ is growing faster than $R^{0.5}$.}
    \label{fig:m_comp}
\end{figure}

\paragraph{Proof Overview.}
In the remainder of Section~\ref{sec:extension}, we provide a proof of the stated lower bound, Theorem~\ref{thm:mpol}. 
Our argument here is a variation on the argument made in Section~\ref{sec:lbound}. 
We begin by defining our renewal points to be those moments when the number of \textit{on} servers $\on(t)$ transitions from $(R+1)$ to $R$; note that this is slightly distinct from our previous definition since, in general, the number of busy servers $Z(t) \neq \on(t)$.
Taking $\cyclem$ to be the next renewal point and applying the renewal reward theorem, it suffices to lower bound the renewal integral $\ex{\int_0^{\cyclem}{Q(t) \dd t}}$ and upper bound the cycle length $\ex{\cyclem}$.
In what follows, we show the following two lemmas, which, via the above, suffice to prove the lower bound of Theorem~\ref{thm:mpol}.
\begin{lemma}[Lower Bound on Renewal Integral]\label{lem:extint}
The renewal integral can be bounded by
\begin{equation*}
    \ex{\int_0^X{\left[ N(t) - R \right] \dd t}} \geq F_2 \mu \beta^2 \sqrt{R} + \ibp{\left[F_2\mu \beta \sqrt{R} - (k-R)\right]}{k-R},
\end{equation*}
where $\ibp{n}{j} \triangleq \frac{n}{\mu j}\left[\frac{n+1}{2} + \frac{R}{j} + 1 \right]$ and $F_2 \triangleq 0.23$.
\end{lemma}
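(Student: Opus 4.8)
The plan is to adapt, essentially line for line, the proof of Lemma~\ref{lem:LBint}, which establishes the analogous cycle-integral lower bound for the original ($m=0$) policy. By the Renewal-Reward Theorem applied to $\ex{N(\infty)-R}$, with renewal points the moments $\on(t)$ drops from $R+1$ to $R$, it suffices to lower-bound $\ex{\int_0^{\cyclem}{[N(t)-R]\,\dd t}}$; I would split this at the accumulation time $T_A \triangleq \minpar{t>0:\on(t)=R+1}$ and decompose the accumulation phase $[0,T_A)$ into epochs. Under the $m$-policy a renewal point has $N(0)=R-m$ (server $R+1$ turns off exactly when $N$ drops from $R+1-m$ to $R-m$) and $\on(0)=R$, so I would define epoch $i$ as the interval that begins when $N(t)$ first reaches the new low $R-m-i$ and ends when $N(t)$ reaches $R-m-i-1$ or the accumulation phase ends, and let $L$ be the index of the first \emph{long} epoch (one lasting at least a setup time $\setuptime$), which exists since some server must come on before $T_A$ and servers come on only during long epochs.

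With this decomposition, I would prove the two claims paralleling Claims~\ref{clm:LBintINIT} and~\ref{clm:LBintFIN}. First, a martingale/coupling argument on a process $\ncouple$ that lower-bounds $N(t)$ during each epoch (driven by arrival rate $k\lambda$ and by the epoch-$i$ departure rate) gives $\ex{\int_0^{\tau_L+\setuptime}{[N(t)-R]\,\dd t}} \ge \tfrac12\mu\setuptime^2\,\ex{L}$, exactly as in \eqref{eq:int_taul}. Second, coupling $N(t)$ after time $\tau_L+\setuptime$ to an M/M/1 queue with departure rate $k\mu$ over a busy period and applying Jensen's inequality gives $\ex{\int_{\tau_L+\setuptime}^{\cyclem}{[N(t)-R]\,\dd t}} \ge \ibp{[\ex{N(\tau_L+\setuptime)}-k]^+}{k-R}$, together with $\ex{N(\tau_L+\setuptime)} \ge R + \mu\setuptime\,\ex{L}$ from a second martingale argument, mirroring \eqref{eq:int_aft} and \eqref{eq:ntlba}; here $X \ge$ (first time $N(t)\le R-m$) $\ge$ (first time $N(t)\le k$), so this step is unchanged from the $m=0$ case. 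Summing the two pieces reduces everything to an estimate $\ex{L}\ge F_2\sqrt{R}$ of the type proved in Section~\ref{sec:lbound}.

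The main obstacle is precisely this bound $\ex{L}\ge F_2\sqrt{R}$ under the $m$-policy, and it is here that the ``more complicated coupling construction'' enters. The $\Theta(\sqrt R)$ scaling of $\ex{L}$ in Section~\ref{sec:lbound} relies on the effective departure rate during epoch $i$ being the strictly subcritical $\mu(R-i)$ (yielding $\pr{L>j\mid L\ge j}\ge(1-\tfrac jR)(1-\tfrac{b_1}{\shs})$ via Claim~\ref{clm:contdown}); under an $m$-policy ``on'' and ``busy'' no longer coincide, servers may cycle off and back on within the accumulation phase, and the first $m$ arrivals after a dip are absorbed by idle-but-on servers before any setup is triggered, so one must work harder to certify that the drift of the coupled walk on epoch $i$ is still bounded by $\mu(R-\Theta(i))$ — e.g. by a coupling that jointly tracks $(N(t),\on(t))$ and by isolating, as automatically contributing a large integral, the events on which a server returns to \emph{on} before $T_A$. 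Because $m\le\sqrt R$, the $m$-step offset between ``$N$ at a new low'' and ``a server turning off'', and the shift of the starting epoch index, perturb the relevant random walks only at the lower-order $o(\sqrt R)$ scale, so the geometric-product and sum-of-tails computation of Section~\ref{sec:lbound} still delivers $\ex{L}=\Omega(\sqrt R)$; propagating the $m\le\sqrt R$ slack carefully through those estimates (rather than through the slick identities available when $m=0$) is what produces the explicit, and smaller, constant $F_2=0.23 < L_1=\tfrac23\sqrt{\pi/2}$.
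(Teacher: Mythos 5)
Your high-level architecture coincides with the paper's: apply renewal--reward, split the cycle at the first instant a setup could have completed, get a lower bound $\tfrac12\mu\beta^2\,\ex{L}$ on the first piece by a drift/optional-stopping argument, couple the remainder to an M/M/1 busy period and use Jensen to get the $\ibp{\cdot}{k-R}$ term, and reduce everything to a bound $\ex{L}\gtrsim\sqrt R$. Two bookkeeping remarks: the paper does not use ``first long epoch'' but rather the stopping time $\poss$ (the first moment the queue has been nonempty for a full setup time) together with $\lext \triangleq \max_{s\le \poss}\left[R-\on(s)\right]$; under an $m$-policy an epoch being long in your $N$-new-low sense does not certify that the queue was continuously nonempty, so the $\poss$/$\lext$ packaging is what makes the $m=0$ martingale steps transfer cleanly. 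Also, since at your epoch starts $N(\tau_i)=R-m-i$ sits $m$ below the on-server count, you should integrate $Q(t)$ (or control the $\left[\cdot-m\right]^+$ offset explicitly); the naive bound $Q\ge \tilde N - m$ produces a per-epoch error of order $m\beta$, which is not automatically negligible against $\tfrac12\mu\beta^2 j$ for small $j$ when $\sqrt R \gg \mu\beta$. These points are fixable with care.

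The genuine gap is the central step $\ex{L}\ge F_2\sqrt R$, which you assert rather than prove, and the justification offered --- that the $m$-step offset ``perturbs the relevant random walks only at the lower-order $o(\sqrt R)$ scale'' --- is not correct in substance. To end an epoch the job count must descend through an $m$-wide band lying strictly below the on-server count, in which the idle-but-on servers make the dynamics those of an upward-biased M/M/$\infty$ (departure rate $\mu N(t)$ against arrival rate $\mu R$). As a result the per-epoch stopping probability acquires a contribution of order $\max(m,\sqrt R)/R$ that is present even for small $j$, rather than the $j/R + O(1/\sqrt{\mu\beta R})$ of the $m=0$ analysis; this is a leading-order change when $m\asymp\sqrt R$ and destroys the $\sqrt R$ scaling when $m\gg\sqrt R$. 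This is precisely what the paper must work to control: it decomposes the stopping event recursively into a ``start'' probability, a ``queue stays nonempty for $\beta$'' probability, and a ``restart'' probability, proves the bound $\pr{\lext=j\mid\lext\ge j}\le\min\bigl(p_{\text{long}}^{(j)}/\gamma,\;p_{\text{start}}^{(j)}\bigr)$ with $\gamma=\frac{(R-j)!}{(R-j-m)!\,R^m}$, and bounds these quantities with explicit birth--death hitting-probability estimates (Section~\ref{sec:ext_prob_bnds}), arriving at $\pr{\lext=j\mid\lext\ge j}\le 4\frac{\max(m,\sqrt R)}{R}\bigl(1+\tfrac{b_1}{\sqrt{\mu\beta}}\bigr)$ and hence $\ex{\lext}\ge F_2\min\bigl(\tfrac Rm,\sqrt R\bigr)$ (Claim~\ref{clm:ext_lext}). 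That computation is where the hypothesis $m\le\sqrt R$ and the constant $F_2=0.23$ actually come from --- not from propagating slack through the $m=0$ geometric-product and sum-of-tails argument --- so without an argument of this kind your proposal is incomplete at its crux.
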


\begin{lemma}[Upper Bound on Cycle Length]\label{lem:extcyc}
The length of a renewal cycle can be bounded by
\begin{equation*}
    \ex{X} \leq F_3 \beta + F_4 \frac{\mu \beta \sqrt{R}}{\mu \min\left(k - R, \sqrt{R}\right)},
\end{equation*}
where $F_3 \triangleq 2.6$ and $F_4 \triangleq 7.2$.
\end{lemma}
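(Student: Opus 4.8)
The plan is to reproduce, in the $m$-policy setting, the renewal-cycle decomposition used for the baseline policy (Lemma~\ref{lem:LBX}, and in the appendix Lemmas~\ref{lem:UBTA} and~\ref{lem:UBTB}), taking care throughout of the new feature that an on server need not be busy. Following the Proof Overview, I would let $T_A$ be the first time in the cycle that $\on(t)=R+1$ (equivalently, the first time server $R+1$ completes setup), so that $\ex{X}=\ex{T_A}+\ex{X-T_A}$ and the two pieces can be bounded separately: the accumulation piece produces the $F_3\beta$ term and the draining piece produces the $F_4\,\mu\beta\sqrt R/(\mu\min(k-R,\sqrt R))$ term. Both pieces are handled by MIST (Lemma~\ref{lem:waldstop}) layered on top of an appropriate coupling (Claim~\ref{clm:coupling}).

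For $\ex{T_A}$: throughout the accumulation phase one has $\on(t)\le R$, hence $Z(t)=\min(N(t),\on(t))\le R$, so $N(t)$ is dominated by the queue length of an M/M/$R$ via Claim~\ref{clm:coupling}; since server $R+1$ completes setup exactly when $N$ has remained $\ge R+1-m$ for a full setup time, $T_A$ is dominated by the first time the coupled M/M/$R$ sustains level $R+1-m$ for $\beta$. This is precisely the quantity controlled in the proof of Lemma~\ref{lem:UBTA}, with $R+1$ replaced by $R+1-m$: I would apply Lemma~\ref{lem:waldstop} with intervening times the successive up-crossings of level $R+1-m$ by the coupled chain, use the M/M/$R$ / M/M/$\infty$ passage-time estimate (Claim~\ref{clm:mminf}) and the critically-loaded excursion bounds (Claims~\ref{clm:contdown}, \ref{clm:contint}) for the time to and between up-crossings and for the probability that an excursion above $R+1-m$ fails to last $\beta$; since each attempt lasts $O(\sqrt{\beta/(\mu R)})$ and there are $O(\sqrt{\mu\beta R})$ of them, $\ex{T_A}=O(\beta)$. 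The hypothesis $m\le\sqrt R$ is what keeps $R+1-m$ within $O(\sqrt R)$ of $R$, so these near-critical estimates apply unchanged. The same epoch/up-crossing bookkeeping also yields the analogue of Claim~\ref{clm:taexpect}, namely $\ex{N(T_A)-R}=O(\mu\beta\sqrt R)$, which feeds into the next step.

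For $\ex{X-T_A}$: I would again combine Lemma~\ref{lem:waldstop} with the ``wait-busy'' idea (Claim~\ref{clm:waitbusy2}), adapted so that ``busy'' is replaced by the corresponding lower bound on $\on$. Fix the analysis threshold $M_L=\min(k-R,\sqrt R)$ (the $\sqrt R$ cap is what keeps the M/M/$\infty$ passage-time bound of Claim~\ref{clm:mminf} available), and let the intervening stopping times be the successive times $N$ drops below $R+M_L$. The structural fact is that if $N$ stays $\ge R+h$ over a window of length $\beta$, then for the remainder of that excursion $\on(t)=\min\!\bigl(k,\,m+\min_{s\in[t-\beta,t]}N(s)\bigr)\ge\min(k,R+h)\ge R+h$ (using $h\le k-R$), so $Z(t)\ge R+h$ and the net downward drift is $\ge\mu h$. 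Applying this with $h=M_L$ along the template of Claim~\ref{clm:tb_induct}, the first passage from $N(T_A)$ down to $R+M_L$ costs, in expectation, at most one priming window $\beta$ (the per-level setup waits telescope because $\beta$ time elapses while one descends) plus $\ex{N(T_A)-R}/(\mu M_L)\asymp\mu\beta\sqrt R/(\mu M_L)$ to drain the backlog, plus a lower-order $O((\ln\beta)/\mu)$ harmonic correction; the remaining descent of range $M_L+m=O(\sqrt R)$ down to $R-m$ is a mean-reverting (M/M/$\infty$-like) motion near level $R$, which, because $m\le\sqrt R$, reaches $R-m$ in expected time $O(1/\mu)$ and hence $O(\beta)$; and the excursions back above $R+M_L$ are controlled by Claims~\ref{clm:contdown}--\ref{clm:contint} together with the gambler's-ruin bound $\pr{\text{another downward visit}\mid\text{at least one so far}}\le 1/M_L$, so Lemma~\ref{lem:waldstop} collapses all of them into another $O(\beta)$. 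Summing the two phases gives $\ex{X}\le F_3\beta+F_4\,\mu\beta\sqrt R/(\mu\min(k-R,\sqrt R))$.

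The main obstacle is precisely the more delicate coupling the paper alludes to. To \emph{upper}-bound $X$ one needs a \emph{lower} bound on the departure rate $\mu Z(t)=\mu\min(N(t),\on(t))$, but $\on(t)$ is not a one-step function of the Markov state; it is the functional $\min(k,\,m+\min_{s\in[t-\beta,t]}N(s))$ of the sample path of $N$ over a sliding window. So one cannot simply compare two Markov chains: the coupling has to be run on the job-count process while the lower bound on $\on$ (hence on $Z$, hence on the drain rate) is extracted from a path event (``$N$ has stayed $\ge R+h$ for the last $\beta$''). Making this bookkeeping interlock with the MIST decomposition---ensuring the wait-busy guarantee on $\on$ is valid over exactly the stopping-time interval being analyzed, and that the $m$-shift does not push the operating point more than $\sqrt R$ below $R$, where the near-critical estimates would fail---is the part that needs care; the remaining computations are a routine re-run of the baseline arguments.
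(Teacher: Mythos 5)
Your high-level architecture mirrors the paper's: couple to an M/M/$R$ to get $\ex{T_A}\le 1.08\beta$, use the wait-busy idea plus a MIST decomposition at threshold $\min(k-R,\sqrt R)$ for the rest of the cycle, and feed in a bound $\ex{N(T_A)-R}=O(\mu\beta\sqrt R)$. The genuine gap is in your treatment of the final stretch, after $N(t)$ first drops below $R+M_L$. Under an $m$-policy the cycle does \emph{not} end when $N$ reaches $R$; it ends when $N$ reaches $R-m$ (that is when $\on$ drops to $R$), and on the segment $(R-m,R)$ the walk has \emph{upward} drift (departure rate $\mu N<k\lambda=\mu R$). Hence the probability that a given return below $R+M_L$ terminates the cycle is not given by the balanced-walk gambler's-ruin bound between levels $R$ and $R+M_L$ that you import from Claim~\ref{clm:UBintTBprob}; your asserted bound $\pr{\text{another downward visit}}\le 1/M_L$ is simply false here. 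For instance, if $k-R$ is small and $m\approx\sqrt R$, the per-visit termination probability is of order $1/\sqrt R$, not $1/(k-R)$, so the expected number of excursions above the threshold is $\Theta(\sqrt R)$, not $O(M_L)$. The paper handles exactly this point with a dedicated birth--death hitting-probability computation (Claim~\ref{clm:ext_pe_clm}, built on Section~\ref{sec:ext_prob_bnds}), showing $\frac{1}{1-p_e}\le\sqrt R\left(1+e^{m(m+1)/(R-m)}\right)\le 4.4\sqrt R$; the factor $e^{m(m+1)/(R-m)}$ is precisely where the hypothesis $m\le\sqrt R$ is consumed, and multiplying the resulting $\Theta(\sqrt R)$ excursion count by the per-excursion length $\frac{b_1}{\mu}\sqrt{\mu\beta/R}+\frac{1}{\mu H}$ is what still yields $F_3\beta+F_4\,\mu\beta\sqrt R/(\mu H)$. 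In your sketch the numbers only ``work out'' because the invalid $1/M_L$ bound undercounts the excursions; without the adverse-drift hitting analysis the proof does not close.

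Two secondary points. First, the input $\ex{N(T_A)-R}\le E_1\mu\beta\sqrt R$ is not ``the same epoch bookkeeping'': because epochs are now indexed by $\on$ dropping while an epoch terminates only when $N$ reaches $R-j-m$, the trial/termination probabilities ($p_{\textup{start}}^{(j)}$, $p_{\textup{long}}^{(j)}$, $p_{\textup{re}}^{(j)}$) again require the same adverse-drift hitting estimates (Claim~\ref{clm:ext_pth_clm} and Section~\ref{sec:ext_prob_bnds}); the paper devotes Section~\ref{sec:ext_nta} to this. Second, the paper's $m$-policy leaves the turn-\emph{on} rule unchanged and only delays turn-off by $m$, so $T_A$ is characterized by $N$ sustaining level $R+1$ for a setup time, not level $R+1-m$ as in your formula $\on(t)=\min(k,\,m+\min_{s\in[t-\beta,t]}N(s))$; this convention slip is harmless for your upper bound on $\ex{T_A}$ (the wait-busy lower bound on $\on$ you need still holds without the $+m$), but it should be stated correctly.
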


\paragraph{Completion of Proof, assuming Lemmas~\ref{lem:extint} and~\ref{lem:extcyc}.}
This result follows directly from the casework used in the proof of Theorem~\ref{thm:tightness}.

\newcommand{\maxmr}{\max{\left(\sqrt{R}, m\right)}}
\newcommand{\plong}{p_{\text{long}}^{(j)}}
\newcommand{\prej}{p_{\text{re}}^{(j)}}
\newcommand{\pstj}{p_{\text{start}}^{(j)}}

\subsection{Proof of Lemma~\ref{lem:extint}, the Integral Lower Bound.}

\paragraph{High-Level Strategy.} As mentioned previously, to prove this lower bound, we use a variation on the strategy used in Section~\ref{sec:lbound}.
In that proof, we first define a random quantity $L$ which represents how many servers end up turning off at the beginning of a renewal cycle, then derive a bound on the desired integral in terms of its expected value $\ex{L}$, then directly derive a bound on $\ex{L}$.
In this proof, we define an almost equivalent random variable $\lext$ and proceed with more or less the same steps; however, the more-complex behavior of the $m$-policy requires a more delicate analysis in order for us to find $\ex{\lext}$.

\paragraph{Definition of $\poss$.}
Before defining $\lext$, we define a related stopping time $\poss$ and prove a small but critical fact about it. 
Let $\poss \triangleq \min\left\{t\geq0:\text{$Q(s) \geq 1$ for all $s \in [t-\beta, t]$}\right\}$ be the first moment that the number of jobs in queue $Q(t)$ has been non-zero for a full setup time. 
At the beginning of a renewal cycle, since servers are setting up if and only if jobs are in the queue, it follows that the time $\poss$ is the first time a server could possibly complete setup.
Since $T_A$ is the first time a specific server sets up, we have $\poss \leq T_A$.

\paragraph{Definition of $\lext$.}
With the stopping time $\poss$ defined, we now define the random variable $\lext$ and describe its role going forward.
Let $\lext \triangleq \max_{s \in [0, \poss]}{\left[R - \on(t)\right]}$ be the maximum number of servers turned off in the period before time $\poss$.
Now, via the same martingale arguments given in the proof of \eqref{eq:int_taul}, we have that
\begin{equation}\label{eq:int_poss}
    \ex{\int_{0}^{\poss}{Q(t) \dd t}} \geq  \frac{1}{2} \beta^2 \mu \ex{\lext}.
\end{equation}
Likewise, in analogy to \eqref{eq:int_aft}, by making a coupling argument we also have that
\begin{equation}\label{eq:ext_int_aft}
    \ex{\int_{\poss}^{X}{Q(t) \dd t }} \geq \ibp{\left[ \ex{N(\poss)} - k\right]^+}{k-R},
\end{equation}
where one should recall that $\ibp{n}{j} \triangleq \frac{n}{\mu j}\left[\frac{n+1}{2} + \frac{R}{j} + 1 \right]$.
After some martingale analysis \eqref{eq:ext_int_aft} becomes
\begin{equation}
    \ex{\int_{\poss}^{X}{Q(t) \dd t }} \geq \ibp{\left[ \beta \ex{\lext} - (k-R)\right]^+}{k-R}.
\end{equation}
Thus, outside of some algebra, to prove Lemma~\ref{lem:extint} it suffices for us to give a bound on the expectation $\ex{\lext}$.
We give this lower bound in the following claim, which we will prove immediately.
\begin{claim}[Bound on $\lext$]\label{clm:ext_lext}
    For any $R \geq 2\max\left(\sqrt{R}, m\right)$ and a constant $F_2 \triangleq 0.23$, we have
    \begin{equation*}
        \ex{\lext} \geq F_2 \min{\left(\frac{R}{m}, \sqrt{R}\right)}.
    \end{equation*}
\end{claim}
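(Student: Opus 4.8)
The plan is to mirror the lower-bound argument for $\ex{L}$ in Section~\ref{sec:lbound} (culminating in \eqref{eq:lbound}), adapting it to the three separate quantities $N(t)$, $Z(t)$, and $\on(t)$ that the $m$-policy forces us to track. On $[0,\poss]$ no server can complete setup, so $\on(t)$ is non-increasing there, and by the $m$-policy rule the $j$-th server shutoff occurs exactly when $N(t)$ first reaches $R-m-j$. Writing $\tau_j^{\textup{ext}} \triangleq \minpar{t\ge 0 : N(t) \le R-m-j}$ for this hitting time (so $\tau_0^{\textup{ext}}=0$, since $N(0)=R-m$), the number of servers turned off before $\poss$ is $\maxpar{j : \tau_j^{\textup{ext}} < \poss}$, i.e. $\{\lext > j\} = \{\tau_{j+1}^{\textup{ext}} < \poss\}$. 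Hence, by the strong Markov property,
\begin{equation*}
    \ex{\lext} = \sum_{j\ge 0}{\pr{\lext > j}} = \sum_{j\ge 0}{\prod_{i=0}^{j}{\pr{\lext > i \,\middle|\, \filt{\tau_i^{\textup{ext}}}, \lext \ge i}}},
\end{equation*}
so the claim reduces to a per-epoch bound followed by a tail-sum computation.

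For the per-epoch bound I would argue, as in the proof of \eqref{eq:lbcons}, that conditioned on $\{\tau_j^{\textup{ext}} < \poss\}$ it is enough to lower bound the probability that $N$ reaches $R-m-j-1$ within a setup time $\beta$ of $\tau_j^{\textup{ext}}$: if this happens, the epoch $[\tau_j^{\textup{ext}},\tau_{j+1}^{\textup{ext}})$ is strictly shorter than $\beta$, and since the queue is empty at (and just before) both $\tau_j^{\textup{ext}}$ and $\tau_{j+1}^{\textup{ext}}$ — the number of jobs there is $R-m-j$ and $R-m-j-1$, each at most $\on=R-j$ — no $\beta$-long window of continuous queue-nonemptiness can close before $\tau_{j+1}^{\textup{ext}}$, which forces $\poss > \tau_{j+1}^{\textup{ext}}$ and thus $\lext > j$. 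During epoch $j$ we have $\on(t)=R-j$ and $N(t)\ge R-m-j$, so $Z(t)=\mintwo{N(t)}{R-j}\ge R-m-j$; Claim~\ref{clm:coupling} then bounds $N$ above by the constant-rate walk $\ncouple(t) \triangleq N(\tau_j^{\textup{ext}}) + A\left(\tau_j^{\textup{ext}},t\right) - \depOp{R-m-j}{(\tau_j^{\textup{ext}},t]}$ on this epoch, and the coupling stays valid up to $\mintwo{\tau_{j+1}^{\textup{ext}}}{\poss}$ (past $\poss$ the only change is that $\on$ increases, which only raises $Z$). If $\gamma_c$ denotes the first passage of $\ncouple$ to $R-m-j-1$, then $\{\gamma_c\le\beta\}$ forces $N$ to have reached $R-m-j-1$ by that time, so $\pr{\lext > j \mid \filt{\tau_j^{\textup{ext}}},\lext\ge j} \ge \pr{\gamma_c \le \beta}$. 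Applying Claim~\ref{clm:contdown} with its parameter ``$j$'' taken to be $m+j$, exactly as \eqref{eq:lbcons} does — together with the gambler's-ruin fact that an up-biased walk hits $0$ with probability $1-\tfrac{m+j}{R}$ — yields
\begin{equation*}
    \pr{\lext > j \,\middle|\, \filt{\tau_j^{\textup{ext}}}, \lext \ge j} \ge \left(1 - \frac{m+j}{R}\right)\left(1 - \frac{b_1}{\shs}\right), \qquad b_1 = \frac{2}{\sqrt{\pi}}.
\end{equation*}

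Multiplying these bounds gives $\pr{\lext > j} \ge \left(1 - \tfrac{b_1}{\shs}\right)^{j+1}\prod_{i=0}^{j}\left(1 - \tfrac{m+i}{R}\right)$, and, via $1-x\ge e^{-x/(1-x)}$, the product is at least $\exp\!\left(-\tfrac{(j+1)m + j(j+1)/2}{R-m-j}\right)$ while $m+j<R$. Feeding this into the sum-of-tails formula and comparing with a Gaussian integral, precisely as in the derivation of \eqref{eq:lbtail} and the tail-sum that follows it in Section~\ref{sec:lbound}, the tail decays like $\exp(-aj-bj^2)$ with exponential rate $a \asymp \ln\tfrac{1}{1-m/R} + \tfrac{b_1}{\shs} \asymp \maxpar{\tfrac{m}{R}}{\tfrac{1}{\sqrt{R}}}$ (the second term being controlled by $\mu\beta\ge 100$) and Gaussian rate $b \asymp \tfrac{1}{R}$, so that $\ex{\lext} \gtrsim \mintwo{1/a}{1/\sqrt{b}} \asymp \mintwo{R/m}{\sqrt{R}}$. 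Casing on whether $m\le\sqrt{R}$ (so $a$ is governed by the $1/\shs$ term) or $m>\sqrt{R}$ (so $a$ is governed by $m/R$), truncating the tail-sum at a suitable multiple of $\mintwo{R/m}{\sqrt{R}}$, and using $R\ge 2\maxpar{\sqrt{R}}{m}$ and $R,\mu\beta\ge 100$ to bound every retained term below by a universal constant, one reaches $\ex{\lext}\ge 0.23\,\mintwo{R/m}{\sqrt{R}}$.

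I expect the main obstacle to be the first paragraph's bookkeeping: under an $m$-policy the identities linking server shutoffs, new lows of $N$, emptiness of the queue, and the stopping time $\poss$ are genuinely more delicate than in the $m=0$ model, and one must verify carefully both that the coupling to the constant-rate walk is valid exactly on the interval needed (using monotonicity of server states to extend past $\poss$) and that a short epoch truly precludes a completed setup. Once the per-epoch estimate $\pr{\lext > j\mid\cdot} \ge (1-\tfrac{m+j}{R})(1-\tfrac{b_1}{\shs})$ is established, the remaining tail-sum and constant-chasing is a routine, if arithmetically fiddly, variant of what was already done for \eqref{eq:lbound}.
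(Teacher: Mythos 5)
Your route is sound and is genuinely different from the paper's. The paper attacks the \emph{stopping} probability from above: it decomposes the event $\{\lext = j\}$ into the exact renewal structure ``climb until the queue becomes nonempty ($\pstj$), keep it nonempty for a full $\beta$ ($\plong$), otherwise restart ($\prej$),'' proves the identity behind \eqref{eq:lext_stop_p}, and then bounds $\pstj$, $\plong$, and $\gamma$ via the state-dependent (M/M/$\infty$-type) gambler's-ruin analysis of Section~\ref{sec:ext_prob_bnds}, arriving at $\pr{\lext = j \mid \lext \ge j} \le 4\frac{\max(m,\sqrt R)}{R}(1+\frac{b_1}{\sqrt{\mu\beta}})$. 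You instead lower-bound the \emph{continuation} probability by exhibiting one sufficient event — a downcrossing within $\beta$ of the epoch start — and coupling $N$ from above to the worst-case constant-rate walk with departure rate $\mu(R-m-j)$, exactly as in the $m=0$ proof of \eqref{eq:lbcons} with $j$ replaced by $m+j$. Your bookkeeping checks out: on $[\tau_j^{\textup{ext}},\tau_{j+1}^{\textup{ext}})$ one has $\on(t)\ge R-j$ (no shutoff below $R-j$ can fire before $N$ hits $R-m-j-1$, and pre-$\poss$ no setup completes) so $Z(t)=\mintwo{N(t)}{\on(t)}\ge R-m-j$, the coupling of Claim~\ref{clm:coupling} applies, and a short epoch does force $\tau_{j+1}^{\textup{ext}}\le\poss$ because the queue is empty at $\tau_j^{\textup{ext}}$, so no $\beta$-window of nonemptiness can close inside it. This sidesteps the entire $\pstj/\plong/\prej$ machinery and in fact yields a per-epoch stopping bound of roughly $\frac{m+j}{R}+\frac{b_1}{\shs}$, which on the range $j\le\max(m,\sqrt R)$ is about a factor $2$ tighter than the paper's \eqref{eq:lext_stop}; what the paper's heavier route buys is the structural bound \eqref{eq:lext_stop} itself and the M/M/$\infty$ hitting estimates, which it reuses elsewhere in the $m$-policy analysis. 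One caution on your last step: the crude ``truncate at a multiple of $\mintwo{R/m}{\sqrt R}$ and bound every term by a universal constant'' plan can land below $0.23$ if done with worst-case products (the exponent picks up both $Jm/R$ and $J^2/R$), so you either need the more careful Gaussian-integral evaluation as in Section~\ref{sec:lbound}, or — simpler — feed your per-epoch bound into the paper's own geometric-sum computation directly below Claim~\ref{clm:ext_lext}, which with your sharper rate gives roughly $\frac{1-e^{-2}}{2+b_1/\sqrt{\mu\beta}}\approx 0.4$, comfortably above $F_2=0.23$.
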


\subsection{\texorpdfstring{Proof of Claim~\ref{clm:ext_lext}: Lower Bound on $\ex{\lext}$.}{Lower Bound on E[L\_ex]}}
To bound $\ex{\lext}$, we upper bound the stopping probability $\pr{\lext = j \middle | \lext \geq j}$, showing that, for any $j \leq \max{\left(\sqrt{R}, m\right)},$
\begin{equation}\label{eq:lext_stop}
    \pr{\lext = j \middle | \lext \geq j} \leq 4 \frac{\max{\left(m, \sqrt{R}\right)}}{R}\left(1 + \frac{b_1}{\sqrt{\mu \beta}}\right);
\end{equation}
we defer the proof of this until the next section, Section~\ref{sec:lext_stop}.
From here, we can bound $\ex{\lext}$ using the usual sum-of-tails method, obtaining
\begin{align*}
    \ex{\lext} &= \sum_{j=0}^{R-m}{\pr{\lext \geq j+1}}\\
    &\geq \sum_{j=0}^{\maxmr}{\left( 1 - 4 \frac{\max{\left(m, \sqrt{R}\right)}}{R}\left(1 + \frac{b_1}{\sqrt{\mu \beta}}\right)\right)^j}\\
    &\geq \left[\min{\left(\frac{R}{m}, \sqrt{R}\right)}\right]\frac{1}{4\left(1 + \frac{b_1}{\sqrt{\mu \beta}}\right)}\left[1 - \left( 1 - 4 \frac{\max{\left(m, \sqrt{R}\right)}}{R}\left(1 + \frac{b_1}{\sqrt{\mu \beta}}\right)\right)^{\maxmr}\right]\\
    &\geq \left[\min{\left(\frac{R}{m}, \sqrt{R}\right)}\right]\frac{1}{4\left(1 + \frac{b_1}{\sqrt{\mu \beta}}\right)}\left[1 - e^{-4}\right]\\
    &\geq F_2 \min{\left(\frac{R}{m}, \sqrt{R}\right)},
\end{align*}
where the constant $F_2 \triangleq 0.23 \leq \frac{1}{4\left(1 + \frac{b_1}{\sqrt{\mu \beta}}\right)}\left[1 - e^{-4}\right].$
To complete our proof, it suffices to show \eqref{eq:lext_stop}.

\subsubsection{\torpdf{Proof of \eqref{eq:lext_stop}: Upper Bound on $\pr{\lext = j \middle | \lext \geq j}$.}{Upper Bound on Pr(Lex = j | Lex >= j}}\label{sec:lext_stop}
To show the inequality \eqref{eq:lext_stop}, we begin by giving an exact analysis of the probability $\pr{\lext = j \middle | \lext \geq j}$ by analyzing the sample paths of $N(t)$ at the beginning of a renewal cycle, then afterwards apply some straightforward Markov chain analysis.
In particular, for two probabilities $\pstj$ and $\plong$ and a parameter $\gamma$ (to be defined later), we first show that
\begin{equation}\label{eq:lext_stop_p}
    \pr{\lext = j \middle | \lext \geq j} \leq \min\left(\frac{\plong}{\gamma}, \pstj \right).
\end{equation}
Afterwards, we bound these quantities, showing that, for some $x \in (0,1)$ and for all $j \leq \max\left(\sqrt{R}, m\right)$,
\begin{equation}\label{eq:lext_stop_x}
    \min\left(\frac{\plong}{\gamma}, \pstj \right) \leq 2 \frac{\max{\left(m, \sqrt{R}\right)}}{R}\left(1 + \frac{b_1}{\sqrt{\mu \beta}}\right) \min\left(\frac{1}{x},\frac{1}{1-x}\right).
\end{equation}
Since $\min\left(\frac{1}{x},\frac{1}{1-x}\right) \leq 2$, we obtain that
\begin{equation*}
    \pr{\lext = j \middle | \lext \geq j} \leq 4 \left(1 + \frac{b_1}{\sqrt{\mu \beta}}\right) \frac{\max{\left(m, \sqrt{R}\right)}}{R}, 
\end{equation*}
as desired.
Thus, it suffices to show \eqref{eq:lext_stop_p} and \eqref{eq:lext_stop_x}.

\subsubsection{\torpdf{Proof of \eqref{eq:lext_stop_p}.}{Proof of Conditional Stopping Probability.}} 
We begin by exactly analyzing 
$\pr{\lext = j \middle | \lext \geq j}$.
We make a recursive argument, based on a sequence of events (and their implicitly-defined stopping times). Note that, if we were to write this explicitly, then we would need to invoke our main lemma, Lemma~\ref{lem:waldstop}.

\paragraph{First set of events.} Assume that one begins at time $\tau_j$, the first moment where one knows that $\lext \geq j$; note that, since at time $\tau_j$ we have just turned off the $(R-j + 1)$-th server, we know the number of jobs $N(t) = R - j + 1 - m$. From here, one can wait until one of two events happen: either 1) the number of jobs $N(t)$ decreases to $(R- j - m)$, at which point $\lext > j$; or 2) the number of jobs $N(t)$ increases to $(R -j + 1)$, at which point the event $\left[\lext = j\right]$ still has a chance to occur.
We call the probability of the latter event $\pstj$; we think of this as a ``starting'' probability.

\paragraph{Second set of events.} At the moment when $N(t)$ reaches $(R-j + 1)$, the number of jobs in the queue $Q(t) = 1$. As such, we can again partition our sample paths based on whether: 1) $Q(t)$ stays non-zero for a full setup time, at which point $\left[\lext = j\right]$; or 2) $Q(t)$ drops to $0$ within a setup time.
We call the probability of the first event $\plong$.

\paragraph{Third set of events.} In the case where the queue empties before a setup time has passed, the number of jobs $N(t) = (R-j)$ and we again can condition on the next event to happen: either 1) the number of jobs $N(t)$ decreases to $(R- j - m)$, at which point $\lext > j$; or 2) the number of jobs $N(t)$ increases to $(R -j + 1)$, at which point the event $\left[\lext = j\right]$ still has a chance to occur.
We call the latter probability $\prej$, since it can be thought of as a ``restart'' probability.
Note that, in the case where the number of jobs $N(t)$ rises, we have reached an identical state to that which initiates the second set of events; thus we can induct.

\paragraph{Result in terms of probabilities.} From this analysis, it follows that
\begin{align*}
    \pr{\lext = j \middle | \lext \geq j } &= \pstj\left(\plong + (1-\plong)\prej\left(\plong + (1-\plong)\prej\left( \dots \right) \right) \right) \\
   & = \pstj \plong \frac{1}{1 - \left(1-\plong\right) \prej}.
\end{align*}
By noting that the behavior of $N(t)$ in the first and third scenarios is sample-path-equivalent to the behavior of an M/M/$(R-j)$, and that the behavior of $N(t)$ in the second scenario is sample-path-equivalent to that of an (overloaded) M/M/1, we can analyze these probabilities.
In Section~\ref{sec:ext_prob_bnds}, we show via basic Markov chain analysis that $\prej =  1 - \gamma \pstj$, where $\gamma \triangleq \frac{(R - j)!}{(R-j-m)!}\frac{1}{R^m} \leq 1$.
It follows that both
\begin{align*}
\pstj \plong \frac{1}{1 - \left(1-\plong\right) \prej} = \pstj\frac{ \plong}{\plong + \left(1-\plong\right) \gamma \pstj} \leq \pstj
\end{align*}
and
\begin{align*}
\pstj \plong \frac{1}{1 - \left(1-\plong\right) \prej} = \plong\frac{\pstj}{\gamma \pstj + \left(1-\gamma \pstj\right) \plong} \leq \frac{\plong}{\gamma},
\end{align*}
and thus that
\begin{equation*}
     \pr{\lext = j \middle | \lext \geq j} \leq \min \left( \frac{\plong}{\gamma}, \pstj\right).
\end{equation*}

\subsubsection{\torpdf{Proof of \eqref{eq:lext_stop_x}: Upper Bound on $\pstj$ and $\plong$.}{Upper Bound on pstj and plong.}}
To complete our analysis of $\pr{\lext = j \middle | \lext \geq j}$, it suffices to bound the probabilities $\plong$ and $\pstj$, and the constant $\gamma$.
To upper bound the probability $\plong$, we note that, from the analysis in Section~\ref{sec:lbcons},
\begin{align*}
    \plong \leq 1 - \left(1 - \frac{j}{R}\right)\left(1 - \frac{b_1}{\sqrt{\mu\beta R}}\right)
    &= \frac{b_1}{\sqrt{\mu\beta R}} + \frac{j}{R}\left(1 - \frac{b_1}{\sqrt{\mu\beta R}}\right)\\
    &\leq \frac{\left(j + b_1\frac{\sqrt{R}}{\sqrt{\mu\beta}}\right)}{R}\\
    &\leq \left( 1 + \frac{b_1}{\sqrt{\mu \beta}}\right)\frac{\sqrt{R}}{R}\\
    &\leq \left( 1 + \frac{b_1}{\sqrt{\mu \beta}}\right)\frac{2 \maxmr}{R}.
\end{align*}
To bound the constant $\gamma$, we note that, defining $x \triangleq \left(1 - \frac{j+m}{R}\right)^m$,
\begin{equation*}
    \gamma \triangleq \frac{(R-j)!}{(R-j-m)!}\frac{1}{R^m} \geq \left(\frac{R-j-m}{R}\right)^m = \left(1 - \frac{j+m}{R}\right)^m \triangleq x.
\end{equation*}
To upper bound the probability $\pstj$, we use the analysis from Section~\ref{sec:ext_prob_bnds}, which shows that
\begin{align*}
    \pstj \leq \frac{j+m}{R} \frac{1}{1 - \left(1 - \frac{j+m}{R}\right)^{m+1}}
    &\leq \frac{j+m}{R} \frac{1}{1 - \left(1 - \frac{j+m}{R}\right)^{m}}\\
    &= \frac{j+m}{R} \frac{1}{1 - x}\\
    &\leq \frac{2\maxmr}{R} \frac{1}{1-x}\\
    &\leq \left(1 + \frac{b_1}{\sqrt{\mu\beta R}}\right) \frac{2\maxmr}{R} \frac{1}{1-x}.
\end{align*}
Thus, from \eqref{eq:lext_stop_p}, we have that
\begin{align*}
    \pr{\lext = j \middle | \lext \geq j } &\leq \min\left( \frac{\plong}{\gamma}, \pstj\right)\\
    & \leq \left(1 + \frac{b_1}{\sqrt{\mu\beta R}}\right) \frac{2\maxmr}{R} \min\left( \frac{1}{x}, \frac{1}{1-x}\right)\\
    & \leq 4 \left(1 + \frac{b_1}{\sqrt{\mu \beta}}\right) \frac{\max{\left(m, \sqrt{R}\right)}}{R}, 
\end{align*}
as desired.
Thus, we complete our proof of Claim~\ref{clm:ext_lext}, our lower bound on $\ex{\lext}$.

\subsection{Proof of Lemma~\ref{lem:extcyc}, the Cycle Length Upper Bound.}

\paragraph{Preliminaries: Cycle decomposition.}
We now prove Lemma~\ref{lem:extcyc}, an upper bound on the expected cycle length $\ex{\cyclem}$.
As before, we begin by breaking a renewal cycle into three parts; we later analyze each part separately.
To do so, we define a threshold $H$ and two stopping times; the latter stopping time will depend on $H$.
To begin, noting that $R$ is the minimum number of servers needed to stabilize the system, we define the threshold $H\triangleq \min\left(k - R, \sqrt{R}\right)$ as the minimum between $(k-R)$, the number of ``extra'' servers, and $\sqrt{R}$, a quantity which represents the ``typical variation'' in the system.
Next, we define the accumulation time $T_A \triangleq \minpar{t\geq 0: \on(t) \geq R+1}$ as the first moment that the $(R+1)$-th server turns on.
Finally, we define the draining time $T_B \triangleq \minpar{t\geq T_A: N(t) \leq R+H}$ as the first moment after time $T_A$ that the number of jobs $N(t)$ dips below the threshold $(R+H)$.

\paragraph{Preliminaries: Definition of phases.}
With these two stopping times defined, our renewal cycle takes on a similar decomposition as before. 
From time $0$ to time $T_A$, the number of servers that are \textit{on} is $\leq (R+1)$, meaning that the departure rate of jobs must be less than the arrival rate of jobs. 
Accordingly, jobs accumulate and thus we call this the \textit{accumulation} phase.
From time $T_A$ to time $T_B$, the number of servers \textit{on} is $\geq (R+1)$, meaning that, on the whole, the accumulated jobs are \textit{draining} from the system; thus we call this the \textit{draining} phase.
From time $T_B$ to the end of the cycle at time $\cyclem$, the behavior of the system roughly resembles that of a balanced random walk; we thus call this period the \textit{balanced} phase.

\paragraph{Proof strategy.} 
Using this decomposition, we prove the following bounds on the expected length of each phase:
for the accumulation phase $[0,T_A)$
\begin{equation}\label{eq:ext_TA}
    \ex{T_A} \leq 1.08 \beta,
\end{equation}
for the draining phase $[T_A, T_B)$
\begin{equation}\label{eq:ext_TB}
    \ex{T_B - T_A} \leq \beta + E_1\frac{\mu \beta \sqrt{R}}{\mu H},
\end{equation}
and for the balanced phase $[ T_B, X)$
\begin{equation}\label{eq:ext_X}
    \ex{X - T_B} \leq E_3 \frac{1}{\mu} + E_{10} \beta.
\end{equation}
Putting these together, we confirm that
\begin{equation*}
    \ex{X} \leq F_3 \beta + F_4 \frac{\mu \beta \sqrt{R}}{\mu (k - R)}.
\end{equation*}
In what follows, we briefly discuss the proof of each of these inequalities in turn.

\subsubsection{\texorpdfstring{Proof of \eqref{eq:ext_TA}: Bound on $\ex{T_A}$.}{Bound on ETA}}
For the bound on $\ex{T_A}$, our previous proof still applies: since $\on(t) \leq R$ until time $T_A$, we can couple the system to an M/M/$R$ queue and bound the length of an analogous version of $T_A$ in that system.
To review the argument, we can alternatively define $T_A = \minpar{t\geq0: \min_{s \in [t - \beta, t]}{N(t)} \geq R+1}$ as the first moment that the number of jobs $N(t)$ stays above $(R+1)$ for a full setup time $\beta$.
If we let $\ncouple(t)$ denote the number of jobs in a coupled M/M/$R$ queue starting with $\ncouple(0) = (R - m)$, then we can accordingly define $\Tilde{T_A}$ as the first moment that the coupled number of jobs $\ncouple(t)$ stays above $(R+1)$ for a full setup time $\beta$.
Since the departure rate of the original system is upper-bounded by the departure rate of this coupled M/M/$R$ system, the original system always contains more jobs, i.e. $N(t) \geq \ncouple(t)$.
Consequently $T_A \leq \Tilde{T_A}$ and thus it suffices to bound $\ex{\Tilde{T_A}}$; from Section~\ref{sec:UBTA},
\begin{equation*}
    \ex{T_A} \leq \ex{\Tilde{T_A}} \leq 1.08 \beta.
    \tag*{\Halmos}
\end{equation*}

\subsubsection{\torpdf{Proof of \eqref{eq:ext_TB}: Bound on $\ex{T_B - T_A}$.}{Bound on E[TA - TB].}}
For the bound on $\ex{T_B - T_A}$, our strategy again is to derive a bound on the conditional expectation of $\left[T_B - T_A\right]$, where we condition on the number of jobs in the system at time $T_A$, and afterwards give a bound on both the first and second moments of this number of jobs $N(T_A)$.
Actually, in this case, our conditional bound from Section~\ref{sec:UBintTB} still applies;
by making worst-case assumptions about the system's setup state at time $T_A$ and noting that, if the number of jobs stays large for a full setup time then many servers must be \textit{on}, we can use a combination of random walk and busy period analysis to obtain
\begin{equation}\label{eq:ext_tb_cond}
    \ex{T_B - T_A\middle | \filt{T_A}} \leq \beta + \frac{N(T_A) - R}{\mu H} + \frac{1}{\mu}\ln{\left(\frac{N(T_A) - R}{H}\right)};
\end{equation}
see Section~\ref{sec:UBintTB} for details.
To complete our bound of the draining phase, we must bound the expected value $\ex{N(T_A)}$; we defer the proof of this bound to Section~\ref{sec:ext_nta}. 
There, we find that
\begin{equation}\label{eq:ext_nta}
    \ex{N(T_A) - R} \leq E_1 \mu \beta \sqrt{R},.
\end{equation}
where $E_1 \triangleq 5.21.$
Thus, combining \eqref{eq:ext_nta} with \eqref{eq:ext_tb_cond} (and noting that $\frac{x + \ln(x)}{x} \leq 1 + \frac{1}{e}$), we find that
\begin{align*}
    \ex{T_B - T_A} &\leq  \beta + \ex{\frac{N(T_A) - R}{\mu H}} + \frac{1}{\mu}\ex{\ln{\left(\frac{N(T_A) - R}{H}\right)}}\\
    &\leq \beta + \frac{\ex{N(T_A) - R}}{\mu H} + \frac{1}{\mu}\ln{\left(\frac{\ex{N(T_A) - R}}{H}\right)}\\
    &\leq\beta + \left(1 + \frac{1}{e}\right) \frac{\ex{N(T_A) - R}}{\mu H}\\
    &\leq \beta + 1.37 E_1 \frac{\mu \beta \sqrt{R}}{\mu H}.
\end{align*}

\subsubsection{\torpdf{Proof of \eqref{eq:ext_X}: Bound on $\ex{X - T_B}$.}{Upper bound on E[X - TB]}}
\paragraph{Main idea: Threshold-based division.} 
To bound the final portion of the renewal cycle, $\ex{\cyclem - T_B}$, we divide time based on the threshold $H$ and make a coupling argument.
More specifically, we make an argument based on whether the number of jobs $N(t)$ is \textit{below} or \textit{above} the threshold $(R+H)$.
First, we note that, when the number of jobs $N(t)$ is \textit{below} $(R+H)$, the departure rate of jobs is lower-bounded by $\mu \min\left(R, N(t)\right)$. 
Said differently, while below $(R+H)$, the number of jobs $N(t)$ is upper-bounded by the number of jobs in an appropriately-coupled M/M/$R$/$(R+H)$ queue.
On the other hand, when the number of jobs $N(t)$ becomes \textit{larger} than $(R+H)$, we can use the ``wait-busy'' period idea of the previous phase to bound the amount of time it takes for number of jobs $N(t)$ to come back down to $(R+H)$; we call this an excursion.

\paragraph{The below-threshold case.} To analyze the below-threshold portion of time, we make a coupling argument based on our above departure rate bound.  
If we let $T^{\text{M/M/$R$/$(R+H)$}}_{(R+H)\to(R-m)}$ to be the time it takes for an M/M/$R$/$(R+H)$ to go from $(R+H)$ jobs in the system to $(R-m)$ jobs in the system, then, from a straightforward coupling argument, the time spent below $(R+H)$ can be upper-bounded by $\ex{T^{\text{M/M/$R$/$(R+H)$}}_{(R+H)\to(R-m)}}$.
As such, it suffices to bound the expectation of this first passage time, which, for $m \leq \sqrt{R}$, was done in Claim~\ref{clm:mminf}.
This gives
\begin{align*}
    \ex{\text{time spent with $N(t) \leq R+H$}} \leq \ex{T^{\text{M/M/$R$/$(R+H)$}}_{(R+H)\to(R-m)}}
    &\leq 7 \frac{m + H}{\mu\sqrt{R}}\\
    &\leq 7 \frac{\sqrt{R} + \min\left(\sqrt{R},k-R\right)}{\sqrt{R}} \frac{1}{\mu}\\
    &\leq 14 \frac{1}{\mu}.
\end{align*}

\paragraph{The above-threshold case.} To analyze the above-threshold portion of time, we bound the number of excursions and use the ``wait-busy'' idea to bound the length of an excursion.
To bound the number of excursions, we note (by our below-threshold coupling argument) that the probability of having an additional excursion is upper-bounded by the probability that, in an M/M/$R$ queue started with $(R+H)$ jobs in the system, the queue reaches $(R+H+1)$ jobs in system before it reaches $(R-m)$ jobs in system; we call this probability $p_e$.
Accordingly, the number of excursions is stochastically dominated by a $\text{Geo($1-p_e$)} -1$ random variable, and thus the expected number of excursions is bounded by $\frac{1}{1-p_e}$; we give an explicit bound on $p_e$ later, showing that
\begin{equation}\label{eq:ext_pe}
    \frac{1}{1-p_e} \leq \sqrt{R}\left(1 + e^{\frac{m(m+1)}{R-m}}\right),
\end{equation}
which, for $m \leq \sqrt{R}$ and $R \geq 100$, can be bounded by
\begin{equation*}
    \frac{1}{1-p_e} \leq \sqrt{R} \left(1 + e^{\frac{\sqrt{R} + 1}{\sqrt{R} - 1}}\right) \leq 4.4 \sqrt{R}.
\end{equation*}

Next, by using the ``wait-busy'' idea of the previous phase, previous arguments tell us that, for $b_1 = \sqrt{\frac{2}{\pi}}$,
\begin{equation}
    \ex{\text{length of an excursion}} \leq \frac{1}{\mu}b_1\sqrt{\frac{\mu\beta}{R}} + \frac{1}{\mu H}.
\end{equation}
Thus, we obtain that for $\mu \beta \geq 100$,
\begin{align*}
    \ex{\text{time spent with $N(t) > R+H$}} &\leq \frac{1}{1 - p_e}\cdot \left[ \frac{1}{\mu}b_1\sqrt{\frac{\mu\beta}{R}} + \frac{1}{\mu H} \right]\\
    &\leq \beta \frac{4.4 b_1}{\sqrt{\mu \beta}}  + \frac{4.4 \sqrt{R}}{\mu H}\\
    &\leq 0.36\beta + \frac{4.4 \sqrt{R}}{\mu H}.
\end{align*}
We arrive at the following bound on $\ex{\cyclem - (T_A+T_B)}$:
\begin{align*}
    \ex{\cyclem - T_B} &\leq \ex{\text{time spent with $N(t) \leq R+H$}} + \ex{\text{time spent with $N(t) > R+H$}}\\
    &\leq  14\frac{1}{\mu} + 0.36\beta + \frac{4.4 \sqrt{R}}{\mu H}\\
    &\leq 0.5 \beta + \frac{4.4 \sqrt{R}}{\mu H}.
\end{align*}

\subsubsection{Proof of Lemma~\ref{lem:extcyc}: Closing.}
As mentioned, with \eqref{eq:ext_TA},\eqref{eq:ext_TB}, and \eqref{eq:ext_X} proven, we complete our bound on $\ex{X}$ by simple addition:
\begin{align*}
    \ex{X} &= \ex{T_A} + \ex{T_B-T_A} + \ex{X-T_B}\\
    &\leq  1.08\beta + \beta + 1.37E_1 \frac{\mu \beta\sqrt{R}}{\mu{H}} + 0.5 \beta + \frac{4.4 \sqrt{R}}{\mu H} \\
    &\leq 2.6 \beta + 7.2\frac{\mu \beta\sqrt{R}}{\mu H}\\
    &= F_3 \beta + F_4 \frac{\mu \beta\sqrt{R}}{\mu H},
\end{align*}
where we have taken $F_3 \triangleq 2.6$ and $F_4 \triangleq 7.2$.






\section{\texorpdfstring{Miscellaneous Claims: Analysis of $m$-Policies}{Miscellaneous Claims: Analysis of m-Policies}}
\subsection{\texorpdfstring{Proof of \eqref{eq:ext_nta}: Upper Bound on $\ex{N(T_A) - R}$ for $m$-Policies}{Bound on E[N(TA) - R] for m-Policies}}\label{sec:ext_nta}
\newcommand{\epochnum}{n_e}
\newcommand{\psuccj}{p_{\text{succ}}^{(j)}}
\newcommand{\numexcj}{n_{\text{exc}}^{(j)}}
\newcommand{\plongj}{p_\text{long}^{(j)}}
\newcommand{\oldthreshold}{C_3 \sqrt{\mu \beta R}}
\newcommand{\pth}{p_{\text{th}}}
\newcommand{\pf}{p_f}
Here, we discuss our bound on the expected value of the number of jobs $\ex{N(T_A) - R}$ for $m$-policies. In particular, we show that with a constant $E_1 \triangleq 5.21$,
\begin{equation*}
    \ex{N(T_A) - R} \leq E_1 \mu \beta \sqrt{R}.
\end{equation*}

\paragraph{Definition of epoch.} To do so, we again make use of the notion of \textit{epochs}. 
Recall that time $\tau_j \triangleq \minpar{t\geq 0: \on(t) \leq R-j}$ is the first moment that only $(R-j)$ servers are on.
We call the period $\big [\tau_j, \mintwo{\tau_{j+1}}{T_A}\big )$ the $j$-th epoch, and we say epoch $j$ occurs if the time $\tau_j$ happens before time $T_A$.
As a shorthand, we use $\epochnum$ to denote the number of epochs that occur; note that the condition $\left[\text{epoch $j$ occurs}\right]$ is equivalent to the condition $\left[n_e \geq j\right]$.

\paragraph{Definition of a trial.}
To analyze the time $T_A$ more closely, we define a series of trials. 
The first trial in epoch $j$ (that is, the $(1,j)$-th trial) begins at the first moment that $N(t) \geq R+\oldthreshold$ during epoch $j$.
We say this trial \textit{succeeds} if the number of jobs $N(t)$ does not drop below $(R+1)$ for a full setup time; otherwise, we say that this trial \textit{fails}.
If, in the same epoch, the number of jobs $N(t)$ again reaches $(R+1)$, then the next trial begins and may either succeed or fail; this continues until either the next epoch begins or a trial succeeds.
We refer to the $i$-th trial to occur in epoch $j$ as trial $(i,j)$.

We now note that
\begin{equation}\label{eq:ext_trial}
    \ex{N(T_A) - R} \leq  \oldthreshold + \sum_{i,j}{\pr{\text{trial $(i,j)$ occurs}} \ex{\left[N(T_A) - R\right] \indc{\text{trial $(i,j)$ succeeds}}\middle | \text{trial $(i,j)$ occurs}}}.
\end{equation}
To analyze this conditional expectation, we note that, if trial $(i,j)$ occurs at time $\phi_{i,j}$ and this trial is successful, then time $T_A = \left(\phi_{i,j} + \beta \right)$. 
On the other hand, let $\psi{i,j} \triangleq \minpar{t>0: N(\phi_{i,j} + t) < R + 1}$ be the next moment the number of jobs $N(t)$ falls below $(R+1)$; we make two points.
First, the trial succeeds if and only if $\psi_{i,j} \geq \beta$.
Second, if the trial fails, then $\left[N(\psi_{i,j})- R\right] = 0$.
As such, 
\begin{equation*}
    \left[N(T_A) - R\right] \indc{\text{trial $(i,j)$ succeeds}} =  \left[N\left(\phi_{i,j} + \mintwo{\psi_{i,j}}{\beta}\right) - R\right].
\end{equation*}
Noting that $V(t) \triangleq N(\phi_{i,j} + t) - R - \mu j t $ is a super-martingale (since we are in epoch $j$), it follows from Doob's Optional Stopping Theorem that
\begin{align*}
    V(0) = \oldthreshold 
    &\geq \ex{V\left(\mintwo{\psi_{i,j}}{\beta} \right)} \\
    &= \ex{\left[N\left(\phi_{i,j} + \mintwo{\psi_{i,j}}{\beta}\right) - R\right]\middle | \filt{\phi_{i,j}}} - \mu j \ex{\mintwo{\psi_{i,j}}{\beta}\middle | \filt{\phi_{i,j}}},
\end{align*}
and thus that
$$\ex{\left[N(T_A) - R\right] \indc{\text{trial $(i,j)$ succeeds}}\middle | \text{trial $(i,j)$ occurs}}
\leq \oldthreshold + \mu j \ex{\mintwo{\psi_{i,j}}{\beta}\middle | \text{trial $(i,j)$ occurs}}.$$

\paragraph{Reduction to trial quantities.}
Applying this to \eqref{eq:ext_trial},
\begin{align*}
    \ex{N(T_A) - R} &= \sum_{i,j}{\pr{\text{trial $(i,j)$ occurs}} \ex{\left[N(T_A) - R\right] \indc{\text{trial $(i,j)$ succeeds}}\middle | \text{trial $(i,j)$ occurs}}}\\
    &\leq \sum_{i,j}{\pr{\text{trial $(i,j)$ occurs}} \left[\oldthreshold + \mu j \ex{\mintwo{\psi_{i,j}}{\beta}\middle | \text{trial $(i,j)$ occurs}}\right]}\\
    &\leq \oldthreshold\ex{\text{total number of trials}} + \sum_j{ \mu j \ex{\text{total time spent on trials during epoch $j$}}}.
\end{align*}

To address the first term, we note that for each trial which occurs, according to Claim~\ref{clm:dcprob}, with probability at least $\left[1- 2\Phi\left(-\frac{3}{\sqrt{2}}\right) - \frac{1}{100}\right]\geq 0.95 \triangleq C_4$ the first phase ends.
Hence $\ex{\text{total number of trials}}\leq \frac{1}{0.95}$.

To address the second term, we note that 
\begin{align*}
    &\ex{\text{total time spent on trials during epoch $j$}} \\
    &\;\;= \pr{\text{trial $(1,j)$ occurs}}\ex{\text{time spent on trials during epoch $j$} \middle | \text{trial $(1,j)$ occurs}}\\
    &\;\;\leq \pr{\text{trial $(1,j)$ occurs}}\ex{\text{time remaining in first phase} \middle | \text{trial $(1,j)$ occurs}}\\
    &\;\;\leq\pr{\text{trial $(1,j)$ occurs}} \ex{\Tilde{T_A}}\\
    &\;\;\leq\pr{\text{trial $(1,j)$ occurs}} 1.08 \beta,
\end{align*}
where the final inequality is via coupling the system to an M/M/$R$ queue in the same way as in Section~\ref{sec:UBTA}.
Defining $G_1\triangleq 3.3 > \frac{1}{0.95} C_3$ and $G_2 \triangleq 1.08$, we find that
\begin{equation*}
    \ex{N(T_A) - R} \leq G_1 \sqrt{\mu \beta R} + G_2 \mu \beta \sum_{j}{j \pr{\text{trial $(1,j)$ occurs}}}.
\end{equation*}

\paragraph{Development using epochs.} From here, we note that $ \prise{j} \triangleq \pr{\text{trial $(1,j)$ occurs}\middle | n_e \geq j}$ and thus it suffices to bound $\pr{n_e \geq j}$.
To provide an upper bound on the probability that epoch $j$, it suffices to show a lower bound on the conditional probability $\pr{n_e = j \middle | n_e \geq j}$.
In Section~\ref{sec:prisebnd}, we have already shown 
\begin{equation*}
    \pr{n_e = j \middle | n_e \geq j} \geq (1- C_4 \prise{j});
\end{equation*}
applying this bound, one finds that
\begin{equation*}
    \pr{n_e \geq j} \leq \prod_{i=0}^{j-1}{\left(1 - C_4 \prise{i}\right)}.
\end{equation*}
Thus, we obtain
\begin{align*}
    \sum_j{j \pr{\text{trial $(1,j)$ occurs}}} &\leq \frac{1}{C_4}\sum_j{C_4\prise{j} j\prod_{i=0}^{j-1}{\left(1 - \prise{i}\right)}}\\
    &=\frac{1}{C_4} \sum_{j}{\prod_{i=0}^{j}{\left(1- C_4 \prise{i}\right)}}.
\end{align*}
From here, it suffices to lower bound $\prise{i}$.

\subsubsection{\torpdf{Lower bound on $\prise{j}$.}{Lower bound on prise(j).}}\label{sec:ext_prise_bnd_st}
To bound the probability that the number of jobs $N(t)$ reaches the threshold $R+\oldthreshold$ at some point during epoch $j$, we focus on one particular manner in which that event might happen.
To define this manner, we introduce the notion of an \textit{excursion}.
We do so recursively: the first excursion in epoch $j$ occurs the moment the number of jobs reaches $(R-j + 1)$, and ends when either 1) the number of jobs $N(t)$ dips back down to $(R-j)$ or 2) a full setup time has passed.
If a full setup time passes during an excursion, we call it \textit{long}; otherwise, we call it \textit{short}.
Successive excursions occur whenever an up-crossing of $(R-j)$ occurs, but \text{only} if every previous excursion in the epoch has been \textit{short}.
Thus, we can lower bound the probability of reaching the desired threshold by requiring that it occurs before the epoch's \textit{first long excursion}, i.e.
\begin{align*}
    \prise{j} \geq \pstj\left( \pth + \pf \prej \pth + \left(\pf\prej\right)^2 \pth + \dots\right)= \pth \pstj\frac{1}{1 - \pf \prej}
\end{align*}
where we have used the fact that the system state is the same at the beginning of every excursion, and where we have defined $\pth$ as the probability that the number of jobs $N(t)$ crosses the threshold during excursion $(1,j)$, defined $\prej$ as the probability that, after a short excursion, another excursion occurs, defined $\pstj$ as the probability that an excursion occurs, and defined $\pf$ as the probability that the excursion is both short and ends without $N(t)$ having reached the desired threshold.
We arrive at the following claim, proven in Section~\ref{sec:ext_pth_clm}:
\begin{claim}\label{clm:ext_pth_clm}
    Recall the definitions of $\pth, \pf,\prej,$ and $\pstj$. 
    Then, for $j\geq \sqrt{R}$, we have the following bound
    \begin{equation*}
        \pth\pstj\frac{1}{1 - \pf \prej} \geq E_6\frac{j}{R}\geq E_6 \frac{1}{\sqrt{R}},
    \end{equation*}
    where $E_6 = 0.32$.
\end{claim}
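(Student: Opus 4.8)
The plan is to reduce each of the four ingredients $\pstj$, $\pth$, $\prej$, $\pf$ to a one-dimensional first-passage problem and then to combine the resulting bounds. The structural fact I would use is that, throughout epoch $j$ and up to the first time a server completes setup (which is no earlier than the end of the short-excursion phase), the number of on servers equals $R-j$, so the number of jobs $N(t)$ evolves as a birth--death chain with birth rate $k\lambda=\mu R$ and death rate $\mu\min\{N(t),R-j\}$. Below level $R-j$ this is the queue chain of an $M/M/(R-j)$; above $R-j$ --- exactly the regime of an excursion --- the death rate is the constant $\mu(R-j)$, so there $N(t)$ is a simple upward-biased random walk with net drift $\mu j$. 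Writing $r\triangleq 1-\tfrac{j}{R}$ (the escape ratio $\mu(R-j)/\mu R$), this dichotomy makes each of the four probabilities explicit.

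First I would lower bound $\pstj$ and $\prej$, which only involve $N(t)$ inside the band $[R-j-m-1,\,R-j+1]$, i.e.\ inside the $M/M/(R-j)$ regime; each is a gambler's-ruin probability for that chain between an absorbing floor (the epoch-ending level) and an absorbing ceiling (an excursion start). Expanding the standard birth--death ruin formula and bounding its normalizing sum by a geometric series with ratio at most $r$ --- every death/birth ratio in the band is at most $(R-j)/R=r$ --- I expect
\[
\pstj\ \ge\ \frac{1-r}{1-r^{m+2}},\qquad
\prej\ \ge\ \frac{1-r^{m+1}}{1-r^{m+2}}\ \ge\ \frac{m+1}{m+2};
\]
in particular $\pstj\ge 1-r=\tfrac{j}{R}$.

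Next I would handle $\pth$ and $\pf$, which live on the excursion side and so only involve the biased walk started one level above $R-j$. For $\pth$: by the classical escape probability of an upward-biased walk, $N(t)$ reaches the threshold $R+C_3\sqrt{\mu\beta R}$ before returning to $R-j$ with probability at least $1-r=\tfrac{j}{R}$; conditioning on that event is a Doob $h$-transform that only strengthens the drift, so the expected time to actually hit the threshold is at most $(\text{escape distance})/(\mu j)=O\!\big((j+C_3\sqrt{\mu\beta R})/(\mu j)\big)$, which for $j\ge\sqrt R$ and $\mu\beta\ge 100$ is at most a constant fraction of $\beta$; a Markov/Berry--Esseen estimate in the style of Claim~\ref{clm:prisebnd} then gives that this occurs within $\beta$ with probability at least an explicit constant $c$, whence $\pth\ge c\,\tfrac{j}{R}$ with $c\ge 0.64$. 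For $\pf$: the same walk returns to $R-j$ before the threshold with probability $\approx r$, and, conditioned on returning, does so in expected time $O(1/(\mu j))$ (again via the $h$-transform, which here reverses the bias), hence within $\beta$ with probability at least $1-O(1/(\mu\beta j))$; after discarding the negligible chance of touching the threshold yet returning within $\beta$, this gives $\pf\ge r(1-\epsilon_1)$ with $\epsilon_1=O(1/(\mu\beta j))\le\tfrac{1}{100\sqrt R}$ for $j\ge\sqrt R$. The ``within-$\beta$'' tail inputs can all be taken from Claims~\ref{clm:contdown} and~\ref{clm:dcprob}.

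Finally I would combine. Substituting the three bounds into $\pstj/(1-\pf\prej)$ and simplifying --- the key step being $1-r^{m+1}\le(m+1)(1-r)$, which makes the $m$-dependence of $\prej$ cancel that of $\pstj$ --- one gets $\pstj/(1-\pf\prej)\ge \tfrac{1}{1+\epsilon_1(m+1)}$, so $\pstj/(1-\pf\prej)\ge\tfrac12$ is equivalent to $\epsilon_1(m+1)\le 1$; since $m\le\sqrt R$ and $\epsilon_1\le\tfrac{1}{100\sqrt R}$, this holds with room to spare. Therefore
\[
\pth\,\pstj\,\frac{1}{1-\pf\prej}\ \ge\ \pth\cdot\tfrac12\ \ge\ \tfrac{c}{2}\cdot\tfrac{j}{R}\ \ge\ 0.32\,\tfrac{j}{R},
\]
which is the claim with $E_6=0.32\le c/2$ (and the bound $E_6/\sqrt R$ follows from $j\ge\sqrt R$); fed back into the geometric-sum bound this yields $\prise{j}\ge 0.32\,\tfrac jR$. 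The hard part is the two ``within-$\beta$'' estimates for $\pth$ and $\pf$: they must be made quantitative enough that $c\ge 0.64$ and $\epsilon_1\le\tfrac{1}{m+1}$ hold \emph{uniformly} over the whole range $j\in[\sqrt R,R]$, and the genuinely tight case is $j\approx\sqrt R$, where the escape distance $C_3\sqrt{\mu\beta R}$ is largest relative to the drift $\mu j$.
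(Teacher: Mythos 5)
Your plan takes a genuinely different route from the paper at the combination step, and it is heavier than it needs to be. The paper never analyzes the fine structure of $\pf$ or $\prej$, and never needs any ``return within $\beta$'' estimate: it uses only the union bound $\pf \ge 1-\pth-\plong$ and the relation $\prej = 1-\gamma\pstj \ge 1-\pstj$ (with $\gamma\le 1$) to get $1-\pf\prej \le \pstj+\plong+\pth$, so that
\begin{equation*}
	\pth\,\pstj\,\frac{1}{1-\pf\prej} \;\ge\; \frac{\pth\,\pstj}{\pstj+\plong+\pth},
\end{equation*}
and then plugs in three bounds already on hand: $\pth\ge 0.99\,\frac jR$ (this is exactly the up-crossing probability of Claim~\ref{clm:prisebnd}), $\pstj\ge \frac jR$ (the M/M/$\infty$ hitting-probability analysis), and $\plong\le \frac jR\bigl(1+\frac{b_1}{\sqrt{\mu\beta}}\bigr)$. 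All three denominator terms are of order $\frac jR$, the $\frac jR$'s cancel, and the constant is $\frac{0.99}{2.99+b_1/\sqrt{\mu\beta}}\ge 0.32$, with no use of $m$ at all. Your route instead sharpens the lower bounds on $\pf$ and $\prej$ so that the prefactor $\pstj/(1-\pf\prej)$ is itself $\ge\frac12$, and then multiplies by $\pth\ge c\,\frac jR$; the gambler's-ruin algebra and the cancellation via $1-r^{m+1}\le(m+1)(1-r)$ are correct, so this can be made to work, but it buys nothing here and adds burdens the paper avoids: (i) the conditional ``return within $\beta$'' estimate for $\pf$; (ii) the hypothesis $m\le\sqrt R$ (through $\epsilon_1(m+1)\le 1$), which is not in the claim statement even though it holds in the regime of Theorem~\ref{thm:mpol}; and (iii) the step $\pth\ge 0.64\,\frac jR$, which as described is fragile. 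With the $h$-transform/Markov argument the conditional hitting time is at most $\frac1\mu+C_3\beta/\sqrt{\mu\beta}$, so Markov only gives $c\ge 0.99-\frac{C_3}{10}$, which clears $0.64$ only if $C_3\le 3.5$ --- right at the edge of what the paper's other constant requirements on $C_3$ permit, and you never pin $C_3$ down. The clean fix is the paper's: $\pth$ is precisely the quantity bounded in Claim~\ref{clm:prisebnd} (whose proof removes the conditioning by translation invariance and uses Berry--Esseen rather than Markov), giving $0.99\,\frac jR$ directly and ample slack for $E_6=0.32$ under either combination.
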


Accordingly, we find that
\begin{align*}
    \ex{N(T_A)-R} &\leq G_1 \sqrt{\mu \beta R} + G_2\mu \beta \frac{1}{C_4}\sum_{j=0}^{R-m}{\prod_{i=0}^{j}{\left(1- C_4 \prise{i}\right)}}\\
    &\leq G_1 \sqrt{\mu \beta R} + G_2\mu \beta \frac{1}{C_4}\left[\sqrt{R} \sum_{j=\sqrt{R}}^{R-m}{\prod_{i=0}^{j}{\left(1- C_4 \prise{i}\right)}}\right]\\
    &\leq  G_1 \sqrt{\mu \beta R} + G_2\mu \beta \frac{1}{C_4}\left[\sqrt{R} + \sum_{j = \sqrt{R}}{\left(1- C_4 E_6 \frac{1}{\sqrt{R}}\right)^{\left[j - \sqrt{R}\right]^+}}\right]\\
    &= G_1 \sqrt{\mu \beta R} +\mu \beta \sqrt{R} \frac{G_2}{C_4}\left[1 + \frac{1}{C_4 E_6}\right]\\
    &= \mu \beta \sqrt{R} \left( \frac{G_2}{C_4}\left[1 + \frac{1}{C_4 E_6}\right] + \frac{G_1}{\sqrt{\mu \beta}} \right)\\
    &\leq E_1 \mu \beta \sqrt{R}
\end{align*}
as desired, where we have taken $E_1 \triangleq 5.21 > \left( \frac{G_2}{C_4}\left[1 + \frac{1}{C_4 E_6}\right] + \frac{G_1}{\sqrt{100}} \right)$.
Thus, we have shown \eqref{eq:ext_nta}, an upper bound on the quantity $\ex{N(T_A)-R}$ for $m$-policies.

\subsection{\torpdf{Bounds on hitting probabilities for an M/M/$\infty$ queue.}{Bounds on hitting probabilities for an M/M/inf queue.}}\label{sec:ext_prob_bnds}
\newcommand{\prbu}[1]{p_{#1}^{\text{up}}}

We now analyze a general hitting probability problem in the M/M/$\infty$ queue.
In particular, we consider the case where the arrival rate is $k\lambda = \mu R$, and define an upper and lower boundary of $b$ and $a$ respectively.
We then find the probability that, starting from state $i$, we hit the upper boundary state $b$ before we hit the lower boundary point $a$; we call this probability $\prbu{i}$.

\subsubsection{\torpdf{Solving for $\prbu{i}$.}{Solving for pup(i)}} We solve for these probabilities using the usual finite difference method. We first note our boundary conditions: the probabilities at the boundaries are $\prbu{a} = 0$ and $\prbu{b} = 1$.
From here, we note that, by a conditioning argument,
\begin{equation*}
    \prbu{i} = \frac{R}{R+i} \prbu{i+1} + \frac{i}{R + i}\prbu{i-1}.
\end{equation*}
After some algebra, we obtain that, defining $\Delta_i = \prbu{i+1} - \prbu{i}$,
\begin{equation*}
    \frac{\prbu{i+1} - \prbu{i}}{\prbu{i} - \prbu{i-1}} = \frac{i}{R} = \frac{\Delta_i}{\Delta_{i-1}}.
\end{equation*}
This shows that
\begin{equation*}
    \Delta_i = \frac{i}{R} \cdot \frac{i-1}{R} \cdot \dots \cdot \frac{a+1}{R} \cdot \Delta_a = \frac{(i)!}{(a)! R^{i-a}} \Delta_a.
\end{equation*}
Using our upper boundary condition, we have
\begin{align*}
    1 &= \prbu{b} = (\prbu{b} - \prbu{b-1}) + (\prbu{b-1} + \prbu{b-2}) + \dots + (\prbu{a+2} - \prbu{a+1}) + (\prbu{a+1}) 
    = \sum_{i = a}^{b-1}{\Delta_i} 
    = \Delta_a \sum_{i=a}^{b-1}{\frac{(i)!}{(a)! R^{i-a}}}.
\end{align*}
Thus we have that
\begin{equation}
    \Delta_a = \prbu{a+1} = \frac{1}{\sum_{i=a}^{b-1}{\frac{(i)!}{(a)! R^{i-a}}}} =  \frac{1}{\sum_{i=0}^{b-1 - a}{\frac{(i+a)!}{(a)! R^{i}}}}.
\end{equation}
Notably, we also have that 
\begin{equation}
    \prbu{b-1} = 1 - \Delta_{b-1} = 1 - \frac{(b-1)!}{(a)! R^{b-1-a}} \Delta_a.
\end{equation}

\subsubsection{\torpdf{Bounds on $\Delta_a$.}{Bounds on Delta\_a.}}
We now give two upper bounds and a lower bound on $\Delta_a$ by bounding the sum in its denominator, using $a = R - j - m$ and $b = R - j +1$.
For the first upper bound on the denominator,
\begin{align*}
    \Delta_a ^{-1} = \sum_{i=0}^{b-1 - a}{\frac{(i+a)!}{(a)! R^{i}}} = \sum_{i=0}^{m}{\frac{(R - j - m + i)!}{(R - j - m)! R^{i}}} =   \prod_{\ell=1}^{i}{\frac{R-j -m +\ell}{R}}.
\end{align*}
From here, we note that
\chmades{
\begin{align*}
    \prod_{\ell=1}^{i}{\frac{R-j -m +\ell}{R}}
    \leq \prod_{\ell=1}^{i}{e^{-\frac{(j+m)}{R} + \frac{\ell}{R}}}
    = e^{-\frac{(j+m)i}{R} + \frac{i(i+1)}{2R}}
    = e^{-\frac{i}{R}\left[j + m - \frac{i+1}{2} \right]}
    \leq e^{-\frac{i}{R}\left[j + \frac{1}{2}m - \frac{1}{2}\right]}.
\end{align*}
}
\chmades{
Thus, we find that 
\begin{equation}\label{eq:pstj_up}
    \prbu{a+1} \geq \left[\sum_{i=0}^{m}{e^{-\frac{i}{R}\left[j + \frac{1}{2}m - \frac{1}{2}\right]}}\right]^{-1} \geq \left[\sum_{i=0}^{\infty}{e^{-\frac{i}{R}\left[j + \frac{1}{2}m - \frac{1}{2}\right]}}\right]^{-1} = 1 - e^{-\frac{j + \frac{1}{2}(m-1)}{R}}.
\end{equation}
}

For the second upper bound on the denominator, we begin from an intermediate step and note
\chmades{
\begin{align*}
     \left[\Delta_a\right]^{-1} = \sum_{i=0}^{m}{\prod_{\ell=1}^{i}{\frac{R-j -m +\ell}{R}}}
    \leq \sum_{i=0}^{m}{\left({\frac{R-j}{R}}\right)^i}
    = \frac{R}{j}\left(1 - \left(1 - \frac{j}{R}\right)^{m+1}\right)
    \leq \frac{R}{j}.
\end{align*}
}
This in turn gives
\begin{equation}
    \prbu{a+1} \geq \frac{j}{R}.
\end{equation}

Likewise, for a lower bound on the denominator,
\begin{align*}
    \sum_{i=0}^{m}{\prod_{\ell=1}^{i}{\frac{R-j -m +\ell}{R}}} & \geq \sum_{i=0}^{m}{\left(1 - \frac{(j+m)}{R}\right)^i}
    = \frac{R}{j+m} \left(1 - \left(1- \frac{(j+m)}{R}\right)^{m+1}\right).
\end{align*}
Thus, we find that
\begin{equation}\label{eq:pstj_low}
    \prbu{a+1} \leq \frac{j+m}{R}\frac{1}{\left(1 - \left(1- \frac{(j+m)}{R}\right)^{m+1}\right)}.
\end{equation}

\subsubsection{\torpdf{Bounds on $\prbu{b-1}$.}{Bounds on pup(b-1).}}
Likewise, since $\prbu{b-1} = 1 - \delta_{b-1} = 1 - \frac{(R-j)!}{(R-j-m)! R^m}\Delta_a$, we also have that
\begin{align*}
    \left[1 -\prbu{b-1}\right]^{-1} &= \frac{R^m (R-j-m)!}{(R-j)!}\sum_{i=0}^{m}{\frac{(R - j - m + i)!}{(R - j - m)! R^{i}}}\\
    &= \sum_{i=0}^{m}{\frac{(R - j - m + i)! R^{m-i}}{(R-j)!}}\\
    &= \sum_{i=0}^{m}{\frac{(R - j - i)! R^{i}}{(R-j)!}}\\
    &\leq \sum_{i=0}^{m}{\left(\frac{R}{R-j-m}\right)^i}\\
    &= \frac{R - j - m}{j + m} \left[\left(1+ \frac{j+m}{R-j-m}\right)^{m+1} - 1 \right]\\
    &\leq \frac{R - j - m}{j + m} \left[e^{\left(\frac{(j+m)(m+1)}{R-j-m}\right)} - 1 \right].
\end{align*}
This tells us that $1- \prbu{b-1} \geq \frac{j+m}{R-j-m} \left[e^{\left(\frac{(j+m)(m+1)}{R-j-m}\right)} - 1 \right]^{-1}$, or, equivalently, 
\begin{equation*}
    \prbu{b-1} \leq 1 -  \frac{j+m}{R-j-m} \left[e^{\left(\frac{(j+m)(m+1)}{R-j-m}\right)} - 1 \right]^{-1}.
\end{equation*}
We likewise have that
\begin{align*}
     [1 -\prbu{b-1}]^{-1} &\leq \left(m+1\right)\left(\frac{R}{R-j-m}\right)^m
     \leq \left(m+1\right) e^{\frac{(j+m)m}{R-j-m}}
     \leq \left(m+1\right) e^{\frac{(j+m)(m+1)}{R-j-m}}.
\end{align*}
Combining these two results, we obtain that
\begin{equation*}
     \prbu{b-1} \leq 1 - \max\left(\frac{1}{m+1}, \frac{j+m}{R}\right) e^{-\frac{(j+m)(m+1)}{R-j-m}}.
\end{equation*}

Continuing on for the lower bound, we find that
\begin{align*}
    [1 -\prbu{b-1}]^{-1} &\geq \sum_{i=0}^{m}{\left(\frac{R}{R-j}\right)^i}
    = \frac{R-j}{j}\left[\left(1 + \frac{j}{R-j}\right)^{m+1} - 1 \right],
\end{align*}
so that,
\begin{equation*}
    \prbu{b-1} \geq 1 - \frac{j}{R-j} \left[\left(1 + \frac{j}{R-j}\right)^{m+1} - 1 \right].
\end{equation*}

\subsection{\texorpdfstring{Proof of Claim~\ref{clm:ext_pe_clm}: Bound on $p_e$, an M/M/$R$ hitting probability.}{Bound on p\_e, an M/M/R hitting probability.}}\label{sec:ext_pe_clm}

\begin{claim}\label{clm:ext_pe_clm}
    Let $p_e$ be the probability that, in an M/M/$R$ queue started with $(R+H)$ jobs in the system, the number of jobs reaches $(R+H+1)$ before it reaches $(R-m)$. Then,
    \begin{equation*}
        \frac{1}{1-p_e} \leq H + \sqrt{R}e^{\frac{m(m+1)}{R-m}} \leq \sqrt{R}\left(1 + e^{\frac{m(m+1)}{R-m}}\right).
    \end{equation*}
\end{claim}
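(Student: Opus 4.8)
The plan is to compute $\frac{1}{1-p_e}$ exactly with the standard gambler's-ruin formula for birth--death chains, and then bound the resulting finite sum. In the M/M/$R$ queue the number-in-system process is a birth--death chain with constant birth rate $k\lambda = \mu R$ and death rate $\mu\min(i,R)$ in state $i$, so the one-step down/up ratio in state $l$ is $\rho_l \triangleq \min(l,R)/R$; note that $\rho_l = l/R$ for $l\le R$ (the M/M/$\infty$ regime analyzed in Section~\ref{sec:ext_prob_bnds}) and $\rho_l = 1$ for $l\ge R$ (a critically loaded M/M/1 regime). Running the same finite-difference argument as in Section~\ref{sec:ext_prob_bnds} on the boundary-value problem with absorbing states $R-m$ and $R+H+1$ and initial state $R+H$, and recalling that $1-p_e$ equals the last first-difference of the harmonic solution, I would obtain
\begin{equation*}
    \frac{1}{1-p_e} \;=\; \sum_{i=R-m}^{R+H}\ \prod_{l=i+1}^{R+H}\frac{1}{\rho_l}.
\end{equation*}

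Next I would split this sum at level $R$. For each $i\in\{R,\dots,R+H\}$ every factor $1/\rho_l$ with $l>R$ equals $1$, so these $H+1$ terms contribute exactly $H+1$. For $i\in\{R-m,\dots,R-1\}$, writing $j=R-i$, the product telescopes, $\prod_{l=R-j+1}^{R}\frac{R}{l} = \frac{R^{j}(R-j)!}{R!}$, so these terms contribute $\sum_{j=1}^{m}\frac{R^{j}(R-j)!}{R!}$. Hence
\begin{equation*}
    \frac{1}{1-p_e} \;=\; (H+1) + \sum_{j=1}^{m}\frac{R^{j}(R-j)!}{R!} \;=\; H + \sum_{j=0}^{m}\frac{R^{j}(R-j)!}{R!},
\end{equation*}
and it therefore suffices to prove $\sum_{j=0}^{m}\frac{R^{j}(R-j)!}{R!}\le \sqrt{R}\,e^{m(m+1)/(R-m)}$; the stated second inequality then follows immediately from $H = \min(k-R,\sqrt{R})\le\sqrt{R}$.

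To bound the remaining sum I would bound each term exponentially: since $\frac{R}{R-l}=1+\frac{l}{R-l}\le e^{l/(R-l)}\le e^{l/(R-m)}$ for $0\le l\le j-1\le m-1$, the $j$-th term satisfies $\frac{R^{j}(R-j)!}{R!}=\prod_{l=0}^{j-1}\frac{R}{R-l}\le e^{j(j-1)/(2(R-m))}\le e^{m(m+1)/(2(R-m))}$, whence $\sum_{j=0}^{m}\frac{R^{j}(R-j)!}{R!}\le (m+1)\,e^{m(m+1)/(2(R-m))}$. It remains to absorb the prefactor $m+1$. If $m+1\le\sqrt{R}$ this is immediate, since the leftover exponent $\frac{m(m+1)}{2(R-m)}\le\frac{m(m+1)}{R-m}$. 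Otherwise $\sqrt{R}-1<m\le\sqrt{R}$, so $m(m+1)>(\sqrt{R}-1)\sqrt{R}=R-\sqrt{R}$ while $R-m<R-\sqrt{R}+1$, which for $R\ge100$ forces $\frac{m(m+1)}{2(R-m)}>0.49$; combined with $m+1\le\sqrt{R}+1\le\sqrt{R}\,e^{1/\sqrt{R}}\le\sqrt{R}\,e^{0.1}$, this gives $(m+1)e^{m(m+1)/(2(R-m))}\le\sqrt{R}\,e^{0.1+m(m+1)/(2(R-m))}\le\sqrt{R}\,e^{m(m+1)/(R-m)}$, as needed. I expect the only real obstacle to be this final bookkeeping step — cleanly absorbing the polynomial count $m+1$ into the target closed form $\sqrt{R}\,e^{m(m+1)/(R-m)}$ at the boundary $m\approx\sqrt{R}$ (and disposing of the degenerate case $m=0$); the exact evaluation of $\frac{1}{1-p_e}$ is routine birth--death computation already rehearsed in Section~\ref{sec:ext_prob_bnds}.
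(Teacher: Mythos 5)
Your proposal is correct, and it reaches the result by a somewhat different (and self-contained) route than the paper. The paper first derives the identity $\frac{1}{1-p_e} = H + \frac{1}{p_d}$, where $p_d$ is the probability that the chain started at $R$ hits $R-m$ before $R+1$, via a two-stage recursion: from $R+H$ the walk is a symmetric simple random walk down to level $R$ (probability $\frac{1}{H+1}$ of reaching $R$ before $R+H+1$), and then a geometric-retry argument between levels $R$ and $R+1$; it then bounds $1/p_d$ by citing the M/M/$\infty$ hitting-probability bounds of Section~\ref{sec:ext_prob_bnds}, namely $p_d \geq \max\left(\frac{1}{m+1},\frac{m}{R}\right)e^{-m(m+1)/(R-m)} \geq \frac{1}{\sqrt{R}}e^{-m(m+1)/(R-m)}$. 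Your single gambler's-ruin computation on the full chain between $R-m$ and $R+H+1$, split at level $R$, produces exactly the same intermediate object: your $\sum_{j=0}^{m}\frac{R^{j}(R-j)!}{R!}$ is precisely $1/p_d$ (it is also the $j=0$ instance of the sum the paper bounds in Section~\ref{sec:ext_prob_bnds}), so the two derivations coincide at $\frac{1}{1-p_e}=H+1/p_d$ and differ only in packaging and in the final numerical bounding. What your version buys is (i) a slightly sharper term-wise bound $e^{j(j-1)/(2(R-m))}$ rather than $e^{m(m+1)/(R-m)}$, which leaves you room for (ii) an explicit case analysis absorbing the prefactor $m+1$ into $\sqrt{R}$, including the boundary window $\sqrt{R}-1<m\leq\sqrt{R}$, where the paper's quick step $\max\left(\frac{1}{m+1},\frac{m}{R}\right)\geq\frac{1}{\sqrt{R}}$ is actually loose by a hair; the paper's version, in exchange, reuses machinery already proved and avoids redoing the boundary-value computation.

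One caveat: your absorption step at the end uses $m\leq\sqrt{R}$, which is not stated in the claim itself but is an assumption of Theorem~\ref{thm:mpol} (the only place the claim is used); the paper's own proof effectively needs the same restriction, so this is consistent, but you should state it explicitly. Likewise, both your argument and the paper's treat $R$, $H$, and the absorbing levels as integers, inheriting the paper's convention.
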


We prove this claim by analyzing particular state transitions.
Let $T_{i\to j}$ be the first passage time from state $i$ to state $j$, and when evaluating the event $\left[T_{i\to a} < T_{i \to b}\right]$, consider the first passage times to be evaluated on the same sample path.
Then $p_e = \pr{T_{(R+H) \to (R-m)} < T_{(R+H) \to (R+H+1)}}$.
By conditioning, we have that
\begin{align*}
    1 - p_e &=  \pr{T_{(R+H) \to (R-m)} < T_{(R+H) \to (R+H+1)}}\\
    &= \pr{T_{(R+H) \to R} < T_{(R+H) \to (R+H+1)}}\left[ \pr{T_{R \to (R-m)} < T_{R \to (R+H+1)}}  \right].
\end{align*}
From here, note that
\begin{align*}
    &\pr{T_{R \to (R-m)} < T_{R \to (R+H+1)}} \\
    &\;\;= \pr{T_{R \to (R-m)} < T_{R \to (R+1)}} + \pr{T_{(R+1) \to (R-m)} < T_{(R+1) \to (R+H+1)}}\\
    &\;\;= \pr{T_{R \to (R-m)} < T_{R \to (R+1)}} + \pr{T_{(R+1) \to R} < T_{(R+1) \to (R+H+1)}}\pr{T_{R\to(R-m)} < T_{R\to (R+H+1)} }.
\end{align*}
Note also that, from basic results on simple random walks, we have that $\pr{T_{(R+H) \to R} < T_{(R+H) \to (R+H+1)}} = \frac{1}{H+1}$ and $\pr{T_{(R+1) \to R} < T_{(R+1) \to (R+H+1)}} = \frac{H}{H+1}$. 
Let $p_d \triangleq \pr{T_{R \to (R-m)} < T_{R \to (R+1)}}$.
Then we have 
\begin{align*}
    1 - p_e &= \frac{1}{H+1}\left[p_d + (1-p_d)\frac{H}{H+1}\left(p_d + \dots\right) \right]\\
    &= \frac{p_d}{H+1} \frac{1}{1 - (1-p_d)\frac{H}{H+1}}\\
    &= \frac{p_d}{H+1 - H + p_d H}\\
    &= \frac{p_d}{1 + p_d H}.
\end{align*}
This tells us that
\begin{equation*}
    \frac{1}{1-p_e} = H + \frac{1}{p_d}.
\end{equation*}

\subsubsection{\torpdf{Analyzing $p_d$.}{Analyzing pd.}} Since the M/M/$R$ queue acts equivalently to the M/M/$\infty$ queue between the states $R$ and $(R-m)$, to analyze the probability $p_d \triangleq  \pr{T_{R \to (R-m)} < T_{R \to (R+1)}}$, it suffices to analyze the appropriate hitting probability in the M/M/$\infty$.
Applying the result from Section~\ref{sec:ext_prob_bnds}, noting that the probability in question is $1 - \prbu{b-1}$ in the case where $j=0$, we find that
\begin{equation*}
    p_d \geq \max\left(\frac{1}{m+1}, \frac{m}{R} \right) e^{-\frac{m(m+1)}{R-m}} \geq \frac{1}{\sqrt{R}}  e^{-\frac{m(m+1)}{R-m}}.
\end{equation*}

\paragraph{Final bound on $(1-p_e)^{-1}$}.
Combining these and noting that $H\triangleq \min\left(\sqrt{R}, k-R\right) \leq \sqrt{R}$, we find
\begin{align*}
    \frac{1}{1-p_e} &= H + \frac{1}{p_d}
    \leq \sqrt{R} + \sqrt{R} e^{\frac{m(m+1)}{R-m}}
    = \sqrt{R}\left(1 + e^{\frac{m(m+1)}{R-m}}\right).
\end{align*}

\subsection{Proof of Claim~\ref{clm:ext_pth_clm}.}\label{sec:ext_pth_clm}
We now prove Claim~\ref{clm:ext_pth_clm}; see Section~\ref{sec:ext_prise_bnd_st} for details.

\paragraph{First observations.}
To begin, we first recall that $p_f$ is the probability that a given excursion is both short (that is, lasts for less than a setup time) and does not reach the appropriate threshold. 
From a union bound, it follows that $p_f \geq 1 - (\pth + \plong)$, where $\plong$ is the probability that the excursion lasts for (at least) a setup time.
We also note that the ``retrial'' probability $\prej = 1 - \gamma \pstj \geq 1 - \pstj$.
This follows from a straightforward symmetry argument: since the associated Markov chain is biased towards \textit{gaining} jobs, it follows that the probability that the number of jobs increases by one before it decreases by $m$ is higher than the probability the number of jobs {decreases} by one before it increases by $m$.
Combining these, we find that
\begin{align*}
    \pth \pstj\frac{1}{1 - \pf \prej} &\geq \frac{\pth \pstj}{1 - (1-\plong -\pth)(1 - \pstj)}\\
    &= \frac{\pth \pstj}{1 - (1-\pstj) + (\plong + \pth)(1-\pstj)}\\
    &\geq \frac{\pth \pstj}{\pstj + \plong + \pth}.
\end{align*}

\paragraph{Bounding the probabilities.}
From here, we note that, in Claim~\ref{clm:prisebnd}, we showed that $\pth \geq 0.99 \frac{j}{R}$ for any $j \geq \sqrt{R}$. 
Likewise, applying the results from Section~\ref{sec:ext_prob_bnds}, we find that $\pstj \geq \frac{j}{R}$.
To bound $\plong,$ we use the arguments from Section~\ref{sec:lbound}, which show that
\begin{align*}
    \plong &\leq 1 - \left(1- \frac{j}{R}\right) \left(1 - \frac{b_1}{\sqrt{\mu \beta R}}\right)
    \leq \frac{j}{R} + \frac{b_1}{\sqrt{\mu \beta R}}
    \leq \frac{j}{R}\left( 1 + \frac{b_1}{\sqrt{\mu \beta}}\right).
\end{align*}

\paragraph{Completion of the proof of Claim~\ref{clm:ext_pth_clm}.}
Combining these, we find that
\begin{align*}
    \frac{\pth \pstj}{1 - \pf \prej} &\geq \frac{0.99 \frac{j}{R} \cdot \frac{j}{R}}{\frac{j}{R} + 0.99 \frac{j}{R} + \frac{j}{R} \left(1 + \frac{b_1}{\sqrt{\mu \beta}}\right)}\ =\frac{j}{R} \left(\frac{0.99}{2.99 + \frac{b_1}{\sqrt{\mu \beta}}}\right)
    \geq \frac{j}{R} (0.32).
\end{align*}

\end{APPENDICES}

\end{document}